\numberwithin{equation}{section}
\newcommand{\K}{
\mathbb{K}
}
\newcommand{\N}{
\mathbb{N}
}
\newcommand{\R}{
\mathbb{R}
}
\newcommand{\C}{
\mathbb{C}
}
\newcommand{\tp}{
\otimes
}
\newlength{\wurelwidth}
\newcommand{\urel}[2][=]{\mathrel{\mathop{#1}\limits_{\!\scalebox{0.5}{#2}\!}}}
\newcommand{\wurel}[2][=]{\mathrel{\mathop{#1}_{\!\scalebox{0.5}{\makebox[\the\wurelwidth]{#2}}\!}}}
\DeclareMathOperator{\sdd}{sdd}
\newcommand{\comps}{\pi_0}
\newcommand{\Graph}[2][1.0]{%
\vcenter{\hbox{\includegraphics[scale=#1]{Graphs/#2}}}%
}
\newcommand{\coeff}[2][]{c_{#2}^{#1}}
\newcommand{\rp}{\mu} 
\newcommand{\toyform}{\eta}
\newcommand{\scalelog}{\ell}
\newcommand{\x}{x}
\newcommand{\period}{\mathcal{P}}
\newcommand{\reg}{z} 
\newcommand{\toy}[1][]{\def\toyfirstArgument{#1}\toyinternal}
\newcommand{\toyinternal}[1][]{{^{}_{\reg}\phi^{#1}_{\toyfirstArgument}}}
\newcommand{\toyR}[1][]{{{}^{}_{\reg}\phi^{}_{R\ifthenelse{\equal{#1}{}}{}{,#1}}}}
\newcommand{\phiR}[1][]{\phi^{}_{\text{\tiny R}\ifthenelse{\equal{#1}{}}{}{,#1}}}
\newcommand{\MSR}[1][]{\def\MSRfirstArgument{#1}\MSRinternal}
\newcommand{\MSRinternal}[1][]{{^{}_{\reg}\phi^{#1}_{+\ifthenelse{\equal{\MSRfirstArgument}{}}{}{,\MSRfirstArgument}}}}
\newcommand{\MSRR}[1][]{\phi^{}_{+\ifthenelse{\equal{#1}{}}{}{,#1}}}
\newcommand{\MSZ}[1][]{\def\MSZfirstArgument{#1}\MSZinternal}
\newcommand{\MSZinternal}[1][]{{^{}_{\reg}\phi^{#1}_{-\ifthenelse{\equal{\MSZfirstArgument}{}}{}{,\MSZfirstArgument}}}}
\newcommand{\toylog}{\gamma} 
\newcommand{\intrules}[1][]{{^{#1}}\varphi}
\newcommand{\hide}[1]{}
\newcommand{\Rms}{R_{\text{\tiny MS}}} 
\newcommand{\bigo}[1]{\mathcal{O}\left(#1\right)}
\newcommand{\momsch}[1]{R_{#1}}
\newcommand{\coupling}{g}
\newcommand{\powdep}{\kappa}
\newcommand{\forests}{\mathcal{F}}
\newcommand{\f}{w}
\newcommand{\autoconc}{\triangleright} 
\newcommand{\unimor}[1]{{^{#1}}\!\rho} 
\newcommand{\trees}{\mathcal{T}}
\newcommand{\tree}[1]{%
\vcenter{\hbox{\includegraphics{Trees/#1}}}%
}
\newcommand{\convolution}{\star}
\newcommand{\counit}{\varepsilon}
\newcommand{\unit}{u}
\newcommand{\Prim}{
\mathrm{Prim}
}
\newcommand{\polyint}{\textstyle
\int_0
}
\newcommand{\cored}{\widetilde{\Delta}} 
\newcommand{\chars}[2]{{G}_{#2}^{#1}} 
\newcommand{\infchars}[2]{\mathfrak{g}_{#2}^{#1}} 
\newcommand{\auto}[1]{{^{#1}}\chi} 
\newcommand{\alg}[1][A]{\mathcal{#1}}
\newcommand{\decor}{\mathcal{D}}
\newcommand{\gradAut}{\theta}
\newcommand{\dH}{\delta} 
\newcommand{\HH}[1][]{\text{HH}^{#1}_{\counit}}
\newcommand{\HZ}[1][]{\text{HZ}^{#1}_{\counit}}
\newcommand{\HB}[1][]{\text{HB}^{#1}_{\counit}}
\newcommand{\dd}[1][]{\mathrm{d}^{#1}}
\newcommand{\restrict}[2]{%
{\left. #1 \right|}_{#2}%
}
\DeclareMathOperator{\Res}{Res}
\DeclareMathOperator{\Hom}{Hom}
\DeclareMathOperator{\lin}{lin}
\DeclareMathOperator{\Aut}{Aut}
\DeclareMathOperator{\End}{End}
\newcommand{\ev}{
\mathrm{ev}
}
\DeclareMathOperator{\im}{im}
\renewcommand{\1}{
\mathbbm{1}
}
\newcommand{\defas}{
\mathrel{\mathop:}=
}
\newcommand{\safed}{
=\mathrel{\mathop:}
}
\newcommand{\set}[1]{
\left\{ #1 \right\}
}
\newcommand{\setexp}[2]{
\left\{ #1\!:\ #2 \right\}
}
\newcommand{\abs}[1]{
\left\lvert #1 \right\rvert
}
\newcommand{\id}{
\mathrm{id}
}
\newcommand{\qft}{%
quantum field theory%
}
\newcommand{\qfts}{%
quantum field theories%
}
\newcommand{\momscheme}{kinematic subtraction scheme}
\newtheorem{satz}{Theorem}[section]
\newtheorem{definition}[satz]{Definition}
\newtheorem{lemma}[satz]{Lemma}
\newtheorem{korollar}[satz]{Corollary}
\newtheorem{proposition}[satz]{Proposition}
\newtheorem{beispiel}[satz]{Example}
\theoremstyle{definition}
\def\clap#1{\hbox to 0pt{\hss#1\hss}}
\def\mathclap{\mathpalette\mathclapinternal}
\def\mathclapinternal#1#2{%
\clap{$\mathsurround=0pt#1{#2}$}}
\begin{document}

\title{Renormalization, Hopf algebras and Mellin transforms}


\author{Erik Panzer}
\address{%
Institutes of Physics and Mathematics,
Humboldt-Universit\"{a}t zu Berlin,
Unter den Linden 6, 10099 Berlin, Germany%
}
\email{\href{mailto:panzer@mathematik.hu-berlin.de}{\nolinkurl{panzer@mathematik.hu-berlin.de}}}

\subjclass[2000]{Primary }

\date{\today}

\begin{abstract}
	This article aims to give a short introduction into Hopf-algebraic aspects of renormalization, enjoying growing attention for more than a decade by now. As most available literature is concerned with the minimal subtraction scheme, we like to point out properties of the {\momscheme} which is also widely used in physics (under the names of MOM or BPHZ).

	In particular we relate renormalized Feynman rules $\phiR$ in this scheme to the universal property of the Hopf algebra $H_R$ of rooted trees, exhibiting a refined renormalization group equation which is equivalent to $\phiR: H_R \rightarrow \K[x]$ being a morphism of Hopf algebras to the polynomials in one indeterminate.
	
	Upon introduction of analytic regularization this results in efficient combinatorial recursions to calculate $\phiR$ in terms of the Mellin transform. We find that different Feynman rules are related by a distinguished class of Hopf algebra automorphisms of $H_R$ that arise naturally from Hochschild cohomology.

	Also we recall the known results for the minimal subtraction scheme and shed light on the interrelationship of both schemes.

	Finally we incorporate combinatorial Dyson-Schwinger equations to study the effects of renormalization on the physical meaningful correlation functions. This yields a precise formulation of the equivalence of the two different renormalization prescriptions mentioned before and allows for non-perturbative definitions of {\qfts} in special cases.
\end{abstract}

\maketitle

\section*{Motivation: The renormalization problem}
Suppose we want to assign a value to the logarithmically divergent integral $\phi_s(\tree{+-}) \defas \int_0^\infty \frac{\dd x}{x + s}$, which we associate to the tree $\tree{+-}$. Observing the (absolutely) integrable difference
\begin{equation}
	\int_0^\infty \left[ \frac{\dd x}{x + s} - \frac{\dd x}{x + \rp} \right]
	=
	- \ln \frac{s}{\rp}
	\safed
	- \scalelog
	\label{eq:intro-()-def}
\end{equation}
allows for the definition of
$
	\phiR[s](\tree{+-})
	\defas
	\phi_s(\tree{+-}) - \phi_{\rp}(\tree{+-})
	= -\ln\frac{s}{\rp}
	= - \scalelog
$,
which we call the \emph{renormalized} value of the expression $\phi_s(\tree{+-})$. We need to choose the \emph{renormalization point} $\rp$ to fix the constant not determined by \eqref{eq:intro-()-def}. This natural \emph{renormalization scheme} given by subtraction at a \emph{reference scale} $s\mapsto\rp$ is commonly employed in {\qft} (where similar divergent expressions occur as we briefly describe in section \ref{sec:extensions}) and will be called \emph{\momscheme} in the sequel.

When we apply the same idea to multi-dimensional integrals, we have to take care of \emph{subdivergences} as occurring for example in the expression
\begin{equation}
	\phi_s\left( \tree{++-+--} \right)
	\defas
	\int_0^\infty \frac{\dd x}{x+s}
	\left[ 
		\int_0^\infty \frac{\dd y}{x+y}
		\cdot
		\int_0^\infty \frac{\dd z}{x+z}
	\right]
	= \int_0^\infty \frac{\dd x}{x+s} \left[ \phi_x\left( \tree{+-} \right) \right]^2. 
	\label{eq:intro-(()())-def}
\end{equation}
A single subtraction at $s=\rp$ is insufficient as the sub integrals over $y$ and $z$ remain divergent. This problem is circumvented by applying renormalization to these first:
\begin{align}
	&\phiR[s]\left( \tree{++-+--} \right)
	\defas
	\int_0^\infty \left[\frac{\dd x}{x+s} - \frac{\dd x}{x+\rp} \right]
	\left\{
		\int_0^\infty \left[ \frac{\dd y}{x+y} - \frac{\dd y}{\rp + y} \right]
		\ \cdot\ 
		\int_0^\infty \left[ \frac{\dd z}{x+z} - \frac{\dd z}{\rp + z} \right]
	\right\} \nonumber\\
	&=\int_0^\infty \left[\frac{\dd x}{x+s} -\frac{\dd x}{x+\rp} \right] \left[ \phiR[x]\left( \tree{+-} \right) \right]^2
	= \int_0^\infty \left[\frac{\dd x}{x+s} -\frac{\dd x}{x+\rp} \right] \left(\ln \frac{x}{\rp} \right)^2
	= -\frac{\scalelog^3}{3} - \frac{\pi^2}{3}\scalelog.
	\label{eq:intro-(()())-R}
\end{align}
We want to summarize how this procedure is formulated in terms of Hopf algebras, study under which conditions it can be applied and reveal the main properties of the resulting maps $\phiR[s]$. In particular we will show that they are morphisms of Hopf algebras, taking values in the polynomials in $\scalelog$.

For a quick start, we prove this analytically in section \ref{sec:noregulator}, along the ideas \cite{Factorization} originating from {\qft}. Section \ref{sec:regulator} exploits an artificial regulator to rederive the same results in a setup more common to the original literature dealing with dimensional regularization and minimal subtraction. Along this way we take the time to recall the common algebraic techniques and contrast both methods.

After this construction of renormalized Feynman rules, we study their algebraic properties in section \ref{sec:RGE} focusing on the renormalization group. Together with the Mellin transform we can derive compact recursion relations, allowing for efficient combinatorial calculations.

At this point we turn towards the \emph{minimal subtraction scheme} in section \ref{sec:MS}. We summarize the known results and particularly relate the different realizations of the renormalization group equations in the two schemes, developing the duality between the concepts of \emph{finiteness} in the subtraction scheme and \emph{locality} in minimal subtraction.

Section \ref{sec:DSE} is devoted to Dyson-Schwinger equations, which link the combinatorics of the Hopf algebra to the physically meaningful correlation functions. In particular we observe how the change of renormalization scheme is equivalent to a redefinition of the coupling constant, proving the renormalization group equation in its physical form.

Finally we comment on the necessary modifications for generalizations of the model in different directions, like the presence of multiple parameters or higher degrees of divergence.

For reference and convenience of the reader, we collected the required features of the Hopf algebras $H_R$ of rooted trees and $\K[\x]$ of polynomials in the appendix. We also added a collection of well-known results on the \emph{Dynkin operator} $S\convolution Y$ which plays a prominent role in sections \ref{sec:regulator} and \ref{sec:MS} when we use a regulator. 

\section{Notations and preliminaries}
\label{sec:notation}

The essential structure behind perturbative renormalization is the Hopf algebra as discovered in \cite{Kreimer:HopfAlgebraQFT}.
As the literature grew comprehensive already, we content ourselves with fixing notation and recommend \cite{Manchon,Panzer:Master} for extended accounts of these concepts with a particular focus on their application to renormalization.

\subsection{Hopf algebras}
Throughout we consider associative, coassociative, commutative, unital and counital Bialgebras $(H,m,u,\Delta,\counit)$ given a connected ($H_0 = \K\cdot\1$) grading $H = \bigoplus_{n\geq 0} H_n$. 
For homogeneous $0\neq x\in H_n$, write $\abs{x}\defas n$ while the induced \emph{grading operator} $Y\in\End(H), x \mapsto Yx \defas \abs{x}\cdot x$ exponentiates to the one-parameter group $\K \ni t \mapsto \gradAut_t$ of Hopf algebra automorphisms
\begin{equation}
	\gradAut_t
	\defas \exp (tY)
	= \sum_{n\in\N_0} \frac{(tY)^n}{n!}
	,\quad
	\forall n\in\N_0:\quad
	H_n \ni x \mapsto \gradAut_t(x)
	= e^{t\abs{x}} x
	= e^{nt} x.
	\label{eq:grad-aut}
\end{equation}
Algebras $(\alg,m_{\alg},u_{\alg})$ are unital, associative and commutative, giving rise to the associative \emph{convolution product} on $\Hom(H,\alg)$ with unit given by $e\defas u_{\alg} \circ \counit$:
\begin{equation*}
	\Hom(H,\alg)
	\ni	\phi,\psi	\mapsto
	\phi \convolution \psi
	\defas
	m_{\alg} \circ (\phi\tp \psi) \circ \Delta
	\in \Hom(H,\alg).
\end{equation*}
As $H = \K\! \cdot\! \1 \oplus \ker \counit = \im \unit \oplus \ker \counit$ splits into the scalars and the \emph{augmentation ideal} $\ker\counit$, we obtain a projection $P \defas \id - \unit \circ \counit\!:\ H \twoheadrightarrow \ker \counit$ and use Sweedler's \cite{Sweedler} notation $\Delta (x) = \sum_x x_1 \tp x_2$ and $\cored(x)=\sum_x x'\tp x''$ to abbreviate the reduced coproduct $\cored \defas \Delta - \1 \tp \id - \id \tp \1$.
The connectedness implies:
\begin{enumerate}
	\item
		Under $\convolution$, the \emph{characters} (morphisms of unital algebras) form a group $\chars{H}{\alg} \defas \setexp{\phi\in\Hom(H,\alg)}{\phi\circ u=u_{\alg} \ \text{and}\ \phi\circ m = m_{\alg} \circ (\phi\tp \phi)}$.

	\item
		These biject along $\exp_{\convolution}\!:\ \infchars{H}{\alg}\rightarrow\chars{H}{\alg}$ with inverse $\log_{\convolution}\!:\ \chars{H}{\alg}\rightarrow\infchars{H}{\alg}$ to the \emph{infinitesimal characters} $\infchars{H}{\alg} \defas \setexp{\phi\in\Hom(H,\alg)}{\phi\circ m = \phi\tp e + e\tp \phi}$, using the pointwise finite series
	\begin{equation}\label{eq:log-exp}
		\exp_{\convolution} (\phi)
		\defas \sum_{n\in\N_0} \frac{\phi^{\convolution n}}{n!}
		\quad\text{and}\quad
		\log_{\convolution} (\phi)
		\defas \sum_{n\in\N} \frac{(-1)^{n+1}}{n} (\phi-e)^{\convolution n}.
	\end{equation}

	\item
		The unique inverse $S\defas\id^{\convolution-1} \in \chars{H}{H}$ is called \emph{antipode} and reveals $H$ as Hopf algebra. For all characters $\phi\in\chars{H}{\alg}$ we have $\phi^{\convolution -1} = \phi \circ S$.
\end{enumerate}
In general we assume the ground field $\K$ to be $\R$ or $\C$, though the reader will easily recognize that the majority of results allows for more generality (often characteristic zero suffices).
Note that by $\Hom(\cdot, \cdot)$ and $\End(\cdot)$ we always denote $\K$-linear maps and explicitly spell out if more structure should enjoy preservation by a morphism.
Finally, we write $\lin M$ for the linear span of $M$.

\subsection{Hochschild cohomology}

The Hochschild cochain complex \cite{CK:NC,BergbauerKreimer,Panzer:Master} we associate to $H$ contains the functionals $H' = \Hom(H,\K)$ as zero-cochains.
We will only consider one-cocycles $L \in \HZ[1](H) \subset \End(H)$ which are defined to solve $\Delta \circ L = (\id \tp L) \circ \Delta + L \tp \1$. The differential of this complex then becomes
\begin{equation}\label{eq:dH}
	\dH: H'\rightarrow \HZ[1](H),
	\alpha \mapsto
	\dH\alpha \defas (\id\tp\alpha) \circ \Delta - u \circ \alpha
	\in \HB[1](H)\defas \dH\left( H' \right)
\end{equation}
and defines the coboundaries $\HB[1](H)$ and thus the first cohomology group given by $\HH[1](H) \defas \HZ[1](H) / \HB[1](H)$. Note the elementary
\begin{lemma}\label{satz:cocycle-props}
	Cocycles $L\in \HZ[1](H)$ map into the augmentation ideal $\im L \subseteq \ker \counit$ and $L(\1) \in \Prim(H)\defas\ker\cored$ is primitive. The map $\HH[1](H) \rightarrow \Prim(H)$, $[L] = L + \HB[1](H) \mapsto L(\1)$ is well-defined since $\dH\alpha(\1)=0$ for all $\alpha\in H'$.
\end{lemma}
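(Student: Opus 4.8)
The plan is to read everything off the defining one-cocycle relation $\Delta\circ L=(\id\tp L)\circ\Delta+L\tp\1$ by feeding it the counit in suitable tensor slots; nothing beyond the (co)unit axioms of $H$ enters, so I do not expect a genuine obstacle. At most one has to keep track of which slot the counit is applied in and to invoke the counit axiom $\sum_x\counit(x_1)x_2=x$ at the right moment.

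For the first claim, $\im L\subseteq\ker\counit$, I would apply $\counit\tp\counit$ to the cocycle identity evaluated on an arbitrary $x\in H$. The left-hand side collapses via $(\counit\tp\counit)\circ\Delta=\counit$ to $\counit(L(x))$; on the right-hand side the first summand becomes $\sum_x\counit(x_1)\counit(L(x_2))=\counit\bigl(L(\sum_x\counit(x_1)x_2)\bigr)=\counit(L(x))$ after re-absorbing $\sum_x\counit(x_1)x_2$ into $x$, while the second summand $L(x)\tp\1$ also contributes $\counit(L(x))$. Thus $\counit(L(x))=2\,\counit(L(x))$, i.e.\ $\counit\circ L=0$. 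This little trick of evaluating $\counit$ in both slots so that the equation forces $\counit\circ L$ to vanish is the only mildly non-automatic point of the whole lemma.

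For $L(\1)\in\Prim(H)$ I would simply specialise the cocycle relation to $x=\1$: since $\Delta\1=\1\tp\1$ and hence $(\id\tp L)(\Delta\1)=\1\tp L(\1)$, it reads $\Delta L(\1)=\1\tp L(\1)+L(\1)\tp\1$, which is precisely $\cored(L(\1))=0$. Finally, to see that $[L]\mapsto L(\1)$ is well defined on $\HH[1](H)=\HZ[1](H)/\HB[1](H)$ it suffices to check $\dH\alpha(\1)=0$ for all $\alpha\in H'$; evaluating $\dH\alpha=(\id\tp\alpha)\circ\Delta-u\circ\alpha$ at $\1$ and using $\Delta\1=\1\tp\1$ gives $\alpha(\1)\1-\alpha(\1)\1=0$. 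Combined with the previous step (so that the image genuinely lies in $\Prim(H)$) and the already-recorded inclusion $\HB[1](H)\subseteq\HZ[1](H)$, any two representatives of a cohomology class take the same value at $\1$, and the map is well defined. Everything here is bookkeeping; there is no hard part.
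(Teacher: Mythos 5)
Your proof is correct: applying $\counit\tp\counit$ to the cocycle identity to force $\counit\circ L=2\,\counit\circ L=0$, specialising to $\1$ for primitivity, and checking $\dH\alpha(\1)=0$ is exactly the standard bookkeeping argument, and the paper itself states the lemma as ``elementary'' without supplying a proof, so there is nothing to diverge from.
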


\section{Finiteness of renormalization by kinematic subtraction}
\label{sec:noregulator}

Originally, perturbative {\qft} assigns (divergent) expressions to combinatorial objects called \emph{Feynman graphs}, as we will comment on in section \ref{sec:extensions}. 
However the Hopf algebra $H_R$ of rooted trees summarized in appendix \ref{sec:H_R} suffices to encode the structure of subdivergences \cite{Kreimer:HopfAlgebraQFT,CK:NC,CK:RH1} such that we can focus on \emph{Feynman rules} of the form $\phi: H_R \rightarrow \alg$ as above.
The \emph{target algebra} $\alg$ has to sustain divergent expressions which only become finite after we accomplished the renormalization.
Therefore we consider $\alg$ as the integrands (differential forms) which for convenience we nevertheless write as integrals, keeping in mind that we do not evaluate them.

Guided by the examples \eqref{eq:intro-()-def} and \eqref{eq:intro-(()())-def} we make
\begin{definition}
	\label{def:unrenormiert}
	By virtue of \ref{satz:H_R-universal} let $\phi\in\chars{H_R}{\alg}$ be the character fixed through
	\begin{equation}
		\phi_s \left( B_+(\f) \right)
		\defas
		\int \frac{\dd \zeta}{s} f\left( \frac{\zeta}{s} \right) \phi_{\zeta} (\f)
		\quad\text{for any forest}\quad
		\f \in H_R,
		\label{eq:unrenormiert-def}
	\end{equation}
	where the function $f$ generalizes our choice of $f(\zeta) = \frac{1}{1+\zeta}$ in the introduction and will be dictated by the Feynman rules	in a physical application.
	Excluding infrared divergences by requiring $f$ to be bounded on $[0,\infty)$, we restrict our study to \emph{ultraviolet}\footnote{physically $\zeta$ corresponds to a momentum, so the limit $\zeta\rightarrow\infty$ means high energies} divergences at $\zeta \rightarrow \infty$.
\end{definition}
Observe how each node of a tree corresponds to an integration of $f$ times the function $\phi_{\zeta}(\f)$ given by its children, so \eqref{eq:unrenormiert-def} ensures that all information about subdivergences of these Feynman rules $\phi$ is encoded in the coproduct of $H_R$.
\begin{beispiel}
Looking at \eqref{eq:intro-(()())-def}, $\cored\left( \tree{++-+--} \right) = 2 \tree{+-} \tp \tree{++--} + \tree{+-}\tree{+-} \tp \tree{+-}$ informs us about:
\begin{enumerate}
	\item Two individual subdivergences $\int \frac{\dd y}{x+y} = \phi_x\left( \tree{+-} \right) = \int \frac{\dd z}{x + z}$ of the type $\phi\left(\tree{+-}\right)$ nested inside the outer integrals $\int_0^\infty \frac{\dd x}{x+s} \int_0^\infty \frac{\dd z}{x+z}$ and $\int_0^\infty \frac{\dd x}{x+s} \int_0^\infty \frac{\dd y}{x+y}$ respectively, each being of type $\phi\left(\tree{++--}\right)$.
	\item One subdivergence $\int \frac{\dd y}{x+y} \cdot \int \frac{\dd z}{x + z}$ of the kind $\phi\left(\tree{+-}\tree{+-}\right)$ (when $y$ and $z$ approach $\infty$ jointly) inside the outer integration $\phi_s\left( \tree{+-} \right) = \int_0^\infty \frac{\dd x}{x+s}$.
\end{enumerate}
\end{beispiel}

\subsection{Subtraction scheme}
Note that the integrands $\phi$ depend on a free \emph{external parameter} $s>0$ which models a physical mass or momentum.
Our goal is to replace $\phi$ by integrable integrands to achieve well-defined functions of $s$.
As exemplified in the introduction, this can be achieved by subtracting the integrand at $s \mapsto \rp$. 
Renormalizing the subdivergences first as in \eqref{eq:intro-(()())-R} motivates
\begin{definition}
	\label{def:renormiert}
	Fixing a renormalization point $\rp > 0$ we define the character $\phiR: H_R \rightarrow \alg$ (again as an instance of \ref{satz:H_R-universal}) by requiring
	\begin{equation}
		\phiR[s] (B_+(\f))
		\defas
		\int \dd \zeta \left[ \frac{f(\frac{\zeta}{s})}{s} - \frac{f(\frac{\zeta}{\rp})}{\rp} \right] \phiR[\zeta](\f)
		\quad\text{for any}\quad
		\f \in H_R.
		\label{eq:renormiert-def}
	\end{equation}
\end{definition}
To actually achieve finiteness this way we need to constrain the growth of $f(\zeta)$ at $\zeta\rightarrow \infty$ to be not worse than $\zeta^{-1}$, corresponding to a \emph{logarithmic divergence} in
\begin{satz}
	Let $f$ be a bounded, piecewise differentiable function on $[0,\infty)$ with $f(\zeta) \in \bigo{\zeta^{-1}}$, so in particular $\coeff{-1} \defas \lim_{\zeta \rightarrow \infty} \left[ \zeta f(\zeta) \right]$ exists. If furthermore
	\begin{equation}\label{eq:finiteness-conditions}
		f(\zeta) - \frac{\coeff{-1}}{\zeta},
		f(\zeta) + \zeta f'(\zeta)
		\in
 		\bigo{\zeta^{-1-\varepsilon}}
		\quad\text{for some}\quad
		\varepsilon>0,
	\end{equation}
	then for any forest $\f \in H_R$ the integral $\phiR[s](\f)$ is absolutely convergent and evaluates to a polynomial $\phiR[s](\f) \in \K[\scalelog]$ in $\scalelog \defas \ln\frac{s}{\rp}$.
	\label{satz:finiteness}
\end{satz}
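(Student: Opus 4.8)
The plan is to proceed by induction on the number of nodes of the forest $\f$, using the recursive definition \eqref{eq:renormiert-def} together with the grafting operator $B_+$ and the multiplicativity of the character $\phiR$. Since every forest is a product of trees and $\phiR$ is an algebra morphism into the commutative algebra $\K[\scalelog]$, it suffices to treat the case of a single tree $\f = B_+(\f')$ where $\f'$ is a forest with strictly fewer nodes; the inductive hypothesis then supplies $\phiR[\zeta](\f') = p(\ln\tfrac{\zeta}{\rp})$ for a polynomial $p \in \K[\scalelog]$, and one only has to control the single outer integration $\int \dd\zeta \bigl[ \tfrac{f(\zeta/s)}{s} - \tfrac{f(\zeta/\rp)}{\rp} \bigr] p(\ln\tfrac{\zeta}{\rp})$.

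The heart of the matter is therefore the following one-variable claim: for every $k \in \N_0$ the integral
\begin{equation*}
	I_k(s) \defas \int_0^\infty \dd\zeta \left[ \frac{f(\zeta/s)}{s} - \frac{f(\zeta/\rp)}{\rp} \right] \left( \ln\frac{\zeta}{\rp} \right)^k
\end{equation*}
converges absolutely and is a polynomial in $\scalelog = \ln\tfrac{s}{\rp}$ of degree $k+1$. After the substitution $\zeta = s u$ in the first summand and $\zeta = \rp u$ in the second, the bracket becomes $\tfrac1s f(u/1)$-type terms and one is led to compare $\int_0^\infty f(u)\,[(\ln u + \scalelog + \ln\rp - \ln\rp)^{\dots}]$; more cleanly, writing $g(\zeta) \defas \zeta f(\zeta)$ one rescales to get $I_k(s) = \int_0^\infty \tfrac{\dd u}{u} g(u)\bigl[ (\ln u + \scalelog)^k - (\ln u)^k \bigr]$ up to a careful bookkeeping of which logarithm absorbs $\scalelog$. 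Expanding the difference of $k$-th powers by the binomial theorem produces a sum $\sum_{j=0}^{k-1} \binom{k}{j} \scalelog^{k-j} \int_0^\infty \tfrac{\dd u}{u} g(u) (\ln u)^j$, so everything reduces to showing that the moment-type integrals $M_j \defas \int_0^\infty \tfrac{\dd u}{u}\,g(u)\,(\ln u)^j$ are finite constants. Here the hypotheses \eqref{eq:finiteness-conditions} enter: near $u\to\infty$, $g(u) - \coeff{-1} = u f(u) - \coeff{-1} \in \bigo{u^{-\varepsilon}}$ forces $g(u)(\ln u)^j/u$ to be integrable at infinity \emph{after} one observes that the constant piece $\coeff{-1} (\ln u)^j/u$ cancels between the two summands of the original bracket (it is scale-independent), while near $u\to 0$ boundedness of $f$, hence $g(u) \in \bigo{u}$, makes $g(u)(\ln u)^j/u \in \bigo{(\ln u)^j}$ integrable. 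The second condition $f(\zeta) + \zeta f'(\zeta) = (\zeta f(\zeta))' \in \bigo{\zeta^{-1-\varepsilon}}$ is precisely what guarantees that $\coeff{-1} = \lim_{\zeta\to\infty} \zeta f(\zeta)$ exists (the derivative of $\zeta f(\zeta)$ is integrable at infinity) and simultaneously gives the decay rate needed to make the $j = k$ term — which is the only one not multiplied by a positive power of $\scalelog$ and hence the only one where the cancellation of the constant is essential — converge; one integrates by parts, trading $(\ln u)^k$ against $g'(u)$, to see this transparently.

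Assembling the pieces: by induction $\phiR[\zeta](\f')$ is a polynomial $p(\ln\tfrac{\zeta}{\rp}) = \sum_k a_k (\ln\tfrac{\zeta}{\rp})^k$, so $\phiR[s](B_+(\f')) = \sum_k a_k I_k(s)$ is a finite $\K$-linear combination of the polynomials $I_k(s) \in \K[\scalelog]$, hence itself lies in $\K[\scalelog]$; absolute convergence is inherited termwise from the convergence of each $I_k$. The base case $\f = \1$ is handled directly since $B_+(\1) = \tree{+-}$ and $\phiR[s](\tree{+-}) = I_0(s) = -\coeff{-1}\scalelog$ by the computation in \eqref{eq:intro-()-def} (generalized from $\coeff{-1}=1$). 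Finally, for a general forest $\f = t_1 \cdots t_m$ one uses that $\phiR$ is a character, $\phiR[s](\f) = \prod_i \phiR[s](t_i)$, each factor a polynomial in $\scalelog$ by the tree case, so the product is too.

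I expect the \textbf{main obstacle} to be the careful treatment of the borderline term (the one of top degree $j=k$ in the moment integrals, equivalently the coefficient-free part of the bracket), where naive estimates of the two summands $\tfrac{f(\zeta/s)}{s}(\ln\tfrac\zeta\rp)^k$ and $\tfrac{f(\zeta/\rp)}{\rp}(\ln\tfrac\zeta\rp)^k$ individually \emph{diverge} like $\int^\infty \tfrac{\dd\zeta}{\zeta}(\ln\zeta)^k$; absolute convergence of the difference must be extracted by first subtracting and adding back the common leading behaviour $\coeff{-1}/\zeta$, i.e. by splitting $f = \coeff{-1}/\zeta + (f - \coeff{-1}/\zeta)$ and using \eqref{eq:finiteness-conditions} for the remainder — the bookkeeping of which logarithmic argument ($\ln\tfrac\zeta s$ versus $\ln\tfrac\zeta\rp$) pairs with which rescaled integrand is where one must be most careful to land on a genuine polynomial in $\scalelog$ rather than in $\ln s$ and $\ln\rp$ separately.
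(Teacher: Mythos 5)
Your overall architecture --- induction over the number of nodes, multiplicativity of $\phiR$, reduction via $B_+$ and \eqref{eq:renormiert-def} to single integrals $I_k(s)$ of powers of $\ln\frac{\zeta}{\rp}$ against the subtracted kernel --- is the same as the paper's, and you correctly locate the crux in the borderline cancellation of the $\coeff{-1}/\zeta$ behaviour. However, the central identity on which you base polynomiality is false, not merely in need of bookkeeping. Splitting the absolutely convergent difference into its two summands and substituting $\zeta=su$ in one but $\zeta=\rp u$ in the other is not permitted, since each summand diverges on its own; and the recombined expression $\int_0^\infty \frac{\dd u}{u}\,g(u)\left[(\ln u+\scalelog)^k-(\ln u)^k\right]$ is itself divergent for $k\geq 1$ (its integrand behaves like $\coeff{-1}\,k\,\scalelog\,(\ln u)^{k-1}/u$ as $u\to\infty$, for $\coeff{-1}\scalelog\neq 0$), while for $k=0$ it would give $I_0=0$, contradicting $I_0=-\coeff{-1}\scalelog$ from \eqref{eq:intro-()-def}. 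The same defect shows structurally: your reduction $I_k=\sum_{j=0}^{k-1}\binom{k}{j}\scalelog^{k-j}M_j$ has degree at most $k$ in $\scalelog$, so it cannot produce the degree-$(k+1)$ leading log that you yourself (correctly) announce; moreover the moments $M_j$ diverge at $\infty$ because $g\to\coeff{-1}\neq 0$, and after your proposed global subtraction of $\coeff{-1}/\zeta$ each half instead diverges at $\zeta\to 0$, where $f$ is bounded but $\coeff{-1}(\ln\zeta)^j/\zeta$ is not integrable. The subtracted pieces do not ``cancel because they are scale independent'': after the two different rescalings they carry exactly the finite, $\scalelog$-dependent part, which is the entire renormalization effect. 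A cutoff $\int_0^\Lambda$ with careful tracking of boundary terms (which supply the missing powers of $\scalelog$) could repair this, but that argument is not in your text.

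The paper circumvents the issue, and your closing aside about integrating by parts against $g'(\zeta)=f(\zeta)+\zeta f'(\zeta)$ is the germ of it. Absolute convergence is established in the unsubstituted variable: the bracket in \eqref{eq:renormiert-def} is $\bigo{\zeta^{-1-\varepsilon}}$ because the first condition of \eqref{eq:finiteness-conditions} applies at both scales and the $\coeff{-1}/\zeta$ parts cancel inside the bracket, while $\phiR[\zeta](\f)$ grows only logarithmically. Then one differentiates under the integral with respect to $\scalelog$: the $\rp$-term drops out, a single substitution gives $-\partial_\scalelog\,\phiR[s](t)=\int_0^\infty\left[f(\zeta)+\zeta f'(\zeta)\right]\phiR[\zeta s](\f)\,\dd\zeta$, and now the second condition of \eqref{eq:finiteness-conditions} makes this integrand $\bigo{\zeta^{-1-\varepsilon}}$ with no subtraction needed. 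Expanding $\phiR[\zeta s](\f)$ as a polynomial in $\ln\zeta+\scalelog$ yields the finite constants \eqref{eq:coeff-def}, hence $\partial_\scalelog\phiR[s](t)\in\K[\scalelog]$, and a single integration in $\scalelog$ (the constant being fixed by $\phiR[\rp](t)=0$) concludes. Please redo your one-variable lemma along these lines.
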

	We remark that these requirements on $f$ could be relaxed%
\footnote{%
	for example, $f(\zeta) + \zeta f'(\zeta) \in \bigo{\zeta^{-1-\varepsilon}}$ already follows from $f(\zeta) - \frac{\coeff{-1}}{\zeta} \in \bigo{\zeta^{-1-\varepsilon}}$ using the theorem of L'H\^{o}pital
},
	but they are already fulfilled for physical cases of logarithmic divergent Feynman graphs%
\footnote{%
	these include the vertex graphs of QED and $\varphi^4$-theory in four dimensions of space-time
}.
	In particular \eqref{eq:finiteness-conditions} holds for all rational functions $f\in\bigo{\zeta^{-1}} \cap \K(\zeta)$ without poles in $[0,\infty)$.
\begin{proof}
	By definition \ref{def:renormiert} we have $\phiR(\1) = 1$ and $\phiR(\f \f') = \phiR(\f) \cdot \phiR(\f')$ and can therefore proceed inductively.
	We may thus assume the statement to be true for some element $\f\in H_R$ and only need to prove it for the tree $t=B_+(\f)$.
	But then the difference in brackets in \eqref{eq:renormiert-def} falls of like $\zeta^{-1-\varepsilon}$ from \eqref{eq:finiteness-conditions} while $\phiR[\zeta](\f)$ only grows like $\ln^N \zeta$ for the degree $N$ of $\phiR(\f)$. Hence \eqref{eq:renormiert-def} is absolutely convergent (the logarithmic singularities $\ln^N \zeta$ at $\zeta \rightarrow 0$ are integrable anyway) and thus $\phiR[s](B_+ (\f))$ finite.

	By \eqref{eq:finiteness-conditions} we can also interchange integration with the partial derivative $\partial_{\scalelog}$ in
	\begin{equation*}
		-\partial_{\scalelog}
		\phiR[s](t)
		= \int_0^\infty \!\!\!\!\!\dd\zeta \left[ 
			\frac{f(\frac{\zeta}{s})}{s}
			+
			\frac{\zeta}{s} \frac{f'(\frac{\zeta}{s})}{s}
		\right]
		\phiR[\zeta](\f)
		= \int_0^\infty \!\!\!\!\!\dd \zeta \left[ 
			f(\zeta) + \zeta f'(\zeta)
		\right]
		\phiR[\zeta s] (\f).
		\label{eq:diff-phi+-recursion}
	\end{equation*}
	Exploiting that $\phiR[\zeta s](\f)$ is polynomial in $\ln\frac{\zeta s}{\rp} = \ln\zeta + \scalelog$ we can evaluate
	\begin{equation}\label{eq:log-integration}
		\int_0^\infty \!\!\!\!\!\dd \zeta \left[ 
			f(\zeta) + \zeta f'(\zeta)
		\right]
		(\ln \zeta + \scalelog)^n
		=
		\sum_{i=0}^n \binom{n}{i} \scalelog^{n-i} c_{i-1} (-1)^i i!
		=
		\sum_{i=0}^\infty c_{i-1} \left(-\partial_{\scalelog}\right)^i \scalelog^n
	\end{equation}
	upon defining the constants (which are \emph{periods} \cite{periods} for algebraic functions $f$)
	\begin{equation}
		c_{n-1}
		\defas
		\int_0^\infty \dd \zeta \left[ 
			f(\zeta) + \zeta f'(\zeta)
		\right]
		\frac{(-\ln \zeta)^n}{n!}
		\qquad
		\text{for any}
		\qquad
		n \in \N_0.
		\label{eq:coeff-def}
	\end{equation}
	Thus linearity shows $\partial_{-\scalelog} \phiR[s] (t) \in \K[\scalelog]$ and we merely have to integrate once.
\end{proof}
Not only did we achieve our goal of renormalization, but we found an explicit recursion \eqref{eq:log-integration} determining $\phiR$ completely using the universal property \ref{satz:H_R-universal} in
\begin{korollar}
	The constants $\coeff{\cdot}$ of \eqref{eq:coeff-def} determine the renormalized Feynman rules $\phiR \in \chars{H_R}{\K[\scalelog]}$ completely through the universal property \ref{satz:H_R-universal} by
	\begin{equation}
		\phiR \circ B_+
		= P \circ F(-\partial_{\scalelog}) \circ \phiR,
		\quad\text{where}\quad
		F(-\partial_{\scalelog}) 
		\defas
		\sum_{n\geq -1} c_n (-\partial_{\scalelog})^n
		\in \End\left(\K[\scalelog]\right).
		\label{eq:phiR-universal}
	\end{equation}
	For convenience we write here $(-\partial_{\scalelog})^{-1} \defas - \polyint$ for the integral operator.  Recall that the projection $P: \K[\scalelog] \twoheadrightarrow \scalelog\,\K[\scalelog]$ annihilates any constants.
\end{korollar}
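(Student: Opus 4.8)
The plan is to extract the recursion \eqref{eq:phiR-universal} from the computation already carried out in the proof of Theorem~\ref{satz:finiteness}; the only genuinely new ingredient is to recognise that the projection $P$ on the right-hand side encodes precisely the integration constant that is left undetermined when one passes from $-\partial_{\scalelog}\phiR[s](B_+(\f))$ back to $\phiR[s](B_+(\f))$ itself.

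First I would fix a forest $\f\in H_R$, set $t\defas B_+(\f)$, and revisit the identity
\begin{equation*}
	-\partial_{\scalelog}\phiR[s](t)=\int_0^\infty\dd\zeta\left[f(\zeta)+\zeta f'(\zeta)\right]\phiR[\zeta s](\f)
\end{equation*}
established there. Since $\phiR[\zeta s](\f)$ is a polynomial in $\ln\zeta+\scalelog$, expanding it into monomials and inserting \eqref{eq:log-integration} term by term turns the right-hand side into $\sum_{i\geq 0}c_{i-1}(-\partial_{\scalelog})^i\phiR[s](\f)$. Reindexing $n\defas i-1$ and using $(-\partial_{\scalelog})\circ(-\polyint)=\id$ on $\K[\scalelog]$, this is exactly $(-\partial_{\scalelog})\circ F(-\partial_{\scalelog})$ applied to $\phiR[s](\f)$. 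Hence $\partial_{\scalelog}\bigl(\phiR[s](t)-F(-\partial_{\scalelog})\phiR[s](\f)\bigr)=0$, so $\phiR[s](t)-F(-\partial_{\scalelog})\phiR[s](\f)$ is a constant in $\scalelog$ (possibly depending on $\f$).

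To pin down that constant I would note directly from Definition~\ref{def:renormiert} that the bracket in \eqref{eq:renormiert-def} vanishes for $s=\rp$, so $\phiR[s](t)$ has vanishing constant term as a polynomial in $\scalelog$, i.e.\ $P\phiR[s](t)=\phiR[s](t)$. Applying the $\K$-linear map $P$ to the previous displayed identity and using that $P$ annihilates constants yields $\phiR[s](t)=P\circ F(-\partial_{\scalelog})\,\phiR[s](\f)$, which is \eqref{eq:phiR-universal}. For the final assertion I would observe that $P\circ F(-\partial_{\scalelog})$ is a $\K$-linear endomorphism of $\K[\scalelog]$ determined solely by the numbers $c_n$; as $\phiR$ is a character with $\phiR\circ B_+=\bigl(P\circ F(-\partial_{\scalelog})\bigr)\circ\phiR$, the universal property \ref{satz:H_R-universal} singles it out as the unique such character, so the $c_n$ determine $\phiR$ completely.

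I expect the one delicate point to be the interplay of $P$ with $\partial_{\scalelog}$: these do not commute ($\partial_{\scalelog}P\scalelog=1$ while $P\partial_{\scalelog}\scalelog=0$), so one cannot naively ``project and then integrate''. The resolution --- that the boundary value $\phiR[\rp](t)=0$ makes $P$ the correct bookkeeping device for discarding the spurious constant term of $F(-\partial_{\scalelog})\phiR[s](\f)$ --- is essentially the whole content beyond the polynomial manipulations already performed in Theorem~\ref{satz:finiteness}.
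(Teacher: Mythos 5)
Your argument is correct and follows essentially the same route the paper takes: the recursion is extracted from the display $-\partial_{\scalelog}\phiR[s](B_+(\f))=\int_0^\infty\dd\zeta\,[f(\zeta)+\zeta f'(\zeta)]\,\phiR[\zeta s](\f)$ together with \eqref{eq:log-integration} in the proof of Theorem \ref{satz:finiteness}, and the remaining integration constant is fixed by the subtraction at $s=\rp$, which is exactly what $P$ encodes. Your explicit treatment of the boundary value $\phiR[\rp](B_+(\f))=0$ and of the non-commutativity of $P$ with $\partial_{\scalelog}$ only spells out what the paper leaves implicit (the paper later gives a second, regularized derivation in \eqref{eq:phiR-B_+-recursion-reg}, but that is not needed here).
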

	Example \ref{ex:renormalized-universal} explicitly shows how this recursion works in detail.

	In section \ref{sec:RGE} we will see that \eqref{eq:phiR-universal} implies the \emph{renormalization group} upon realizing that $P \circ F (-\partial_{\scalelog}) \in \HZ[1](\K[\scalelog])$ is a Hochschild-1-cocycle.
But before let us review the

\subsection{Algebraic renormalization process}
\label{sec:renormalization}

Renormalization of a character $\phi\in\chars{H}{\alg}$ can be described as a \emph{Birkhoff decomposition} into the \emph{renormalized} $\phiR \defas \phi_+\in\chars{H}{\alg}$ and the \emph{counterterms} $\phi_-\in\chars{H}{\alg}$ subject to the conditions that
\begin{equation}
	\phi
	= \phi_-^{\convolution -1} \convolution \phi_+
	\quad\text{and}\quad
	\phi_{\pm} \left( \ker\counit \right) \subseteq \alg_{\pm}.
	\label{eq:birkhoff}
\end{equation}
It depends on a splitting \mbox{$\alg = \alg_+ \oplus \alg_-$} of the target algebra, determining the \emph{renormalization scheme} which we identify with the corresponding projection $R\!: \alg \twoheadrightarrow \alg_-$ along $\alg_+$.

\begin{satz}[\cite{CK:RH1,Manchon,Panzer:Master}]
	\label{satz:birkhoff}
A unique Birkhoff decomposition \eqref{eq:birkhoff} exists given that $R$ is a Rota-Baxter map, meaning
	\begin{equation}
		m \circ (R \tp R)
		=
		R \circ m\circ \left[ R \tp \id + \id \tp R - \id \tp \id \right].
		\label{eq:Rota-Baxter}
	\end{equation}
	On the augmentation ideal $\ker \counit$ it may be computed inductively by
	\begin{equation}
		\phi_-(x)
		=
		-R
		\circ
		\bar{\phi}(x)
		\qquad\text{and}\qquad
		\phi_+(x)
		=
		(\id-R)
		\circ
		\bar{\phi}(x),
		\label{eq:birkhoff-recursion}
	\end{equation}
	using the \emph{Bogoliubov character} $\bar{\phi}$ (also \emph{$\bar{R}$-operation}) which is defined as
	\begin{equation}
		\bar{\phi} (x) \defas
		\phi(x) + \sum_x \phi_- (x') \phi(x'') 
		= \phi_+(x) - \phi_-(x).
		\label{eq:bogoliubov}
	\end{equation}
\end{satz}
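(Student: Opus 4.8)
The plan is to establish existence by explicitly \emph{defining} $\phi_\pm$ through the recursions \eqref{eq:birkhoff-recursion}--\eqref{eq:bogoliubov} and then checking, one by one, the two clauses of \eqref{eq:birkhoff}, while uniqueness will follow from nothing more than the directness of the splitting $\alg = \alg_+ \oplus \alg_-$. The entire substance lies in verifying that the maps produced by the recursion are genuinely \emph{characters}; everything else is formal.

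\textbf{Existence.} Put $\phi_\pm(\1) \defas 1$ and define $\phi_\pm$ on $\ker\counit$ inductively in the grading by \eqref{eq:bogoliubov} and \eqref{eq:birkhoff-recursion}. This is legitimate since $\cored(x) = \sum_x x' \tp x''$ involves only tensorands of strictly smaller degree, so $\bar{\phi}(x)$ only evaluates $\phi_-$ where it is already known. Immediately $\phi_-(x) = -R\,\bar{\phi}(x) \in \im R = \alg_-$ and $\phi_+(x) = (\id - R)\bar{\phi}(x) \in \im(\id - R) = \alg_+$ for $x \in \ker\counit$, which is the second condition in \eqref{eq:birkhoff}. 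Expanding the convolution gives $(\phi_- \convolution \phi)(x) = \phi(x) + \phi_-(x) + \sum_x \phi_-(x')\phi(x'') = \bar{\phi}(x) + \phi_-(x) = \phi_+(x)$ on $\ker\counit$ and trivially on $\1$, so $\phi_+ = \phi_- \convolution \phi$; since $\phi_-(\1) = 1$ and $H$ is connected graded, $\phi_-$ is $\convolution$-invertible, and convolving on the left with $\phi_-^{\convolution -1}$ yields the first equation of \eqref{eq:birkhoff}. It therefore only remains to see that $\phi_\pm$ are characters, and as $\phi$ is one it suffices to prove $\phi_-$ is multiplicative --- then $\phi_+ = \phi_- \convolution \phi$ is a character automatically and both lie in the group $\chars{H}{\alg}$.

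\textbf{Multiplicativity of $\phi_-$.} Argue by induction on $n = \abs{x} + \abs{y}$ for homogeneous $x,y \in \ker\counit$ (the case in which one factor lies in $\K\1$ being trivial). Using that $\Delta$ is an algebra morphism, expand $\cored(xy)$ out of $\Delta(xy) = \Delta(x)\Delta(y)$: apart from the leading terms, every tensorand appearing has degree $< n$, so the induction hypothesis permits treating $\phi_-$ as multiplicative throughout the expansion of $\bar{\phi}(xy)$. The Rota--Baxter identity \eqref{eq:Rota-Baxter}, in the form $R(a)R(b) = R\bigl(R(a)b + aR(b) - ab\bigr)$, is then exactly what is needed to reorganise $\phi_-(xy) = -R\,\bar{\phi}(xy)$ into $\bigl(-R\,\bar{\phi}(x)\bigr)\bigl(-R\,\bar{\phi}(y)\bigr) = \phi_-(x)\phi_-(y)$. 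This rearrangement --- writing $\bar{\phi}(xy)$ as a sum of terms that, after application of $R$, recombine via \eqref{eq:Rota-Baxter} precisely into the product $\phi_-(x)\phi_-(y)$ --- is the technical heart of the theorem and the step I expect to be the main obstacle; it is the classical Connes--Kreimer computation, carried out with the Bogoliubov recursion.

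\textbf{Uniqueness.} Let $\phi = \psi_-^{\convolution -1} \convolution \psi_+$ be any decomposition with $\psi_\pm \in \chars{H}{\alg}$, $\psi_\pm(\1) = 1$ and $\psi_\pm(\ker\counit) \subseteq \alg_\pm$. Convolving on the left with $\psi_-$ gives $\psi_+ = \psi_- \convolution \phi$, hence on $\ker\counit$, $\psi_+(x) - \psi_-(x) = \phi(x) + \sum_x \psi_-(x')\phi(x'')$ --- the very Bogoliubov formula \eqref{eq:bogoliubov}, with $\phi_-$ replaced by $\psi_-$. Inducting on the grading: if $\psi_\pm$ agrees with $\phi_\pm$ in all degrees below $\abs{x}$, then the right-hand side equals $\bar{\phi}(x)$, so $\psi_+(x) + \bigl(-\psi_-(x)\bigr) = \bar{\phi}(x) = \phi_+(x) + \bigl(-\phi_-(x)\bigr)$ are two decompositions of one element along $\alg = \alg_+ \oplus \alg_-$; directness forces $\psi_\pm(x) = \phi_\pm(x)$. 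Thus $\psi_\pm = \phi_\pm$ and the Birkhoff decomposition is unique.
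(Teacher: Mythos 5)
The paper itself does not prove this theorem --- it is quoted from \cite{CK:RH1,Manchon,Panzer:Master} --- and your proposal follows exactly the route of those references: define $\phi_\pm$ by the Bogoliubov recursion, check the splitting condition, verify $\phi_+=\phi_-\convolution\phi$, reduce everything to multiplicativity of $\phi_-$, and get uniqueness from the directness of $\alg=\alg_+\oplus\alg_-$. The formal parts are complete and correct: the recursion is well defined since $\cored$ strictly lowers the degree; $\phi_-(x)\in\im R=\alg_-$ and $\phi_+(x)\in\im(\id-R)=\alg_+$ are immediate; the computation $(\phi_-\convolution\phi)(x)=\bar{\phi}(x)+\phi_-(x)=\phi_+(x)$ gives the factorization, and $\phi_-$ is $\convolution$-invertible because $\phi_-(\1)=1$ on a connected graded bialgebra; your uniqueness argument (which correctly uses neither the Rota--Baxter property nor multiplicativity of $\psi_\pm$, only unitality and the direct splitting) is the standard one. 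Reducing the character property to multiplicativity of $\phi_-$ is also legitimate, since $\alg$ is commutative, so $\phi_+=\phi_-\convolution\phi$ is then automatically a character.

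The one genuine gap is that the step you yourself call the technical heart --- multiplicativity of $\phi_-$, the only place where \eqref{eq:Rota-Baxter} enters and hence the entire substance of the theorem --- is asserted rather than carried out: you say the Rota--Baxter identity ``is exactly what is needed to reorganise'' $\phi_-(xy)$ but never exhibit the reorganisation. It does go through, and concretely what is missing is the identity, for $x,y\in\ker\counit$ and by induction on $\abs{x}+\abs{y}$,
\begin{equation*}
	\bar{\phi}(xy)
	= \bar{\phi}(x)\bar{\phi}(y) + \phi_-(x)\bar{\phi}(y) + \bar{\phi}(x)\phi_-(y),
\end{equation*}
which follows by expanding $\cored(xy)$ from $\Delta(xy)=\Delta(x)\Delta(y)$, using multiplicativity of $\phi$ and the induction hypothesis on the lower-degree products $xy'$, $x'y$, $x'y'$ that appear; the terms then assemble into $\big(\bar{\phi}(x)+\phi_-(x)\big)\big(\bar{\phi}(y)+\phi_-(y)\big)-\phi_-(x)\phi_-(y)$. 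With this in hand, \eqref{eq:Rota-Baxter} applied to $\phi_-(x)\phi_-(y)=R\bar{\phi}(x)\cdot R\bar{\phi}(y)$ closes the induction:
\begin{align*}
	\phi_-(x)\phi_-(y)
	&= R\left[ R\bar{\phi}(x)\,\bar{\phi}(y) + \bar{\phi}(x)\,R\bar{\phi}(y) - \bar{\phi}(x)\bar{\phi}(y) \right] \\
	&= -R\left[ \phi_-(x)\bar{\phi}(y) + \bar{\phi}(x)\phi_-(y) + \bar{\phi}(x)\bar{\phi}(y) \right]
	= -R\,\bar{\phi}(xy)
	= \phi_-(xy).
\end{align*}
So your outline is the right one, but as submitted the proof of the central claim is a promissory note; the half-page computation above is what has to be written out to make it a proof.
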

\begin{definition}\label{def:momentum-scheme}
	The \emph{\momscheme} $\momsch{\rp}$ by evaluation at $s \mapsto \rp$ is defined as
	\begin{equation}
		\End(\alg) \ni \momsch{\rp} \defas \ev_{\rp} = \left( \alg \ni f \mapsto {\left. f \right|}_{s=\rp} \right)
		\label{eq:momentum-scheme}
	\end{equation}
	and splits $\alg$ into $\im\momsch{\rp}=\alg_-$ ($s$-independent integrals) and $\ker\momsch{\rp} = \alg_+$, those integrals that vanish at $s=\rp$.
\end{definition}
As $\momsch{\rp}$ is a character of $\alg$, it not only fulfills \eqref{eq:Rota-Baxter} and we obtain a unique Birkhoff decomposition, but also the recursion \eqref{eq:birkhoff-recursion} simplifies a lot to just
\begin{equation}
	\label{eq:birkhoff-character}
	\phi_- 
	= \momsch{\rp} \circ \phi \circ S 
	= \phi_{\rp} \circ S
	= \phi_{\rp}^{\convolution -1}
	\qquad\text{and}\qquad
	\phi_+ 
	= \phi_{\rp}^{\convolution-1} \convolution \phi_{s}.
\end{equation}
\begin{beispiel}
	In accordance with \eqref{eq:intro-()-def} we find
	\begin{equation*}
		\phiR[s] \left( \tree{+-} \right)
		=
		\phi_{+,s} \left( \tree{+-} \right)
		=
		\phi_- \left( \tree{+-} \right) + \phi_s \left( \tree{+-} \right)
		= 
		\int_0^\infty \left[ - \restrict{\frac{\dd x}{x+s}}{s\mapsto\rp} + \frac{\dd x}{x+s} \right],
	\end{equation*}
	and 
	$%
		\bar{\phi}\left( \tree{++-+--} \right)
		\urel{\eqref{eq:birkhoff-recursion}}
		\phi_s\left( \tree{++-+--} \right) + 2\phi_-\left( \tree{+-} \right) \phi_s\left( \tree{++--} \right) + \phi_-\left( \tree{+-}\tree{+-} \right) \phi_s\left( \tree{+-} \right)
	$
	indeed agrees with \eqref{eq:intro-(()())-R} using $\phiR\left( \tree{++-+--} \right) = (\id-\momsch{\rp}) \bar{\phi}\left( \tree{++-+--} \right)$ after rearranging the terms\footnote{%
	Note that we need to track the correspondence of variables and nodes.}
	\begin{align*}
		\bar{\phi}\left( \tree{++-+--} \right)
		= \int_0^\infty \!\!\!\dd x
		  \int_0^\infty \!\!\!\dd y
		  \int_0^\infty \!\!\!\dd z
		& \left[ 
		\frac{1}{(s+x)(x+y)(x+z)} - \frac{1}{\rp+y}\frac{1}{(s+x)(x+z)}
			\right. \\
		&\quad \left.
		 - \frac{1}{\rp+z}\frac{1}{(s+x)(x+y)} + \frac{1}{\rp+y}\frac{1}{\rp+z}\frac{1}{s+x}
			\right].
	\end{align*}
\end{beispiel}
	We remark that the recursion \eqref{eq:birkhoff-recursion} makes explicit reference to the divergent counterterms $\phi_-$. In \eqref{eq:renormiert-def} we anticipated the much more practical formula resulting from the special structure \ref{satz:H_R-universal} of the Feynman rules $\phi$ of \eqref{eq:unrenormiert-def} in
\begin{satz}\label{satz:subdivergences}
	Let the character $\phi: H_R\rightarrow \alg$ be subject to $\phi\circ B_+ = L \circ \phi$ for some $L \in \End\left( \alg \right)$ and the renormalization scheme $R \in \End(\alg)$ such that it ensures
	\begin{equation}
		L \circ m_{\alg} \circ ( \phi_- \tp \id )
		= m_{\alg} \circ (\phi_- \tp L),
		\label{eq:counterterm-scalars}
	\end{equation}
	which means linearity of $L$ over the counterterms. Then we have
	\begin{equation}
		\bar{\phi} \circ B_+ = L \circ \phi_+
		\qquad\text{and therefore}\qquad
		\phi_+ \circ B_+ = (\id-R) \circ L \circ \phi_+
		\label{eq:rbar-cocycle}
	\end{equation}
\end{satz}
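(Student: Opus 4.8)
The plan is to prove the first identity $\bar\phi \circ B_+ = L \circ \phi_+$ directly from the Bogoliubov recursion, using the cocycle property of $B_+$ and then the hypothesis \eqref{eq:counterterm-scalars}; the second identity then follows immediately by applying $(\id - R)$ to both sides and invoking \eqref{eq:birkhoff-recursion}, namely $\phi_+ = (\id - R)\circ\bar\phi$ on $\ker\counit$, together with the fact (Lemma \ref{satz:cocycle-props}, or rather the analogous statement for $H_R$) that $B_+$ maps into $\ker\counit$ so that the recursion applies to $B_+(\f)$.

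First I would write out $\bar\phi(B_+(\f))$ using \eqref{eq:bogoliubov}: $\bar\phi(B_+(\f)) = \phi(B_+(\f)) + \sum \phi_-(B_+(\f)') \phi(B_+(\f)'')$, where the sum runs over the reduced coproduct $\cored(B_+(\f))$. The key input is the cocycle identity for $B_+$ in $H_R$ (the analogue of $\Delta\circ L = (\id\tp L)\circ\Delta + L\tp\1$), which gives $\cored(B_+(\f)) = (\id\tp B_+)\cored(\f) + \sum_{\f} \f_1 \tp B_+(\f_2)$ — more precisely $\Delta(B_+(\f)) = B_+(\f)\tp\1 + (\id\tp B_+)\Delta(\f)$, so the reduced coproduct collects the terms $\sum \f_1 \tp B_+(\f_2)$ plus $\1\tp B_+(\f)$ does not appear in $\cored$. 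Substituting and using $\phi\circ B_+ = L\circ\phi$ and the multiplicativity of $\phi$, this rearranges into $L$ applied to $\phi(\f) + \sum\phi_-(\f_1)\phi(\f_2) = \bar\phi(\f)$, except that the factor $\phi_-$ sits to the left of $L$ — this is exactly where \eqref{eq:counterterm-scalars} is needed, to pull $L$ past $m_\alg\circ(\phi_-\tp\id)$. The careful bookkeeping yields $\bar\phi(B_+(\f)) = L\bigl(\bar\phi(\f) + \phi_-(\f)\cdot(\text{something})\bigr)$; reconciling this with $\bar\phi = \phi_+ - \phi_- = \phi_+ - R\circ\bar\phi$ and the recursion should collapse the right-hand side to $L(\phi_+(\f))$.

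The main obstacle I anticipate is precisely this index juggling: the term $\1\tp B_+(\f)$ in the full coproduct contributes $\phi_-(\1)\phi(B_+(\f)) = \phi(B_+(\f))$, which must be correctly separated from the reduced-coproduct sum so that the split $\bar\phi = \phi + (\phi_-\convolution\phi - \phi) = \phi_- \convolution \phi$ on $\ker\counit$ is used consistently. It is cleanest to work with the convolution formula $\bar\phi = \phi_- \convolution \phi$ (valid on $\ker\counit$, and $B_+(\f)\in\ker\counit$) rather than with Sweedler sums: then $\bar\phi\circ B_+ = m_\alg\circ(\phi_-\tp\phi)\circ\Delta\circ B_+ = m_\alg\circ(\phi_-\tp\phi)\circ\bigl[B_+\tp\1 + (\id\tp B_+)\circ\Delta\bigr]$. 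The first summand is $\phi_-(B_+(\f))\cdot 1$, which one shows equals $-R(L(\phi_+(\f)))$ via \eqref{eq:birkhoff-character} or \eqref{eq:birkhoff-recursion}; the second is $m_\alg\circ(\phi_-\tp L\phi)\circ\Delta = m_\alg\circ(\phi_-\tp L)\circ(\id\tp\phi)\circ\Delta$, and \eqref{eq:counterterm-scalars} converts this into $L\circ m_\alg\circ(\phi_-\tp\phi)\circ\Delta = L(\bar\phi(\f))$. Adding, $\bar\phi(B_+(\f)) = L(\bar\phi(\f)) - R(L(\phi_+(\f)))$; but $L(\bar\phi(\f)) = L(\phi_+(\f)) - L(\phi_-(\f))$ and applying the relation again at level $\f$ with \eqref{eq:counterterm-scalars} reconciles the counterterm contributions, leaving $\bar\phi\circ B_+ = L\circ\phi_+$. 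The second equation of \eqref{eq:rbar-cocycle} is then just $\phi_+\circ B_+ = (\id-R)\circ\bar\phi\circ B_+ = (\id-R)\circ L\circ\phi_+$, using that $B_+$ lands in $\ker\counit$ so \eqref{eq:birkhoff-recursion} applies verbatim.
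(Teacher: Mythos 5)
Your overall strategy is the paper's (use the cocycle property of $B_+$ and then \eqref{eq:counterterm-scalars} to pull $L$ out), but the execution hinges on a false identity that creates a genuine gap. You work with ``$\bar\phi = \phi_-\convolution\phi$ on $\ker\counit$''. That is wrong: from \eqref{eq:bogoliubov} and $\phi_+=\phi_-\convolution\phi$ one has $\phi_-\convolution\phi = \bar\phi + \phi_-$ on $\ker\counit$, i.e.\ $\bar\phi = \phi_-\convolution\phi - \phi_-$; the term you drop, $\phi_-(B_+(\f))$, is exactly the extra summand produced by the $B_+\tp\1$ part of the cocycle relation, and in the correct computation it cancels against $-\phi_-\circ B_+$. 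The same confusion recurs when you claim \eqref{eq:counterterm-scalars} converts $m_{\alg}\circ(\phi_-\tp L\circ\phi)\circ\Delta$ into $L(\bar\phi(\f))$: it converts it into $L\circ(\phi_-\convolution\phi)(\f)=L(\phi_+(\f))$, which is already the desired right-hand side.

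Because of the misidentification, your assembled equation $\bar\phi(B_+(\f)) = L(\bar\phi(\f)) - R(L(\phi_+(\f)))$ does not reduce to $L(\phi_+(\f))$ --- that would require $L(\phi_-(\f)) = -R\,L(\phi_+(\f))$, which fails already for $\f=\1$ --- and the step ``applying the relation again at level $\f$ reconciles the counterterm contributions'' has no actual content. Moreover, your evaluation $\phi_-(B_+(\f)) = -R(L(\phi_+(\f)))$ is circular: it follows from $\phi_- = -R\circ\bar\phi$ only \emph{via} the identity $\bar\phi\circ B_+ = L\circ\phi_+$ you are trying to prove. The repair is simply to use the correct identity: then
\begin{equation*}
	\bar\phi\circ B_+
	= (\phi_-\convolution\phi - \phi_-)\circ B_+
	= m_{\alg}\circ(\phi_-\tp\phi)\circ\left[(\id\tp B_+)\circ\Delta + B_+\tp\1\right] - \phi_-\circ B_+
	= \phi_-\convolution(L\circ\phi)
	\urel{\eqref{eq:counterterm-scalars}} L\circ(\phi_-\convolution\phi)
	= L\circ\phi_+,
\end{equation*}
the $\phi_-\circ B_+$ terms cancelling outright, and the second claim of \eqref{eq:rbar-cocycle} then follows from \eqref{eq:birkhoff-recursion} exactly as you say, since $\im B_+\subseteq\ker\counit$. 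This corrected computation is precisely the paper's proof.
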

\begin{proof}
	This is a straightforward consequence of the cocycle property of $B_+$:
	\begin{align*}
		\bar{\phi} \circ B_+
		&= \left( \phi_- \convolution \phi - \phi_- \right) \circ B_+
		 = m_{\alg} \circ (\phi_- \tp \phi) \circ \left[ (\id \tp B_+) \circ \Delta + B_+ \tp \1 \right] 
		 		- \phi_- \circ B_+ \\
		&= \phi_- \convolution \left( \phi \circ B_+ \right) 
		 = \phi_- \convolution \left( L \circ \phi \right)
		\urel{\eqref{eq:counterterm-scalars}} L \circ \left( \phi_- \convolution \phi \right)
		 = L \circ \phi_+ \qedhere.
	\end{align*}
\end{proof}
As for $\momsch{\rp}$ the counterterms $\phi_-(x)\in\alg_-$ are independent of $s$, they separate from the integration in \eqref{eq:unrenormiert-def} and \eqref{eq:counterterm-scalars} is fulfilled indeed. This is a general feature of {\qfts}: The counterterms to not depend on any external variables%
\footnote{%
	Even if the divergence of a Feynman graph does depend on external momenta as happens for higher degrees of divergence, this dependence is only polynomial and stripped off by extracting the individual coefficients.
	In the Hopf algebra this can be encoded with \emph{external structures} which are given by distributions in \cite{CK:RH1}. So in any case, $\phi_-$ maps to constants.
}.

The significance of \eqref{eq:rbar-cocycle} lies in the expression of the renormalized $\phi_+(t)$ for a tree $t=B_+(\f)$ only in terms of the renormalized value $\phi_+(\f)$. This allows for inductive proofs like \ref{satz:finiteness} on properties of $\phiR=\phi_+$, without having to consider the unrenormalized Feynman rules or their counterterms (both of which are divergent) at all.

Summarizing, we proved in \ref{satz:finiteness} that for any forest $w\in H_R$, the expression $\phi_+(w) \in \alg_+$ is actually integrable and may be directly written as a convergent integral using \eqref{eq:renormiert-def}.

\section{Regularization and Mellin transforms}
\label{sec:regulator}

A technique often applied prior the renormalization is the introduction of a \emph{regulator} to assign finite values also to divergent expressions. Popular methods usually either alter the domain of integration:
\begin{enumerate}
	\item Confine integrations to the bounded interval $[0,\Lambda]$ for a \emph{cut-off} $\Lambda>0$. Then all integrals converge but acquire a dependence on $\Lambda$, which will in general diverge in the \emph{physical limit} $\Lambda\rightarrow\infty$ resembling the original situation. After renormalization however, this limit will be finite.

	\item Variations of mixed Hodge structures \cite{BlochKreimer:MixedHodge} also vary the chain of integration to avoid singularities.
\end{enumerate}
	or modify the integrand:
\begin{enumerate}\addtocounter{enumi}{2}
	\item Choose an \emph{analytic regulator} $0<\reg<1$ and replace each $\int_0^\infty \dd x$ with $\int_0^\infty x^{-\reg} \dd x$. This increases the decay of the integrand at $x\rightarrow \infty$ and we again get finite results which depend on $\reg$. As for the cut-off, these typically diverge in the physical limit $\reg \rightarrow 0$, unless we renormalize.

	\item \emph{Dimensional regularization} is similar in introducing a complex parameter $\reg \neq 0$ associated to a shifted dimension $D=4-2\reg$ of space-time. It is tailor made for Feynman integrals in {\qft} and we refer to \cite{Collins} for its definition and examples.
\end{enumerate}

We study the analytic regularization in detail, as it allows for the simplest algebraic description: Due to the regulator all integrals converge and give functions of both $s$ and $\reg$ that lie in the target algebra $\alg	= \K[\reg^{-1},\reg]] [s^{-\reg}]$ of Laurent series in $\reg$ as we shall see in proposition \ref{prop:regularized-mellin}.
\begin{definition}\label{def:regularized}
	The analytically regularized Feynman rules $\toy \in \chars{H_R}{\alg}$ are given through the universal property \ref{satz:H_R-universal} by requiring
	\begin{equation}\label{eq:regularized}
		\toy[s] \circ B_+ 
		= \int_0^{\infty} \frac{f(\frac{\zeta}{s})\zeta^{-\reg}}{s}\ \toy[\zeta] \ \dd\zeta 
		= \int_0^{\infty} f(\zeta)(s\zeta)^{-\reg}\ \toy[s\zeta] \ \dd\zeta.
	\end{equation}
\end{definition}
All these integrals can conveniently be evaluated in terms of the coefficients $\coeff{n}$ of the \emph{Mellin transform}\footnote{Conditions \eqref{eq:finiteness-conditions} suffice to prove that $F(\reg)$ is a Laurent series of this form.}
\begin{equation}\label{eq:mellin-trafo}
	F(\reg) \defas
	\int_0^{\infty} f(\zeta) \zeta^{-\reg} \ \dd\zeta
	= \sum_{n=-1}^{\infty} \coeff{n} {\reg}^n
	\in \reg^{-1} \K [[\reg]],
\end{equation}
which we already encountered in \eqref{eq:phiR-universal}: Indeed, a partial integration proves that
\begin{align*}
	&\coeff{n-1} n!
	\wurel{\eqref{eq:coeff-def}}
	\int_0^\infty \dd \zeta \left[ f(\zeta) + \zeta f'(\zeta) \right] (-\ln \zeta)^n
	=
	\restrict{\frac{\partial^n}{\partial\reg^n}}{\reg=0}
	\int_0^\infty \dd \zeta \left[ f(\zeta) + \zeta f'(\zeta) \right] \zeta^{-\reg} \\
	&= 
	\restrict{\frac{\partial^n}{\partial\reg^n}}{\reg=0}
	\left\{
		\left[ f(\zeta) \zeta^{1-\reg} \right]_{\zeta=0}^{\infty} 
		+
		\int_0^{\infty} \dd \zeta \left[ f(\zeta) - \left( 1 - \reg \right) f(\zeta) \right] \zeta^{-\reg}
	\right\}
	=
	\restrict{\frac{\partial^n}{\partial\reg^n}}{\reg=0}
	\left\{\reg F(\reg) \right\}.
\end{align*}
\begin{proposition}\label{prop:regularized-mellin}
	For any forest $\f \in \mathcal{F}$ we have (called \emph{BPHZ model} in \cite{BroadhurstKreimer:Auto})
	\begin{equation}\label{eq:regularized-mellin}
		\toy[s] (\f)
		= s^{-\reg\abs{\f}} \prod_{\mathclap{v \in V(\f)}} F \left( \reg \abs{\f_v} \right).
	\end{equation}
\end{proposition}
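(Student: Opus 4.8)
The plan is to prove \eqref{eq:regularized-mellin} by induction on the number of nodes $\abs{\f}$, using that $\toy[s]\in\chars{H_R}{\alg}$ is an algebra morphism (so it suffices to check the formula on trees) and that it is defined via the universal property \ref{satz:H_R-universal} by the recursion \eqref{eq:regularized}. The base case is the empty forest $\1$, where $\toy[s](\1)=1$ and the right-hand side is the empty product times $s^0=1$. If $\abs{\f}\geq 1$ and $\f=t_1\cdots t_k$ is a nontrivial product of trees, multiplicativity gives $\toy[s](\f)=\prod_i\toy[s](t_i)$, while the right-hand side of \eqref{eq:regularized-mellin} splits along $V(\f)=\bigsqcup_i V(t_i)$, $\abs{\f}=\sum_i\abs{t_i}$, and the observation that the subtree rooted at $v\in V(t_i)$ is the same in $\f$ and in $t_i$; so the claim for $\f$ follows from the claim for the strictly smaller $t_i$.

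It remains to treat a single tree $t=B_+(\f)$ assuming the formula for the forest $\f$. For the inductive step I would unfold the recursion \eqref{eq:regularized} and substitute the induction hypothesis for $\toy[s\zeta](\f)$:
\begin{align*}
	\toy[s](t)
	&= \int_0^\infty f(\zeta)\,(s\zeta)^{-\reg}\,\toy[s\zeta](\f)\ \dd\zeta \\
	&= \left( \prod_{v\in V(\f)} F\!\left(\reg\abs{\f_v}\right) \right) s^{-\reg(\abs{\f}+1)} \int_0^\infty f(\zeta)\,\zeta^{-\reg(\abs{\f}+1)}\ \dd\zeta ,
\end{align*}
where the $\zeta$-independent product over $V(\f)$ factors out. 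The remaining integral is exactly $F\!\left(\reg(\abs{\f}+1)\right)=F(\reg\abs{t})$ by the definition \eqref{eq:mellin-trafo} of the Mellin transform. Finally I would match the combinatorics: the root $r$ of $t$ satisfies $\abs{t_r}=\abs{t}=\abs{\f}+1$, while every other node $v$ lies in $V(\f)$ and has $t_v=\f_v$; hence $F(\reg\abs{t})\prod_{v\in V(\f)}F(\reg\abs{\f_v})=\prod_{v\in V(t)}F(\reg\abs{t_v})$, together with the prefactor $s^{-\reg(\abs{\f}+1)}=s^{-\reg\abs{t}}$ giving \eqref{eq:regularized-mellin} for $t$.

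The only genuinely content-bearing step is this bookkeeping of subtree weights under $B_+$; the rest is routine once one is careful about the analytic status of the identity. For a forest with $\abs{\f}=N$ nodes the iterated integral \eqref{eq:regularized} converges only for $\reg$ in a punctured neighbourhood of $0$ shrinking with $N$ (roughly $0<\reg<1/N$: the bound at $\zeta\to0$ comes from boundedness of $f$ and the one at $\zeta\to\infty$ from $f\in\bigo{\zeta^{-1}}$), so \eqref{eq:regularized-mellin} should be read as an identity of functions meromorphic in $\reg$, its right-hand side being the product of the Laurent series $F(\reg\abs{t_v})\in(\reg\abs{t_v})^{-1}\K[[\reg\abs{t_v}]]$ furnished by \eqref{eq:mellin-trafo}. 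Conditions \eqref{eq:finiteness-conditions} (as remarked after \eqref{eq:mellin-trafo}) are what guarantee $F$ has this Laurent form, and in particular that $\alg=\K[\reg^{-1},\reg]][s^{-\reg}]$ is the appropriate target algebra; once the convergence range is fixed, pulling the finite product out of the integral and evaluating the remainder against \eqref{eq:mellin-trafo} is immediate, so I expect no essential obstacle beyond presenting the weight bookkeeping cleanly.
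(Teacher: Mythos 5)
Your proof is correct and is essentially the paper's own argument: reduce to trees by multiplicativity, substitute the induction hypothesis into the recursion \eqref{eq:regularized}, recognize the remaining integral as $F(\reg(\abs{\f}+1))$ via \eqref{eq:mellin-trafo}, and match the subtree weights under $B_+$. The extra remarks on the convergence range in $\reg$ and the meromorphic/Laurent-series reading are a welcome clarification but do not change the route.
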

\begin{proof}
	As both sides of \eqref{eq:regularized-mellin} are clearly multiplicative, it is enough to inductively assume the claim for a forest $\f\in\forests$ and prove it for the tree $t=B_+(\f)$:
	\begin{align*}
		\toy[s] (t)
		&\wurel{\eqref{eq:regularized}} \int_0^{\infty} (s\zeta)^{-\reg} f(\zeta)\ \toy[s\zeta] (\f) \ \dd\zeta
		= \int_0^{\infty} (s\zeta)^{-\reg} f(\zeta) (s\zeta)^{-\reg\abs{\f}} \prod_{\mathclap{v \in V(\f)}} F \left( z \abs{\f_v} \right) \ \dd\zeta \\
		&= s^{-\reg \abs{B_+ (\f)}} \left[ \prod_{v \in V(\f)} F \left( \reg \abs{\f_v} \right) \right] F \left( \reg \abs{B_+ (\f)} \right)
		= s^{-\reg \abs{t}} \prod_{\mathclap{v \in V(t)}} F \left( \reg \abs{ t_v } \right) \qedhere.
	\end{align*}
\end{proof}
\begin{beispiel}
	Using \eqref{eq:regularized-mellin}, we can directly write down the Feynman rules like
\begin{equation*}
	\toy[s] \left( \tree{+-} \right)
	= s^{-\reg} F(\reg),
	\quad
	\toy[s] \left( \tree{++--} \right)
	= s^{-2\reg} F(\reg) F(2\reg)
	\quad\text{and}\quad
	\toy[s] \left( \tree{++-+--} \right)
	= s^{-3\reg} {\left[ F(\reg) \right]}^2 F(3\reg).
\end{equation*}
\end{beispiel}
Many examples (choices of $F$) are discussed in \cite{BroadhurstKreimer:Auto}, the particular case of the one-loop propagator graph $\gamma$ of Yukawa theory is in \cite{Kreimer:ExactDSE} and for scalar Yukawa theory in six dimensions one has
$
	F(\reg)
	=
	\frac{1}{\reg(1-\reg)(2-\reg)(3-\reg)}
$
as in \cite{Panzer:Master}. Already noted in \cite{Kreimer:ChenII}, the highest order pole of $\toy[s] (\f)$ is independent of $s$ and just the tree factorial
	\begin{equation}\label{eq:regularized-leading-pole}
		\toy[s] (\f)
		\in s^{-\reg\abs{\f}} \hspace{-1mm}
			\prod_{v \in V(\f)} \left\{
				\tfrac{\coeff{-1}}{\reg\abs{\f_v}} 
				+ \K[[\reg]]
			\right\} 
		\urel[\subset]{\eqref{eq:tree-factorial}}
		\frac{1}{\f!} {\left( \tfrac{\coeff{-1}}{\reg} \right)}^{\abs{\f}}
				+ \reg^{1-\abs{\f}} {\K}[[\reg]].
	\end{equation}

\subsection{Finiteness}
\label{sec:finiteness}

Using \eqref{eq:regularized-mellin} and \eqref{eq:birkhoff-character} we can quickly write down explicitly the values of the renormalized Feynman rules like in
\begin{beispiel}
	We find
	$
		\toyR[s] \left( \tree{+-} \right) 
		= \left( s^{-\reg} - \rp^{-\reg} \right) F(\reg)
	$
	and 
	$
		S\left( \tree{++--} \right) 
		= -\tree{++--} + \tree{+-}\tree{+-}
	$
	gives
\begin{equation}
	\toyR[s] \left( \tree{++--} \right)
		= \left( s^{-2\reg} - \rp^{-2\reg} \right) F(\reg)F(2\reg) - \left( s^{-\reg} - \rp^{-\reg} \right) \rp^{-\reg} F^2(\reg).
	\label{toyR:(())}
\end{equation}
\end{beispiel}
As the \emph{physical limit} $\reg\rightarrow 0$ reconstructs the original (unregularized) Feynman rules \eqref{eq:unrenormiert-def}, the finiteness of theorem \ref{satz:finiteness} is equivalent (by Lebesgue's theorem on dominated convergence) to the existence of the limit
\begin{equation}
	\phiR
	\defas
	\lim_{\reg \rightarrow 0} \toyR.
	\label{eq:physical-limit}
\end{equation}
\begin{korollar}\label{satz:holomorphie}
	The renormalized regularized Feynman rules are holomorphic, that is they map into
	$
		\im \left( \toyR[s] \right) \subset \K[[\reg]]
	$.
\end{korollar}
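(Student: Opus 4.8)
The claim is that $\toyR[s](\f) \in \K[[\reg]]$ for every forest $\f$, i.e.\ the renormalized regularized Feynman rules are free of poles in $\reg$. The natural strategy is induction on the grading of $H_R$, exactly mirroring the proof of Theorem~\ref{satz:finiteness}. Multiplicativity of $\toyR = \toy[{}]_+$ (it is a character) reduces the inductive step to a single tree $t = B_+(\f)$, assuming $\toyR[\zeta](\f) \in \K[[\reg]]$ already, with the holomorphic value having some finite limit as $\reg \to 0$.

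First I would invoke Theorem~\ref{satz:subdivergences}: since the {\momscheme} $\momsch{\rp}$ makes the counterterms $s$-independent, condition \eqref{eq:counterterm-scalars} holds for $L = L_s \defas \left( g \mapsto \int_0^\infty f(\zeta)(s\zeta)^{-\reg} g|_{s\mapsto s\zeta}\,\dd\zeta \right)$, so \eqref{eq:rbar-cocycle} gives
\begin{equation*}
	\toyR[s](t)
	= (\id - \momsch{\rp}) \circ L_s \bigl( \toyR[{\cdot}](\f) \bigr)
	= \int_0^\infty f(\zeta)(s\zeta)^{-\reg}\, \toyR[s\zeta](\f)\,\dd\zeta
	 - \int_0^\infty f(\zeta)(\rp\zeta)^{-\reg}\, \toyR[\rp\zeta](\f)\,\dd\zeta.
\end{equation*}
By the induction hypothesis $\toyR[\zeta](\f)$ is, for fixed $\reg$ near $0$, a holomorphic function of $\reg$ whose dependence on the scale is polynomial in $\ln\zeta$ with $\reg$-holomorphic coefficients (this is visible from Proposition~\ref{prop:regularized-mellin} combined with \eqref{eq:birkhoff-character}; alternatively it follows inductively from the very recursion being set up). So the integrand of the difference above is $f(\zeta)\zeta^{-\reg}$ times $\bigl[ (s^{-\reg} - \rp^{-\reg})\cdot(\text{stuff})\bigr]$ — more precisely the two terms combine so that the potentially divergent piece $\coeff{-1}/\reg$ from the Mellin transform $F(\reg\abs{t})$ multiplies the difference $s^{-\reg\abs{\f}} - \rp^{-\reg\abs{\f}}$, which vanishes to order $\reg$. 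Expanding each factor $F(\reg\abs{t_v})$ as a Laurent series and collecting, the simple pole in front of the $s$-dependent exponential is cancelled against the one in front of the $\rp$-dependent exponential because both are multiplied by the \emph{same} constant $\coeff{-1}/(\reg\abs{t})$ and differ only in the scale. Hence the Laurent series of $\toyR[s](t)$ has no $\reg^{-1}$ term, and since $\toyR[s](t) \in \alg = \K[\reg^{-1},\reg]][s^{-\reg}]$ has at most a simple pole (the leading pole of $\toy[s](t)$ is simple by \eqref{eq:regularized-leading-pole}, and subtracting $\momsch{\rp}$ cannot worsen it), it lies in $\K[[\reg]]$.

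The cleanest way to see the pole cancellation without manually expanding products is to use \eqref{eq:physical-limit}: finiteness (Theorem~\ref{satz:finiteness}) says $\lim_{\reg\to 0}\toyR[s](\f)$ exists in $\K[\scalelog]$, and a meromorphic (Laurent) function with at most a simple pole that has a finite limit at $\reg = 0$ must in fact be holomorphic there — the residue is forced to be zero. Thus the main content is simply: (i) $\toyR[s](\f)$ has at most a simple pole, which is inherited from \eqref{eq:regularized-leading-pole} together with $(\id-\momsch{\rp})$ not increasing pole order; and (ii) Theorem~\ref{satz:finiteness} supplies the finite limit. I expect the only genuine subtlety to be the bookkeeping in (i) — namely verifying that the Birkhoff recursion \eqref{eq:birkhoff-character}, applied to the explicit products of Mellin factors in Proposition~\ref{prop:regularized-mellin}, never produces a pole of order exceeding one in $\toyR$ (as opposed to in the unrenormalized $\toy[{}]$ or the counterterms, where higher poles do appear). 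This is where the cancellation structure of the {\momscheme}, encoded in \eqref{eq:rbar-cocycle}, does the real work, and it should be spelled out rather than waved through.
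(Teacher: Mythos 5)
Your fallback argument is the paper's own proof, and it is complete without the extra step you flag as ``the only genuine subtlety''. The paper derives the corollary in one stroke: theorem \ref{satz:finiteness} together with dominated convergence gives the existence of the physical limit \eqref{eq:physical-limit}, and since every value $\toyR[s](\f) \in \alg = \K[\reg^{-1},\reg]][s^{-\reg}]$ has only finitely many negative powers of $\reg$, i.e.\ is meromorphic at $\reg=0$, a finite limit forces \emph{all} negative coefficients to vanish at once --- no a priori bound on the pole order is needed, since a meromorphic function that stays bounded as $\reg\to 0$ has a removable singularity whatever the order of its pole. So your step (i) can simply be deleted, and with it deleted your proof coincides with the paper's.

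Step (i) is moreover wrong as you justify it, which matters because you present it as where ``the real work'' lies. Equation \eqref{eq:regularized-leading-pole} does not say the leading pole of $\toy[s](\f)$ is simple: it says it has order $\abs{\f}$, one factor $\coeff{-1}/(\reg\abs{\f_v})$ per vertex; already on the two-vertex tree $\toy[s]$ takes the value $s^{-2\reg}F(\reg)F(2\reg)$, a double pole. Nor is renormalization in the {\momscheme} a mere subtraction of $\momsch{\rp}$ applied to $\toy[s]$: by \eqref{eq:birkhoff-character} one has $\toyR[s] = \toy[\rp]^{\convolution -1} \convolution \toy[s]$, a convolution with counterterms that themselves carry poles up to order $\abs{\f}$, so ``cannot worsen the pole order'' is unjustified, and establishing even the weaker bound ``at most a simple pole'' a priori would require essentially the same cancellations the corollary asserts. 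If you do want an explicit, order-by-order account of the pole cancellation instead of inferring it from the analytic estimates of theorem \ref{satz:finiteness}, that is exactly what the paper provides separately and combinatorially in proposition \ref{satz:finiteness-algebraic} and lemma \ref{satz:regularized-finiteness}; alternatively, your first computation via theorem \ref{satz:subdivergences} can be completed inductively, but only if you strengthen the bookkeeping: write $\toyR[x](\f)$ as a finite sum $\sum_j a_j(\reg)\,x^{-j\reg}$ with holomorphic coefficients, so that the recursion produces terms $a_j(\reg)\bigl(s^{-(1+j)\reg}-\rp^{-(1+j)\reg}\bigr)F\bigl((1+j)\reg\bigr)$, each manifestly holomorphic --- not a single overall factor $s^{-\reg\abs{\f}}-\rp^{-\reg\abs{\f}}$ against $F(\reg\abs{t})$ as in your sketch.
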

\begin{beispiel}\label{ex:phiR-reg-expansion}
Indeed we find $\toyR[s] \left( \tree{+-} \right) \in - \coeff{-1} \ln \tfrac{s}{\rp} + \reg\K[[\reg]]$. For \eqref{toyR:(())} check
\begin{align*}
	&\phiR \left( \tree{++--} \right)
	\urel{\eqref{eq:physical-limit}}
	\lim_{\reg \rightarrow 0} \left\{
		- \left[ -\reg \ln \tfrac{s}{\rp} + \tfrac{\reg^2}{2} \left( \ln^2 s + 2\ln s \ln\rp - 3\ln^2 \rp \right) \right] \cdot \left[ \tfrac{\coeff[2]{-1}}{\reg^2} + 2\tfrac{\coeff{-1} \coeff{0}}{\reg} \right] \right. \nonumber\\
	&	\quad + \left. \left[ -2\reg \ln \tfrac{s}{\rp} + 2\reg^2\left( \ln^2 s - \ln^2 \rp \right)\right]\cdot \left[ \tfrac{\coeff[2]{-1}}{2\reg^2} + \tfrac{3\coeff{0} \coeff{-1}}{2\reg} \right] \right\}
		= \frac{\coeff[2]{-1}}{2} \ln^2 \tfrac{s}{\rp} - \coeff{-1} \coeff{0} \ln \tfrac{s}{\rp},
\end{align*}
where all poles in $\reg$ perfectly cancel.
\end{beispiel}

Observe that we proved the by now purely combinatorial statement \ref{satz:holomorphie} of the cancellation of all pole terms in $\toyR$ analytically by estimates on the asymptotic growths in theorem \ref{satz:finiteness}. 
As we absorbed all analytic input of the integrands in $F(\reg)\in\reg^{-1}\K[[\reg]]$ in the series \eqref{eq:regularized-mellin} we can also give a completely combinatorial proof as we shall do in lemma \ref{satz:regularized-finiteness}.

Note that the analytic regularization yields a very simple dependence on the parameter $s$: Setting now $\alg\defas\C[\reg^{-1},\reg]]$ and 
$
	\toy 
	\defas
	{\toy[1]}
	= \restrict{\toy}{s=1}
	\in\chars{H_R}{\alg}
	$, \eqref{eq:regularized-mellin} fixes the scale dependence $\toy[s] = \toy \circ \gradAut_{-\reg\ln s}$ completely through the grading, see also \cite{Manchon,FardBondiaPatras:LieApproach}. Therefore we can write
\begin{equation}
	\label{eq:regularized-renormalized}
	\toyR[s]
	= \toy[\rp]^{\convolution -1} \convolution \toy[s]
	= \toy \circ \left[ (S\circ\gradAut_{-\reg\ln\rp}) \convolution \gradAut_{-\reg \ln s}  \right]
	= \toy \circ (S\convolution \gradAut_{-\reg\ln\frac{s}{\rp}}) \circ \gradAut_{-\reg\ln\rp}
\end{equation}
and characterize the finiteness of the physical limit \eqref{eq:physical-limit} in
\begin{proposition}\label{satz:finiteness-algebraic}
	For any character $\toy\in\chars{H_R}{\alg}$, the following are equivalent:
	\begin{enumerate}
		\item The physical limit $\phiR \defas \lim_{\reg\rightarrow 0} \toyR$ exists
		\item For any $\scalelog\in\K$, 
			$
				\toy^{\convolution -1} \convolution (\toy\circ\theta_{-\scalelog\reg}) 
				= \toy\circ (S\convolution \theta_{-\scalelog\reg})
			$
			maps into $\C[[\reg]]$.
		\item For every $n\in\N_0$, 
			$
				\toy^{\convolution -1} \convolution (\toy\circ Y^n)
				=
				\toy \circ (S\convolution Y^n)
			$ maps into $\reg^{-n}\C[[\reg]]$.
		\item $
				\toy^{\convolution -1} \convolution (\toy \circ Y)
				=
				\toy \circ (S\convolution Y)
			$ maps into $\frac{1}{\reg} \C[[\reg]]$, equivalently the limit 
			$
				\lim\limits_{\reg \rightarrow 0} \toy^{\convolution -1} \convolution (\toy \circ \reg Y)
			$ exists.
	\end{enumerate}
\end{proposition}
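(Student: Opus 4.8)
The plan is to prove $(1)\Leftrightarrow(2)$ and then close the cycle $(2)\Rightarrow(3)\Rightarrow(4)\Rightarrow(2)$. Throughout I use that for a character $\toy$ and any $\K$-linear $\psi$ one has $\toy\circ(S\convolution\psi)=(\toy\circ S)\convolution(\toy\circ\psi)=\toy^{\convolution-1}\convolution(\toy\circ\psi)$ by multiplicativity, that $Y$ is a coderivation (hence $(\alpha\convolution\beta)\circ Y=(\alpha\circ Y)\convolution\beta+\alpha\convolution(\beta\circ Y)$), and that $\gradAut_{-\reg\scalelog}$ acts on a homogeneous element of degree $d$ by the unit $e^{-d\reg\scalelog}\in\C[[\reg]]^{\times}$. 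For $(1)\Leftrightarrow(2)$, note that \eqref{eq:regularized-renormalized} reads $\toyR[s]=\big(\toy\circ(S\convolution\gradAut_{-\reg\scalelog})\big)\circ\gradAut_{-\reg\ln\rp}$ with $\scalelog=\ln\tfrac{s}{\rp}$, so on any homogeneous element $\toyR[s]$ and $\toy\circ(S\convolution\gradAut_{-\reg\scalelog})=\toy^{\convolution-1}\convolution(\toy\circ\gradAut_{-\reg\scalelog})$ differ only by multiplication with a unit of $\C[[\reg]]$; since the physical limit exists exactly when $\toyR[s]$ maps into $\C[[\reg]]$, condition $(1)$ is equivalent to ``$\toy^{\convolution-1}\convolution(\toy\circ\gradAut_{-\reg\scalelog})$ maps into $\C[[\reg]]$ for every real $\scalelog=\ln\tfrac s\rp$''. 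Evaluated on a fixed $x$, this map is the finite sum $\sum_x\toy(Sx_1)\,e^{-\reg\scalelog\abs{x_2}}\,\toy(x_2)$, whose $\reg$-expansion coefficients are polynomials in $\scalelog$ of bounded degree, so the requirement for all real $\scalelog$ is the same as for all $\scalelog\in\K$, which is $(2)$.

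\textbf{$(2)\Rightarrow(3)$ and $(3)\Rightarrow(4)$.}
Expanding $\gradAut_{-\reg\scalelog}=\sum_{n\geq0}\tfrac{(-\reg\scalelog)^n}{n!}Y^n$ gives $\toy\circ(S\convolution\gradAut_{-\reg\scalelog})=\sum_{n\geq0}\tfrac{(-\scalelog)^n}{n!}\,\reg^n\,u_n$, where $u_n\defas\toy^{\convolution-1}\convolution(\toy\circ Y^n)=\toy\circ(S\convolution Y^n)$. On a fixed $x$, $u_n(x)=\sum_x\abs{x_2}^n\,\toy(Sx_1)\toy(x_2)$ is a $\N_0$-linear combination of a \emph{fixed} finite subset of $\alg$, so its $\reg$-pole order is bounded independently of $n$; hence (writing $[\reg^{k}]\alpha$ for the coefficient of $\reg^{k}$ in $\alpha\in\alg$) the coefficient of $\reg^{-m}$, $m\geq1$, in $\toy\circ(S\convolution\gradAut_{-\reg\scalelog})(x)$ is the finite sum $\sum_{n\geq0}\tfrac{(-\scalelog)^n}{n!}[\reg^{-m-n}]u_n(x)$, a polynomial in $\scalelog$. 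Condition $(2)$ says all these polynomials are identically zero, which forces $[\reg^{-l}]u_n(x)=0$ whenever $l>n$, i.e.\ $u_n$ maps into $\reg^{-n}\C[[\reg]]$ --- this is $(3)$. Then $(3)\Rightarrow(4)$ is the case $n=1$, the reformulation inside $(4)$ being the tautology $\alpha\in\reg^{-1}\C[[\reg]]\Leftrightarrow\reg\alpha\in\C[[\reg]]\Leftrightarrow\lim_{\reg\to0}\reg\alpha$ exists, applied to $\alpha=\toy\circ(S\convolution Y)$.

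\textbf{$(4)\Rightarrow(2)$.}
Set $g_t\defas\toy^{\convolution-1}\convolution(\toy\circ\gradAut_t)=\toy\circ(S\convolution\gradAut_t)$ for $t\in\R$; this is a character ($\convolution$-product of the characters $\toy\circ S$ and $\toy\circ\gradAut_t$), $g_0=e$, and $\toy\circ\gradAut_t=\toy\convolution g_t$. Differentiating in $t$ with $\partial_t\gradAut_t=\gradAut_t\circ Y$ and the Leibniz rule for $Y$ gives
\begin{equation*}
	\partial_t\,g_t=\big(\toy\circ(S\convolution Y)\big)\convolution g_t+g_t\circ Y,\qquad g_0=e.
\end{equation*}
Evaluating on a homogeneous $x$ of degree $d\geq1$ and peeling off the $\1$-parts of $\Delta(x)$ (using $g_t(\1)=1$ and $(\toy\circ(S\convolution Y))(\1)=0$), this is the linear ODE $\partial_t\,g_t(x)=d\,g_t(x)+u_1(x)+\sum_x u_1(x')\,g_t(x'')$ with $g_0(x)=0$, whose source involves only $u_1$ and the lower-degree values $g_t(x'')$, $\abs{x''}<d$. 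Substituting $t=-\scalelog\reg$ and reading $h(\reg)\defas g_{-\scalelog\reg}(x)$ as a Laurent series, this becomes $\tfrac{\dd h}{\dd\reg}=-\scalelog\,d\,h+r$ with $r=-\scalelog\big(u_1(x)+\sum_x u_1(x')\,g_{-\scalelog\reg}(x'')\big)$. By $(4)$, $u_1(x)$ and every $u_1(x')$ lie in $\reg^{-1}\C[[\reg]]$, and by induction on $d$ (base $d=0$: $g_t(\1)=1$) each $g_{-\scalelog\reg}(x'')$ lies in $\C[[\reg]]$, so $r\in\reg^{-1}\C[[\reg]]$. If $h$ had a pole of order $p\geq1$ with $[\reg^{-p}]h\neq0$, then $\tfrac{\dd h}{\dd\reg}$ would contain the nonzero term $-p\,[\reg^{-p}]h\cdot\reg^{-p-1}$, whereas $-\scalelog\,d\,h+r$ lies in $\reg^{-p}\C[[\reg]]$ --- impossible. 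Hence $h\in\C[[\reg]]$, which is $(2)$, and all four conditions are equivalent.

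\textbf{Main obstacle.}
The substance is the last step. One must derive the evolution equation for $g_t$ with the $\convolution$-factors in the right order and with the exact $\1$-contributions from $\Delta$, and bear in mind that $\toy$ is an \emph{arbitrary} character, so the only \emph{uniform} pole bound to lean on is the one $(4)$ supplies; the induction on the number of nodes then bootstraps everything from it, and here it matters that $(4)$, being a statement about the whole map $\toy\circ(S\convolution Y)$, is applied to every Sweedler factor $x'$, not merely to $x$. A route bypassing the ODE would instead derive, from the $B_+$-cocycle identity and the $\convolution$-multiplicativity of $g_t$, a recursion expressing $u_{n+1}$ through $u_1$, lower $u_k$ and lower-degree data; the analytic heart --- propagating a uniform simple-pole bound at $n=1$ to an order-$n$ bound for every $n$ --- is the same.
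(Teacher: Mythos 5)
Your chain $(1)\Leftrightarrow(2)$, $(2)\Rightarrow(3)$, $(3)\Rightarrow(4)$ is correct and essentially identical to the paper's (composition with the holomorphic unit $\gradAut_{-\reg\ln\rp}$, expansion of $\gradAut_{-\reg\scalelog}$, specialization to $n=1$). The gap is in your closing step $(4)\Rightarrow(2)$. Your evolution equation in $t$, $\partial_t g_t=\big[\toy\circ(S\convolution Y)\big]\convolution g_t+g_t\circ Y$, is fine, but the Laurent-series equation you extract from it, $\frac{\dd h}{\dd\reg}=-\scalelog\,\abs{x}\,h+r$ for $h(\reg)\defas g_{-\scalelog\reg}(x)$, is false: $h$ depends on $\reg$ not only through the substitution $t=-\scalelog\reg$ but also through the Laurent coefficients of $\toy$ itself, so the formal derivative $\frac{\dd h}{\dd\reg}$ --- which is what your pole-order contradiction invokes when it says a pole of order $p$ in $h$ produces a pole of order $p+1$ on the left --- contains an additional term coming from $\partial_\reg$ acting on the values of $\toy$. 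Already for $x=\tree{+-}$ one has $h=\toy\left(\tree{+-}\right)\cdot\left(e^{-\reg\scalelog}-1\right)$, and $\frac{\dd h}{\dd\reg}$ contains the extra piece $\big[\partial_\reg\toy\left(\tree{+-}\right)\big]\left(e^{-\reg\scalelog}-1\right)$, absent from $-\scalelog h+r$; its pole order is one higher than that of $\toy$ and is not controlled by hypothesis $(4)$. So the contradiction you reach is a contradiction with an identity that does not hold, and $(4)\Rightarrow(2)$ is not established as written.

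The step can be repaired inside your own scheme by differentiating with respect to $\scalelog$ at fixed $\reg$ (the variable $\scalelog$ enters only through $t$), which gives the valid identity $\partial_\scalelog h=-\reg\big(\abs{x}\,h+u_1(x)+\sum_x u_1(x')\,g_{-\scalelog\reg}(x'')\big)$ with $u_1\defas\toy\circ(S\convolution Y)$; by $(4)$ and your induction on the degree the right-hand side lies in $-\reg\abs{x}\,h+\C[[\reg]]$, so writing $h=\sum_k h_k(\scalelog)\reg^k$ and comparing coefficients of $\reg^{-k}$, $k\geq 1$, yields $\partial_\scalelog h_{-k}=-\abs{x}\,h_{-k-1}$; since $h|_{\scalelog=0}=e(x)=0$, the pole coefficients vanish from the deepest one upwards and $h\in\C[[\reg]]$ follows. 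The paper avoids any differentiation altogether and proves $(4)\Rightarrow(3)$ by induction on $n$ from the recursion $\toy\circ\left(S\convolution Y^{n+1}\right)=\toy\circ(S\convolution Y^n)\circ Y+\big[\toy\circ(S\convolution Y)\big]\convolution\big[\toy\circ(S\convolution Y^n)\big]$ (obtained from $(S\circ Y)\convolution\id=-S\convolution Y$), which is precisely the "alternative route" you mention in your last paragraph but do not carry out; either of these fixes closes your proof.
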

\begin{proof}
	From \eqref{eq:regularized-renormalized}, $(1) \Leftrightarrow (2)$ is just composition with the holomorphic $\gradAut_{-\reg\ln\rp}$ or $\gradAut_{\reg\ln\rp} = \gradAut_{-\reg\ln\rp}^{-1}$ while $(2) \Leftrightarrow (3)$ merely expands $\gradAut_{-\scalelog\reg} = \sum_{n\geq 0} \frac{(-\scalelog\reg Y)^n}{n!}$. It remains to prove $(4)\Rightarrow (3)$ inductively with
	\begin{equation*}
		\toy\circ \left( S\convolution Y^{n+1} \right)
		= \toy\circ (S\convolution Y^n) \circ Y 
			+ \left[\toy\circ(S\convolution Y)\right] \convolution \left[ \toy\circ (S\convolution Y^n) \right],
	\end{equation*}
	exploiting
	$
		(S\circ Y)\convolution\id
		= -S\convolution Y
	$
	in the formula ($\alpha$ arbitrary)
	\begin{equation*}
		S\convolution (\alpha\circ Y) -	(S\convolution\alpha)\circ Y
		= - (S\circ Y)\convolution\alpha
		= - \left[ (S\circ Y)\convolution\id \right]\convolution S \convolution \alpha \nonumber
	  =  S\convolution Y \convolution S \convolution \alpha. \qedhere
	\end{equation*}
\end{proof}

\begin{lemma}\label{satz:regularized-finiteness}
	Let $\toy \in \chars{H_R}{\alg}$ be the character defined by \eqref{eq:H_R-universal} with
	\begin{equation}\label{eq:regularized-general-mellin}
		\toy\circ B_+(\f) 
		= \toy(\f) \cdot F(\reg\abs{B_+(\f)})
		\qquad\text{for any fixed}\qquad
		F(\reg)\in \reg^{-1} \K[[\reg]].
	\end{equation}
	Then $\toy$ fulfills the conditions of proposition \ref{satz:finiteness-algebraic}. 
	In particular, $\im \left(\toyR \right)\subseteq \K[[\reg]]$ allows the finite physical limit 
	$
		\phiR 
		= \lim_{\reg\rightarrow 0} \toyR
		\subseteq \K[\scalelog, \coeff{\cdot}]
	$
	taking values in the polynomials in $\scalelog=\ln\frac{s}{\rp}$ and the coefficients $\coeff{n}$ of the series $F(\reg)$.
\end{lemma}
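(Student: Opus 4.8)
The statement splits into: (a) $\toy$ satisfies the equivalent conditions of Proposition~\ref{satz:finiteness-algebraic}, and (b) the polynomial description of $\phiR$. For (a) I would verify the reformulation that inducts most cleanly, namely (3): $\toy\circ(S\convolution Y^{n})$ maps into $\reg^{-n}\K[[\reg]]$ for every $n\in\N$; by the proposition this gives (1), the existence of $\phiR=\lim_{\reg\to0}\toyR$. The proof is an induction on the grading of $H_{R}$: assuming $\reg^{n}\cdot\bigl[\toy\circ(S\convolution Y^{n})\bigr](y)\in\K[[\reg]]$ for all $n$ and all homogeneous $y$ of degree below that of $x$, one proves the same for $x$.

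Two elementary identities drive this. Since $Y$ is a derivation, $S$ an algebra map ($H_{R}$ being commutative) and $\toy$ a character, the operators $S\convolution Y^{n}$ obey a binomial Leibniz rule that descends along $\toy$:
\[ \bigl[\toy\circ(S\convolution Y^{n})\bigr](ab)=\sum_{p=0}^{n}\binom{n}{p}\bigl[\toy\circ(S\convolution Y^{p})\bigr](a)\cdot\bigl[\toy\circ(S\convolution Y^{n-p})\bigr](b). \]
This settles $x=\1$ (the left side is then $\counit$) and, whenever $x$ is a product of two factors of strictly smaller degree, it bounds the pole order of the left side by $p+(n-p)=n$ through the inductive hypothesis. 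So only a single rooted tree $x=t=B_{+}(\f)$ remains. For $n=0$ one has $\bigl[\toy\circ(S\convolution\id)\bigr](t)=\counit(t)=0$; for $n\ge1$ I would combine the Hochschild cocycle property $\Delta\circ B_{+}=(\id\tp B_{+})\circ\Delta+B_{+}\tp\1$ with the defining recursion $\toy(B_{+}\f_{2})=\toy(\f_{2})F(\reg\abs{B_{+}\f_{2}})$ and with $\abs{B_{+}\f_{2}}^{n}F(\reg\abs{B_{+}\f_{2}})=\reg^{-n}h(\reg\abs{B_{+}\f_{2}})$ for the \emph{holomorphic} series $h(\reg)\defas\reg^{n}F(\reg)\in\K[[\reg]]$. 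After Taylor-expanding $h$ at $0$, binomially expanding $\abs{B_{+}\f_{2}}^{k}=(\abs{\f_{2}}+1)^{k}$, and interchanging the finite coproduct sum with the $k$-sum, this yields
\[ \bigl[\toy\circ(S\convolution Y^{n})\bigr](t)=\frac{1}{\reg^{n}}\sum_{k\ge0}\frac{h^{(k)}(0)}{k!}\reg^{k}\sum_{j=0}^{k}\binom{k}{j}\bigl[\toy\circ(S\convolution Y^{j})\bigr](\f). \]

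Now the inductive hypothesis applies to $\f$ (of degree $\abs t-1$): each $\bigl[\toy\circ(S\convolution Y^{j})\bigr](\f)$ has pole order $\le j\le k$, so $\reg^{k}\sum_{j\le k}\binom{k}{j}\bigl[\toy\circ(S\convolution Y^{j})\bigr](\f)\in\K[[\reg]]$; and since $\abs{B_{+}\f_{2}}$ stays bounded along the finite coproduct of $\f$, the $k$-series converges in $\K[\reg^{-1},\reg]]$ and, being a sum of holomorphic terms, lies in $\K[[\reg]]$. Hence $\bigl[\toy\circ(S\convolution Y^{n})\bigr](t)\in\reg^{-n}\K[[\reg]]$, completing the induction. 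I expect this tree step to be the main obstacle: a priori the pole order of $\toy(t)$ is $\abs t$, and one must watch the cocycle recursion — the factor $\reg^{n}=h(\reg)/F(\reg)$ that extracts the prefactor $\reg^{-n}$ while leaving the bracket holomorphic — collapse it down to $\le n$, and keeping the infinite $k$-series summable while interlocking it with the inductive bound is the delicate bookkeeping.

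For part (b), with (3) in hand I would read the shape of the limit off \eqref{eq:regularized-renormalized}, which presents $\toyR$ as $\toy\circ(S\convolution\gradAut_{-\reg\scalelog})$ post-composed with the $\reg$-holomorphic graded automorphism $\gradAut_{-\reg\ln\rp}$. Expanding $\gradAut_{-\reg\scalelog}=\sum_{j\ge0}\tfrac{(-\reg\scalelog)^{j}}{j!}Y^{j}$ and using (3) term by term gives $\im(\toyR)\subseteq\K[[\reg]]$ and, for homogeneous $x$,
\[ \phiR(x)=\lim_{\reg\to0}\toyR(x)=\sum_{j=0}^{\abs x}\frac{(-\scalelog)^{j}}{j!}\restrict{\Bigl(\reg^{j}\bigl[\toy\circ(S\convolution Y^{j})\bigr](x)\Bigr)}{\reg=0}, \]
a polynomial in $\scalelog$ whose coefficients are leading Laurent coefficients of products of the values $F(\reg\,\cdot)$ supplied by Proposition~\ref{prop:regularized-mellin}, hence polynomials in the $\coeff{\cdot}$. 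Thus $\phiR=\lim_{\reg\to0}\toyR$ exists and takes values in $\K[\scalelog,\coeff{\cdot}]$.
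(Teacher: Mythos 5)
Your proposal is correct and follows essentially the same route as the paper: an induction along the grading of $H_R$, with products handled by (the coefficient-wise form of) multiplicativity of $\toy\circ(S\convolution\gradAut_{-\reg\scalelog})$ and the tree case $t=B_+(\f)$ handled by combining the cocycle property \eqref{eq:B_+-cocycle} with the Mellin recursion \eqref{eq:regularized-general-mellin}; your displayed formula for $\toy\circ(S\convolution Y^n)(t)$ is exactly the paper's recursion \eqref{eq:toy-B_+-recursion}, merely written with the prefactor $\reg^{-n}$ and the coefficients $h^{(k)}(0)/k!=\coeff{k-n}$ made explicit, i.e.\ you verify criterion (3) of proposition \ref{satz:finiteness-algebraic} where the paper verifies the equivalent criterion (2). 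Your part (b) just spells out the "in particular" clause in more detail than the paper does, so there is nothing further to add.
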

\begin{proof}
	We show $(2)$ of \ref{satz:finiteness-algebraic} inductively along the grading of $H_R$.
	So let it be true on $H_{R,m}$, then by the multiplicativity of $\toy \circ (S \convolution \gradAut_{-\reg\scalelog})$ it holds for all products in $H_{R,m+1}$ and we only need consider trees $t=B_+(\f)$ for some $\f \in H_{R,m}$. For any $k\in\N$ observe holomorphy of $\restrict{\partial_{-\scalelog}^k}{\scalelog=0} \toy \circ (S\convolution \gradAut_{-\reg\scalelog})$ through
	\begin{align}
		\toy \circ (S\convolution [\reg Y]^k) (t)
		&\urel{\eqref{eq:B_+-cocycle}}
		\toy \circ \left\{S \convolution ([\reg Y]^k \circ B_+) \right\}(\f)
		=
		\toy^{\convolution -1} \convolution (\toy \circ B_+ \circ [\reg (Y+\id)]^{k+1}) (\f) \nonumber\\
		&\urel{\eqref{eq:regularized-general-mellin}}
		\sum_{n\geq -1} \coeff{n} \cdot \toy \circ \left\{S \convolution \left[\reg ( Y + \id) \right]^{n+k} \right\} (\f) \nonumber\\
		&= \sum_{n \geq -1} \coeff{n} \sum_{j=0}^{n+k} \binom{n+k}{j} \reg^{n+k-j} \restrict{\partial_{-\scalelog}^j}{\scalelog=0} \toy \circ (S\convolution \gradAut_{-\reg\scalelog}) (\f)
		\in \K[[\reg]],
		\label{eq:toy-B_+-recursion}
	\end{align}
	while for $k=0$ we use $S\convolution [\reg Y]^0 = S \convolution \id = e$ and $e \circ B_+ = 0$.
\end{proof}

\subsection{Feynman rule recursion from Mellin transforms}
	In fact this serves an alternative prove of the recursion \eqref{eq:phiR-universal}, as in the physical limit $\reg \rightarrow 0$ only the contributions of $j=n+k$ in \eqref{eq:toy-B_+-recursion} survive:
\begin{equation}
	\phiR \circ B_+
	\wurel{\eqref{eq:regularized-renormalized}}
		\sum_{k\in\N} 
				\frac{(-\scalelog)^k}{k!} 
				\lim_{\reg \rightarrow 0} 
				\toy \circ (S\convolution [\reg Y]^k) \circ B_+
	\wurel{\eqref{eq:toy-B_+-recursion}}
			\sum_{\substack{
				k\in\N \\
				n \geq -1
			}}
			\coeff{n} \frac{(-\scalelog)^k}{k!}
			\left[
				\partial_{-\scalelog}^{n+k}
				\phiR \right]_{\scalelog=0}
	= P \circ F(-\partial_{\scalelog}) \circ \phiR.
	\label{eq:phiR-B_+-recursion-reg}
\end{equation}
Recall that $P=\id-\ev_{0}: \K[\scalelog] \twoheadrightarrow \ker \counit = \scalelog\,\K[\scalelog]$ projects out the constant terms and we defined $\partial_{\scalelog}^{-1} \defas \int_0$.
This delivers an efficient recursion to calculate $\phiR$ combinatorially in terms of the Mellin transform coefficients $\coeff{\cdot}$ without any need for series expansions in $\reg$ as in example \ref{ex:phiR-reg-expansion} or integrations like in \eqref{eq:renormiert-def}:
\begin{beispiel}\label{ex:renormalized-universal}
	Applying \eqref{eq:phiR-universal} we can reproduce example \ref{ex:phiR-reg-expansion} as
\begin{align*}
	\phiR \left( \tree{+-} \right)
	 &= \phiR \circ B_+ (\1)
	 = \left[
	 		P \circ F(-\partial_\scalelog)
			\right]
		\big( \phiR(1) \big)
	 = P \left( -\coeff{-1}\polyint 1 + \coeff{0} \right)
	 = - \coeff{-1}\, \scalelog \\
	\phiR \left( \tree{++--} \right)
	&= \phiR \circ B_+ \left( \tree{+-} \right)
	= \left[P \circ F(-\partial_{\scalelog}) \right]
		\big( \phiR \left( \tree{+-} \right) \big) \\
	&= P \left(
				\left[ -\coeff{-1}\polyint + \coeff{0} - \coeff{1}\partial_{\scalelog} \right] 
				\left( -\coeff{-1} \scalelog \right)
			\right)
 	= \coeff[2]{-1} \frac{\scalelog^2}{2} - \coeff{-1} \coeff{0}\, \scalelog, \\
\hide{
	 \phiR \left( \tree{+++---} \right)
	&= \left[ -\coeff{-1}\polyint + \dH\toyform \right] \left( \coeff[2]{-1} \frac{\scalelog^2}{2} - \coeff{-1} \coeff{0}\, \scalelog \right)
	= - \coeff[3]{-1} \frac{\scalelog^3}{6} + \coeff[2]{-1} \coeff{0} \frac{\scalelog^2}{2}+ \coeff[2]{-1} \toyform(\scalelog)\,\scalelog \\
	 &\quad + \toyform(1) \left( \coeff[2]{-1} \frac{\scalelog^2}{2} - \coeff{-1} \coeff{0}\, \scalelog \right) 
	 = - \coeff[3]{-1} \frac{\scalelog^3}{6} + \coeff[2]{-1} \coeff{0}\, \scalelog^2 - \left( \coeff{-1} \coeff[2]{0} + \coeff[2]{-1} \coeff{1} \right) \scalelog,
	 \qquad\text{but also}\\
}%
	\phiR \left( \tree{++-+--} \right)
	&= \phiR \circ B_+ \left( \tree{+-}\tree{+-} \right)
	= \left[ P \circ F(-\partial_{\scalelog}) \right]
		\big( \phiR \left( \tree{+-}\tree{+-} \right) \big)
	= P \circ \left[ -\coeff{-1}\polyint + \coeff{0} - \coeff{1} \partial_{\scalelog} \right] \left\{ {\left( -\coeff{-1}\,\scalelog \right) }^2 \right\} \\
	&= -\coeff[3]{-1} \frac{\scalelog^3}{3} + \coeff[2]{-1} \coeff{0}\,\scalelog^2 - 2\coeff[2]{-1}\coeff{1}\,\scalelog.
\end{align*}
Here we can substitute $\coeff{-1} = 1$, $\coeff{0}=0$ and $\coeff{1}=\zeta(2)=\frac{\pi^2}{6}$ to finally verify \eqref{eq:intro-(()())-R} from the introduction, where the choice $f(\zeta) = \frac{1}{1+\zeta}$ results in the beta function
\begin{equation*}
	F(\reg)
	\urel{\eqref{eq:mellin-trafo}} 
	B(\reg,1-\reg) 
	= \Gamma(\reg) \Gamma(1-\reg) 
	= \frac{\pi}{\sin(\pi\reg)} 
	\in \reg^{-1}+\frac{\pi^2}{6} \reg + \frac{7\pi^4}{360} \reg^3 + \bigo{\reg^5}.
\end{equation*}
\end{beispiel}

\begin{korollar}\label{satz:phiR-leading-log}
	As in $F(-\partial_{\scalelog})$ only $-\coeff{-1}\polyint$ increases the degree in $\scalelog$, the highest order contribution (called \emph{leading $\log$}) of $\phiR$ is the tree factorial we already saw in \eqref{eq:regularized-leading-pole}: For any forest $\f \in \forests$,
	\begin{equation}
		\phiR (\f) \in
		\unimor{\left[-\coeff{-1}\polyint\right]} (\f) + \bigo{\x^{\abs{\f}-1}}
		\urel{\ref{ex:tree-factorial}}
		\frac{ {\left( - \coeff{-1} \x \right)}^{\abs{\f}}}{\f!} + \K[\x]_{<\abs{\f}}.
	\end{equation}
\end{korollar}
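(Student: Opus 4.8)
The plan is to prove this by induction on the number of nodes $\abs{\f}$, using the recursion \eqref{eq:phiR-universal} together with the multiplicativity of both $\phiR$ and the tree factorial. The base case $\f=\1$ is immediate, since $\phiR(\1)=1=\frac{(-\coeff{-1}\x)^0}{\1!}$ and $\K[\x]_{<0}=\set{0}$, and for $\abs{\f}\geq 1$ every forest is either a product $\f=\f_1\f_2$ of two non-empty forests of strictly smaller degree, or a single tree $t=B_+(\f')$ with $\abs{\f'}=\abs{t}-1$.

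For a product I would use $\phiR(\f_1\f_2)=\phiR(\f_1)\phiR(\f_2)$: the top-degree coefficient is the product of the top-degree coefficients, the degrees add to $\abs{\f_1}+\abs{\f_2}=\abs{\f}$, and $\f!=\f_1!\,\f_2!$, so by the inductive hypothesis the leading term is exactly $\frac{(-\coeff{-1}\x)^{\abs{\f}}}{\f!}$ and all remaining terms have degree $<\abs{\f}$. For a tree $t=B_+(\f')$ I would split the operator in \eqref{eq:phiR-universal} as $F(-\partial_{\scalelog})=-\coeff{-1}\polyint+G$ with $G\defas\sum_{n\geq 0}\coeff{n}(-\partial_{\scalelog})^n$, and record the point already announced in the statement: $G$ does not raise the polynomial degree, the projection $P$ only deletes a constant term (hence acts trivially on every $\polyint(\cdot)$, which has none), and $-\coeff{-1}\polyint$ is the only summand that can raise the degree, and then by exactly one. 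Writing $\phiR(\f')=\frac{(-\coeff{-1}\x)^{\abs{\f'}}}{\f'!}+r$ with $\deg r<\abs{\f'}$, the recursion then gives
\begin{equation*}
	\phiR(t)=-\coeff{-1}\polyint\frac{(-\coeff{-1}\x)^{\abs{\f'}}}{\f'!}\ -\ \coeff{-1}\polyint\,r\ +\ P\circ G\,\phiR(\f'),
\end{equation*}
where the last two summands have degree at most $\abs{\f'}=\abs{t}-1$, while the first equals $\frac{(-\coeff{-1})^{\abs{\f'}+1}}{(\abs{\f'}+1)\,\f'!}\,\x^{\abs{t}}=\frac{(-\coeff{-1}\x)^{\abs{t}}}{t!}$ by the tree-factorial recursion $t!=\abs{t}\cdot\f'!$. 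The identification of $\unimor{\left[-\coeff{-1}\polyint\right]}(\f)$ with $\frac{(-\coeff{-1}\x)^{\abs{\f}}}{\f!}$ is then exactly \ref{ex:tree-factorial} applied to the cocycle $-\coeff{-1}\polyint$, which closes the induction.

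I do not expect a genuine obstacle here: the entire content is the bookkeeping of degrees inside \eqref{eq:phiR-universal}. The only point deserving explicit mention is the one just isolated — that neither $P$ nor the non-$\coeff{-1}$ part $G$ of $F(-\partial_{\scalelog})$ affects the top degree — so that the leading-$\log$ behaviour is governed purely by the integral cocycle $-\coeff{-1}\polyint$, and therefore, via \ref{ex:tree-factorial}, by the tree factorial $\f!$.
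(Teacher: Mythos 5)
Your proof is correct and follows essentially the same route the paper intends: the paper leaves the corollary without a separate proof precisely because its content is the degree bookkeeping in the recursion \eqref{eq:phiR-universal} — only $-\coeff{-1}\polyint$ raises the degree, $P$ and the remaining part of $F(-\partial_{\scalelog})$ do not — combined with example \ref{ex:tree-factorial} and the tree-factorial recursion \eqref{eq:tree-factorial}. Your induction over products and trees simply makes this sketch explicit, with all steps (multiplicativity, $t!=\abs{t}\cdot\f'!$, triviality of $P$ on $\im\polyint$) checking out.
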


\section{Hopf algebra morphisms and the renormalization group}
\label{sec:RGE}

From now on we identify $\phiR\!: H_R\rightarrow\K[\x]$ with the polynomials that evaluate to the renormalized Feynman rules $\phiR[s] = \ev_{\scalelog} \circ \phiR$ at $\x \mapsto \scalelog=\ln\frac{s}{\rp}$. In \eqref{eq:phiR-universal} and \eqref{eq:phiR-B_+-recursion-reg} we independently proved
\begin{korollar}\label{satz:toyphy-hopfmor}
	As $P \circ F(-\partial_{\x}) \in \HZ[1](\K[\x])$ in \eqref{eq:phiR-universal} is a Hochschild-1-cocycle by \eqref{eq-polys-cycles} and \eqref{satz:poly-coboundaries}, theorem \ref{satz:H_R-universal} implies that $\phiR\!: H_R \rightarrow \K[\x]$ is a morphism of Hopf algebras.
\end{korollar}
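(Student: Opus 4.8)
The plan is to read the statement off from the universal property of $H_R$. Recall that Theorem~\ref{satz:H_R-universal} is precisely the tool used throughout to pin down characters: for a commutative connected Hopf algebra $A$ equipped with a Hochschild $1$-cocycle $\toycc\in\HZ[1](A)$ there is a \emph{unique} character $\rho\in\chars{H_R}{A}$ with $\rho\circ B_+=\toycc\circ\rho$, and --- because the cocycle condition makes this recursion compatible with the coproduct --- this $\rho$ is automatically a morphism of Hopf algebras. Now $\phiR\in\chars{H_R}{\K[\x]}$ is a character by Definition~\ref{def:renormiert}, and by \eqref{eq:phiR-universal} (reproven in \eqref{eq:phiR-B_+-recursion-reg}) it satisfies $\phiR\circ B_+=\bigl(P\circ F(-\partial_{\x})\bigr)\circ\phiR$; since these two properties already determine $\phiR$ completely (recursively on trees, multiplicatively on forests), $\phiR$ \emph{is} the universal morphism attached to the pair $\bigl(\K[\x],\,P\circ F(-\partial_{\x})\bigr)$ --- provided $P\circ F(-\partial_{\x})$ really is a $1$-cocycle. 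So everything reduces to the membership $P\circ F(-\partial_{\x})\in\HZ[1](\K[\x])$.

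To establish that I would first make sense of $P\circ F(-\partial_{\x})$ as a $\K$-linear endomorphism of $\K[\x]$: the $n=-1$ summand is $\coeff{-1}(-\partial_{\x})^{-1}=-\coeff{-1}\polyint$ with $\polyint\x^m=\x^{m+1}/(m+1)$, the summands with $n\geq 0$ are ordinary differential operators (only finitely many acting nontrivially on a given $\x^m$), and $P$ projects the result back onto $\x\,\K[\x]$. Then I would verify the cocycle identity $\Delta\circ L=(\id\tp L)\circ\Delta+L\tp\1$ on the binomial Hopf algebra $\K[\x]$ by splitting off the basic cocycle: one checks $P\circ F(-\partial_{\x})=-\coeff{-1}\polyint+\dH\gamma$, where $\polyint\in\HZ[1](\K[\x])$ is the prototypical $1$-cocycle and $\gamma\in\K[\x]'$ is the functional absorbing the positive-order (degree-lowering) part --- explicitly $\gamma(\x^j)=(-1)^j j!\,\coeff{j}$, so that $P\circ\sum_{n\geq 0}\coeff{n}(-\partial_{\x})^n=\dH\gamma\in\HB[1](\K[\x])$. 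Since coboundaries are cocycles, $P\circ F(-\partial_{\x})$ then lands in $\HZ[1](\K[\x])$. These are precisely the elementary computations recorded in the appendix as \eqref{eq-polys-cycles} and \eqref{satz:poly-coboundaries}, the fact $\dH\alpha(\1)=0$ from Lemma~\ref{satz:cocycle-props} entering to see that the projection $P$ does not disturb the coboundary structure.

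With $P\circ F(-\partial_{\x})\in\HZ[1](\K[\x])$ in hand, Theorem~\ref{satz:H_R-universal} applied to $\bigl(\K[\x],\,P\circ F(-\partial_{\x})\bigr)$ yields at once that the character $\phiR$ from the first paragraph is a morphism of Hopf algebras --- compatible with products, unit, coproduct, counit and antipode. The only step requiring genuine work is the cocycle check of the middle paragraph, and even that is off-loaded onto the appendix identities \eqref{eq-polys-cycles} and \eqref{satz:poly-coboundaries}; the remaining ingredients --- linearity of $F\mapsto P\circ F(-\partial_{\x})$, multiplicativity on forests, and the uniqueness clause of the universal property --- are purely formal.
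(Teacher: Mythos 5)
Your proposal is correct and follows essentially the same route as the paper: decompose $P\circ F(-\partial_{\x})=-\coeff{-1}\polyint+\dH\gamma$ using \eqref{eq-polys-cycles} and lemma \ref{satz:poly-coboundaries} to see it is a $1$-cocycle, then identify $\phiR$ with the universal morphism of theorem \ref{satz:H_R-universal}, which is automatically a Hopf algebra morphism. Your extra details (the explicit functional $\gamma(\x^j)=(-1)^jj!\,\coeff{j}$ and the uniqueness argument pinning $\phiR$ down as $\unimor{P\circ F(-\partial_{\x})}$) merely spell out what the paper leaves implicit.
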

Therefore $\Delta \circ \phiR = (\phiR \tp \phiR) \circ \Delta$ and the induced map $\chars{{\K}[\x]}{\K}\rightarrow \chars{H_R}{\K}$ given by $\ev_{\scalelog} \mapsto \restrict{\phiR}{\scalelog} \defas \ev_{\scalelog} \circ \phiR$ becomes a morphism of groups, implying
\begin{korollar}\label{satz:rge}
	Using \eqref{eq:poly-characters} we obtain the \emph{renormalization group equation} (called \emph{Chen's lemma} in \cite{Kreimer:ChenII})
	\begin{equation}\label{eq:rge}
		\restrict{\phiR}{\scalelog}	\convolution	\restrict{\phiR}{\scalelog'}
		= \restrict{\phiR}{\scalelog+\scalelog'},
	\quad\text{for any}\quad
	\scalelog,\scalelog' \in \K.
	\end{equation}
\end{korollar}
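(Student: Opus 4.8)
The plan is to obtain \eqref{eq:rge} as a purely formal consequence of Corollary \ref{satz:toyphy-hopfmor}, i.e. of $\phiR\colon H_R\to\K[\x]$ being a morphism of Hopf algebras. First I would record that precomposition with a coalgebra morphism is compatible with convolution: for any functionals $\alpha,\beta\in\Hom(\K[\x],\K)$,
\begin{align*}
	(\alpha\circ\phiR)\convolution(\beta\circ\phiR)
	&= m_{\K}\circ(\alpha\tp\beta)\circ(\phiR\tp\phiR)\circ\Delta_{H_R} \\
	&= m_{\K}\circ(\alpha\tp\beta)\circ\Delta_{\K[\x]}\circ\phiR
	 = (\alpha\convolution\beta)\circ\phiR,
\end{align*}
where the middle equality is exactly the intertwining relation $\Delta_{\K[\x]}\circ\phiR=(\phiR\tp\phiR)\circ\Delta_{H_R}$. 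Since $\phiR$ in addition preserves units and products, it sends characters to characters, so $\alpha\mapsto\alpha\circ\phiR$ restricts to a homomorphism of the convolution groups $\chars{\K[\x]}{\K}\to\chars{H_R}{\K}$; this is the group morphism mentioned just before the statement of the corollary.

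Next I would invoke the structure of the character group of the polynomial Hopf algebra recalled in the appendix, namely \eqref{eq:poly-characters}: every $\scalelog\in\K$ defines a character $\ev_{\scalelog}\in\chars{\K[\x]}{\K}$ and $\scalelog\mapsto\ev_{\scalelog}$ is an isomorphism $(\K,+)\xrightarrow{\sim}\chars{\K[\x]}{\K}$, in particular $\ev_{\scalelog}\convolution\ev_{\scalelog'}=\ev_{\scalelog+\scalelog'}$. Feeding this identity through the group homomorphism of the previous paragraph and writing $\restrict{\phiR}{\scalelog}=\ev_{\scalelog}\circ\phiR$ then yields
\begin{align*}
	\restrict{\phiR}{\scalelog}\convolution\restrict{\phiR}{\scalelog'}
	&= (\ev_{\scalelog}\circ\phiR)\convolution(\ev_{\scalelog'}\circ\phiR)
	 = (\ev_{\scalelog}\convolution\ev_{\scalelog'})\circ\phiR \\
	&= \ev_{\scalelog+\scalelog'}\circ\phiR
	 = \restrict{\phiR}{\scalelog+\scalelog'},
\end{align*}
which is \eqref{eq:rge}.

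I do not expect any genuine obstacle here: all the substance has already been packed into Corollary \ref{satz:toyphy-hopfmor} (equivalently into the cocycle property of $P\circ F(-\partial_{\x})$ via \eqref{eq-polys-cycles}, \eqref{satz:poly-coboundaries} and the universal property \ref{satz:H_R-universal}), and what remains is only the elementary description of $\chars{\K[\x]}{\K}$. The one spot deserving a line of care is that $\ev_{\scalelog}$ really is a character and that convolving evaluation characters adds their arguments, but this is immediate from $\Delta_{\K[\x]}\x=\x\tp\1+\1\tp\x$ together with multiplicativity, i.e. $(\ev_{\scalelog}\convolution\ev_{\scalelog'})(\x^n)=\sum_{k}\binom{n}{k}\scalelog^{k}(\scalelog')^{n-k}=(\scalelog+\scalelog')^n$, so the proof reduces to the two displayed computations above.
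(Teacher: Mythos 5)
Your argument is correct and is essentially the paper's own: Corollary \ref{satz:toyphy-hopfmor} makes precomposition with $\phiR$ a morphism of groups $\chars{\K[\x]}{\K}\rightarrow\chars{H_R}{\K}$, and \eqref{eq:poly-characters} supplies $\ev_{\scalelog}\convolution\ev_{\scalelog'}=\ev_{\scalelog+\scalelog'}$, giving \eqref{eq:rge} immediately. The only difference is that you write out the intertwining computation and re-verify the addition law for evaluation characters, steps the paper delegates to the remark preceding the corollary and to lemma \ref{satz:poly-characters}.
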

Before we obtain the generator of this one-parameter group in \ref{def:toylog}, note how this result imposes non-trivial relations between individual trees like
\begin{align*}
	\phiR[\scalelog] \convolution \phiR[\scalelog'] \left( \tree{++--} \right)
	&\wurel{\eqref{eq:rge}}
		\phiR[\scalelog] \left( \tree{++--} \right)
		+ \phiR[\scalelog] \left( \tree{+-} \right) \phiR[\scalelog'] \left( \tree{+-} \right)
		+ \phiR[\scalelog'] \left( \tree{++--} \right) \\
	&\wurel{\eqref{ex:renormalized-universal}}
		\coeff[2]{-1} \frac{\scalelog^2+\scalelog'^2}{2}
		- \coeff{-1}\coeff{0} \left(\scalelog + \scalelog' \right)
		+ \coeff[2]{-1}\scalelog \scalelog'
	\urel{\eqref{ex:renormalized-universal}}
		\phiR[\scalelog+\scalelog'] \left( \tree{++--} \right).
\end{align*}
\begin{proposition}
	Let $\phi\!: H \rightarrow \K[\x]$ be a morphism of bialgebras\footnote{This already implies $\phi$ to be a morphism of Hopf algebras since $H$ is connected.}, then $\log_\convolution \phi$ is precisely the monomial linear in $\x$:
	\begin{equation}\label{eq:log-phi}
		\log_{\convolution} \phi
		= \x \cdot \left[ \partial_0 \circ \phi \right].
	\end{equation}
\end{proposition}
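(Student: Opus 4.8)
The plan is to bypass the defining series of $\log_{\convolution}$ and instead verify the equivalent identity $\phi=\exp_{\convolution}\bigl(\x\cdot(\partial_0\circ\phi)\bigr)$, which suffices because $\exp_{\convolution}$ and $\log_{\convolution}$ are mutually inverse bijections between characters and infinitesimal characters (Section~\ref{sec:notation}). Here $\partial_0\colon\K[\x]\to\K$ is the functional $p\mapsto p'(0)$ extracting the linear coefficient, the canonical infinitesimal character of $\K[\x]$. I would first record two elementary facts. (i) $\partial_0\circ\phi$ is an infinitesimal character of $H$: since $\phi$ is a bialgebra morphism the constant term of $\phi(h)$ equals $\counit(h)$ (because $\counit_{\K[\x]}=\ev_0$), and multiplicativity of $\phi$ then forces the linear coefficient of $\phi(ab)$ to be $\counit(a)\,\partial_0\phi(b)+\counit(b)\,\partial_0\phi(a)$. (ii) From the binomial coproduct $\Delta^{(n-1)}(\x^k)=\sum\binom{k}{i_1,\dots,i_n}\x^{i_1}\tp\cdots\tp\x^{i_n}$ and $\partial_0(\x^i)=\delta_{i,1}$ one reads off the convolution powers $\partial_0^{\convolution n}(\x^k)=\delta_{n,k}\,n!$ in $\Hom(\K[\x],\K)$.

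The core of the argument exploits that $\phi$ is a \emph{coalgebra} morphism. Setting $Z\defas\partial_0\circ\phi$ and iterating $\Delta\circ\phi=(\phi\tp\phi)\circ\Delta$ to $\Delta^{(n-1)}_{\K[\x]}\circ\phi=\phi^{\tp n}\circ\Delta^{(n-1)}_H$, one gets $Z^{\convolution n}=\partial_0^{\convolution n}\circ\phi$ for every $n$ --- the two convolution products here live in $\Hom(H,\K)$ and $\Hom(\K[\x],\K)$, which is the only spot where a little care is needed. Because the product on $\K[\x]$ is ordinary polynomial multiplication and the scalars $Z(h_i)$ pull out of it, a one-line induction yields $(\x\cdot Z)^{\convolution n}(h)=\x^n\,Z^{\convolution n}(h)$. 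Writing $\phi(h)=\sum_k p_k\x^k$ (a finite sum) and combining with (ii), $(\x Z)^{\convolution n}(h)=\x^n\,\partial_0^{\convolution n}(\phi(h))=n!\,p_n\,\x^n$, so the exponential collapses termwise: $\exp_{\convolution}(\x Z)(h)=\sum_{n\ge 0}\tfrac1{n!}(\x Z)^{\convolution n}(h)=\sum_{n\ge 0}p_n\x^n=\phi(h)$. Each such series is visibly finite (it terminates at $n=\deg\phi(h)$), so no convergence question arises.

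I do not expect a real obstacle: the whole thing is bookkeeping around the binomial coproduct of $\K[\x]$, the only care being which convolution product and which counit appears at each step. If one prefers to avoid the exp/log manipulation, an equivalent packaging is to use that the characters of $\K[\x]$ are exactly the evaluations $\ev_\lambda$, with $\ev_\lambda\convolution\ev_\mu=\ev_{\lambda+\mu}$ and $\ev_\lambda=\exp_{\convolution}(\lambda\partial_0)$; since $\phi$ is a coalgebra morphism, precomposition with $\phi$ is a convolution homomorphism, so $\ev_\lambda\circ\phi=\exp_{\convolution}\bigl(\lambda\,(\partial_0\circ\phi)\bigr)$, while postcomposition with the algebra map $\ev_\lambda$ is also a convolution homomorphism, giving $\ev_\lambda\circ\log_{\convolution}\phi=\log_{\convolution}(\ev_\lambda\circ\phi)=\lambda\,(\partial_0\circ\phi)$ for all $\lambda\in\K$; as $\K$ is infinite, this identifies the polynomial $(\log_{\convolution}\phi)(h)$ with $\x\cdot(\partial_0\circ\phi)(h)$.
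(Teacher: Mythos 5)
Your proposal is correct, and it actually contains two arguments. Your closing ``equivalent packaging'' is essentially verbatim the paper's proof: the paper observes that $\log_{\convolution}$ commutes with precomposition by coalgebra morphisms and postcomposition by algebra morphisms (via the identity $(\psi\circ\phi-e)^{\convolution n}=\psi\circ(\phi-e)^{\convolution n}=(\psi-e)^{\convolution n}\circ\phi$, with $e$ the respective convolution units, inserted into \eqref{eq:log-exp}), then sets $\psi=\ev_a$ and invokes lemma \ref{satz:poly-log} ($\log_{\convolution}\ev_a=a\,\partial_0$), which is exactly your chain $\ev_\lambda\circ\log_{\convolution}\phi=\log_{\convolution}(\ev_\lambda\circ\phi)=\lambda\,(\partial_0\circ\phi)$ followed by identifying the polynomial from its values at all $\lambda\in\K$. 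Your primary route is genuinely different in presentation: you bypass $\log_{\convolution}$ altogether and verify $\exp_{\convolution}\bigl(\x\cdot(\partial_0\circ\phi)\bigr)=\phi$ termwise, extracting the coefficients $p_n$ of $\phi(h)$ through $\partial_0^{\convolution n}(\x^k)=\delta_{n,k}\,n!$ together with $Z^{\convolution n}=\partial_0^{\convolution n}\circ\phi$ and $(\x Z)^{\convolution n}=\x^n\,Z^{\convolution n}$; this is more elementary and self-contained (only the binomial coproduct of $\K[\x]$ enters), at the cost of explicit bookkeeping, whereas the paper's argument is shorter because it reuses lemma \ref{satz:poly-log} and the naturality of $\log_{\convolution}$, facts that are of independent use elsewhere. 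The only step worth making explicit in your main route is that $\x\cdot(\partial_0\circ\phi)\in\infchars{H}{\K[\x]}$ --- immediate from your point (i), since multiplying a scalar-valued infinitesimal character by $\x$ preserves the defining relation --- so that the $\exp_{\convolution}$/$\log_{\convolution}$ bijection with target algebra $\K[\x]$ applies and lets you conclude $\log_{\convolution}\phi=\x\cdot(\partial_0\circ\phi)$ from $\exp_{\convolution}\bigl(\x\cdot(\partial_0\circ\phi)\bigr)=\phi$.
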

\begin{proof}
	Letting $\phi\!: C\rightarrow H$ and $\psi\!: H\rightarrow \alg$ denote morphisms of coalgebras and algebras, exploiting
$
	\left( \psi \circ \phi - \unit_{\alg} \circ \counit_C \right)^{\convolution n}
		= \psi \circ \left( \phi - \unit_H\circ\counit_H \right)^{\convolution n}
		= \left( \psi - \unit_{\alg}\circ\counit_H \right)^{\convolution n} \circ \phi 
$
in \eqref{eq:log-exp} proves 
$
	(\log_{\convolution} \psi) \circ \phi 
	= \log_{\convolution} (\psi \circ \phi) 
	= \psi \circ \log_{\convolution} \phi
$. Now set $\psi=\ev_a$ and use lemma \ref{satz:poly-log}.
\end{proof}
\begin{definition}\label{def:toylog}
	The \emph{anomalous dimension} $\toylog$ of $\phiR$ is the infinitesimal character
	\begin{equation}\label{eq:toylog}
		H_R' \supset
		\infchars{H_R}{\K} \ni \toylog
		\defas -\partial_0 \: \circ\: \phiR
		\urel{\eqref{eq:log-phi}} 
		-\frac{1}{\x} \log_{\convolution} \phiR.
	\end{equation}
	It completely determines all higher powers of $\x$ by means of
	\begin{equation}\label{eq:toyexp}
		\phiR
			= \exp_{\convolution}(-\x\cdot\toylog)
			= \sum_{n\in\N_0} \frac{\toylog^{\convolution n}}{n!} (-\x)^n.
	\end{equation}
\end{definition}
\begin{beispiel}
	Reading off
$
	\toylog \left( \tree{+-} \right)
	= \coeff{-1}
$,
$
	\toylog \left( \tree{++--} \right)
	= \coeff{-1}\coeff{0}
$
and
$
	\toylog \left( \tree{++-+--} \right)
	= 2\coeff[2]{-1}\coeff{1}
$
from the example \ref{ex:renormalized-universal} above, \eqref{eq:toyexp} correctly determines the higher powers of $\x$ through
\begin{align*}
	\phiR \left( \tree{++--} \right)
	&
	 \urel{\eqref{eq:log-exp}} \left[ e - \x\toylog + \x^2\frac{\toylog\convolution\toylog}{2} \right] \left( \tree{++--} \right)
	 = 0 - \x\toylog\left( \tree{++--} \right) + \x^2 \frac{\toylog^2\left( \tree{+-} \right)}{2} 
	 = - \coeff{-1}\coeff{0}\,\x + \coeff[2]{-1} \frac{\x^2}{2}, \\
	\phiR \left( \tree{++-+--} \right)
	&= 0 - \x \toylog \left( \tree{++-+--} \right) 
		+ \x^2\frac{\toylog \tp \toylog}{2} \left( 2 \tree{+-} \tp \tree{++--} + \tree{+-}\tree{+-} \tp \tree{+-} \right)
		- \x^3\frac{\toylog \tp \toylog \tp \toylog}{6} \left( 2\tree{+-} \tp \tree{+-} \tp \tree{+-} \right) \\
	&= -\toylog^3\left(\tree{+-}\right) \frac{\x^3}{3} + \x^2 \toylog\left(\tree{+-}\right) \toylog\left(\tree{++--}\right)  - 2\coeff[2]{-1}\coeff{1}\,\x
	 = -\coeff[3]{-1} \frac{\x^3}{3} + \coeff[2]{-1} \coeff{0}\,\x^2 - 2\coeff[2]{-1}\coeff{1}\,\x.
\end{align*}
\end{beispiel}
Note how the fragment $\tree{+-}\tree{+-} \tp \tree{+-}$ of $\Delta \left( \tree{++-+--} \right)$ does not contribute to the quadratic terms $\frac{\x^2}{2} \toylog \convolution \toylog$, as $\toylog$ vanishes on products. We will exploit this in \eqref{eq:perturbation-convolution-inf} of section \ref{sec:propagator-coupling} on Dyson-Schwinger equations.
\begin{beispiel}\label{ex:intrules-exp}
	In the leading-$\log$ case \eqref{eq:int-rules} we read off $\partial_0 \circ \intrules = Z_{\tree{+-}} \in \infchars{H_R}{\K}$ where $Z_{\tree{+-}} (\f) \defas \delta_{\f,\tree{+-}}$. Comparing $\intrules = \exp_{\convolution}(xZ_{\tree{+-}})$ with \eqref{eq:int-rules} shows $\abs{\f}! = \f! \cdot Z_{\tree{+-}}^{\convolution \abs{\f}}(\f)$, implying the following combinatorial relation among tree factorials noted in \cite{Kreimer:ChenII}:
	\begin{equation*}
		\frac{\abs{\f}}{\f!}
		=	\frac{1}{\left( \abs{\f} -1 \right)!} \sum_{\f} Z_{\tree{+-}}(\f_1)Z_{\tree{+-}}^{\convolution \abs{\f}-1}(\f_2)
		= \sum_{\f:\ \f_1=\tree{+-}}
			\frac{1}{\abs{\f_2}!}Z_{\tree{+-}}^{\convolution \abs{\f_2}} (\f_2)
		= \sum_{\f:\ \f_1=\tree{+-}}
			\frac{1}{\f_2!}.
	\end{equation*}
	This Sweedler sum picks only the parts with $\f_1 = \tree{+-}$ of the coproduct, which means that we sum over all leaves (nodes without children) $\f_1$ of $\f$ and $\f_2$ is obtained by cutting off this leaf.
\end{beispiel}

\subsection{The regularized viewpoint}
	We can obtain these results also by exploiting the regulator as in \cite{CK:RH2}:
\begin{lemma}
	For $\toy \in \chars{H}{\alg}$ subject to \ref{satz:finiteness-algebraic}, the \emph{anomalous dimension}
	\begin{equation}
		\toylog
		\defas -\restrict{\partial_\scalelog}{0} \phiR
		= \lim_{\reg \rightarrow 0}
				\reg \cdot \toy \circ (S\convolution Y)
		= \Res \toy \circ (S\convolution Y)
		\in \infchars{H}{\K}
	\end{equation}
	is the residue (coefficient of $\frac{1}{\reg}$) of $\toy \circ (S \convolution Y) \in \reg^{-1} \K[[\reg]]$ and fulfills
	\begin{equation}
		-\frac{\partial}{\partial \scalelog} \phiR
		= \toylog \convolution \phiR
		,\quad\text{therefore}\quad
		\phiR
		= \exp_{\convolution} (-\scalelog \toylog) \convolution \restrict{\phiR}{\scalelog=0}
		= \exp_{\convolution} (-\scalelog \toylog).
	\end{equation}
\end{lemma}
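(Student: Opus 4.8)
The plan is to read off the anomalous dimension $\toylog$ from the three equivalent descriptions and then to derive the differential equation $-\partial_\scalelog \phiR = \toylog \convolution \phiR$ by differentiating the renormalization group equation of Corollary~\ref{satz:rge}; the final exponential formula is then the unique solution of this ODE in the convolution algebra with the initial value $\restrict{\phiR}{\scalelog=0}=e$.

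\emph{Step 1: identify $\toylog$.} First I would establish the chain of equalities defining $\toylog$. The identity $-\restrict{\partial_\scalelog}{0}\phiR = -\partial_0\circ\phiR$ is just Definition~\ref{def:toylog} read with $\x\mapsto\scalelog$, so this expression is an infinitesimal character by \eqref{eq:toylog}. To match it with the regularized side, apply $\restrict{\partial_\scalelog}{\scalelog=0}$ to \eqref{eq:regularized-renormalized} in the form $\toyR[s] = \toy\circ(S\convolution\gradAut_{-\reg\scalelog})\circ\gradAut_{-\reg\ln\rp}$; since $\partial_\scalelog\gradAut_{-\reg\scalelog} = -\reg Y\circ\gradAut_{-\reg\scalelog}$ and $\restrict{\gradAut_{-\reg\scalelog}}{\scalelog=0}=\id$, one gets $\restrict{\partial_\scalelog}{0}\toyR = -\reg\,\toy\circ(S\convolution Y)\circ\gradAut_{-\reg\ln\rp}$. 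Now take $\reg\to 0$: by Lemma~\ref{satz:regularized-finiteness} (condition $(4)$ of Proposition~\ref{satz:finiteness-algebraic}) the quantity $\reg\,\toy\circ(S\convolution Y)$ has a finite limit, namely its residue, and $\gradAut_{-\reg\ln\rp}\to\id$, so the limit of the left side is $\restrict{\partial_\scalelog}{0}\phiR$ while the right side becomes $-\lim_{\reg\to0}\reg\,\toy\circ(S\convolution Y) = -\Res\,\toy\circ(S\convolution Y)$. This gives all three expressions for $\toylog$, and membership in $\infchars{H}{\K}$ follows from \eqref{eq:toylog} (or, alternatively, because a residue of a character-composed-with-Dynkin-operator expression is an infinitesimal character, cf.\ the appendix material on $S\convolution Y$).

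\emph{Step 2: the differential equation.} Next I would differentiate the renormalization group equation \eqref{eq:rge}, $\restrict{\phiR}{\scalelog}\convolution\restrict{\phiR}{\scalelog'}=\restrict{\phiR}{\scalelog+\scalelog'}$, with respect to $\scalelog$ at $\scalelog=0$. Since convolution is bilinear and continuous in each polynomial coefficient, the left side yields $(\restrict{\partial_\scalelog}{0}\restrict{\phiR}{\scalelog})\convolution\restrict{\phiR}{\scalelog'} = -\toylog\convolution\restrict{\phiR}{\scalelog'}$, while the right side yields $\partial_{\scalelog'}\restrict{\phiR}{\scalelog'}$; relabelling $\scalelog'\rightsquigarrow\scalelog$ gives $-\partial_\scalelog\phiR = \toylog\convolution\phiR$. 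One could equally obtain this directly from $\phiR=\exp_\convolution(-\x\toylog)$ of \eqref{eq:toyexp} by termwise differentiation of the pointwise-finite series, which may be the cleaner route and avoids even invoking \eqref{eq:rge}.

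\emph{Step 3: solve it.} Finally, writing $\phiR = \exp_\convolution(-\scalelog\toylog)$ is the solution: since $\toylog$ is fixed (independent of $\scalelog$), the series $\exp_\convolution(-\scalelog\toylog)=\sum_n\frac{(-\scalelog)^n}{n!}\toylog^{\convolution n}$ is pointwise finite on the connected graded $H$, differentiates termwise to $-\toylog\convolution\exp_\convolution(-\scalelog\toylog)$, and equals $e$ at $\scalelog=0$; uniqueness of solutions to this linear recursion on each graded piece $H_n$ forces equality with $\phiR$. The trailing $\convolution\restrict{\phiR}{\scalelog=0}$ is harmless because $\restrict{\phiR}{0}=\ev_0\circ\phiR = e$ (the polynomials $\phiR(\ker\counit)\subseteq\scalelog\,\K[\scalelog]$ vanish at $\scalelog=0$), i.e.\ it is the convolution unit. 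I expect the only mild obstacle to be bookkeeping: making sure that differentiating the convolution product with respect to $\scalelog$ is justified (it is, coefficientwise, since each $\phiR(x)$ is a genuine polynomial) and that the grading-degree truncation makes all the infinite sums finite — both already implicit in the machinery recalled in Section~\ref{sec:notation} and in Definition~\ref{def:toylog}.
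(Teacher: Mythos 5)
Your Step 1 is essentially the paper's computation and is fine as far as it goes (note only that the justification of $\toylog\in\infchars{H}{\K}$ should be your parenthetical alternative — the Dynkin operator result, Proposition \ref{satz:S*Y-infchar} — and not Definition \ref{def:toylog}/\eqref{eq:toylog}, since the latter presupposes that $\phiR$ is a bialgebra morphism). The genuine gap is Step 2. The lemma is stated for an \emph{arbitrary} character $\toy\in\chars{H}{\alg}$ on an arbitrary connected graded Hopf algebra $H$, assuming only the finiteness conditions of Proposition \ref{satz:finiteness-algebraic}; it is the ``regularized viewpoint'' re-derivation of the renormalization group, and it is later applied (to define the $\beta$-functional, see \eqref{eq:MS-scale-dependence}) to the inverse minimal-subtraction counterterm character, which is not of the Mellin-transform form. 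The equations you invoke, \eqref{eq:rge} and \eqref{eq:toyexp}, were established only for the particular $\phiR\colon H_R\rightarrow\K[\x]$ built from the cocycle recursion \eqref{eq:phiR-universal} via Corollary \ref{satz:toyphy-hopfmor}; they are not among the lemma's hypotheses, and indeed the point of the lemma is to recover them (cf.\ \eqref{eq:rge-from-exp} immediately afterwards) from the regulator alone. Your ``cleaner route'' — reading off the differential equation from $\phiR=\exp_{\convolution}(-\x\toylog)$ of \eqref{eq:toyexp} — is outright circular, since that exponential formula is precisely the conclusion to be proved; and differentiating \eqref{eq:rge} is not available either, because for a general local $\toy$ the Hopf-morphism property of $\phiR$ is unknown at this stage.

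The fix stays entirely within your Step 1: carry out the same differentiation at arbitrary $\scalelog$ rather than at $\scalelog=0$. From $\partial_{\scalelog}\gradAut_{-\reg\scalelog}=-\reg\,\gradAut_{-\reg\scalelog}\circ Y$ and the identity
$S\convolution(\gradAut_{-\reg\scalelog}\circ Y)=(S\convolution\gradAut_{-\reg\scalelog})\circ Y+(S\convolution Y)\convolution(S\convolution\gradAut_{-\reg\scalelog})$
(which follows from the coderivation property of $Y$ together with $S\circ Y=-S\convolution Y\convolution S$, exactly as in the proof of Proposition \ref{satz:finiteness-algebraic}), one gets, after multiplying by $\reg$ and using that $\toy$ is a character,
$-\partial_{\scalelog}\,\toy\circ(S\convolution\gradAut_{-\reg\scalelog})
=\reg\,\toy\circ(S\convolution\gradAut_{-\reg\scalelog})\circ Y
+\bigl[\reg\,\toy\circ(S\convolution Y)\bigr]\convolution\bigl[\toy\circ(S\convolution\gradAut_{-\reg\scalelog})\bigr]$;
in the limit $\reg\rightarrow 0$ the first term vanishes by the assumed finiteness and the second factorizes to $\toylog\convolution\phiR$, which is the desired differential equation without any appeal to \eqref{eq:rge}. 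Your Step 3 is then fine, except that the initial condition should be obtained in the general setting as $\restrict{\phiR}{\scalelog=0}=\toy\circ(S\convolution\id)=\toy\circ u\circ\counit=e$ (the paper's one-liner), not from the statement $\phiR(\ker\counit)\subseteq\scalelog\,\K[\scalelog]$, which again is a property of the specific kinematic-subtraction Feynman rules rather than a hypothesis of this lemma.
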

\begin{proof}
	Proposition \ref{satz:S*Y-infchar} renders $\toylog\in\infchars{H}{\K}$ immediate.
	We employ the coderivation property $\Delta \circ Y = (\id \tp Y + Y \tp \id) \circ \Delta$ and $S \circ Y = - S \convolution Y \convolution S$ in
	\begin{align*}
		&-\frac{\partial}{\partial \scalelog} \lim_{\reg \rightarrow 0} \toy \circ (S\convolution\gradAut_{-\reg\scalelog})
		= \lim_{\reg \rightarrow 0} \reg \cdot \toy \circ \left( S \convolution [\gradAut_{-\reg\scalelog}\circ Y] \right) \\
		&= \lim_{\reg \rightarrow 0} \Big\{
				\reg \cdot \toy \circ \left( S \convolution \gradAut_{-\reg\scalelog} \right) \circ Y 
				\Big\}
				+\lim_{\reg\rightarrow 0} \Big\{
					\reg \cdot \toy \circ (S \convolution Y) \convolution \toy \circ \left( S \convolution \gradAut_{-\reg\scalelog} \right)
			\Big\}.
	\end{align*}
	The first term vanishes by the existence of $\lim_{\reg \rightarrow 0} \toy \circ (S \convolution \gradAut_{-\reg\scalelog})$, while the second factorizes as desired.
	It remains to observe $\restrict{\phiR}{\scalelog=0} = \toy \circ (S\convolution \id) = \toy \circ e = e$.
\end{proof}
	Clearly we can easily rewrite this in the form of \eqref{eq:rge} since
	\begin{equation}
		\phiR[\scalelog] \convolution \phiR[\scalelog']
		=	\exp_{\convolution}(-\scalelog\toylog) \convolution \exp_{\convolution}(-\scalelog' \toylog)
		= \exp_{\convolution}(-(\scalelog+\scalelog')\toylog)
		= \phiR[\scalelog+\scalelog'].
		\label{eq:rge-from-exp}
	\end{equation}
Note how this reasoning fails if $\phiR$ disrespects the coproduct: Then
\begin{equation}
	\log_{\convolution} \phiR
	=
	\sum_{n\in\N} \frac{\toylog_n}{n!} \x^n
\end{equation}
would contain higher powers in $\x$ and a family $\toylog_n \in H'$ of functionals. As these do not necessarily commute under $\convolution$, also $\ev_{\scalelog} \circ \log_{\convolution} (\phiR)$ and $\ev_{\scalelog'} \circ \log_{\convolution} (\phiR)$ do not commute such that \eqref{eq:rge-from-exp} is not applicable.

While the renormalization group allows us to reduce all computations to the linear terms $\toylog$, in our setup \eqref{eq:regularized} we can give a simple recursion for $\toylog$ itself in term of the Mellin transform coefficients in
\begin{korollar}\label{satz:toylog-mellin-recursion}
	From
	$
		\toylog\circ B_+ 
		\urel{\eqref{eq:phiR-universal}}
		-\partial_0 \circ F(-\partial_{\scalelog}) \circ \phiR 
		\urel{\eqref{eq:toyexp}}
		\ev_0 \circ [\reg F(\reg)]_{-\partial_{\x}} \circ\exp_{\convolution}(-\x\toylog)
	$
	we obtain the inductive formula
	$
		\toylog \circ B_+
		= \sum_{n\in\N_0} \coeff{n-1} \toylog^{\convolution n}
		= \left[ \reg F(\reg) \right]_{\reg \mapsto \toylog}.
	$
\end{korollar}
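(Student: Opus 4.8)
The plan is to take the two equalities already displayed in the statement at face value and unfold the resulting expression into a convolution power series in $\toylog$; beyond \eqref{eq:phiR-universal}, \eqref{eq:toyexp} and \ref{def:toylog} no new input is required, so this is essentially bookkeeping about operators on $\K[\x]$.

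First I would verify $\toylog\circ B_+ = -\partial_0\circ F(-\partial_\x)\circ\phiR$: composing the cocycle recursion \eqref{eq:phiR-universal}, $\phiR\circ B_+ = P\circ F(-\partial_\x)\circ\phiR$, with $-\partial_0$ on the left and recalling $\toylog = -\partial_0\circ\phiR$ from \ref{def:toylog}, this follows once one observes $\partial_0\circ P = \partial_0$ — indeed $P = \id-\ev_0$ and $\partial_0 = \restrict{\partial_\x}{0}$ annihilates the constant polynomial $\ev_0(p) = p(0)$. Next, writing $-\partial_0 = \ev_0\circ(-\partial_\x)$ and using the fundamental theorem $(-\partial_\x)\circ(-\polyint) = \id$, one gets $(-\partial_\x)\circ F(-\partial_\x) = [\reg F(\reg)]_{-\partial_\x}$. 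The key point here is that $\reg F(\reg) = \sum_{n\geq 0}\coeff{n-1}\reg^n$ lies in $\K[[\reg]]$: unlike $F(\reg)$ itself it has no pole at $\reg = 0$, so the substitution $\reg\mapsto -\partial_\x$ yields an honest differential operator on $\K[\x]$ (locally finite, hence well defined) and one never meets the symbol $(-\partial_\x)^{-1}$. Plugging $\phiR = \exp_\convolution(-\x\toylog)$ from \eqref{eq:toyexp} into this then produces exactly the second asserted equality $\toylog\circ B_+ = \ev_0\circ[\reg F(\reg)]_{-\partial_\x}\circ\exp_\convolution(-\x\toylog)$.

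To read off the closed form I would apply the operator termwise to $\exp_\convolution(-\x\toylog) = \sum_{k\geq 0}\tfrac{\toylog^{\convolution k}}{k!}(-\x)^k$ using $(-\partial_\x)^n(-\x)^k = \tfrac{k!}{(k-n)!}(-\x)^{k-n}$; after $\ev_0$ only the diagonal term $k = n$ contributes (terms with $k>n$ vanish at $\x = 0$, those with $k<n$ are already zero), and it contributes the factor $n!$, so everything collapses to $\sum_{n\in\N_0}\coeff{n-1}\toylog^{\convolution n}$, which is by definition $[\reg F(\reg)]_{\reg\mapsto\toylog}$ with the convention $\toylog^{\convolution 0} = e$. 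Finally I would record why this deserves the name \emph{inductive}: since $\toylog\in\infchars{H_R}{\K}$ vanishes on $\1$ and on products and the reduced coproduct preserves the grading, $\toylog^{\convolution n}(\f)$ for a forest $\f$ only uses values of $\toylog$ in degrees $\le\abs{\f} < \abs{B_+(\f)}$ once $n\ge 1$, while $\toylog^{\convolution 0}(\f) = \counit(\f)$ contributes only for $\f = \1$; hence the sum over $n$ terminates on every $\f$ and the identity determines $\toylog$ recursively along the grading of $H_R$, starting from $\toylog(\1) = 0$ and $\toylog(B_+(\1)) = \coeff{-1}$.

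I do not anticipate any real obstacle — the whole computation is elementary once \eqref{eq:phiR-universal} and \eqref{eq:toyexp} are in hand. The only step demanding a little care is the interplay of $P$, $\partial_0$ and the regularity of $\reg F(\reg)$ at the origin, which is what guarantees that the substitutions $\reg\mapsto -\partial_\x$ and $\reg\mapsto\toylog$ remain inside the locally finite differential operators (respectively the convolution algebra) and never force one to interpret $(-\partial_\x)^{-1} = -\polyint$ on something with an ill-defined antiderivative.
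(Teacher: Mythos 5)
Your proposal is correct and follows exactly the route the paper intends: it merely unfolds the two equalities already displayed in the corollary, using \eqref{eq:phiR-universal}, $\partial_0\circ P=\partial_0$, the identity $(-\partial_{\x})\circ F(-\partial_{\x})=[\reg F(\reg)]_{-\partial_{\x}}$ and termwise evaluation of $\exp_{\convolution}(-\x\toylog)$ at $\x=0$. The closing remark on why the formula is genuinely inductive (only lower-degree values of $\toylog$ enter, since $\toylog$ vanishes on $\1$ and products) is a welcome extra detail that the paper only states after the corollary.
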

As $\toylog\in\infchars{H}{\K}$ vanishes on products, evaluating it on trees is all we need such that \ref{satz:toylog-mellin-recursion} is sufficient to determine $\toylog$.
\begin{beispiel}\label{ex:toylog-mellin-recursion}
	Starting with
	$
		\toylog\left(\tree{+-} \right)
		= \coeff{-1} \counit(\1) = \coeff{-1}
	$
	we can recursively calculate 
	\begin{align*}
		\toylog\left( \tree{++--} \right)
		&= \coeff{-1}\counit\left( \tree{+-} \right) + \coeff{0} \toylog\left( \tree{+-} \right)
		= \coeff{-1}\coeff{0}, \\
		\toylog\left( \tree{+++---} \right)
		&= \coeff{-1}\counit\left( \tree{++--} \right) + \coeff{0}\toylog\left( \tree{++--} \right) + \coeff{1}\toylog\convolution\toylog\left( \tree{++--} \right)
		= \coeff{-1}\coeff[2]{0} + \coeff{1} \left[ \toylog\left( \tree{+-} \right) \right]^2
		= \coeff{-1}\coeff[2]{0} + \coeff[2]{-1}\coeff{1}, \\
		\toylog\left( \tree{++-+--} \right)
		&= \coeff{-1}\counit\left( \tree{+-} \tree{+-} \right)
			+\coeff{0}\toylog\left( \tree{+-} \tree{+-} \right)
			+\coeff{1}\toylog\convolution\toylog\left( \tree{+-} \tree{+-} \right)
		= 2\coeff{1} \left[ \toylog\left( \tree{+-} \right) \right]^2
		= 2\coeff[2]{-1}\coeff{1}
		\quad\text{and so on.}
	\end{align*}
\end{beispiel}

\section{Locality, finiteness and minimal subtraction}
\label{sec:MS}

In the presence of a regulator $\reg$, considering a general character $\toy \in \chars{H}{\alg}$ for $\alg = \K[\reg^{-1},\reg]]$ and scale dependence fixed by $\toy[s] = \toy \circ \gradAut_{-\reg\scalelog}$ as in section \ref{sec:regulator} naturally leads to a different idea of renormalization in
\begin{definition}\label{def:MS}
	The \emph{minimal subtraction scheme} $\Rms$ splits $\alg = \alg_- \oplus \alg_+$ by projection on the poles $\alg_- \defas \reg^{-1}\K[\reg^{-1}]$ along the holomorphic $\alg_+ \defas \K[[\reg]]$.

	One easily checks \eqref{eq:Rota-Baxter} and therefore obtains a unique Birkhoff decomposition $\MSR = \MSZ \convolution (\toy \circ \gradAut_{-\reg\scalelog})$, with physical limit $\MSRR \defas \lim_{\reg\rightarrow 0} \MSR$.
\end{definition}
This setup is the starting point in \cite{CK:RH2} and subject to many articles like \cite{Manchon,FardBondiaPatras:LieApproach}.
Note that this scheme does not specify a subtraction point $\rp$, but we included $\rp$ as an arbitrary scale inside $\scalelog = \ln \frac{s}{\rp}$ when we agreed on $\toy[s] = \toy \circ \gradAut_{-\reg\scalelog}$. 
Physically this is necessary to obtain the dimensionless argument $\frac{s}{\rp}$ in the logarithm instead of expressions like $\ln s$ alone, as $s$ is a quantity carrying a unit (typically momentum or energy). In any case, replacing $\ln s$ by $\scalelog$ is nothing but a rescaling of $s$.

This renormalization scheme $\Rms$ and the resulting Birkhoff decomposition differ from the {\momscheme} $\momsch{\rp}$, compare example \ref{ex:renormalized-universal} with
\begin{beispiel}\label{ex:MS-R}
	For $\toy$ arising from \eqref{eq:regularized-general-mellin}, minimal subtraction yields
	\begin{align}
		\MSR \left( \tree{+-} \right)
		&= (\id - \Rms) \toy[s] \left( \tree{+-} \right) 
		= (\id - \Rms) e^{-\reg\scalelog} F(\reg)
		= e^{-\reg\scalelog} F(\reg) - \frac{\coeff{-1}}{\reg}
		\nonumber\\
		\MSRR \left( \tree{+-} \right)
		&= \lim_{\reg \rightarrow 0} \MSR\left( \tree{+-} \right)
		= \coeff{0} - \coeff{-1} \scalelog 
		\label{eq:MS-()-R}\\
		\MSR\left( \tree{++--} \right)
		&= (\id - \Rms) \left[ \toy[s]\left( \tree{++--} \right) + \MSZ\left( \tree{+-} \right) \toy[s]\left( \tree{+-} \right) \right]
		\nonumber\\
		&= (\id - \Rms) \left[ e^{-2\reg\scalelog} F(\reg) F(2\reg) - \frac{\coeff{-1}}{\reg} e^{-\reg\scalelog} F(\reg) \right] 
		\nonumber\\
		&= e^{-2\reg\scalelog} F(\reg) F(2\reg) - \frac{\coeff{-1}}{\reg} e^{-\reg\scalelog} F(\reg) + \frac{\coeff[2]{-1}}{2\reg^2} - \frac{\coeff{-1}\coeff{0}}{2\reg} 
		\nonumber\\
		\MSRR\left( \tree{++--} \right)
		&= \frac{\coeff[2]{-1}}{2} \scalelog^2 - 2\coeff{-1}\coeff{0}\scalelog + \coeff[2]{0} + \frac{3}{2}\coeff{-1}\coeff{1}.
		\label{eq:MS-(())-RR}
	\end{align}
\end{beispiel}
Note that by this choice of $\alg_+ = \K[[\reg]]$, the finiteness of $\MSRR \defas \lim_{\reg \rightarrow 0} \MSR$ is automatic such that we can finitely renormalize any $\toy \in \chars{H}{\alg}$ using $\Rms$.
This seems preferable considering that the {\momscheme} only yields finite results under the conditions of proposition \ref{satz:finiteness-algebraic}. 
However, the physics of local field theory requires \emph{local} counterterms that are constants, independent of the external parameters (in our setup $s$)%
\footnote{%
See also section \ref{sec:extensions}: Counterterms may depend polynomially on parameters, but not logarithmically as might in general happen for $\im \left( \MSZ[s] \right) \subseteq \reg^{-1}\K[\reg^{-1}, \scalelog]$. In practice one actually defines \emph{form-factors} as the coefficients of these polynomials and therefore become indeed completely independent of kinematics.%
}.
\begin{definition}[\cite{CK:RH2}]
	A Feynman rule $\toy \in \chars{H_R}{\alg}$ is called \emph{local} iff the minimal subtraction counterterm $\MSZ[s] = (\toy \circ \theta_{-\reg \scalelog})_-$ is independent of $\scalelog\in\K$.
\end{definition}
	By definition \ref{def:momentum-scheme}, counterterms in the {\momscheme} are $s$-independent a priori. 
	For the minimal subtraction scheme $\Rms$, locality is a true condition and the study and characterization of local Feynman rules in this setting is a main theme of \cite{CK:RH2,Manchon}. 
	It is therefore illuminating to find
\begin{proposition}\label{satz:finiteness=locality}
	Locality of $\toy\in\chars{H_R}{\alg}$ in the $\Rms$ scheme is equivalent to the finiteness conditions of proposition \ref{satz:finiteness-algebraic} in the \momscheme.
\end{proposition}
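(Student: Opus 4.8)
The plan is to route both properties through the single character $\psi_\scalelog \defas \toy \circ \gradAut_{-\reg\scalelog} \in \chars{H_R}{\alg}$, noting $\psi_0 = \toy$ and recalling that $\gradAut_t$ commutes with the grading so that $\psi_\scalelog$ is indeed an $\alg$-valued character. On the minimal subtraction side, its Birkhoff decomposition reads $\psi_\scalelog = \MSZ[\scalelog]^{\convolution -1} \convolution \MSR[\scalelog]$ with $\MSR[\scalelog] = \MSZ[\scalelog]\convolution\psi_\scalelog$ as in \ref{def:MS}, so locality of $\toy$ says precisely that $\MSZ[\scalelog]$ does not depend on $\scalelog$. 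On the kinematic side, the renormalized character is $\toyR[\scalelog] \defas \psi_0^{\convolution -1}\convolution\psi_\scalelog = \toy\circ(S\convolution\gradAut_{-\reg\scalelog})$, and by condition $(2)$ of proposition \ref{satz:finiteness-algebraic} the finiteness conditions there are equivalent to $\toyR[\scalelog]$ mapping into $\alg_+ = \K[[\reg]]$ for every $\scalelog\in\K$. So the target is the equivalence: $\MSZ[\scalelog]$ constant in $\scalelog$ $\iff$ $\toyR[\scalelog]$ maps into $\alg_+$ for all $\scalelog$.

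The structural ingredient I would isolate first is that for the minimal subtraction splitting both $\alg_+ = \K[[\reg]]$ and $\K \oplus \alg_- = \K[\reg^{-1}]$ are unital subalgebras of $\alg = \K[\reg^{-1},\reg]]$ (this is the feature distinguishing $\Rms$ from $\momsch{\rp}$, and exactly why $\Rms$ is Rota--Baxter). Consequently, if a character $\chi\in\chars{H_R}{\alg}$ has $\chi(\ker\counit)\subseteq\alg_+$ (resp. $\chi(\ker\counit)\subseteq\alg_-$), then $\chi^{\convolution -1} = \chi\circ S$ has the same property, since $S$ preserves $\ker\counit$; and a convolution product of two such characters again has it, because for $x\in\ker\counit$ every Sweedler term of $\Delta(x)$ either is $x$ itself or lies in $\ker\counit$, and products of elements of the relevant subalgebra stay there. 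In particular $\MSR[\scalelog]$ maps all of $H_R$ into $\K[[\reg]]$ and $\MSZ[\scalelog]$ maps all of $H_R$ into $\K[\reg^{-1}]$, and likewise for their convolution inverses.

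Now I would combine the two decompositions: from $\psi_0 = \MSZ[0]^{\convolution -1}\convolution\MSR[0]$ and $\psi_\scalelog = \psi_0\convolution\toyR[\scalelog]$ one obtains, by associativity of $\convolution$, the identity $\psi_\scalelog = \MSZ[0]^{\convolution -1}\convolution\bigl(\MSR[0]\convolution\toyR[\scalelog]\bigr)$. For "finiteness $\Rightarrow$ locality": assuming $\toyR[\scalelog]$ maps into $\alg_+$, the factor $\MSR[0]\convolution\toyR[\scalelog]$ maps $\ker\counit$ into $\alg_+$ by the subalgebra remark, while $\MSZ[0]$ maps $\ker\counit$ into $\alg_-$; hence the displayed identity is a decomposition of $\psi_\scalelog$ of the type \eqref{eq:birkhoff} with counterterm $\MSZ[0]$, and uniqueness in theorem \ref{satz:birkhoff} forces $\MSZ[\scalelog] = \MSZ[0]$, so $\toy$ is local. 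For the converse "locality $\Rightarrow$ finiteness": if $\MSZ[\scalelog] = \MSZ[0]$ then $\MSR[\scalelog] = \MSZ[0]\convolution\psi_\scalelog = \MSR[0]\convolution\toyR[\scalelog]$, whence $\toyR[\scalelog] = \MSR[0]^{\convolution -1}\convolution\MSR[\scalelog]$, which maps into $\K[[\reg]] = \alg_+$ again by the subalgebra remark; this is condition $(2)$ of proposition \ref{satz:finiteness-algebraic}.

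I expect the only genuinely delicate point to be the verification that the recombined factorization $\psi_\scalelog = \MSZ[0]^{\convolution -1}\convolution(\MSR[0]\convolution\toyR[\scalelog])$ really is the (unique) Birkhoff decomposition, which rests on $\alg_+$ being a subalgebra so that $\MSR[0]\convolution\toyR[\scalelog]$ still sends $\ker\counit$ into $\alg_+$; granting that together with the uniqueness statement of theorem \ref{satz:birkhoff}, everything else is bookkeeping with convolution identities. I would also be careful with the constant parts, since for minimal subtraction $\alg_+ = \K[[\reg]]$ contains the scalars whereas $\alg_-$ does not --- so the correct statements to keep are that $\MSR[\scalelog]$ maps all of $H_R$ into $\K[[\reg]]$ and $\MSZ[\scalelog]$ maps all of $H_R$ into $\K[\reg^{-1}]$.
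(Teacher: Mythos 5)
Your argument is correct and is essentially the paper's own proof: both directions rest on recombining the Birkhoff factors at $\scalelog=0$ with $\toyR[\scalelog]=\toy^{\convolution-1}\convolution(\toy\circ\gradAut_{-\reg\scalelog})$, invoking uniqueness of the Birkhoff decomposition for finiteness $\Rightarrow$ locality and the identity $\toyR[\scalelog]=\MSR[0]^{\convolution-1}\convolution\MSR[\scalelog]$ for the converse. The only difference is that you spell out explicitly the closure properties (that $\K[[\reg]]$ is a unital subalgebra, stable under convolution and $\circ S$) which the paper leaves implicit.
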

\begin{proof}
	Given condition \ref{satz:finiteness-algebraic} (2), 
	$
				\MSR \convolution
				[ \toy^{\convolution -1} \convolution \left( \toy\circ\gradAut_{-\reg\scalelog} \right) ]
	$
	maps to $\K[[\reg]]$ so
	\begin{equation*}
		\toy \circ \gradAut_{-\reg\scalelog}
		= \MSZ^{\convolution -1} \convolution \big\{
				\MSR \convolution
				[ \toy^{\convolution -1} \convolution \left( \toy\circ\gradAut_{-\reg\scalelog} \right) ]
				\big\}
			\end{equation*}
	is a Birkhoff decomposition and its uniqueness implies the locality $\left( \toy \circ \gradAut_{-\reg\scalelog} \right)_- = \MSZ$. Conversely, for local $\toy$ we have $\MSZ = \left(\toy \circ \gradAut_{-\reg\scalelog} \right)_-$ wherefore
	\begin{equation*}
		\MSR^{\convolution -1} \convolution \left(\toy \circ \gradAut_{-\reg\scalelog}\right)_+
		= \toy^{\convolution -1} \convolution \MSZ^{\convolution -1}
			\convolution
			\left(\toy \circ \gradAut_{-\reg\scalelog} \right)_-
			\convolution
				(\toy \circ \gradAut_{-\reg\scalelog})
		=	\toy^{\convolution - 1} \convolution (\toy \circ \gradAut_{-\reg\scalelog})
	\end{equation*}
	shows \ref{satz:finiteness-algebraic} (2) as the left hand side maps to $\alg_+ = \K[[\reg]]$.
\end{proof}
We have seen how algebraically the problems of finiteness in the {\momscheme} and locality in minimal subtraction coincide.
Both finite renormalization and local counterterms are simultaneously only achieved under the conditions of proposition \ref{satz:finiteness-algebraic}, no matter which of the schemes $\set{\momsch{\rp}, \Rms}$ is chosen.

\subsection{Renormalization group}
We again identify $\MSRR$ with the polynomials in $\K[\x]$ such that $\MSRR[s] = \ev_{\scalelog} \circ \MSRR$, but in contrast to $\phiR$ from the \momscheme, these feature constant terms $\MSRR[\rp] = \restrict{\MSRR}{\x=0}=\ev_0 \circ \MSRR = \counit \circ \MSRR$ as we observed in the examples \ref{ex:MS-R}.
Therefore the renormalization group equation \eqref{eq:toyexp} can not hold, but is instead replaced by
\begin{korollar}
	For local $\toy \in \chars{H}{\alg}$, the \emph{beta functional} \cite{CK:RH2}
	\begin{equation}\label{eq:MS-beta-def}
		\beta
		\defas
			\reg \cdot \MSZ[][\convolution -1] \circ (S\convolution Y)
		= -\Res \circ \MSZ \circ Y
		= -\Res \circ \MSZ \circ (S\convolution Y)
		\in \infchars{H}{\K}
	\end{equation}
	completely dictates the scale dependence of the physical limit $\MSRR$ through
	\begin{equation}\label{eq:MS-scale-dependence}
		\MSRR[s]
		= \lim_{\reg \rightarrow 0} (\toy \circ \gradAut_{-\reg\scalelog})_+
		= \exp_{\convolution}(-\scalelog\beta) \convolution \left( \counit \circ \MSRR \right).
	\end{equation}
\end{korollar}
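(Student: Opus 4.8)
The plan is to show that $\MSRR[s]$, viewed as a function of $\scalelog=\ln\tfrac{s}{\rp}$, solves the linear convolution equation $-\partial_{\scalelog}\MSRR[s]=\beta\convolution\MSRR[s]$ with $\MSRR[s]\big|_{\scalelog=0}=\counit\circ\MSRR$; along the way one checks that the three expressions for $\beta$ in \eqref{eq:MS-beta-def} agree and are $\K$-valued, and then one integrates the equation by comparison with the one-parameter family $\exp_{\convolution}(-\scalelog\beta)$.

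First I would make the $s$-dependence explicit via locality. By hypothesis $\MSZ[s]=(\toy\circ\gradAut_{-\reg\scalelog})_-$ is independent of $\scalelog$; write $\MSZ$ for this common value, so the Birkhoff decomposition reads $\toy\circ\gradAut_{-\reg\scalelog}=\MSZ^{\convolution -1}\convolution\MSR[s]$, equivalently $\MSR[s]=\MSZ\convolution(\toy\circ\gradAut_{-\reg\scalelog})$. From \eqref{eq:grad-aut} one gets $-\partial_{\scalelog}(\toy\circ\gradAut_{-\reg\scalelog})=\reg\,(\toy\circ\gradAut_{-\reg\scalelog})\circ Y$, and since the grading $Y$ is a coderivation one has the Leibniz rule $(\alpha\convolution\gamma)\circ Y=(\alpha\circ Y)\convolution\gamma+\alpha\convolution(\gamma\circ Y)$. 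Differentiating $\MSR[s]$ in $\scalelog$, substituting $\toy\circ\gradAut_{-\reg\scalelog}=\MSZ^{\convolution -1}\convolution\MSR[s]$ and using $\MSZ\convolution\MSZ^{\convolution -1}=e$ yields
\begin{equation*}
	-\partial_{\scalelog}\MSR[s]
	=\bigl[\reg\,\MSZ\convolution(\MSZ^{\convolution -1}\circ Y)\bigr]\convolution\MSR[s]
	+\reg\,(\MSR[s]\circ Y).
\end{equation*}
As $\MSZ^{\convolution -1}$ is an algebra morphism, the character identity $\psi\circ(S\convolution Y)=\psi^{\convolution -1}\convolution(\psi\circ Y)$ with $\psi=\MSZ^{\convolution -1}$ rewrites the bracket as $\reg\,\MSZ^{\convolution -1}\circ(S\convolution Y)$. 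Now $\MSR[s]$ takes values in $\alg_+=\K[[\reg]]$ (it is the $+$-part of a Birkhoff decomposition, cf. \eqref{eq:birkhoff}) and is an invertible character over that ring, so $\reg\,(\MSR[s]\circ Y)$ lies in $\reg\K[[\reg]]$; passing to $\reg\to 0$ — which commutes with $\partial_{\scalelog}$, since each $\MSR[s](x)$ has the form $\sum_m a_m(\reg)e^{-m\reg\scalelog}$ whose $\reg\to 0$ limit is a polynomial in $\scalelog$ of degree $\le\abs{x}$ — gives $-\partial_{\scalelog}\MSRR[s]=\beta\convolution\MSRR[s]$.

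Next I would pin down $\beta$. From the displayed identity $\beta\convolution\MSR[s]=-\partial_{\scalelog}\MSR[s]-\reg(\MSR[s]\circ Y)$ takes values in $\K[[\reg]]$, hence so does $\beta=\beta\convolution\MSR[s]\convolution\MSR[s]^{\convolution -1}$. On the other hand the $\convolution$-inverse of the counterterm $\MSZ$ is again pole-valued on $\ker\counit$ — inductively from $0=\MSZ(x)+\MSZ^{\convolution -1}(x)+\sum\MSZ(x')\MSZ^{\convolution -1}(x'')$, using that $\reg^{-1}\K[\reg^{-1}]$ is closed under products — and $S\convolution Y$ preserves $\ker\counit$, so $\beta=\reg\,\MSZ^{\convolution -1}\circ(S\convolution Y)$ takes values in $\K[\reg^{-1}]$; therefore $\beta$ is $\K$-valued. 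Expanding $\beta=-\reg\,(\MSZ\circ Y)\convolution\MSZ^{\convolution -1}$ (from $S\convolution Y=-(S\circ Y)\convolution\id$) in Sweedler's notation and noting that all mixed terms are products of two pure poles, hence annihilated by $\Res$, identifies the constant $\beta(x)$ with $-\Res\bigl(\MSZ(Yx)\bigr)$, giving the second form $-\Res\circ\MSZ\circ Y$; and $-\Res\circ\MSZ\circ(S\convolution Y)=-\Res\circ\MSZ\circ Y$ again by pole counting. That $\beta\in\infchars{H}{\K}$ follows from Proposition~\ref{satz:S*Y-infchar} applied to the character $\MSZ^{\convolution -1}$ (or directly, since $\MSZ$ is multiplicative and $\Res$ kills products of poles).

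Finally, since $\beta$ commutes with $\exp_{\convolution}(-\scalelog\beta)$, termwise differentiation of the pointwise-finite series gives $-\partial_{\scalelog}\exp_{\convolution}(-\scalelog\beta)=\beta\convolution\exp_{\convolution}(-\scalelog\beta)$, so $\scalelog\mapsto\exp_{\convolution}(-\scalelog\beta)\convolution(\counit\circ\MSRR)$ solves the same equation as $\MSRR[s]$ and agrees with it at $\scalelog=0$, where $\MSRR[s]$ reduces to $\MSRR[\rp]=\ev_0\circ\MSRR=\counit\circ\MSRR$ (as noted just before the corollary). Restricting to each finite-dimensional $H_{R,n}$ turns $-\partial_{\scalelog}\MSRR[s]=\beta\convolution\MSRR[s]$ into a triangular system of linear ordinary differential equations, uniquely solved by induction on the degree once the value at $\scalelog=0$ is prescribed; hence $\MSRR[s]=\exp_{\convolution}(-\scalelog\beta)\convolution(\counit\circ\MSRR)$, which is the assertion. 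The main obstacle is the third step — showing that $\reg\,\MSZ^{\convolution -1}\circ(S\convolution Y)$ is genuinely constant in $\reg$. This is precisely where locality is indispensable: it encodes that the higher-order poles of the counterterm $\MSZ$ are governed by its residue ('t Hooft's pole equations), and without it the family $\MSRR[s]$ need not satisfy an autonomous equation at all.
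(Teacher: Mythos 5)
Your proof is correct, but it reaches \eqref{eq:MS-scale-dependence} by a genuinely different mechanism than the paper. The paper treats the corollary as an application of machinery it has already set up: lemma \ref{satz:msz-inv-local} (locality of $\MSZ[][\convolution-1]$) combined with proposition \ref{satz:finiteness=locality} shows $\im\left(\MSZ[][\convolution-1]\circ(S\convolution Y)\right)\subseteq \reg^{-1}\K[[\reg]]\cap\alg_-=\K\cdot\reg^{-1}$, so $\beta\in\infchars{H}{\K}$ at once; then the {\momscheme} exponential formula \eqref{eq:toyexp} is applied to the local character $\MSZ[][\convolution-1]$ to give $\lim_{\reg\to 0}\MSZ[][\convolution-1]\circ(S\convolution\gradAut_{-\reg\scalelog})=\exp_{\convolution}(-\scalelog\beta)$, and finally the Birkhoff decomposition is rearranged as $\MSR[s]=\MSZ\convolution\left[(\MSZ[][\convolution-1]\convolution\MSR)\circ\gradAut_{-\reg\scalelog}\right]$ and the physical limit of the two factors is taken. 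You instead differentiate the Birkhoff decomposition in $\scalelog$, use that $Y$ is a coderivation and that $\MSZ[][\convolution-1]$ is multiplicative to obtain $-\partial_{\scalelog}\MSR[s]=\beta\convolution\MSR[s]+\reg\,(\MSR[s]\circ Y)$, extract the $\K$-valuedness of $\beta$ by comparing the holomorphic and pole-valued sides of this identity (a 't Hooft-style argument that in effect reproves exactly the special case of \ref{satz:msz-inv-local}/\ref{satz:finiteness=locality} that is needed), and then integrate the resulting first-order equation for $\MSRR[s]$ using uniqueness on the finite-dimensional graded pieces together with the initial value $\counit\circ\MSRR$ at $\scalelog=0$. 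Your route is more self-contained and exhibits the infinitesimal (differential-equation) form of the renormalization group explicitly, at the cost of having to justify the interchange of $\lim_{\reg\to 0}$ with $\partial_{\scalelog}$ and the ODE uniqueness by hand; the paper's route is shorter because it recycles \eqref{eq:toyexp} and the locality lemmas and so gets the exponential without solving any equation. The residue identities in \eqref{eq:MS-beta-def} you prove essentially as the paper does (Sweedler expansion plus the observation that $\Res\circ\MSZ$ annihilates products of poles); your terse ``by pole counting'' for the third equality is precisely the paper's argument that $S\convolution Y$ agrees with $Y$ modulo $(\ker\counit)^2$. Minor imprecisions — the values really lie in $\K[\scalelog][[\reg]]$ rather than $\K[[\reg]]$ — are harmless.
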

\begin{proof}
	As $\MSZ[][\convolution-1]$ is local by lemma \ref{satz:msz-inv-local}, proposition \ref{satz:finiteness-algebraic} allows to invoke \ref{satz:finiteness=locality} which proves the finiteness of \eqref{eq:MS-beta-def}:
	$\im \left(\MSZ[][\convolution-1] \circ (S\convolution Y)\right) \subseteq \reg^{-1} \K[[\reg]] \cap \alg_- = \K \cdot \reg^{-1}$,
	\begin{equation}
		\beta
		= \lim_{\reg \rightarrow 0} \left[ \reg \cdot \MSZ[][\convolution - 1] \circ (S \convolution Y) \right]
		= \reg \cdot \MSZ[][\convolution - 1] \circ (S \convolution Y)
		\in \infchars{H}{\K}
	\end{equation}
	converges and \eqref{eq:toyexp} applies to give $\lim_{\reg\rightarrow 0} \MSZ[][\convolution-1] \circ (S \convolution \gradAut_{-\reg\scalelog}) = \exp_{\convolution} (-\scalelog\beta)$. Inserting this into the Birkhoff decomposition (here we set $\MSR = \MSR[\rp] = \restrict{\MSR}{\scalelog=0}$)
	\begin{align*}
		\MSR[s]
		&= \left( \toy \circ \gradAut_{-\reg\scalelog} \right)_- 
				\convolution 
				\left( \toy \circ \gradAut_{-\reg\scalelog} \right)
		= \MSZ \convolution 
			\left[ (\MSZ[][\convolution-1] \convolution \MSR) \circ \gradAut_{-\reg\scalelog} \right] \\
		&= \left[ \MSR[][\convolution-1] \circ (S \convolution \gradAut_{-\reg\scalelog} )\right] 
				\convolution 
				\left( \MSR \circ \gradAut_{-\reg\scalelog} \right)
	\end{align*}
	and exploiting $\lim_{\reg\rightarrow 0}\left( \MSR \circ \gradAut_{-\reg\scalelog} \right) = \lim_{\reg\rightarrow 0} \MSR = \restrict{\MSRR}{\scalelog=0}$ shows \eqref{eq:MS-scale-dependence}.
	The relations in \eqref{eq:MS-beta-def} follow from $S(\f) = -\f \mod (\ker \counit)^2$, $S \convolution Y (\f) = Y(\f) \mod (\ker\counit)^2$ and the fact that $\Res \circ \MSZ \in \infchars{H}{\K}$ vanishes on products, because for any $\f,\f' \in \ker \counit$
	\begin{equation*}
		\MSZ (\f \cdot \f')
		= \MSZ (\f) \cdot \MSZ (\f')
		\in \alg_-^2
		= \reg^{-2} \K[[\reg]]
	\end{equation*}
	has no pole of first order.
\end{proof}
\begin{lemma}
	Let $\toy\in\chars{H}{\alg}$ be local, then $\MSZ[][\convolution - 1] \in \chars{H}{\alg}$ is local as well.
	\label{satz:msz-inv-local}
\end{lemma}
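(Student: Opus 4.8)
The plan is to pin down the minimal-subtraction counterterm of the scale-dependent character $\MSZ[][\convolution -1]\circ\gradAut_{-\reg\scalelog}$ \emph{explicitly} rather than running the (complicated) Bogoliubov recursion for $\Rms$: I would write down what must be its Birkhoff decomposition and then invoke the uniqueness in theorem \ref{satz:birkhoff}. By locality of $\toy$ the counterterm $\MSZ=(\toy\circ\gradAut_{-\reg\scalelog})_-$ is independent of $\scalelog$, so in particular $\MSZ=(\toy)_-$; writing $\MSR\defas(\toy)_+$ for the Birkhoff decomposition of $\toy$ at $\scalelog=0$ gives $\toy=\MSZ[][\convolution -1]\convolution\MSR$ and hence $\MSZ[][\convolution -1]=\toy\convolution\MSR[][\convolution -1]$. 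Since the antipode $S$ is degree-preserving it commutes with every $\gradAut_t$, so $\MSZ[][\convolution -1]=\MSZ\circ S$ is again $\scalelog$-independent; and because $\MSZ$ maps $\ker\counit$ into the multiplicatively closed $\alg_-=\reg^{-1}\K[\reg^{-1}]$, so does $\MSZ[][\convolution -1]$.

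Next I would use that $\gradAut_{-\reg\scalelog}$ is a Hopf algebra automorphism (hence $\convolution$-compatible), together with the Birkhoff decomposition $\toy\circ\gradAut_{-\reg\scalelog}=\MSZ[][\convolution -1]\convolution\MSR[s]$ where $\MSR[s]\defas(\toy\circ\gradAut_{-\reg\scalelog})_+$, to rewrite
\begin{equation*}
	\MSZ[][\convolution -1]\circ\gradAut_{-\reg\scalelog}
	=(\toy\circ\gradAut_{-\reg\scalelog})\convolution\big(\MSR\circ\gradAut_{-\reg\scalelog}\big)^{\convolution -1}
	=\MSZ[][\convolution -1]\convolution\chi
	\qquad\text{with}\qquad
	\chi\defas\MSR[s]\convolution\big(\MSR\circ\gradAut_{-\reg\scalelog}\big)^{\convolution -1}.
\end{equation*}
Both $\MSR[s]$ and $\MSR\circ\gradAut_{-\reg\scalelog}$ map $\ker\counit$ into the subalgebra $\alg_+=\K[[\reg]]$ — the latter because $\gradAut_{-\reg\scalelog}$ just multiplies a homogeneous $x$ by the holomorphic factor $e^{-\reg\scalelog\abs{x}}$ — and the characters with this property form a group, so $\chi$ lands in $\alg_+$ as well. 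As $\MSZ[][\convolution -1]$ maps $\ker\counit$ into $\alg_-$ (so that its convolution inverse $\MSZ$ plays the role of $\phi_-$ in \eqref{eq:birkhoff}) and $\chi$ into $\alg_+$, the factorization $\MSZ[][\convolution -1]\circ\gradAut_{-\reg\scalelog}=\MSZ[][\convolution -1]\convolution\chi$ fulfils the defining requirements \eqref{eq:birkhoff} of a Birkhoff decomposition, and uniqueness forces $\big(\MSZ[][\convolution -1]\circ\gradAut_{-\reg\scalelog}\big)_-=\MSZ$. Since $\MSZ$ is $\scalelog$-independent, this says precisely that $\MSZ[][\convolution -1]$ is local.

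The step I would handle most carefully is the bookkeeping that $\chi$ stays in the group of characters sending $\ker\counit$ into $\alg_+$: this uses that $\alg_+$ is a subalgebra, that this set of characters is closed under $\convolution$ and convolution inversion, and — the only point with any content — that precomposition with $\gradAut_{-\reg\scalelog}$ preserves it since $e^{-\reg\scalelog n}\in\K[[\reg]]$ has no pole. The companion facts — multiplicative closure of $\alg_-$ (so that $\MSZ[][\convolution -1]$ lies in the analogous group), the identity $\MSZ[][\convolution -1]=\toy\convolution\MSR[][\convolution -1]$, and $S\circ\gradAut_t=\gradAut_t\circ S$ — are all routine; uniqueness of the Birkhoff decomposition then closes the proof with no estimate or limiting argument.
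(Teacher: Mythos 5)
Your proposal is correct and follows essentially the same route as the paper: starting from $\MSZ[][\convolution-1]=\toy\convolution\MSR[][\convolution-1]$, composing with $\gradAut_{-\reg\scalelog}$, inserting the local Birkhoff decomposition $\toy[s]=\MSZ[][\convolution-1]\convolution\MSR[s]$, and reading off the counterterm $\MSZ$ by uniqueness — your positive factor $\MSR[s]\convolution\bigl(\MSR\circ\gradAut_{-\reg\scalelog}\bigr)^{\convolution-1}$ is exactly the paper's $\MSR[s]\convolution\bigl(\MSR[][\convolution-1]\circ\gradAut_{-\reg\scalelog}\bigr)$. You merely spell out the routine closure properties of $\chars{H}{\alg_+}$ that the paper delegates to a citation.
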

\begin{proof}
	As in \cite{Manchon}, from $\MSR, \MSR[s] \in \chars{H}{\alg_+}$ we deduce that
	\begin{equation*}
		\MSZ[][\convolution - 1] \circ \gradAut_{-\reg\scalelog}
		= \left( \toy \convolution \MSR[][\convolution -1] \right) \circ \gradAut_{-\reg\scalelog}
		= \toy[s] \convolution \left(\MSR[][\convolution -1] \circ \gradAut_{-\reg\scalelog} \right)
		= \MSZ[][\convolution - 1] \convolution \MSR[s] \convolution \left( \MSR[][\convolution - 1] \circ \gradAut_{-\reg\scalelog} \right)
	\end{equation*}
	is a Birkhoff decomposition and read off $\left( \MSZ[][\convolution - 1] \circ \gradAut_{-\reg\scalelog} \right)_- = \MSZ$ by uniqueness.
\end{proof}
In the minimal subtraction scheme we can rephrase the renormalization group \eqref{eq:MS-scale-dependence} as expressing all $\reg$-poles of the counterterms $\MSZ: H \rightarrow \K[\reg^{-1}]$ in terms of the first order poles only \cite{CK:RH2}.
	This comes about as $\beta$ captures all information on the character $\MSZ[][\convolution-1] \in \infchars{H}{\alg}$ since $\im (S\convolution Y)$ generates the full Hopf algebra by \ref{satz:S*Y-generates-H}. 
	For clarity we shall demonstrate this in
\begin{beispiel}\label{ex:MS-RGE-poles}
	First we take example \ref{ex:MS-R} to read off the counterterms
	\begin{align*}
		\MSZ[][\convolution - 1] \left( \tree{+-} \right)
		&= \MSZ \left( - \tree{+-} \right)
		= -\Rms \left[ - e^{-\reg\scalelog} F(\reg) \right]
		= \frac{\coeff{-1}}{\reg} \\
		\MSZ[][\convolution - 1] \left( \tree{++--} \right)
		&=\MSZ \left( -\tree{++--} + \tree{+-}\tree{+-} \right)
		= -\frac{\coeff[2]{-1}}{2\reg^2} + \frac{\coeff{-1}\coeff{0}}{2\reg} + \left[ \frac{\coeff{-1}}{\reg} \right]^2
		= \frac{\coeff[2]{-1}}{2\reg^2} + \frac{\coeff{-1}\coeff{0}}{2\reg}
	\end{align*}
	and apply \eqref{eq:MS-beta-def} the get hold of the residues $\beta\left( \tree{+-} \right) \urel{\eqref{eq:MS-beta-def}} \reg \cdot \MSZ[][\convolution-1] \left( \tree{+-} \right) = \coeff{-1}$ as well as
	\begin{equation*}
		\beta\left( \tree{++--} \right) 
		\urel{\eqref{eq:MS-beta-def}}
		\reg \cdot \MSZ[][\convolution -1] \left( 2\tree{++--} - \tree{+-}\tree{+-} \right)
		= \coeff{-1}\coeff{0}.
	\end{equation*}
	Observe how for $S\convolution Y \left( \tree{++--} \right) = 2\tree{++--} - \tree{+-}\tree{+-}$ we indeed obtained only a first order pole in $\MSZ[][\convolution-1]$, contrary to $\tree{++--}$ itself entailing a second order pole as well.
	This one we can now predict using \eqref{eq:Rtilde-inverse} applied to $\MSZ[][\convolution-1] \circ (S\convolution Y) = \frac{\beta}{\reg}$:
	\begin{equation*}
		\MSZ[][\convolution-1] \left( \tree{++--} \right)
		= \left\{ e + \frac{\beta \circ Y^{-1}}{\reg} + \frac{\left[ (\beta \circ Y^{-1}) \convolution \beta \right] \circ Y^{-1}}{\reg^2} \right\} \left( \tree{++--} \right)
		= \frac{\beta\left( \frac{1}{2} \tree{++--} \right)}{\reg} + \frac{\left[\beta\left( \tree{+-} \right)\right]^2}{2\reg^2}
		= \frac{\coeff{-1}\coeff{0}}{2\reg} + \frac{\coeff[2]{-1}}{2\reg^2}.
	\end{equation*}
\end{beispiel}

\subsection{Relating \texorpdfstring{$\momsch{\rp}$}{\momscheme} with \texorpdfstring{$\Rms$}{minimal subtraction}}
	Though both schemes seem so different, already \cite{BroadhurstKreimer:Auto} exploited their relationship given in
\begin{lemma}\label{satz:phiR-MSR}
	For local $\toy \in \chars{H}{\alg}$, the scale dependence of $\MSR[s]$ (in the $\Rms$ scheme) is dictated by $\toyR[s]$ (\momscheme) through
	\begin{equation}
		\MSR[s]
		= \left( \momsch{\rp} \circ \MSR[s] \right) \convolution \toyR[s].
		\label{eq:phiR-MSR}
	\end{equation}
\end{lemma}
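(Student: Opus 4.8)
The plan is to reduce \eqref{eq:phiR-MSR} to a one-line telescoping of convolution products. The two ingredients I would use are the minimal subtraction Birkhoff decomposition of $\toy[s]$, rearranged as $\MSR[s] = \MSZ[s] \convolution \toy[s]$ (this is just \eqref{eq:birkhoff}, whose existence and uniqueness for $\Rms$ is guaranteed by \ref{def:MS}), and the explicit form $\toyR[s] = \toy[\rp]^{\convolution -1} \convolution \toy[s]$ of the kinematic renormalization recorded in \eqref{eq:regularized-renormalized}. The only structural facts needed are that $\momsch{\rp} = \ev_{\rp}$ is a character of the target algebra, hence distributes over $\convolution$, and that $\toy$ is local, which \emph{by definition} says $\MSZ[s]$ does not depend on $\scalelog = \ln\frac{s}{\rp}$, so $\momsch{\rp}$ leaves it fixed. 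Granting these, the right-hand side of \eqref{eq:phiR-MSR} collapses onto $\MSR[s]$.

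Concretely, I would first apply $\momsch{\rp}$ to $\MSR[s] = \MSZ[s]\convolution\toy[s]$: distributivity over $\convolution$ together with $\momsch{\rp}\circ\MSZ[s] = \MSZ[\rp] = \MSZ[s]$ (locality) and $\momsch{\rp}\circ\toy[s] = \toy[\rp]$ give $\momsch{\rp}\circ\MSR[s] = \MSZ[s]\convolution\toy[\rp]$. Then I convolve on the right with $\toyR[s]$ and use associativity of $\convolution$ in the group $\chars{H}{\alg}$:
\[
	\left( \momsch{\rp}\circ\MSR[s] \right) \convolution \toyR[s]
	= \MSZ[s]\convolution\toy[\rp]\convolution\left( \toy[\rp]^{\convolution -1}\convolution\toy[s] \right)
	= \MSZ[s]\convolution\toy[s]
	= \MSR[s],
\]
which is \eqref{eq:phiR-MSR}. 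The inner cancellation $\toy[\rp]\convolution\toy[\rp]^{\convolution -1} = e$ is essentially the whole computation.

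I do not expect a genuine obstacle here; the care lies entirely in the bookkeeping. One must keep the section \ref{sec:MS} convention straight, namely that the character ``$\toy$'' in $\toy[s] = \toy\circ\gradAut_{-\reg\scalelog}$ is the value at the reference scale $\scalelog = 0$, so that $\momsch{\rp}\circ\toy[s]$ really returns $\toy[\rp] = \toy\circ\gradAut_0 = \toy$ and not something else. And one must genuinely invoke locality: without it $\momsch{\rp}$ would replace $\MSZ[s]$ by $\MSZ[\rp]$, which need not equal $\MSZ[s]$, and the telescoping would fail — so \eqref{eq:phiR-MSR} is indeed a statement about local Feynman rules, consistent with the hypothesis of the lemma.
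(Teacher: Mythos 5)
Your argument is correct and is essentially the paper's own proof: both rest on the $\Rms$ Birkhoff decomposition $\MSR[s]=\MSZ\convolution\toy[s]$, the identity $\toyR[s]=\toy[\rp]^{\convolution-1}\convolution\toy[s]$, the fact that $\momsch{\rp}$ is a character of $\alg$ (so it distributes over $\convolution$), and locality in the form $\momsch{\rp}\circ\MSZ=\MSZ$, after which the product telescopes. The only cosmetic difference is the order of bookkeeping — the paper pulls $\momsch{\rp}$ through the whole product $\MSZ\convolution\toy[s]\convolution\toy[s][\convolution-1]$ at once, while you evaluate $\momsch{\rp}\circ\MSR[s]$ first and then cancel $\toy[\rp]\convolution\toy[\rp]^{\convolution-1}=e$ — which is the same computation.
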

\begin{proof}
	Locality of the counterterms $\MSZ$ implies $\momsch{\rp} \circ \MSZ = \MSZ$, hence
	\begin{align*}
		&\left( \momsch{\rp} \circ \MSR[s] \right) \convolution \toyR[s]
		= \left[ \momsch{\rp} \circ \left( \MSZ \convolution \toy[s] \right) \right]
			\convolution \left( \momsch{\rp} \circ \toy[s] \right)^{\convolution -1} \convolution \toy[s] \\
		&= \left[ \momsch{\rp} \circ \left( \MSZ \convolution \toy[s] \convolution \toy[s][\convolution-1] \right) \right] \convolution \toy[s]
		= \left( \momsch{\rp} \circ \MSZ \right) \convolution \toy[s]
		= \MSR[s]. \qedhere
	\end{align*}
\end{proof}
Note how $\momsch{\rp} \circ \MSR[s] = \restrict{\MSR}{\scalelog=0}$ reduces to the constants $\counit \circ \MSRR \in \chars{H}{\K}$ in the physical limit.
There \eqref{eq:phiR-MSR} takes the form of
\begin{korollar}
	The characters $\phiR, \MSRR:\ H_R\rightarrow \K[\x]$ fulfill the relations
	\begin{equation}\label{eq:MSR-coprod}
		\MSRR
		= \left( \counit \circ \MSRR \right) \convolution \phiR
		,\quad\text{equivalently}\quad
		\Delta \circ \MSRR 
		= \left( \MSRR \tp \phiR \right) \circ \Delta.
	\end{equation}
\end{korollar}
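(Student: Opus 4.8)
The plan is to obtain \eqref{eq:MSR-coprod} simply as the physical limit $\reg\to 0$ of Lemma~\ref{satz:phiR-MSR}. First I would make explicit that the statement presupposes $\toy$ to be local: by Proposition~\ref{satz:finiteness=locality} locality is equivalent to the finiteness conditions of Proposition~\ref{satz:finiteness-algebraic}, which by \eqref{eq:physical-limit} is exactly what makes $\phiR=\lim_{\reg\to 0}\toyR$ exist and land in $\K[\x]$, while locality of the counterterm $\MSZ$ is what makes the physical limit $\MSRR$ polynomial-valued rather than a series in which $\scalelog$ enters logarithmically. For such local $\toy$ the three characters $\MSR[s]$, $\momsch{\rp}\circ\MSR[s]$ and $\toyR[s]$ appearing in \eqref{eq:phiR-MSR} all take values in $\alg_+=\K[[\reg]]$.

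Next I would apply the evaluation $\reg\mapsto 0$, i.e.\ $\K[[\reg]]\to\K$, to \eqref{eq:phiR-MSR}. As this is a morphism of algebras it commutes with the convolution product and with the (pointwise finite) Sweedler sums, so that the limit of a convolution is the convolution of the limits. Using $\lim_{\reg\to 0}\MSR[s]=\MSRR[s]$, $\lim_{\reg\to 0}\toyR[s]=\phiR[s]$ and the already noted fact that $\momsch{\rp}\circ\MSR[s]=\restrict{\MSR}{\scalelog=0}$ reduces to $\counit\circ\MSRR$ in the limit, this gives $\MSRR[s]=(\counit\circ\MSRR)\convolution\phiR[s]$ as functions of $s$. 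Evaluating at $\x\mapsto\scalelog$ for every $\scalelog\in\K$ and using that a polynomial over the infinite field $\K$ is determined by its values, this lifts to the asserted identity $\MSRR=(\counit\circ\MSRR)\convolution\phiR$ of maps $H_R\to\K[\x]$.

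For the ``equivalently'' clause I would invoke Corollary~\ref{satz:toyphy-hopfmor}, which asserts that $\phiR$ is a morphism of Hopf algebras, in particular $\Delta\circ\phiR=(\phiR\tp\phiR)\circ\Delta$. Writing $\nu\defas\counit\circ\MSRR$ for the associated $\K$-valued (hence scalar) character, I would apply $\Delta$ to $\MSRR=\nu\convolution\phiR$, pull the scalars $\nu(x_1)$ out of the tensor factor, substitute $\Delta\circ\phiR$ and collapse the iterated coproducts by coassociativity of $\Delta$ on $H_R$, arriving at $\sum\MSRR(x_1)\tp\phiR(x_2)=(\MSRR\tp\phiR)\circ\Delta$. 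The converse is immediate on applying $\counit\tp\id$ and the counit axiom.

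I do not anticipate a genuine difficulty here; the argument is essentially ``pass to the limit in Lemma~\ref{satz:phiR-MSR}.'' The one point deserving care — and which I would state explicitly — is the well-definedness above, namely that one must restrict to local $\toy$ so that $\phiR$ and $\MSRR$ are genuinely polynomial-valued; the interchange of limit and convolution and the equivalence of the two displayed identities are then routine.
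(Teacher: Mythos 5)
Your proposal is correct and follows essentially the same route as the paper: the identity \eqref{eq:MSR-coprod} is obtained precisely as the physical limit $\reg\to 0$ of \eqref{eq:phiR-MSR}, with $\momsch{\rp}\circ\MSR[s]=\restrict{\MSR}{\scalelog=0}$ reducing to $\counit\circ\MSRR$, and the ``equivalently'' clause resting on $\phiR$ being a Hopf algebra morphism (corollary \ref{satz:toyphy-hopfmor}). Your explicit attention to locality and to passing the limit through the (pointwise finite) convolution just spells out details the paper leaves implicit.
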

\begin{beispiel}
	After reading off the constants $\counit \circ \MSRR \left( \tree{+-} \right) = \coeff{0}$ and $\counit \circ \MSRR \left( \tree{++--} \right) = \coeff[2]{0} + \frac{3}{2}\coeff{-1}\coeff{1}$, we can verify \eqref{eq:MSR-coprod} against the examples \ref{ex:MS-R} making use of \ref{ex:renormalized-universal}:
	\begin{align*}
		\MSRR \left( \tree{+-} \right)
		&= \counit \circ \MSRR \left( \tree{+-} \right) + \phiR\left( \tree{+-} \right)
		= \coeff{0} - \coeff{-1}\scalelog \\
		\MSRR \left( \tree{++--} \right)
		&= \counit \circ \MSRR \left( \tree{++--} \right) + \left[ \counit \circ \MSRR\left( \tree{+-} \right) \right] \cdot \phiR\left( \tree{+-} \right) + \phiR\left( \tree{++--} \right) \\
		&= \coeff[2]{0} + \frac{3}{2}\coeff{-1}\coeff{1} - \coeff{0}\coeff{-1}\scalelog + \frac{\coeff[2]{-1}}{2} \scalelog^2 - \coeff{0}\coeff{-1}\scalelog
		=  \frac{\coeff[2]{-1}}{2} \scalelog^2 - 2\coeff{0}\coeff{-1}\scalelog + \coeff[2]{0} + \frac{3}{2}\coeff{-1}\coeff{1}.
	\end{align*}
\end{beispiel}
\begin{korollar}\label{satz:beta-toylog-conjugation}
	Inserting both \eqref{eq:MS-scale-dependence} and \eqref{eq:toyexp} into \eqref{eq:MSR-coprod} reveals
	\begin{equation}\label{eq:beta-toylog-conjugation}
		\beta \convolution \left( \counit \circ \MSRR \right)
		= \left( \counit \circ \MSRR \right) \convolution \toylog.
	\end{equation}
\end{korollar}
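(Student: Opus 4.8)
The plan is to carry out exactly the substitution that the statement advertises and then compare the coefficients of $\x$ on the two sides. Throughout I use the identification of $\MSRR$ and $\phiR$ with $\K[\x]$-valued maps fixed at the start of this subsection, and abbreviate $\epsilon \defas \counit \circ \MSRR \in \chars{H}{\K}$ for the ($\x$-independent) character of constant terms. Plugging \eqref{eq:MS-scale-dependence} into the left-hand side of \eqref{eq:MSR-coprod} and \eqref{eq:toyexp} into its right-hand side turns \eqref{eq:MSR-coprod} into the single identity of characters $H \rightarrow \K[\x]$
\begin{equation*}
	\exp_{\convolution}(-\x\beta) \convolution \epsilon
	= \epsilon \convolution \exp_{\convolution}(-\x\toylog),
\end{equation*}
both sides being well defined since $\convolution$ is associative.

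Next I would expand both sides as power series in $\x$. Because $\epsilon$ is homogeneous of degree $0$ in $\x$, convolving a $\K[\x]$-valued map with $\epsilon$ from either side does not mix $\x$-degrees, so the coefficient of $\x^k$ on the left is $\tfrac{(-1)^k}{k!}\,\beta^{\convolution k}\convolution\epsilon$ and the one on the right is $\tfrac{(-1)^k}{k!}\,\epsilon\convolution\toylog^{\convolution k}$. For $k=0$ this reads $e\convolution\epsilon = \epsilon = \epsilon\convolution e$, which holds trivially; comparing the coefficients of $\x^1$ yields $-\beta\convolution\epsilon = -\epsilon\convolution\toylog$, that is, precisely \eqref{eq:beta-toylog-conjugation}. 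Equivalently one may apply the derivation $\partial_\x$ — which satisfies $\partial_\x(f\convolution g)=(\partial_\x f)\convolution g+f\convolution(\partial_\x g)$ because $\convolution$ is built from the polynomial product on which $\partial_\x$ acts as a derivation — then use $\partial_\x\exp_{\convolution}(-\x\alpha)=-\alpha\convolution\exp_{\convolution}(-\x\alpha)$ and evaluate at $\x=0$, where $\exp_{\convolution}(0)=e$.

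There is no real obstacle here; the only points requiring a little care are the bookkeeping that the entire $\x$-dependence of the left side sits in $\exp_{\convolution}(-\x\beta)$ and that of the right side in $\exp_{\convolution}(-\x\toylog)$ — so that the $\x^1$-coefficients are as claimed — together with the associativity of $\convolution$ making the threefold products unambiguous. Note that the facts $\beta,\toylog\in\infchars{H}{\K}$, already established in \eqref{eq:MS-beta-def} and \eqref{eq:toylog}, are not needed to derive the identity; they merely make \eqref{eq:beta-toylog-conjugation} a statement relating two infinitesimal characters.
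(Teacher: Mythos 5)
Your proposal is correct and follows exactly the route the paper intends: substitute \eqref{eq:MS-scale-dependence} and \eqref{eq:toyexp} into \eqref{eq:MSR-coprod} to get $\exp_{\convolution}(-\x\beta)\convolution(\counit\circ\MSRR)=(\counit\circ\MSRR)\convolution\exp_{\convolution}(-\x\toylog)$, and extract the linear term in $\x$ (the paper leaves this coefficient comparison implicit, you spell it out, which is fine since $\counit\circ\MSRR$ is $\x$-independent and so preserves $\x$-degrees under $\convolution$).
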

	Hence $\beta$ and $\toylog$ differ only by conjugation with the character $\counit\circ\MSRR \in \chars{H}{\K}$ and therefore in particular agree on any cocommutative elements of $H$.
\begin{beispiel}
	For the rooted trees $H_R$, the cocommutative elements include the \emph{ladders} $B_+^n (\1)$. In the examples \ref{ex:MS-RGE-poles} and \ref{ex:toylog-mellin-recursion} we explicitly checked the first two cases of $n \in \set{1,2}$: $\beta\left( \tree{+-} \right) = \coeff{-1} = \toylog\left( \tree{+-} \right)$ as well as $\beta\left( \tree{++--} \right) = \coeff{-1}\coeff{0} = \toylog\left( \tree{++--} \right)$.
\end{beispiel}

\hide{
	\begin{lemma}\label{satz:lokal-convolution}
	Let $\toy \in \chars{H_R}{\alg}$ be finite and $\psi: H_R \rightarrow \K[[\reg]]$ holomorphic and linear. Then $\toy \convolution \psi$ has finite physical limit (setting $\psi_0 \defas \restrict{\psi}{\reg=0}$)
	\begin{equation}
		(\phi\convolution \psi)_{R}
		= \psi_0^{\convolution -1} \convolution \phiR \convolution \psi_0.
		\label{eq:lokal-convolution}
	\end{equation}
\end{lemma}
\begin{proof}
	Take the limit $\reg \rightarrow 0$ of
	\begin{align*}
		(\toy \convolution \psi) \circ (S\convolution \gradAut_{-\reg\scalelog})
		&= m^3 \circ (\toy \tp \psi \tp \toy \tp \psi) \circ (\Delta S \tp \Delta \gradAut_{-\reg\scalelog}) \circ \Delta \\
		&= m^3 \circ (\psi \tp \toy \tp \toy \tp \psi) \circ (S\tp S \tp \gradAut_{-\reg\scalelog} \tp \gradAut_{-\reg\scalelog}) \circ \Delta^3 \\
		&= \psi^{\convolution -1} \convolution \left[ \toy \circ (S\convolution\gradAut_{-\reg\scalelog}) \right] \convolution \left( \psi \circ \gradAut_{-\reg\scalelog} \right). \qedhere
	\end{align*}
\end{proof}
\begin{lemma}
	Let $\toy\in\chars{H_R}{\alg}$ be local and $\psi: H_R \rightarrow \reg\K[[\reg]]$ linear, then $\toy \circ \gradAut_{\psi(\reg)}$ is local with physical limit $\left( \toy \circ \gradAut_{\psi(\reg)} \right)_R = \toy_R$.
\end{lemma}
\begin{proof}
	Immediate from 
	$
		\toy \circ \gradAut_{\psi(\reg)} \circ (S\convolution\gradAut_{-\reg\scalelog})
		= \toy \circ (S \convolution \gradAut_{-\reg\scalelog}) \circ \gradAut_{\psi(\reg)}
	$.
\end{proof}
}

\section{Dyson-Schwinger equations and correlation functions}
\label{sec:DSE}

Until now we considered the renormalized Feynman rules $\phiR$ on their own, but these form only one ingredient to {\qft}. The counterpart is the \emph{perturbation series} $X(\coupling)$ they are being applied to.
\begin{beispiel}[\cite{Panzer:Master}]\label{ex:dse-yukawa}
	In \emph{Yukawa theory}, the propagation of a fermion is the superposition of infinitely many possible interactions with a scalar field each of which is represented by a Feynman diagram. Among these are contributions like
\begin{multline*}
	 \1
		-\Graph{+-} \coupling
		-\Graph{++--} \coupling^2
		-\left(
				\Graph{+++---} + \Graph{++-+--}
			\right) \coupling^3 \\
		-\left(
				\Graph[0.7]{++++----} + \Graph[0.7]{+++-+---}
				+\Graph[0.7]{++-++---} + \Graph[0.7]{+++--+--} + \Graph[0.7]{++-+-+--}
			\right) \coupling^4
		+\bigo{\coupling^5}.
\end{multline*}
The Feynman rules map each graph to an individual amplitude, but physically these are not distinguishable and need to be all summed up. Further, a \emph{coupling constant} $g$ takes the strength of an interaction into account.
\end{beispiel}
\begin{definition}\label{def:correlation}
	A \emph{perturbation series} is a formal power series
	\begin{equation}\label{eq:perturbation-series}
		X(\coupling) 
		= \sum_{n\in\N_0} 
				x_n
				\coupling^n
		\in H_R[[\coupling]]
		\quad\text{with}\quad
		x_0 = \1
	\end{equation}
	taking values in the Hopf algebra $H_R$ of rooted trees and indexed by the \emph{coupling constant} $\coupling$.
	Evaluation of the renormalized Feynman rules $\phiR \in \chars{H_R}{\K[\scalelog]}$ on $X(\coupling)$ delivers the \emph{correlation function}
	\begin{equation}\label{eq:correlation}
		G (\coupling)
		\defas \phiR \circ X(\coupling)
		= \sum_{n\in\N_0} \phiR (x_n) \coupling^n
		\in \K[\scalelog] [[\coupling]],
	\end{equation}
	while the \emph{physical anomalous dimension} is
	$	\widetilde{\toylog}(\coupling)
		\defas \toylog \circ X(\coupling)
		\urel{\eqref{eq:toylog}} \restrict{-\partial_{\scalelog}}{0} G(\coupling)
		\in {\K}[[\coupling]]
	$.
\end{definition}
The crucial property of perturbation series is the possibility of insertions: In the above example, we started with the \emph{primitive}\footnote{That means it is free of subdivergences.} graph $\Graph{+-}$ and iteratively inserted it as a subdivergence into itself.
Though the full perturbation series contains many more graphs, this illustrates how $X(\coupling)$ may efficiently be described by means of recursive insertions. Those are represented by Hochschild-1-cocycles motivating
\begin{definition}\label{def:dse}
	To a parameter $\powdep\in\K$ and a family of cocycles $B_{\cdot}\!: \N\rightarrow \HZ[1](H_R)$ we associate the \emph{combinatorial Dyson-Schwinger equation}\footnote{
	As $x_0=\1$, for arbitrary $p$ we define $\left[ X(\coupling) \right]^{p} \defas \sum_{n\in\N_0} \binom{p}{n} \left[ X(\coupling)-\1 \right]^n \in H_R[[\coupling]]$.}
	\begin{equation}
		X(\coupling)
		= \1 + \sum_{n\in\N} \coupling^n B_n \left( X^{1+n\powdep}(\coupling) \right).
		\label{eq:dse}
	\end{equation}
\end{definition}
This type of equations is folklore in physics, but had not been cast into its pure algebraic form before \cite{BergbauerKreimer}. Referring to \cite{Foissy:DSE} we recall the main results in
\begin{lemma}\label{satz:dse-coprod}
	As \emph{perturbation series} \eqref{eq:perturbation-series}, the equation \eqref{eq:dse} allows a unique solution which is determined recursively by
	\begin{equation}
		x_{k} 
		= \sum_{0 \leq m+n\leq k}
				\binom{1+\powdep n}{m}
				B_n \left( 
					\sum_{\substack{
						i_1+\ldots+i_m+n = k \\
						i_1,\ldots,i_m \geq 1
					}}
				x_{i_1} \cdots x_{i_m}.
			\right)
	\end{equation}
	Most importantly, these coefficients generate a Hopf subalgebra $H_X \defas \langle\setexp{x_n}{n\in\N}\rangle$ (isomorphic to the F\`{a}a di Bruno Hopf algebra when $\powdep \neq 0$). Explicitly we find
	\begin{equation}\label{eq:dse-coprod}
		\Delta X(\coupling)
		= \sum_{n\in\N_0} \left[ X(\coupling) \right]^{1+n\powdep} \tp \coupling^n x_n
		\in (H_R\tp H_R) [[\coupling]].
	\end{equation}
\end{lemma}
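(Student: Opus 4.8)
The plan is to establish the three assertions of the lemma separately, each by unfolding the combinatorial Dyson-Schwinger equation \eqref{eq:dse} one layer at a time. For \emph{existence, uniqueness and the recursion} I would first expand the binomial powers in \eqref{eq:dse}, $X^{1+n\powdep}(\coupling) = \sum_{m\geq 0}\binom{1+n\powdep}{m}\left( X(\coupling)-\1 \right)^m$ and $\left( X(\coupling)-\1 \right)^m = \sum_{i_1,\dots,i_m\geq 1} x_{i_1}\cdots x_{i_m}\,\coupling^{i_1+\dots+i_m}$, and then read off the coefficient of $\coupling^k$ on both sides. Since every summand on the right carries a factor $\coupling^n$ with $n\geq 1$ and each $x_{i_\ell}$ has index $i_\ell\geq 1$, that coefficient involves only $x_1,\dots,x_{k-1}$ and the cocycles $B_1,\dots,B_k$; hence $x_k$ is forced, a straightforward induction on $k$ produces the unique solution, and collecting the coefficients yields the displayed formula (the constraint $i_1+\dots+i_m+n=k$ with $m+n\leq k$ being automatic).

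For \emph{the coproduct} the idea is to apply $\Delta$ to \eqref{eq:dse}, use that $\Delta$ is an algebra morphism to commute it past the binomial powers, $\Delta(X^{1+n\powdep}) = (\Delta X)^{1+n\powdep}$, and invoke the Hochschild cocycle property $\Delta\circ B_n = (\id\tp B_n)\circ\Delta + B_n\tp\1$. The resulting $B_n\tp\1$ terms reassemble, via \eqref{eq:dse} once more, into $X(\coupling)\tp\1$, so that
\begin{equation*}
	\Delta X(\coupling) = X(\coupling)\tp\1 + \sum_{n\geq 1}\coupling^n\,(\id\tp B_n)\left( \left( \Delta X(\coupling) \right)^{1+n\powdep} \right).
\end{equation*}
At order $\coupling^k$ the right-hand side involves $\Delta X(\coupling)$ only at strictly lower orders, so this is a fixed-point equation with a unique solution in $(H_R\tp H_R)[[\coupling]]$, and it suffices to check that the candidate $\mathcal{R}(\coupling)\defas\sum_{n\geq 0}\left[ X(\coupling) \right]^{1+n\powdep}\tp\coupling^n x_n$ solves the same equation. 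I would verify this either by substituting $\mathcal{R}$, expanding $\mathcal{R}^{1+n\powdep}$ by the binomial series and applying $\id\tp B_n$ --- after rearranging the nested sums, the second tensor legs are exactly the right-hand side of the recursion from the first part, hence collapse to the $x_n$, while the first legs collect into the coefficients of $\left[ X(\coupling) \right]^{1+n\powdep}$ --- or, equivalently, by induction on $k$, using the inductive hypothesis to replace $(\Delta X)^{1+n\powdep}$ by $\mathcal{R}^{1+n\powdep}$ inside the $\coupling^k$-coefficient (legitimate since there $n\geq 1$).

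For \emph{the Hopf subalgebra}, extracting the coefficient of $\coupling^k$ from \eqref{eq:dse-coprod} gives $\Delta x_k = \sum_{n=0}^{k} Q_{k,n}\tp x_n$, where $Q_{k,n}$ is the $\coupling^{k-n}$-coefficient of $\left[ X(\coupling) \right]^{1+n\powdep}$, a polynomial in $x_1,\dots,x_{k-n}$; since moreover $\counit(x_k)=\delta_{k,0}$ (each cocycle maps into $\ker\counit$ by Lemma \ref{satz:cocycle-props}), the span $H_X=\langle\setexp{x_n}{n\in\N}\rangle$ is a subbialgebra of $H_R$, hence a Hopf subalgebra by connectedness. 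For $\powdep\neq 0$ I would then rescale the generators to identify \eqref{eq:dse-coprod} with the Fa\`{a} di Bruno coproduct; for this isomorphism and the grading compatibility I would simply cite \cite{Foissy:DSE,BergbauerKreimer}. The step I expect to be the actual work is the sum rearrangement in the previous paragraph, i.e.\ matching the combinatorics of the powers $\left[ X(\coupling) \right]^{1+n\powdep}$ against the recursion for the $x_k$; everything else is formal manipulation of power series, and the Fa\`{a} di Bruno identification is a known structural fact rather than something to reprove here.
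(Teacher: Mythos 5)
Your overall strategy is sound, and in fact there is nothing in the paper to diverge from: the paper does not prove this lemma at all, it quotes it from \cite{Foissy:DSE} (the algebraic formulation going back to \cite{BergbauerKreimer}), and your argument is essentially the standard proof from those references. The recursion/uniqueness part is correct (each summand of \eqref{eq:dse} carries $\coupling^n$ with $n\geq 1$, so order $\coupling^k$ only involves $x_1,\dots,x_{k-1}$; note the sum in the displayed recursion effectively starts at $n=1$, there being no $B_0$ in the family), and so is the derivation of the fixed-point equation for $\Delta X$ from the cocycle property, together with the reduction, by induction on the order in $\coupling$, to checking that the candidate $\mathcal{R}(\coupling)=\sum_{n\geq0}[X(\coupling)]^{1+n\powdep}\tp\coupling^n x_n$ solves it. The one step you only assert --- the ``sum rearrangement'' --- is indeed the entire content of the lemma, so it should be written out; it does work, as follows. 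Since $\mathcal{R}-\1\tp\1=(X-\1)\tp\1+\sum_{j\geq1}X^{1+j\powdep}\tp\coupling^j x_j$, expand $(\mathcal{R}-\1\tp\1)^m$ binomially in these two summands and resum over $m$ using $\binom{1+n\powdep}{m}\binom{m}{p}=\binom{1+n\powdep}{p}\binom{1+n\powdep-p}{m-p}$, so that $\sum_{m\geq p}\binom{1+n\powdep}{m}\binom{m}{p}(X-\1)^{m-p}=\binom{1+n\powdep}{p}X^{1+n\powdep-p}$; this gives
\begin{equation*}
	\mathcal{R}^{1+n\powdep}
	= \sum_{p\geq0}\binom{1+n\powdep}{p}
		\sum_{i_1,\ldots,i_p\geq1}
		X^{1+(n+i_1+\cdots+i_p)\powdep}
		\tp \coupling^{i_1+\cdots+i_p}\, x_{i_1}\cdots x_{i_p}.
\end{equation*}
Applying $\id\tp B_n$, multiplying by $\coupling^n$, summing over $n\geq1$ and collecting the coefficient of $\coupling^N$ then produces $X^{1+N\powdep}$ in the first slot and exactly the right-hand side of your recursion for $x_N$ in the second slot, so $\mathcal{R}$ is a fixed point and the uniqueness argument (both sides being $\1\tp\1$ at order $\coupling^0$) finishes the coproduct formula. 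With that identity supplied your proof is complete; the remaining points --- $\counit(x_k)=\delta_{k,0}$ from Lemma \ref{satz:cocycle-props}, closure of $H_X$ under the antipode by connectedness and induction on $k$, and the Fa\`a di Bruno identification for $\powdep\neq0$, which the paper likewise only asserts with a citation --- are handled adequately as you describe.
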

We learn that the solution of \eqref{eq:dse} has a very special property: The coproduct $\Delta x_n \in H_X \tp H_X$ can be expressed by coefficients $x_{\cdot}$ alone, with \eqref{eq:dse-coprod} serving an explicit formula. 
Before we exploit this information on $X$ let us give some examples.
\begin{beispiel}
	When we set $\powdep=0$, $\Delta X(\coupling) = X(\coupling) \tp X(\coupling)$ is group-like such that $\Delta x_n = \sum_{i+j=n} x_i \tp x_j$. 
	The corresponding Dyson-Schwinger equation $X(\coupling) = \1 + B_+\left( X(\coupling) \right)$ is linear and generates the cocommutative ladders $x_n = B_+^n(\1)$.
\end{beispiel}
Recall that we take the trees in $H_R$ as substitute for Feynman graphs, each node representing an insertion into some other graph.
\begin{beispiel}\label{ex:DSE-propagator}
	In \cite{Kreimer:ExactDSE,Panzer:Master} we find the equation $X(\coupling) = \1 - \coupling B_+ \left( \frac{1}{X(\coupling)} \right)$ featuring $\powdep=-2$ which corresponds to the propagator example \ref{ex:dse-yukawa}. The solution sums all trees with the factor counting the number of distinct ordered embeddings:
\begin{equation*}\begin{split}
	X(\coupling)\in
	&\	\1
		- \tree{+-}\ \coupling
		- \tree{++--}\ \coupling^2
		- \left(
				\tree{+++---} + \tree{++-+--}
			\right) \coupling^3
		- \left(
					\tree{++++----} 
				+	\tree{+++-+---}
				+2\tree{++-++---}
				+	\tree{++-+-+--} 
			\right) \coupling^4 \\
	&	- \left(
					\tree{+++++-----} + \tree{++++-+----} 
				+2\tree{+++-++----} + \tree{+++-+-+---}
				+ \tree{+++--++---} + 2\tree{++-+++----} 
				+2\tree{++-++-+---} + 3\tree{+++--+-+--} 
				+ \tree{++-+-+-+--}
			\right) \coupling^5
		+ \coupling^6 H_R[[\coupling]].
\end{split}\end{equation*}
	The first factor of two arises from the different embeddings $\tree{++-++---}$ and $\tree{+++--+--}$ and correctly accounts for the fact that these two shall represent different graphs (though they are the same elements in $H_R$): $\Graph[0.5]{++-++---}$ and $\Graph[0.5]{+++--+--}$.
\end{beispiel}
\begin{beispiel}\label{ex:intrules-correlation}
	Let us consider the tree factorial Feynman rules $\intrules$ from \eqref{eq:int-rules} applied to the above series. The anomalous dimension
	$
		\widetilde{\toylog}(\coupling) 
		\urel{\ref{ex:intrules-exp}} - Z_{\tree{+-}} \circ X(\coupling)
		= \coupling
	$ is linear,
	while the first terms of the correlation function become
\begin{align*}
	G(\coupling)
	&= 1
		-\frac{(\coupling \scalelog)}{\tree{+-}\,!}
		-\frac{(\coupling \scalelog)^2}{\tree{++--}\,!}
		-\frac{(\coupling \scalelog)^3}{\tree{+++---}\,!} 
		-\frac{(\coupling \scalelog)^3}{\tree{++-+--}\,!}
		- \ldots
	 = 1
		-\coupling\scalelog
		-\frac{1}{2} (\coupling\scalelog)^2
		-\frac{1}{2} (\coupling\scalelog)^3 
		+\bigo{{(\coupling\scalelog)}^4}.
\end{align*}
\end{beispiel}

\subsection{Propagator coupling duality}
\label{sec:propagator-coupling}

The Hopf subalgebra of the perturbation series allows to calculate convolutions in
\begin{lemma}\label{lemma:perturbation-convolutions}
	Let $\psi\in\infchars{H_R}{\alg}$ denote an infinitesimal character, $\Psi\in\chars{H_R}{\alg}$ a character and $\lambda\in\Hom(H_R,\alg)$ a linear map. Then we obtain
	\begin{align}
		(\Psi \convolution \lambda) \circ X(\coupling)
		=& \left[ \Psi \circ X(\coupling)\right] 
				\cdot \lambda\circ X\left(
					\coupling 
					\left[ \Psi \circ X(\coupling) \right]^{\powdep}
				\right)
		\label{eq:perturbation-convolution-char}\\
		\defas&
				\left[ \Psi \circ X(\coupling)\right] 
				\cdot \sum_{n\in\N_0} \lambda(x_n) \cdot \left(
					\coupling 
					\left[ \Psi \circ X(\coupling) \right]^{\powdep}
				\right)^n
		\in{\alg}[[\coupling]] \nonumber\\
		(\psi \convolution \lambda) \circ X(\coupling)
		=&	\left[ \psi\circ X(\coupling) \right] \cdot
				\left( \id + \powdep \coupling \partial_{\coupling} \right)
				\left[ \lambda \circ X(\coupling) \right]
		\in{\alg} [[\coupling]].
		\label{eq:perturbation-convolution-inf}
	\end{align}
\end{lemma}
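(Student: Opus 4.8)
The plan is to reduce everything to the Hopf-subalgebra coproduct \eqref{eq:dse-coprod} and then push the (infinitesimal) character through it termwise; both identities are purely formal manipulations of power series once this is set up.

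For \eqref{eq:perturbation-convolution-char} I would write $(\Psi\convolution\lambda)\circ X(\coupling)=m_{\alg}\circ(\Psi\tp\lambda)\circ\Delta\circ X(\coupling)$ and substitute \eqref{eq:dse-coprod}, obtaining $\sum_{n\in\N_0}\Psi\bigl([X(\coupling)]^{1+n\powdep}\bigr)\cdot\coupling^n\lambda(x_n)$. Because $\Psi$ is a morphism of algebras with $\Psi(\1)=1$, applying it termwise to the series $[X(\coupling)]^{p}=\sum_{k\in\N_0}\binom{p}{k}[X(\coupling)-\1]^k$ of the footnote to \ref{def:dse} gives $\Psi\bigl([X(\coupling)]^{p}\bigr)=[\Psi\circ X(\coupling)]^{p}$, where the right-hand side makes sense since $\Psi\circ X(\coupling)=1+\bigo{\coupling}$. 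Factoring out one power of $\Psi\circ X(\coupling)$ then leaves $[\Psi\circ X(\coupling)]\cdot\sum_{n}\lambda(x_n)\bigl(\coupling[\Psi\circ X(\coupling)]^{\powdep}\bigr)^n$; since $\coupling[\Psi\circ X(\coupling)]^{\powdep}$ has no constant term, this sum is legitimately $\lambda\circ X$ evaluated at it, which is exactly the claimed right-hand side.

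For \eqref{eq:perturbation-convolution-inf} the same substitution yields $\sum_{n\in\N_0}\psi\bigl([X(\coupling)]^{1+n\powdep}\bigr)\coupling^n\lambda(x_n)$, and the point is that an infinitesimal character $\psi$, satisfying $\psi\circ m=\psi\tp e+e\tp\psi$, annihilates $\1$ and $(\ker\counit)^2$; as $X(\coupling)-\1\in(\ker\counit)[[\coupling]]$, only the $k=1$ term of $[X(\coupling)]^{p}=\sum_k\binom{p}{k}[X(\coupling)-\1]^k$ survives, so $\psi\bigl([X(\coupling)]^{p}\bigr)=p\,\psi(X(\coupling))$. Hence the expression collapses to $[\psi\circ X(\coupling)]\cdot\sum_n(1+n\powdep)\coupling^n\lambda(x_n)=[\psi\circ X(\coupling)]\cdot(\id+\powdep\coupling\partial_{\coupling})[\lambda\circ X(\coupling)]$, using $\sum_n n\coupling^n\lambda(x_n)=\coupling\partial_{\coupling}[\lambda\circ X(\coupling)]$. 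As a consistency check one may also recover \eqref{eq:perturbation-convolution-inf} from \eqref{eq:perturbation-convolution-char} by inserting $\Psi=\exp_{\convolution}(t\psi)$ and differentiating at $t=0$, the chain rule converting the exponent $\powdep$ into the operator $\powdep\coupling\partial_{\coupling}$.

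The computation is otherwise routine; the one place I would be careful is the interaction of characters and infinitesimal characters with the formal binomial powers $[X(\coupling)]^{p}$ and, for \eqref{eq:perturbation-convolution-char}, the well-definedness of the power-series substitution $\coupling\mapsto\coupling[\Psi\circ X(\coupling)]^{\powdep}$. Once these bookkeeping points are pinned down there is no remaining obstacle, and no input beyond \eqref{eq:dse-coprod} and the defining properties of $\chars{H_R}{\alg}$ and $\infchars{H_R}{\alg}$ is needed.
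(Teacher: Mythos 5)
Your proposal is correct and follows essentially the same route as the paper: substitute the Hopf-subalgebra coproduct formula \eqref{eq:dse-coprod}, use multiplicativity of $\Psi$ to identify $\Psi\bigl([X(\coupling)]^{1+n\powdep}\bigr)=[\Psi\circ X(\coupling)]^{1+n\powdep}$ for \eqref{eq:perturbation-convolution-char}, and for \eqref{eq:perturbation-convolution-inf} observe that the infinitesimal character kills $\1$ and $(\ker\counit)^2$, so only the linear term of the binomial expansion survives and produces the factor $(1+n\powdep)$ that assembles into $\id+\powdep\coupling\partial_{\coupling}$. The paper merely spells out this last computation and calls the rest immediate, so your write-up is a slightly more detailed version of the same argument.
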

\begin{proof}
	These are immediate consequences of lemma \ref{satz:dse-coprod}, for \eqref{eq:perturbation-convolution-inf} consider
	\begin{equation*}
				\psi \left( \left[ X(\coupling) \right]^{1+n\powdep} \right) \cdot
					\coupling^n
		=		\sum_{i\in\N_0}
				\tbinom{1+n\powdep}{i}
				\psi \left( \left[ X(\coupling) - \1 \right]^{i} \right)
					\coupling^n
		 =	\psi\left( X(\coupling) - \1 \right)
				\cdot
				(1+n\powdep) 
					\coupling^n 
		. \qedhere
	\end{equation*}
\end{proof}
By combining \eqref{eq:dse-coprod} with the renormalization group equation we can calculate the correlation function out of the knowledge of $\widetilde{\toylog}$ only:
\begin{beispiel}\label{ex:intrules-propagator}
	Continuing example \ref{ex:DSE-propagator}, we can calculate the $\propto \scalelog^2$-term $Z_{\tree{+-}}^{\convolution 2} \left(X(\coupling)\right) = -\coupling(1-2\coupling\partial_{\coupling})(-\coupling) = -\coupling^2$ and all further convolution products
\begin{equation*}
	Z_{\tree{+-}}^{\convolution n+1} \left( X(\coupling) \right)
	\urel{\eqref{eq:perturbation-convolution-inf}}
	-\coupling^{n+1} (2n-1)(2n-3)\cdots(1)
	= -\coupling^{n+1} \frac{(2n)!}{2^n n!},
\end{equation*}
proving $\intrules(x_{n+1}) = -2^{-n} C_n \scalelog^{n+1}$ with the \emph{Catalan numbers} $C_n$ already noted in \cite{Kreimer:NonlinearDSE}. Combining their known generating function
	$	2\coupling \sum_{n\in\N_0} \coupling^n C_n %
		= 1-\sqrt{1-4\coupling}	$
with $\intrules = \exp_{\convolution}\left(-\scalelog Z_{\tree{+-}} \right)$ allows us to completely determine the correlation function as $G(\coupling) = \sqrt{1-2\coupling\scalelog}$.
\end{beispiel}

\begin{korollar}
	As $\phiR$ is a morphism of Hopf algebras by \ref{satz:toyphy-hopfmor}, for any $\scalelog,\scalelog'\in\K$ we can factorize the correlation function at $\scalelog+\scalelog'$ in two different ways
	\begin{equation}
		G_{\scalelog+\scalelog'}(\coupling)
		= (\phiR[\scalelog]	\convolution	\phiR[\scalelog']) \circ X(\coupling)
		\urel{\eqref{eq:perturbation-convolution-char}}
		G_{\scalelog} (\coupling) \cdot G_{\scalelog'}\left[ \coupling G_{\scalelog}^{\powdep}(\coupling) \right]
		= G_{\scalelog'} (\coupling) \cdot G_{\scalelog}\left[ \coupling G_{\scalelog'}^{\powdep}(\coupling) \right].
		\label{eq:propagator-coupling}
	\end{equation}
\end{korollar}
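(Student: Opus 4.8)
The plan is to derive all three expressions in the displayed identity from a single source, namely the renormalization group equation of Corollary \ref{satz:rge}, fed through the convolution formula of Lemma \ref{lemma:perturbation-convolutions}.

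First I would justify the middle equality $G_{\scalelog+\scalelog'}(\coupling) = (\phiR[\scalelog] \convolution \phiR[\scalelog']) \circ X(\coupling)$. Since $\phiR\colon H_R \rightarrow \K[\x]$ is a morphism of Hopf algebras by Corollary \ref{satz:toyphy-hopfmor}, each $\restrict{\phiR}{\scalelog} = \ev_\scalelog \circ \phiR$ is a character $H_R \rightarrow \K$, so \eqref{eq:rge} gives $\restrict{\phiR}{\scalelog} \convolution \restrict{\phiR}{\scalelog'} = \restrict{\phiR}{\scalelog+\scalelog'}$. Composing with the perturbation series $X(\coupling) \in H_R[[\coupling]]$ and recalling $G_{\scalelog}(\coupling) = \phiR[\scalelog] \circ X(\coupling)$ then yields the claimed equality.

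Next I would apply \eqref{eq:perturbation-convolution-char} with $\Psi \defas \phiR[\scalelog]$ and $\lambda \defas \phiR[\scalelog']$, both characters $H_R \rightarrow \K$. This gives directly
\begin{align*}
	(\phiR[\scalelog] \convolution \phiR[\scalelog']) \circ X(\coupling)
	&= \left[ \phiR[\scalelog] \circ X(\coupling) \right] \cdot \phiR[\scalelog'] \circ X\!\left( \coupling \left[ \phiR[\scalelog] \circ X(\coupling) \right]^{\powdep} \right) \\
	&= G_{\scalelog}(\coupling) \cdot G_{\scalelog'}\!\left[ \coupling G_{\scalelog}^{\powdep}(\coupling) \right],
\end{align*}
where the power $G_{\scalelog}^{\powdep}$ and the substitution $\coupling \mapsto \coupling G_{\scalelog}^{\powdep}(\coupling)$ are legitimate formal operations because $x_0 = \1$ forces $G_{\scalelog}(\coupling) \in 1 + \coupling\,\K[\scalelog][[\coupling]]$, exactly as in the binomial convention of Definition \ref{def:dse}. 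The third expression then follows by symmetry: \eqref{eq:rge} also reads $\phiR[\scalelog] \convolution \phiR[\scalelog'] = \phiR[\scalelog+\scalelog'] = \phiR[\scalelog'] \convolution \phiR[\scalelog]$, so interchanging the roles of $\scalelog$ and $\scalelog'$ in the previous step produces $G_{\scalelog'}(\coupling) \cdot G_{\scalelog}[\coupling G_{\scalelog'}^{\powdep}(\coupling)]$.

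I do not expect a real obstacle here: the statement is essentially a transcription of \eqref{eq:rge} through \eqref{eq:perturbation-convolution-char}, with every ingredient already available. The only points needing a moment's care are bookkeeping ones --- verifying that $\phiR[\scalelog]$ is of the character type required by Lemma \ref{lemma:perturbation-convolutions}, and checking that the reparametrisation $\coupling \mapsto \coupling G_{\scalelog}^{\powdep}(\coupling)$ is compatible with the exponent convention $X^{1+n\powdep}$ underlying \eqref{eq:dse-coprod}.
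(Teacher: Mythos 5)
Your argument is correct and follows exactly the route the paper intends: the middle equality is the renormalization group equation \eqref{eq:rge} (a consequence of corollary \ref{satz:toyphy-hopfmor}) evaluated on $X(\coupling)$, and the two factorizations come from applying \eqref{eq:perturbation-convolution-char} with $\Psi=\phiR[\scalelog]$, $\lambda=\phiR[\scalelog']$ and then swapping $\scalelog\leftrightarrow\scalelog'$. Your bookkeeping remarks (that $\phiR[\scalelog]$ is a character and that $G_\scalelog(\coupling)\in 1+\coupling\,\K[\scalelog][[\coupling]]$ makes the substitution legitimate) are exactly the points worth checking, so nothing is missing.
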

At this point we like to briefly highlight the non-analytic nature of perturbative {\qft}.
The correlation function $G(\coupling)$ at $\coupling \neq 0$ \eqref{eq:correlation} is a physical object that can in principle be measured through experiment.
It is only by the nature of the perturbative method we apply that we are merely able to calculate the formal series expansion \eqref{eq:correlation} of $G(\coupling)$ around $\coupling=0$ with the help of the Feynman rules.
The main issue is that in the interesting cases, the function $G(\coupling)$ is not analytic at this point and the series \eqref{eq:correlation} has zero radius of convergence\footnote{Hence the example \ref{ex:intrules-propagator} is still far away from {\qft} as its correlation function is analytic at $g\rightarrow 0$.}.

However, in this perturbative approach we just deduced the functional equations \eqref{eq:propagator-coupling} for the formal series.
Therefore it is natural to impose these on the true correlation functions, such that we gain a non-perturbative handle on {\qft}.
We will continue to stress similar examples in this section.

But first observe how \eqref{eq:propagator-coupling} takes the infinitesimal form
\begin{korollar}
	With the help of $-\frac{\dd}{\dd\x}\ \phiR = \toylog \convolution \phiR = \phiR \convolution \toylog$ or by differentiating \eqref{eq:propagator-coupling} with respect to $\scalelog'$ at zero we find the differential equations
	\begin{equation}\label{eq:propagator-coupling-diff}
		G_{\scalelog}(\coupling) \cdot \widetilde{\toylog}\left[ \coupling G_{\scalelog}^{\powdep}(\coupling) \right]
		\urel{\eqref{eq:perturbation-convolution-char}}
		-\partial_{\scalelog} G_{\scalelog}(\coupling)
		\urel{\eqref{eq:perturbation-convolution-inf}}
		\widetilde{\toylog}(\coupling) \cdot \left( 1 + \powdep \coupling \partial_{\coupling} \right) G_{\scalelog}(\coupling).
	\end{equation}
\end{korollar}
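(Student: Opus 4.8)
The plan is to give the two independent derivations announced in the statement, both of which are short bookkeeping once the right inputs are identified.

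\emph{First derivation, via the convolution identities of Lemma \ref{lemma:perturbation-convolutions}.} Since $\phiR = \exp_{\convolution}(-\scalelog\toylog)$ by \eqref{eq:toyexp} and $\toylog$ commutes with every power of itself, the operator $-\partial_{\scalelog}$ acts by convolution with $\toylog$ from either side: $-\partial_{\scalelog}\phiR[\scalelog] = \toylog\convolution\phiR[\scalelog] = \phiR[\scalelog]\convolution\toylog$. Applying this to $X(\coupling)$ and pulling $\partial_{\scalelog}$ through the (linear, $\scalelog$-free) evaluation on $X(\coupling)$ gives $-\partial_{\scalelog}G_{\scalelog}(\coupling) = (\toylog\convolution\phiR[\scalelog])\circ X(\coupling) = (\phiR[\scalelog]\convolution\toylog)\circ X(\coupling)$. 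For the first expression I would invoke \eqref{eq:perturbation-convolution-inf} with $\psi\defas\toylog$ (an infinitesimal character by \ref{def:toylog}) and $\lambda\defas\phiR[\scalelog]$, which yields $\widetilde{\toylog}(\coupling)\cdot(1+\powdep\coupling\partial_{\coupling})G_{\scalelog}(\coupling)$; for the second I would invoke \eqref{eq:perturbation-convolution-char} with $\Psi\defas\phiR[\scalelog]$ (a character, and in fact a Hopf algebra morphism, by \ref{satz:toyphy-hopfmor}) and $\lambda\defas\toylog$, which yields $G_{\scalelog}(\coupling)\cdot\widetilde{\toylog}\big[\coupling G_{\scalelog}^{\powdep}(\coupling)\big]$, using in both cases that $\toylog\circ X(\coupling)=\widetilde{\toylog}(\coupling)$ and $\phiR[\scalelog]\circ X(\coupling)=G_{\scalelog}(\coupling)$ from \ref{def:correlation}. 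Comparing the three expressions is the asserted chain of equalities. The only thing to be careful about here is matching the roles required by Lemma \ref{lemma:perturbation-convolutions} — the (infinitesimal) character sits on the left and the arbitrary linear map on the right — which forces the particular assignment of $\toylog$ and $\phiR[\scalelog]$ above.

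\emph{Second derivation, by differentiating \eqref{eq:propagator-coupling}.} I would first record that $G_{0}(\coupling) = \phiR[0]\circ X(\coupling) = (\counit\circ X)(\coupling) = 1$, since $\phiR[0]=\exp_{\convolution}(0)=u\circ\counit$, and that $\partial_{\scalelog'}\big|_{\scalelog'=0}G_{\scalelog'}(\coupling) = -\widetilde{\toylog}(\coupling)$ directly from \ref{def:correlation}. Differentiating the first factorization $G_{\scalelog+\scalelog'}(\coupling)=G_{\scalelog}(\coupling)\cdot G_{\scalelog'}\big[\coupling G_{\scalelog}^{\powdep}(\coupling)\big]$ with respect to $\scalelog'$ at $\scalelog'=0$ immediately gives $\partial_{\scalelog}G_{\scalelog}(\coupling) = -G_{\scalelog}(\coupling)\,\widetilde{\toylog}\big[\coupling G_{\scalelog}^{\powdep}(\coupling)\big]$. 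Differentiating instead the symmetric factorization $G_{\scalelog+\scalelog'}(\coupling)=G_{\scalelog'}(\coupling)\cdot G_{\scalelog}\big[\coupling G_{\scalelog'}^{\powdep}(\coupling)\big]$ at $\scalelog'=0$, and applying the chain rule to the nested argument $\coupling G_{\scalelog'}^{\powdep}(\coupling)$ — where $G_{\scalelog'}(\coupling)\big|_{\scalelog'=0}=1$ collapses the prefactors — produces the extra term $-\powdep\coupling\,\widetilde{\toylog}(\coupling)\,\partial_{\coupling}G_{\scalelog}(\coupling)$, so that $\partial_{\scalelog}G_{\scalelog}(\coupling) = -\widetilde{\toylog}(\coupling)\,(1+\powdep\coupling\partial_{\coupling})G_{\scalelog}(\coupling)$.

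There is no genuine obstacle: the statement is essentially a transcription of Lemma \ref{lemma:perturbation-convolutions} through the Hopf-algebra-morphism property of $\phiR$, and the differential form of the renormalization group equation \eqref{eq:propagator-coupling}. The only points demanding mild care are the ordering of the convolution factors relative to the hypotheses of Lemma \ref{lemma:perturbation-convolutions}, and the correct chain-rule treatment of the argument $\coupling G_{\scalelog'}^{\powdep}(\coupling)$ at $\scalelog'=0$, which relies on $G_{0}=1$ to discard the would-be additional contributions.
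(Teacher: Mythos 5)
Your proposal is correct and follows exactly the two routes the paper indicates for this corollary: applying $-\partial_{\scalelog}\phiR = \toylog\convolution\phiR = \phiR\convolution\toylog$ to $X(\coupling)$ and reading off the two sides via \eqref{eq:perturbation-convolution-inf} and \eqref{eq:perturbation-convolution-char} with the same (and correctly ordered) assignments of character versus linear map, respectively differentiating \eqref{eq:propagator-coupling} in $\scalelog'$ at zero using $G_0=1$. Nothing is missing.
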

The first of these equations generalizes the \emph{propagator coupling duality} observed in \cite{Kreimer:ExactDSE,Kreimer:NonlinearDSE}. For any fixed coupling $\coupling$, it expresses the correlation function as the solution of the first order ordinary differential equation
\begin{equation}
	-\frac{\dd}{\dd \scalelog} \ln G_{\scalelog} (\coupling)
	= \widetilde{\toylog} \left[ \coupling e^{\powdep \ln G_{\scalelog}(\coupling)} \right]
	\quad\text{with initial condition}\quad
	\ln G_0 (\coupling) = 0.
	\label{eq:propagator-coupling-ode}
\end{equation}
Note how this equation reconstructs $G_{\scalelog}(\coupling)$ completely only from the input $\widetilde{\toylog}(\coupling)$. This demonstrates the power of the renormalization group: Though $G$ depends on $\coupling$ and $\scalelog$, after imposing \eqref{eq:rge} only a one-dimensional degree of freedom is left.
As before, \eqref{eq:propagator-coupling-ode} serves a non-perturbative relation and need not be restrained to the perturbative series alone.
\begin{beispiel}\label{ex:leading-log}
	The \emph{leading-$\log$} expansion takes only the highest power of $\scalelog$ in each $\coupling$-order, so $\widetilde{\toylog} (\coupling) = c \coupling^n$ is a monomial for some $c\in\K$, $n\in\N$ (otherwise different $\coupling$-powers would mix for a given order in $\scalelog$). In this case \eqref{eq:propagator-coupling-ode} integrates to
	\begin{equation}\label{eq:leading-log}
		G_{\text{leading-$\log$}} (\coupling)
		= \Big[ 1 + c n \powdep \scalelog \coupling^n \Big]^{-\frac{1}{n\powdep}}.
	\end{equation}
As a special case we recover example \ref{ex:intrules-correlation} for $n=c=1$ and $\powdep = -2$.
\end{beispiel}
\begin{beispiel}
	For the linear Dyson-Schwinger equation $\powdep=0$, \eqref{eq:propagator-coupling} states $G_{\scalelog+\scalelog'}(\coupling) = G_{\scalelog}(\coupling) \cdot G_{\scalelog'}(\coupling)$ which is solved by the \emph{scaling solution} 
$
	G_{\scalelog}(\coupling) 
	= e^{-\scalelog\tilde{\toylog}(\coupling)}
$
of \eqref{eq:propagator-coupling-ode}, well-known from \cite{Kreimer:linear}.
\end{beispiel}
\begin{beispiel}\label{ex:vertex-insertion}
	The physical situation of vertex insertions as in \cite{BKW:NextToLadder} corresponds to $\powdep=1$ and
$
	G_{\scalelog+\scalelog'}(\coupling) 
	=	G_{\scalelog'}(\coupling) \cdot G_{\scalelog} \big[ \widetilde{G}_{\scalelog'}(\coupling) \big]
$
can be interpreted as the running of the coupling constant $\widetilde{G} \defas \coupling \cdot G$: A change in scale by $\scalelog'$ is (up to a multiplicative constant) equivalent to replacing the coupling $\coupling$ by $\widetilde{G}_{\scalelog'}(\coupling)$.
\end{beispiel}

\subsection{Running coupling}\label{sec:running-coupling}
The idea of this last example \ref{ex:vertex-insertion} leads us to another form of the renormalization group equation, common to the physics literature like (7.3.15) and (7.3.21) in \cite{Collins}. 
We introduce the \emph{$\beta$-function}\footnote{This should not be confused with the $\beta$-functional of \eqref{eq:MS-beta-def}, though both are related.} $\beta(\coupling) \defas -\powdep\coupling\widetilde{\toylog}(\coupling)$ and the \emph{running coupling} $\coupling(\rp)$ as the solution of
\begin{equation}\label{eq:running-coupling}
	\rp\frac{\dd}{\dd\rp} \coupling(\rp)
	= \beta \big( \coupling(\rp) \big),
	\quad\text{so}\quad
	\rp\frac{\dd}{\dd\rp} G \left( \coupling(\rp),\ln\tfrac{s}{\rp} \right)
	\urel{\eqref{eq:propagator-coupling-diff}}
	\widetilde{\toylog}\big( \coupling(\rp) \big) G \left( \coupling(\rp),\ln\tfrac{s}{\rp} \right).
\end{equation}
Integration over $\rp$ results in a relation of the correlation functions corresponding to different choices of the renormalization point $\rp$:
\begin{equation*}
	G\left( \coupling(\rp_2),\ln\tfrac{s}{\rp_2} \right) 
	= G\left( \coupling(\rp_1),\ln\tfrac{s}{\rp_1} \right) \cdot
		\exp \left[ 
			\int_{\rp_1}^{\rp_2} \widetilde{\toylog}\big( \coupling(\rp) \big) \tfrac{\dd\rp}{\rp}
		\right]
	\urel{\eqref{eq:running-coupling}}
		G\left( \coupling(\rp_1),\ln\tfrac{s}{\rp_1} \right) \cdot
		\left[ 
			\frac{\coupling(\rp_2)}{\coupling(\rp_1)}
		\right]^{-\frac{1}{\powdep}}.
\end{equation*}
This result is important from a conceptual point of view: To achieve renormalization, we introduced a parameter $\rp$ that is completely arbitrary, yet the shape of the correlation function is a measurable quantity wherefore it clearly has to be insensitive to the choice of $\rp$.

Indeed we see that a change from $\rp_1$ to $\rp_2$ affects the correlation function only by a constant overall factor and a redefinition of the coupling constant (which itself is a parameter of the theory). Hence the physical content of $G$ is left invariant of the choice of renormalization point.

Similar to \eqref{eq:propagator-coupling-ode} we can express $G$ through a differential equation involving the running coupling $\coupling(s)$ after choosing $\rp_1=s$:
\begin{equation}\label{eq:rge-solved}
	G_{\scalelog}( \coupling)
	= \left[
			\frac{\coupling}{\coupling(s)}
		\right]^{-\frac{1}{\powdep}},
	\quad\text{with $\coupling(s)$ subject to}\quad
	\scalelog
	= \ln\frac{s}{\rp} 
	= \int_{\coupling}^{\coupling(s)} \frac{\dd\coupling'}{\beta(\coupling')}.
\end{equation}

\subsection{Relation to Mellin transforms}
So far we exploited the renormalization group equation \eqref{eq:toyexp} and the Hopf subalgebra \eqref{eq:dse-coprod} of perturbation series.
Now we like to take the special structure \eqref{eq:phiR-universal} of the Feynman rules from \eqref{eq:renormiert-def} into account. As different cocycles\footnote{This is most easily considered in the Hopf algebra of decorated rooted trees (section \ref{sec:decorations}) where each $B_n$ inserts into a node of decoration $n$.} represent different Feynman integrals, we allow for a family $F_{\cdot}: \N \rightarrow \reg^{-1} \K [[\reg]]$ of Mellin transforms such that $\phiR \circ B_n = P \circ F_n(-\partial_{\scalelog}) \circ \phiR$. Applying this to \eqref{eq:dse} results in
\begin{equation*}
	G_{\scalelog}(\coupling)
	\urel{\eqref{eq:dse}}
		1 + \sum_{n\in\N} \coupling^n \phiR \circ B_n \left(X(\coupling)^{1+n\powdep}\right)
	\urel{\eqref{eq:phiR-universal}}
	1+ P \circ \sum_{n\in\N} \coupling^n F_n\left( -\partial_{\scalelog} \right) G_{\scalelog}(\coupling)^{1+n\powdep}
\end{equation*}
and taking a derivative brings us to the differential equation of
\begin{korollar}
	The power series $G_{\scalelog}(\coupling)\in{\K}[\scalelog][[\coupling]]$ is fully determined by
	\begin{equation}\label{eq:correlation-mellin-de}
		\partial_{-\scalelog} G_{\scalelog} \left( \coupling \right)
		\urel{\eqref{eq:phiR-universal}} 
		\sum_{n\in\N} \coupling^n \left[\reg F_n(\reg)\right]_{\reg = - \partial_{\scalelog}} \left( {G_{\scalelog}(\coupling)^{1+n\powdep}} \right)
		\quad\text{and}\quad
		G_{\scalelog}(0) = 1.
	\end{equation}
\end{korollar}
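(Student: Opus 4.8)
The strategy is exactly the computation indicated just before the statement, made rigorous: apply the character $\phiR$ to the combinatorial Dyson--Schwinger equation \eqref{eq:dse} and then differentiate once in $\scalelog$. Since $\phiR\in\chars{H_R}{\K[\scalelog]}$ extends $\coupling$-adically to a continuous unital algebra morphism $H_R[[\coupling]]\to\K[\scalelog][[\coupling]]$ and only finitely many terms contribute at each order of $\coupling$, multiplicativity gives $\phiR\bigl(X(\coupling)^{1+n\powdep}\bigr)=\bigl(\phiR\circ X(\coupling)\bigr)^{1+n\powdep}=G_\scalelog(\coupling)^{1+n\powdep}$ (the binomial series defining $X^{1+n\powdep}$ converges $\coupling$-adically and commutes with $\phiR$). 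Combined with the family version $\phiR\circ B_n=P\circ F_n(-\partial_\scalelog)\circ\phiR$ of \eqref{eq:phiR-universal}, applying $\phiR$ to \eqref{eq:dse} yields the integral equation
\[
G_\scalelog(\coupling)=1+P\circ\sum_{n\in\N}\coupling^n\,F_n(-\partial_\scalelog)\,G_\scalelog(\coupling)^{1+n\powdep},
\]
which stays inside $\K[\scalelog][[\coupling]]$ because each $(-\partial_\scalelog)^m$ --- with the convention $(-\partial_\scalelog)^{-1}=-\polyint$ --- preserves polynomials in $\scalelog$.

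Differentiating this identity with $\partial_{-\scalelog}$ is the key step. As $P=\id-\ev_0$ differs from the identity only by a map into constants, $\partial_{-\scalelog}\circ P=\partial_{-\scalelog}$, so the projector drops out; and writing $F_n(\reg)=\sum_{m\geq-1}c_{n,m}\reg^m$ one checks the operator identity $\partial_{-\scalelog}\circ F_n(-\partial_\scalelog)=\sum_{m\geq-1}c_{n,m}(-\partial_\scalelog)^{m+1}=\bigl[\reg F_n(\reg)\bigr]_{\reg=-\partial_\scalelog}$, where the pole contribution $c_{n,-1}(-\partial_\scalelog)^{-1}=-c_{n,-1}\polyint$ gets sent to $c_{n,-1}\cdot\id$ --- precisely the effect of substituting $\reg\mapsto-\partial_\scalelog$ into $\reg F_n(\reg)$. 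This produces \eqref{eq:correlation-mellin-de}. For the boundary datum note $G_\scalelog(0)=\phiR(x_0)=\phiR(\1)=1$; in fact the integral form above shows that every $\coupling$-coefficient of $G$ beyond the constant one vanishes at $\scalelog=0$, i.e.\ $G_0(\coupling)=1$.

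For the ``fully determined'' claim I would run an induction on the order in $\coupling$: in \eqref{eq:correlation-mellin-de} every summand on the right carries an explicit factor $\coupling^n$ with $n\geq1$, so comparing $\coupling^k$-coefficients expresses $\partial_{-\scalelog}$ of the $k$-th coefficient of $G$ through coefficients of order $<k$ only; one integration, with the constant fixed by the value at $\scalelog=0$, recovers it --- equivalently, this reverses the differentiation step and returns the integral equation, hence \eqref{eq:dse} upon feeding it through $\phiR$. There is no deep obstacle here; the one point demanding care is the handling of the formal inverse operator $(-\partial_\scalelog)^{-1}=-\polyint$ in the second paragraph: one must use that $-\partial_\scalelog$ is a genuine left inverse of $-\polyint$ (not merely an inverse modulo constants), so that the $\reg^{-1}$-pole of each Mellin transform $F_n$ contributes the \emph{unprojected} term $c_{n,-1}\,G_\scalelog(\coupling)^{1+n\powdep}$ to the right-hand side of \eqref{eq:correlation-mellin-de}. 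Everything else is the routine $\coupling$-adic bookkeeping already carried out in the text.
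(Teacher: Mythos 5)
Your proposal is correct and follows essentially the same route as the paper: apply $\phiR$ to the Dyson--Schwinger equation \eqref{eq:dse}, use $\phiR\circ B_n = P\circ F_n(-\partial_{\scalelog})\circ\phiR$ from \eqref{eq:phiR-universal} to get the integral form displayed before the corollary, and then differentiate in $-\scalelog$, the paper leaving implicit exactly the points you spell out (that $P$ drops under $\partial_{-\scalelog}$, that $\partial_{-\scalelog}\circ F_n(-\partial_{\scalelog})=\left[\reg F_n(\reg)\right]_{\reg=-\partial_{\scalelog}}$ because $-\partial_{\scalelog}$ is a genuine left inverse of $-\polyint$, and the order-by-order determination in $\coupling$). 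Your added observation that $G_{0}(\coupling)=1$, needed to fix the integration constants, is a worthwhile clarification of the stated boundary datum but not a departure from the paper's argument.
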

Though \eqref{eq:correlation-mellin-de} is always well defined for the formal power series (since the growing powers of $\coupling$ allow only finitely many contributions in each order), the differential operator $[\reg F(\reg)]_{\reg = -\partial_{\scalelog}}$ can be of infinite order which might hinder a non-perturbative interpretation of this equation.
However we can proceed in a couple of interesting cases, allowing us to construct the full correlation function or anomalous dimension (and not just individual terms of the perturbation series):
\begin{beispiel}
	Consider a single cocycle $F_k(\reg) = F(\reg) \delta_{k,n}$ and choose $F(\reg)=\frac{\coeff{-1}}{\reg}$. Then \eqref{eq:correlation-mellin-de} reproduces the leading-log example \eqref{eq:leading-log} as it becomes
	\begin{equation}
		\partial_{-\scalelog} G_{\scalelog}(\coupling) 
		= \coupling^n \coeff{-1} G_{\scalelog}(\coupling)^{1+n\powdep}.
	\end{equation}
\end{beispiel}
More generally, for rational $F(\reg) = \frac{p(\reg)}{q(\reg)} \in \K(\reg)$  with polynomials $p(\reg), q(\reg) \in \K[\reg]$ we can apply $q(-\partial_{\scalelog})$ on both sides of \eqref{eq:correlation-mellin-de} resulting in a finite order differential equation
\begin{equation}
	q(-\partial_{\scalelog}) G_{\scalelog}(\coupling) 
	= \coupling^n p(-\partial_{\scalelog}) G_{\scalelog}(\coupling)^{1+n\powdep}.
\end{equation}
Enjoying this situation we can directly interpret it non-perturbatively (extending the algebraic $\partial_{\scalelog}\in\End(\K[\scalelog])$ to the analytic differential operator).

\begin{beispiel}
	The single Mellin transform $F(\reg)=\frac{1}{\reg(1-\reg)}$ in the propagator-type equation ($\powdep=-2$ as in example \ref{ex:DSE-propagator}) occurs in the first approximation to quantum electrodynamics and the Yukawa theory. The equation
\begin{equation*}
	\frac{\coupling}{G_{\scalelog}(\coupling)}
	=
	\partial_{-\scalelog} \left( 1-\partial_{-\scalelog} \right) G_{\scalelog} (\coupling)
	\urel{\eqref{eq:propagator-coupling-diff}}
	\widetilde{\toylog}(\coupling) \left(1 -2\coupling\partial_{\coupling} \right)
	\left[ 
	1	-	\widetilde{\toylog}(\coupling) \left(1-2\coupling\partial_{\coupling} \right)
		\right] G_{\scalelog}(\coupling),
\end{equation*}
evaluates at $\scalelog=0$ to the compact form
\begin{equation}
	\coupling
	=	\widetilde{\toylog}(\coupling) 
		-\widetilde{\toylog}(\coupling) (1-2\coupling\partial_{\coupling})\widetilde{\toylog}(\coupling).
\end{equation}
We stress how this equation determines the anomalous dimension non-perturbatively, in fact it can be expressed in terms of the complementary error function as analyzed in \cite{Yeats,Kreimer:ExactDSE}. Reference \cite{BaalenKreimerUminskyYeats:QED} is devoted to a detailed study of this type of equations and also solves the case $\powdep=1$ with the help of the Lambert $W$ function.
\end{beispiel}

\subsection{Variations of Mellin transforms}
Consider a change of the Mellin transform $F$ to a different $\widetilde{F}$ that keeps $c_{-1}$ fixed but is free to alter the other coefficients $\coeff{n}$ with $n\in\N_0$. Then by \eqref{eq:poly-coboundaries} the difference
\begin{equation}
	\dH\alpha 
	\defas P \circ \left[\widetilde{F} - F\right]_{-\partial_{\scalelog}}
	\in \HB[1](\K[\scalelog])
\end{equation}
is a Hochschild-1-coboundary and \eqref{eq:change-coboundary-equals-auto} shows that we can relate the two resulting renormalized Feynman rules $\phiR$ and $\widetilde{\phiR}$ by composition with a distinguished Hopf algebra automorphism:
	\begin{equation*}
		\widetilde{\phiR}
		\urel{\eqref{eq:phiR-universal}} \unimor{P \circ \widetilde{F}(-\partial_{\scalelog})}
		= \unimor{P \circ F(-\partial_{\scalelog}) + \dH\alpha}
		\urel{\eqref{eq:change-coboundary-equals-auto}}
		\unimor{P \circ F(-\partial_{\scalelog})} \circ \auto{\left[ \alpha\, \circ\, \unimor{P \circ F(-\partial_{\scalelog})} \right]}
		\urel{\eqref{eq:phiR-universal}}
		\phiR \circ \auto{\left[ \alpha\, \circ\, \phiR \right]}.
	\end{equation*}
	This accentuates that the Feynman rules coming from \eqref{eq:renormiert-def} obey even more structure than just the renormalization group \eqref{eq:toyexp}: The origin from the Mellin transform poses restrictions on the generator $\toylog$ of $\phiR$. For illustration consider how example \ref{ex:toylog-mellin-recursion} implies
	\begin{equation}
		\toylog\left( \tree{+-} \right)
		\cdot \left[ 2 \toylog\left( \tree{+++---} \right) - \toylog\left( \tree{++-+--} \right) \right]
		= 2 \coeff[2]{-1} \coeff[2]{0}
		= 2 \left[ \toylog\left( \tree{++--} \right) \right]^2.
	\end{equation}
\begin{beispiel}
	Assume that $\coeff{-1}=-1$, then we can relates $\phiR$ to the tree factorial character $\intrules=\unimor{\polyint}$: The difference in Mellin transforms is
	\begin{equation}
		\dH \alpha 
		= P \circ \sum_{n \in \N_0} \coeff{n} \partial_{-\x}^n \in \HB[1](\K[\x]),
		\quad\text{therefore}\quad
		\alpha
		\urel{\eqref{eq:poly-coboundaries}}
		\counit \circ \sum_{n\in\N_0} \coeff{n} \partial_{-\x}^n
		\in \K[\x]'
	\end{equation}
	such that we find $\alpha \circ \intrules (\f) = \alpha\left( \frac{\x^{\abs{\f}}}{\f!} \right) = (-1)^{\abs{\f}} \frac{\abs{\f}!}{\f!} \coeff{\abs{\f}}$. Now we can verify
\begin{align*}
	\phiR \left( \tree{+-} \right)
		&= \x
		 = \intrules \left( \tree{+-} \right)
		 \urel{\ref{ex:auto}}
		 \intrules \circ \auto{\alpha \circ \intrules} \left( \tree{+-} \right)
		,\qquad
	\phiR \left( \tree{++--} \right)
		 = \frac{\x^2}{2} + \coeff{0} \x
		 = \intrules \left\{ \tree{++--} + \toyform(1) \tree{+-} \right\}
		 \urel{\ref{ex:auto}}
		 \intrules \circ \auto{\alpha \circ \intrules} \left( \tree{++--} \right)
		 ,\\
	\phiR \left( \tree{+++---} \right)
		&= \frac{\x^3}{6} + \x^2 \coeff{0} + \x (\coeff[2]{0}-\coeff{1})
		= \intrules \left\{ \tree{+++---} + 2\coeff{0} \tree{++--} + \left[\coeff[2]{0}-\coeff{1} \right] \tree{+-} \right\}
		\urel{\ref{ex:auto}}
		\intrules \circ \auto{\alpha \circ \intrules} \left( \tree{+++---} \right)
		\quad\text{and}\\
	\phiR \left( \tree{++-+--} \right)
		&= \frac{\x^3}{3} + \coeff{0} \cdot \x^2 - 2\coeff{1} \cdot \x
		=\intrules \Big\{ \tree{++-+--} + \coeff{0} \tree{+-}\tree{+-} - 2\coeff{1} \tree{+-} \Big\}
		\urel{\ref{ex:auto}}
		\intrules \circ \auto{\alpha \circ \intrules} \left( \tree{++-+--} \right).
\end{align*}
\end{beispiel}
Having related different Feynman rules by such automorphisms, we may ask how the actual correlation functions are influenced by such a change. Here we can observe a kind of rigidity of the Dyson-Schwinger equation in
\begin{korollar}\label{satz:change-mellin-auto}
	The new correlation function $\phiR \circ X = \intrules \circ \widetilde{X}$ equals the original Feynman rules $\intrules$ applied to a modified perturbation series $\widetilde{X}(\coupling)$, fulfilling an equivalent Dyson-Schwinger equation that merely differs in the cocycles by coboundaries. By \eqref{eq:auto-leading-term} the leading logs coincide and explicitly
	\begin{align*}
		\widetilde{X} (\coupling)
		\defas \auto{\alpha_{\cdot} \circ \intrules} \circ X(\coupling)
		= \1 + 	\sum_{n\in\N} \coupling^n \left( B_n + \dH \alpha_n \right) \left( \widetilde{X}(\coupling)^{1+n\powdep} \right).
	\end{align*}
\end{korollar}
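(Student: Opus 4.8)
The plan is to derive all three assertions — the identity $\phiR \circ X = \intrules \circ \widetilde X$, the modified Dyson-Schwinger equation for $\widetilde X$, and the coincidence of leading logs — by transporting the combinatorial Dyson-Schwinger equation \eqref{eq:dse} through the Hopf algebra automorphism $\rho \defas \auto{\alpha_\cdot \circ \intrules}$ of $H_R$. The first and third claims then come out essentially for free; the middle one carries the only real content.

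First I would recall, from the display preceding the corollary (which combines \eqref{eq:phiR-universal} with \eqref{eq:change-coboundary-equals-auto}), that under the standing hypothesis $\coeff{-1} = -1$ the renormalized Feynman rules built from the Mellin transforms $F_\cdot$ factor as $\phiR = \intrules \circ \rho$ with $\rho = \auto{\alpha_\cdot \circ \intrules}$. By the universal property \ref{satz:H_R-universal}, $\rho$ is the unique Hopf algebra endomorphism of $H_R$ that intertwines the insertion cocycles as $\rho \circ B_n = (B_n + \dH \alpha_n) \circ \rho$ for every $n \in \N$ — this is precisely how $\auto{\cdot}$ is produced in \eqref{eq:change-coboundary-equals-auto}, and the bookkeeping of the whole family $B_\cdot$ is most transparently handled in the decorated Hopf algebra of section \ref{sec:decorations}, where $B_n$ inserts at nodes of decoration $n$. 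Since $\rho$ is a morphism of Hopf algebras it is unital, so applied coefficientwise to $X(\coupling)$ it produces a genuine perturbation series $\widetilde X(\coupling) = \rho\big(X(\coupling)\big) = \1 + \coupling\, H_R[[\coupling]]$, and then $\phiR \circ X = \intrules \circ \rho \circ X = \intrules \circ \widetilde X$, which is the first claim.

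For the Dyson-Schwinger equation of $\widetilde X$ I would apply $\rho$ to both sides of \eqref{eq:dse}. The left side becomes $\widetilde X(\coupling)$. On the right side $\rho(\1) = \1$; since $\rho$ is an algebra morphism fixing $\1$ it commutes with the binomial powers, $\rho\big(X(\coupling)^{1+n\powdep}\big) = \big(\rho X(\coupling)\big)^{1+n\powdep} = \widetilde X(\coupling)^{1+n\powdep}$ (the series $\sum_k \binom{p}{k}(X-\1)^k$ defining $X^{p}$ converges $\coupling$-adically, so $\rho$ may be pulled through it term by term); and finally the intertwining relation turns $\rho \circ B_n$ into $(B_n + \dH \alpha_n) \circ \rho$. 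Collecting these steps gives $\widetilde X(\coupling) = \1 + \sum_{n \in \N} \coupling^n (B_n + \dH \alpha_n)\big(\widetilde X(\coupling)^{1+n\powdep}\big)$, the asserted equation with the cocycles shifted by the coboundaries $\dH \alpha_n$. For the leading logs I would invoke \eqref{eq:auto-leading-term}: $\rho$ sends each tree $t$ to $t$ plus forests of strictly smaller degree, so $\rho(x_k)$ agrees with $x_k$ modulo lower degree; applying $\intrules$, whose coefficient on the top monomial $\x^{\abs{\f}}$ of a forest $\f$ is $1/\f!$ by \ref{satz:phiR-leading-log}, the highest power of $\x$ in each $\coupling$-order of $\intrules \circ \widetilde X = \phiR \circ X$ is unchanged from that of $\intrules \circ X$.

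The main obstacle sits in the third paragraph: one must be precise about the definition of $\auto{\cdot}$ from \eqref{eq:change-coboundary-equals-auto} for a whole family of decorated cocycles rather than a single $B_+$, and about the compatibility of the Hopf algebra morphism $\rho$ with the generally non-polynomial powers $X(\coupling)^{1+n\powdep}$ (which is where unitality of $\rho$ and the $\coupling$-adic convergence of the binomial series enter). Everything else — the factorization $\phiR = \intrules \circ \rho$ and the leading-log statement — is a direct citation of results already established in the excerpt.
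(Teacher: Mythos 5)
Your argument is correct and is essentially the paper's own proof, which consists of applying the automorphism $\auto{\alpha_{\cdot}\,\circ\,\intrules}$ to the Dyson--Schwinger equation \eqref{eq:dse} and using the factorization $\phiR = \intrules \circ \auto{\alpha_{\cdot}\,\circ\,\intrules}$ from the preceding example together with the multiplicativity of the automorphism and the leading-term property \eqref{eq:auto-leading-term}; you merely spell out the details the paper leaves implicit (unitality, the $\coupling$-adic binomial powers, the decorated bookkeeping of section \ref{sec:decorations}). The only slip is cosmetic: the intertwining relation $\auto{\alpha_n\circ\intrules}\circ B_n = \left( B_n + \dH(\alpha_n\circ\intrules)\right)\circ \auto{\alpha_n\circ\intrules}$ is the defining property \eqref{eq:auto} of $\auto{\cdot}$ rather than a consequence of \eqref{eq:change-coboundary-equals-auto}.
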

This follows directly through application of $\auto{\alpha \circ \intrules}$ to the original Dyson-Schwinger equation. Since we consider now many cocycles $B_n$, each of which corresponding to a different Mellin transform $F_n$, also the functionals $\alpha_n$ are now indexed by $n$.

\subsection{Minimal subtraction}
In section \ref{sec:running-coupling} we already understood precisely why the arbitrary choice of the renormalization point $\rp$ in the {\momscheme} does not influence the physical interpretation. Now we can find an equivalent relation for the minimal subtraction scheme in

\begin{korollar}
	Applying \eqref{eq:perturbation-convolution-char} to \eqref{eq:phiR-MSR} or \eqref{eq:MSR-coprod} expresses the correlation function $G_{{\scriptscriptstyle\mathrm{MS}}}$ of the $\Rms$-scheme in terms of $G$ in the {\momscheme} by a redefinition of the coupling constant and an overall factor:
	\begin{equation}
		G_{{\scriptscriptstyle\mathrm{MS}},\scalelog}(\coupling)
		= G_{{\scriptscriptstyle\mathrm{MS}},0}(\coupling)	\cdot
		G_{\scalelog} \Big( \coupling\cdot \left[ G_{{\scriptscriptstyle\mathrm{MS}},0}(\coupling) \right]^{\powdep} \Big).
	\end{equation}
\end{korollar}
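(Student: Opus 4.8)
The plan is to recognise the claimed identity as nothing but the convolution formula \eqref{eq:perturbation-convolution-char} of lemma \ref{lemma:perturbation-convolutions} applied to the factorisation \eqref{eq:MSR-coprod} of $\MSRR$. Concretely, I would set $\Psi \defas \counit \circ \MSRR$ and $\lambda \defas \phiR$ and evaluate the identity $\MSRR = \Psi \convolution \lambda$ on the perturbation series $X(\coupling)$.

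First I would check the hypotheses of lemma \ref{lemma:perturbation-convolutions}. Since $\MSRR = \lim_{\reg\rightarrow 0}\MSR[s]$ is a pointwise limit of characters it is itself a morphism of algebras; hence $\counit\circ\MSRR$ is a character $H_R \rightarrow \K$, which we regard (via the inclusion $\K\hookrightarrow\K[\x]$) as a character valued in $\K[\x]$. The map $\lambda = \phiR$ is linear (in fact a morphism of Hopf algebras by \ref{satz:toyphy-hopfmor}), and $X(\coupling)$ is the solution of a combinatorial Dyson-Schwinger equation, so \eqref{eq:dse-coprod} holds and \eqref{eq:perturbation-convolution-char} is available. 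I would also record that $\Psi\circ X(\coupling) = 1 + \bigo{\coupling}$, so that the fractional power $\big[\Psi\circ X(\coupling)\big]^{\powdep}$ is a well-defined element of $\K[[\coupling]]$ and the substitution $\coupling \mapsto \coupling\big[\Psi\circ X(\coupling)\big]^{\powdep}$ is legitimate on formal power series.

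Then the computation is immediate: combining \eqref{eq:MSR-coprod} with \eqref{eq:perturbation-convolution-char} gives
\begin{equation*}
	\MSRR\circ X(\coupling)
	= \big(\Psi\convolution\phiR\big)\circ X(\coupling)
	= \big[\Psi\circ X(\coupling)\big]\cdot \phiR\circ X\!\left(\coupling\big[\Psi\circ X(\coupling)\big]^{\powdep}\right).
\end{equation*}
It remains to translate back into the notation of definition \ref{def:correlation}: the left-hand side is $G_{{\scriptscriptstyle\mathrm{MS}},\scalelog}(\coupling)$; since $\Psi = \counit\circ\MSRR = \ev_0\circ\MSRR = \restrict{\MSRR}{\scalelog=0}$, we have $\Psi\circ X(\coupling) = G_{{\scriptscriptstyle\mathrm{MS}},0}(\coupling)$; and $\phiR\circ X(\coupling') = G_{\scalelog}(\coupling')$, so the last factor equals $G_{\scalelog}\big(\coupling\, [G_{{\scriptscriptstyle\mathrm{MS}},0}(\coupling)]^{\powdep}\big)$. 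This is precisely the asserted relation.

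There is no serious obstacle: the real content has already been packaged into \eqref{eq:MSR-coprod} (equivalently into \eqref{eq:phiR-MSR}, from which the same follows by taking the physical limit $\reg\rightarrow 0$ and using that $\momsch{\rp}\circ\MSR[s]$ then reduces to the constant character $\counit\circ\MSRR$) and into lemma \ref{lemma:perturbation-convolutions}. The only points requiring a little care are the bookkeeping of which correlation function carries which scale argument, the coalescing of the two ambient target algebras $\K$ and $\K[\x]$ via $\K\hookrightarrow\K[\x]$, and — for a non-perturbative reading — the remark that all the manipulations above are valid as identities of formal power series in $\coupling$ because $x_0=\1$ forces every factor $\Psi\circ X(\coupling)$ to be invertible with constant term $1$.
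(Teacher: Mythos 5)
Your proposal is correct and coincides with the paper's intended argument: the corollary is stated there precisely as an application of \eqref{eq:perturbation-convolution-char} with the character $\counit\circ\MSRR$ and the linear map $\phiR$ to the factorization \eqref{eq:MSR-coprod}, evaluated on $X(\coupling)$, which is exactly what you do. Your additional checks (that $\counit\circ\MSRR$ is a character, that the fractional power is well defined as a formal power series since $x_0=\1$) are sound and only make explicit what the paper leaves implicit.
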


\section{Extensions towards {\qft}}
\label{sec:extensions}

\subsection{Feynman graphs and Feynman integrals}
In the formulation of perturbative {\qft}, the Hopf algebra $H_R$ of rooted trees is replaced by the Hopf algebra $H_{FG}$ of Feynman graphs \cite{CK:RH1}.
Most importantly, it features insertion operators that act like Hochschild-1-cocycles on the relevant subspace of $H_{FG}$. Under the Feynman rules, these result in (divergent) sub integrals just as in \eqref{eq:unrenormiert-def} and the coproduct of $H_{FG}$ again mirrors the structure of subdivergences by definition.

The integrals are multi-dimensional and entail algebraic functions as integrands.
For massive theories, a \emph{Wick rotation} to Euclidean space-time disposes of all singularities in these integrands such that divergences only occur from the integrations at infinity.
These may be renormalized by suitable subtractions again, though one has to face two new issues: Multiple parameters and higher degrees of divergence.

Before addressing these we remark that dimensional regularization \cite{Collins} can be used to introduce a regulator $\reg \in \C \setminus\set{0}$ like in section \ref{sec:regulator} and assign a Laurent series $\toy$ in $\reg$ to every Feynman graph.
Then dimensional analysis reveals a scale dependence of the form
\begin{equation}
		\toy =
		\restrict{\toy}{s=\tilde{s}} \circ \gradAut_{-\reg\scalelog},
\end{equation}
where we employ the grading $Y$ by \emph{loop number} of $H_{FG}$ and $\scalelog=\ln\frac{s}{\tilde{s}}$ encodes the ratio of simultaneous rescaling of all dimensionful parameters.
Therefore the techniques of section \ref{sec:regulator} become available in {\qft} as shown for the first time in \cite{CK:RH2}.

\subsection{Multiple parameters}
Correlation functions of {\qft} typically depend on many variables, namely the masses of internal particles and the momenta of external particles. For illustration consider a logarithmic divergence with two parameters $(s,t)$: We can still renormalize
\begin{align}\label{eq:two-parameters}
	\phiR[(s,t)] =
		\int_0^\infty \left[ 
			\frac{x\ \dd x}{(x+s)(x+t)}
			-
			\frac{x\ \dd x}{(x+\tilde{s})(x+\tilde{t})}
		\right]
		= \frac{s\ln s - t\ln t}{t-s}
			-
			\frac{\tilde{s}\ln\tilde{s} - \tilde{t}\ln\tilde{t}}{\tilde{t}-\tilde{s}}
\end{align}
by a single subtraction at a reference point $(\tilde{s},\tilde{t})$ in the parameter space.
However, the function is no longer a plain logarithm in a single variable $\scalelog$.
In fact the dependence of correlation functions on the parameters becomes indeed extremely complicated and is only fully understood for the simplest (one-loop) Feynman graphs or slightly better in special situations like massless or supersymmetric theories, with on-shell conditions or in space-time dimensions different from four.

Crucially though the fundamental properties of a theory are described by asymptotic behavior, and the renormalization group still persists: If we rescale all parameters simultaneously by a factor $e^{\scalelog}$, then \eqref{eq:two-parameters} simplifies drastically to
$
	\phiR[(\tilde{s}\cdot e^{\scalelog}, \tilde{t} \cdot e^{\scalelog})]
	= -\scalelog.
$

Analogously to the single scale case we considered, the presence of subdivergences requests additional subtractions generating richer dependence on $\scalelog$, which nevertheless stays polynomial throughout.

Explicitly, encode all parameters as multiples of a distinguished \emph{scale} $s$ and dimensionless ratios $\theta \in \Theta$ called \emph{angles} and choose a renormalization point $(\tilde{s}, \tilde{\Theta})$ for the subtraction scheme $\momsch{(\tilde{s},\tilde{\Theta})}$ evaluating $(s,\Theta) \mapsto (\tilde{s},\tilde{\Theta})$. Then we can state (proof is provided in \cite{Panzer:Mellin} for Feynman integrals) the replacement for \eqref{eq:toyexp} as
\begin{satz}
	The renormalized Feynman rules
	$
		\phiR 
		= \restrict{\phiR}{\Theta=\tilde{\Theta}} \convolution \restrict{\phiR}{s=\tilde{s}}
	$
	factorize into the angle-dependent part $\restrict{\phiR}{s=\tilde{s}}$ (independent of $\scalelog$ and only a function of $\Theta$ and $\tilde{\Theta}$) and the scale-dependence $\restrict{\phiR}{\Theta=\tilde{\Theta}}$.
	The later depends only on $\scalelog=\frac{s}{\tilde{s}}$ (and the fixed renormalization point angles $\tilde{\Theta}$) and defines a morphism of Hopf algebras
	\begin{equation}
		\restrict{\phiR}{\Theta=\tilde{\Theta}}
		= \exp_{\convolution}\left( -\scalelog \period \right):
		\quad H \rightarrow \K[\scalelog].
	\end{equation}
	Its generator $\period \in \infchars{H}{\K}$, commonly called \emph{period}, is given by
$	\period %
	\defas \restrict{-\partial_{\scalelog}}{\scalelog=0} \restrict{\phiR}{\Theta=\tilde{\Theta}}$.
\end{satz}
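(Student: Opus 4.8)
The plan is to reduce the statement to the one-scale theory of sections \ref{sec:noregulator}--\ref{sec:RGE} by peeling the overall dilatation off the dimensionless data. First I would reinstate a regulator $\reg$ as in section \ref{sec:extensions}, so that the unrenormalized Feynman rules become a genuine character $\phi_{(s,\Theta)}\in\chars{H}{\alg}$ into Laurent series and $\momsch{(\tilde s,\tilde\Theta)}=\ev_{(s,\Theta)\mapsto(\tilde s,\tilde\Theta)}$ is a character of $\alg$, hence Rota--Baxter; then \ref{satz:birkhoff} together with \eqref{eq:birkhoff-character} gives $\toyR=\phi_{(\tilde s,\tilde\Theta)}^{\convolution -1}\convolution\phi_{(s,\Theta)}$ before the physical limit. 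The single analytic input is dimensional analysis: since the angles are scale invariant while $s$ carries the unit, a simultaneous rescaling of all dimensionful parameters is implemented by the grading $Y$ (loop number for $H_{FG}$, node number for $H_R$), so that $\phi_{(s,\Theta)}=\phi_{(\tilde s,\Theta)}\circ\gradAut_{-\reg\scalelog}$ with $\scalelog=\ln\tfrac{s}{\tilde s}$, \emph{uniformly} in $\Theta$. I also assume the generalisation of theorem \ref{satz:finiteness} (equivalently proposition \ref{satz:finiteness-algebraic}) so that the physical limit $\phiR=\lim_{\reg\to 0}\toyR$ exists; as the parenthetical indicates, for genuine Feynman integrals this is the substance of \cite{Panzer:Mellin}.

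The factorization is then a short manipulation. Using the scaling property, $\toyR=\phi_{(\tilde s,\tilde\Theta)}^{\convolution -1}\convolution(\phi_{(\tilde s,\Theta)}\circ\gradAut_{-\reg\scalelog})$; inserting $e=(\phi_{(\tilde s,\tilde\Theta)}\circ\gradAut_{-\reg\scalelog})^{\convolution -1}\convolution(\phi_{(\tilde s,\tilde\Theta)}\circ\gradAut_{-\reg\scalelog})$ and exploiting that $\gradAut_t$ is a Hopf algebra automorphism, so $(\alpha\convolution\beta)\circ\gradAut_t=(\alpha\circ\gradAut_t)\convolution(\beta\circ\gradAut_t)$ and $\alpha^{\convolution -1}\circ\gradAut_t=(\alpha\circ\gradAut_t)^{\convolution -1}$, the first two factors collapse via the scaling property ($\phi_{(\tilde s,\tilde\Theta)}\circ\gradAut_{-\reg\scalelog}=\phi_{(s,\tilde\Theta)}$) to $\phi_{(\tilde s,\tilde\Theta)}^{\convolution -1}\convolution\phi_{(s,\tilde\Theta)}=\restrict{\toyR}{\Theta=\tilde\Theta}$ and the last two to $(\phi_{(\tilde s,\tilde\Theta)}^{\convolution -1}\convolution\phi_{(\tilde s,\Theta)})\circ\gradAut_{-\reg\scalelog}=\restrict{\toyR}{s=\tilde s}\circ\gradAut_{-\reg\scalelog}$, i.e.
\begin{equation*}
	\toyR = \restrict{\toyR}{\Theta=\tilde\Theta} \convolution \bigl(\restrict{\toyR}{s=\tilde s}\circ\gradAut_{-\reg\scalelog}\bigr).
\end{equation*}
Passing to the physical limit, where $\gradAut_{-\reg\scalelog}\to\id$ and each factor converges by the finiteness assumption --- the limit commuting with the pointwise finite convolution along the grading --- yields $\phiR=\restrict{\phiR}{\Theta=\tilde\Theta}\convolution\restrict{\phiR}{s=\tilde s}$. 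That $\restrict{\phiR}{s=\tilde s}$ is $\scalelog$-independent is immediate (it sets $s=\tilde s$, i.e. $\scalelog=0$), and that $\restrict{\phiR}{\Theta=\tilde\Theta}$ depends on $s$ and $\tilde s$ only through $\scalelog$ is the scaling property again.

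For the last assertion I would observe that, again by the scaling property, $\restrict{\toyR}{\Theta=\tilde\Theta}=\phi_{(\tilde s,\tilde\Theta)}\circ(S\convolution\gradAut_{-\reg\scalelog})$, which is exactly the object handled by the lemma of section \ref{sec:RGE} that identifies the anomalous dimension with $\Res\,\toy\circ(S\convolution Y)$. Since $\phi_{(\tilde s,\tilde\Theta)}$ renormalizes finitely, that lemma hands us
\begin{equation*}
	\restrict{\phiR}{\Theta=\tilde\Theta} = \exp_{\convolution}(-\scalelog\period),\qquad \period \defas \restrict{-\partial_{\scalelog}}{\scalelog=0}\restrict{\phiR}{\Theta=\tilde\Theta} = \Res\bigl(\phi_{(\tilde s,\tilde\Theta)}\circ(S\convolution Y)\bigr) \in \infchars{H}{\K}
\end{equation*}
(membership by proposition \ref{satz:S*Y-infchar}). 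Because the powers of the single infinitesimal character $\period$ commute, $\restrict{\phiR[\scalelog]}{\Theta=\tilde\Theta}\convolution\restrict{\phiR[\scalelog']}{\Theta=\tilde\Theta}=\restrict{\phiR[\scalelog+\scalelog']}{\Theta=\tilde\Theta}$, and as in corollary \ref{satz:toyphy-hopfmor} and \eqref{eq:toyexp} this renormalization group equation is equivalent to $\restrict{\phiR}{\Theta=\tilde\Theta}\colon H\rightarrow\K[\scalelog]$ being a morphism of Hopf algebras; alternatively, when the angle-frozen unrenormalized rules keep the recursive shape \eqref{eq:unrenormiert-def} with a $\tilde\Theta$-dependent weight, corollary \ref{satz:toyphy-hopfmor} applies verbatim with a $\tilde\Theta$-dependent Mellin transform.

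The main obstacle is not any of the above, which is purely formal once the limits are known to exist, but precisely that finiteness input: for real Feynman integrals one must show that freezing the angles to $\tilde\Theta$ still produces one-scale subintegrals meeting the logarithmic-growth estimates of theorem \ref{satz:finiteness}, uniformly enough that both $\lim_{\reg\to 0}\phi_{(\tilde s,\tilde\Theta)}\circ(S\convolution\gradAut_{-\reg\scalelog})$ and $\lim_{\reg\to 0}\restrict{\toyR}{s=\tilde s}$ exist. The delicate point is that the angle dependence lives inside the integrands, so one has to stay away from the exceptional configurations where extra infrared or collinear singularities would develop --- this is the analysis performed in \cite{Panzer:Mellin}.
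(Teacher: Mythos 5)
The paper does not actually prove this theorem in the text: it states it with the remark that the ``proof is provided in \cite{Panzer:Mellin} for Feynman integrals'', so there is no in-text argument to compare against. Your algebraic skeleton is sound and is precisely the route the surrounding material suggests: the scale dependence $\toy = \restrict{\toy}{s=\tilde{s}}\circ\gradAut_{-\reg\scalelog}$ from dimensional analysis, the regularized Birkhoff decomposition \eqref{eq:birkhoff-character}, the automorphism property of $\gradAut_t$ to split $\toyR$ into $\restrict{\toyR}{\Theta=\tilde{\Theta}}\convolution\bigl(\restrict{\toyR}{s=\tilde{s}}\circ\gradAut_{-\reg\scalelog}\bigr)$ (your insertion of $e$ is the same manipulation as \eqref{eq:regularized-renormalized}), and then the lemma of section \ref{sec:RGE} applied to the angle-frozen character, giving $\restrict{\phiR}{\Theta=\tilde{\Theta}}=\exp_{\convolution}(-\scalelog\period)$ with $\period\in\infchars{H}{\K}$ and hence, as in corollary \ref{satz:toyphy-hopfmor} and \eqref{eq:toyexp}, the Hopf-morphism property. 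You also correctly locate the only genuine content that your argument does not supply: the analytic input that the angle-frozen and scale-frozen integrands still meet the finiteness/locality conditions of proposition \ref{satz:finiteness-algebraic}, uniformly enough for both partial physical limits to exist --- which is exactly the part the paper itself outsources to \cite{Panzer:Mellin}. So the proposal is a correct conditional proof that matches the paper's intended approach; the acknowledged analytic estimates are the deferred substance, not a flaw in your reasoning.
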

We also recommend \cite{BrownKreimer:AnglesScales} for a different decomposition and detailed analysis of the angle- and scale dependence.

\subsection{Higher degrees of divergence}
So far we restricted ourselves to logarithmic divergences only. Recall from \eqref{eq:intro-()-def} that in this case though $\phi$ itself diverges, the derivative $\frac{\partial}{\partial s} \phi$ is convergent (by differentiating the integrand we obtain an integrable form).

In general one defines the \emph{superficial degree of divergence} $\sdd$ by simple power counting of the integrand $f$ such that $f(\zeta) \in \bigo{\zeta^{\sdd-n}}$, where $n$ counts the number of variables that we integrate over and the asymptotics are to be understood as all of these variables approaching $\infty$ jointly (for rational functions $f$, $\sdd$ is plainly the degree of the numerator minus the degree of the denominator, less the number of variables in the integral).

In this situation we find that any derivative $\frac{\partial^{\sdd+k}}{\partial s^{\sdd+k}} \phi$ for $k\in\N$ is convergent. 
Hence we can renormalize and keep these derivatives intact by subtracting a polynomial in $\K[s]_{\leq \sdd}$ of degree $\leq \sdd$.
In the logarithmic case, this freedom is precisely a single constant we parametrized by $\rp$ so far.

This renormalization scheme is common practice in {\qft} under the name BPHZ and the involved analytic estimates on the integrands necessary to prove the finiteness (as we did in theorem \ref{satz:finiteness} in the simple setup of definition \ref{def:unrenormiert}) have been worked out in \cite{Weinberg:HighEnergy,Zimmermann:Bogoliubov}. Variants exist for massless theories as well, while minimal subtraction (in connection with dimensional regularization) is particularly popular to handle gauge theories.

Subtractions of different polynomials in the external parameters are encoded into the Hopf algebra $H_{FG}$ by adjunction of auxiliary marked vertices and specification of \emph{external structures} \cite{CK:RH1}, also called \emph{form factor decomposition}.

As we apply different subtractions depending on the value of $\sdd$ for each individual graph, the algebraic formulation of section \ref{sec:renormalization} needs to be relaxed to encompass \emph{Rota-Baxter families} \cite{FardBondiaPatras:LieApproach}.
The \emph{exponential renormalization} introduced in \cite{FardPatras:ExponentialRenormalization} allows for even more general renormalization schemes and further explicitly captures the idea of order-by-order renormalization through counterterms common in physics.
It also shows the role of reparametrizations of the coupling constant we commented on in section \ref{sec:DSE} and we further recommend \cite{FardPatras:ExponentialRenormalization2} focussing on BPHZ.

We close by only briefly mentioning the increasing freedom in the Feynman rules coming along with growing degrees of divergence. If the divergences can attain arbitrarily high degrees, infinitely many subtraction terms are necessary in the renormalization process and thus generate as many constants (like our subtraction point $\rp$) to be fixed. Such a theory is unfortunately called \emph{unrenormalizable}: Though it \emph{is} renormalizable, it loses any predictive power due to the infinity of unknown constants.

Contrary, \emph{renormalizable} theories adhere to an upper bound on the degree of divergences that occur. Therefore only finitely many parameters (\emph{renormalization conditions}) have to be fixed through measurements, whereafter all other processes may in principle be predicted.

The divergences of scalar field theory together with their renormalization according to BPHZ is for example elaborated on in \cite{BrownKreimer:AnglesScales} in the parametric representation, though most textbooks contain at least a basic account of this theme.

\subsection{Overlapping divergences}
Historically, the possibility of overlapping (compared to nested and disjoint) subdivergences in Feynman graphs caused difficulties in proofs of renormalizability.
However the forest formula and its Hopf algebraic counterparts (the coproduct and the antipode) as for example formulated in \cite{CK:RH1} encompass overlapping divergences seamlessly, without the need of a special separate treatment.

The article \cite{Kreimer:Overlapping} explains how this is achieved and \cite{KrajewskiWulkenhaar:KreimersHopf} constructs an alternative Hopf algebra to address these structures and clarifies their equivalence.

Also note that the so-called \emph{multi-scale renormalization} avoids this issue altogether, as is explained in \cite{KrajewskiRivasseauTanasa:HopfMultiscale} with particular emphasis on its Hopf-algebraic formulation and the relations between these different Hopf algebras.

\subsection{Systems of Dyson-Schwinger equations}
Until now we only considered a single Dyson-Schwinger equation in section \ref{sec:DSE}, though {\qft}s typically involve different types of fields (like fermions, photons or scalars) and also a variety of couplings (\emph{vertices}). Each of those is represented by an (mostly) infinite series over Feynman graphs and these series may be inserted into each other in many ways.

These changes can be incorporated combinatorially by considering systems of Dyson-Schwinger equations as for example studied in detail in \cite{Foissy:Systems,Foissy:GeneralSystems}.

\section{Summary}
\label{sec:conclusion}

We reviewed renormalization of logarithmic ultraviolet divergences in the Hopf algebraic framework working with the rooted trees $H_R$.
In the {\momscheme} we arrived at the same renormalized Feynman rules $\phiR$ either by direct integration or with an analytic regulator being present.

After renormalization, the physical limit revealed a very special structure as being a Hopf algebra morphism \mbox{$\phiR\!: H_R \rightarrow \K[\x]$}.
This is the \emph{renormalization group property} and reduces the full character down to the linear terms $\toylog$ only. 

The minimal subtraction scheme does not allow for such a simple description: Though we could obtain the scale dependence, the constant terms in this scheme are not as easily understood.

All along the case of Feynman rules that can be described by the Mellin transform is special in that it gives simple explicit recursions for the renormalization process.

In section \ref{sec:DSE} studying the correlation functions, we understood the physical equivalence of two different renormalization schemes and the renormalization group.
We hinted at how these lead to possible non-perturbative formulations. 

Hochschild cohomology appeared ubiquitously in form of the universal property of rooted trees, governing most constructions we made.
Its importance lies not only in the concept and power of Dyson-Schwinger equations alone, but also in the induced automorphisms \eqref{eq:auto} of $H_R$ which help to understand variations of Feynman rules coming from different Mellin transforms.

\appendix
\section{The Hopf algebra of rooted trees}
\label{sec:H_R}

Lo\"{i}c Foissy studied rooted trees in depth and we only mention his thesis \cite{Foissy:PhD1,Foissy:PhD2} and in particular the article \cite{Foissy:FiniteComodules} which discloses the structure of $H_R$ as a free shuffle algebra. Here we restrict to introduce the notions relevant to understand this article and in particular elaborate on the universal property.

As an algebra, $H_R = \K[\trees]$ is free commutative\footnote{%
We consider \emph{unordered} trees $\scalebox{0.7}{$\tree{++-++---}$}=\scalebox{0.7}{$\tree{+++--+--}$}$ and forests $\tree{+-}\tree{++--}=\tree{++--}\tree{+-}$, sometimes called \emph{non-planar}.%
}
 generated by the \emph{rooted trees} $\trees$ and spanned by their disjoint unions (products) called \emph{rooted forests} $\forests$:
	\begin{equation*}
		\trees = 
			\set{\tree{+-}, \tree{++--}, \tree{+++---}, \tree{++-+--}, \tree{++++----}, \tree{+++-+---}, \tree{++-++---}, \tree{++-+-+--}, \ldots}
		,\quad
		\forests = 
				\set{\1} 
	\cup 	\trees
	\cup 	\set{\tree{+-}\tree{+-}, \tree{+-}\tree{+-}\tree{+-}, \tree{+-}\tree{++--}, \tree{+-}\tree{+-}\tree{+-}\tree{+-}, \tree{+-}\tree{+-}\tree{++--}, \tree{+-}\tree{++-+--}, \tree{+-}\tree{+++---}, \ldots}.
	\end{equation*}
Every $\f\in\forests$ is just the monomial $\f=\prod_{t\in\comps(\f)} t$ of its multiset of tree components $\comps(\f)$, while $\1$ denotes the empty forest. The number $\abs{\f}\defas \abs{V(\f)}$ of nodes $V(\f)$ induces the grading $ H_{R,n} = \lin \forests_n $ where $\forests_n \defas \setexp{\f\in\forests}{\abs{\f}  = n}$.
\begin{definition}
	The (linear) \emph{grafting operator} $B_+ \in \End(H_R)$ attaches all trees of a forest to a new root, for example
$	B_+ \left( \1 \right)	= \tree{+-} $,
$	B_+ \left( \tree{+-} \right) = \tree{++--}$ and
$	B_+ \left( \tree{+-}\tree{+-} \right)	= \tree{++-+--}$.
\end{definition}
$B_+$ is homogenous of degree one and restricts to a bijection $B_+\!:\ \forests \rightarrow \trees$. The coproduct $\Delta$ is defined to make $B_+$ a cocycle by requiring
\begin{equation}
	\Delta \circ B_+
	= B_+ \tp \1 + (\id \tp B_+) \circ \Delta.
	\label{eq:B_+-cocycle}
\end{equation}
\begin{lemma}\label{satz:B_+-cocycle}
	$B_+(\1) = \tree{+-} \neq 0$ implies that $0\neq [B_+] \in \HH[1](H_R)$ is non-trivial.
\end{lemma}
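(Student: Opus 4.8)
The plan is to deduce non-triviality purely formally from Lemma \ref{satz:cocycle-props}, with essentially no computation. First I would record that $B_+$ really is a one-cocycle: by its defining property \eqref{eq:B_+-cocycle} we have $\Delta \circ B_+ = (\id \tp B_+) \circ \Delta + B_+ \tp \1$, which is exactly the condition $B_+ \in \HZ[1](H_R)$, so the class $[B_+] = B_+ + \HB[1](H_R) \in \HH[1](H_R)$ is defined. The whole content then lies in showing this class is not the zero class, i.e. that $B_+$ is not a coboundary.

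The key step is to invoke the well-defined map $\HH[1](H_R) \rightarrow \Prim(H_R)$, $[L] \mapsto L(\1)$ from Lemma \ref{satz:cocycle-props}. Evaluating it on $[B_+]$ gives $B_+(\1) = \tree{+-}$, which is nonzero in $H_R$ (it is the unique tree with one node, a basis element in degree one). A nonzero image under a well-defined map on the quotient $\HH[1](H_R)$ forces $[B_+] \neq 0$. Equivalently, and perhaps cleaner to spell out: if $[B_+]$ were zero, then $B_+ = \dH\alpha$ for some $\alpha \in H_R'$, and the last assertion of Lemma \ref{satz:cocycle-props} (namely $\dH\alpha(\1) = 0$, which is immediate from \eqref{eq:dH} since $\Delta(\1) = \1 \tp \1$ and $\counit(\1)=1$) would yield $\tree{+-} = B_+(\1) = \dH\alpha(\1) = 0$, a contradiction.

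There is essentially no obstacle here — the only thing to be careful about is making sure the two ingredients being quoted are exactly the ones stated: that $\tree{+-}$ is primitive (automatic, being homogeneous of degree one, so $\cored(\tree{+-}) = 0$) and that coboundaries annihilate $\1$. Both are already contained in Lemma \ref{satz:cocycle-props}, so the proof reduces to assembling them. I would present the argument in the contrapositive form above, as it makes the role of $B_+(\1) = \tree{+-} \neq 0$ most transparent and needs only a single line.
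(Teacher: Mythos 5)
Your proof is correct and is precisely the argument the paper intends: the cocycle property of $B_+$ comes for free from the definition \eqref{eq:B_+-cocycle} of the coproduct, and non-triviality follows because coboundaries annihilate the unit ($\dH\alpha(\1)=0$ by Lemma \ref{satz:cocycle-props}), whereas $B_+(\1)\neq 0$. Nothing is missing; the contrapositive formulation you give is exactly the one-line reasoning the paper relies on (and reuses later to show $\polyint$ is not a coboundary in $\K[\x]$).
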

$H_R$ is characterized through the \emph{universal property} (theorem 2 in \cite{CK:NC}) of
\begin{satz}\label{satz:H_R-universal}
	For an algebra $\alg$ and $L \in \End(\alg)$ there exists a unique morphism $\unimor{L}\!: H_R \rightarrow \alg$ of unital algebras such that
	\begin{equation}
		\unimor{L} \circ B_+ = L \circ \unimor{L},
		\quad \text{equivalently} \quad
		\vcenter{\xymatrix{
		{H_R} \ar[r]^{\unimor{L}} \ar[d]_{B_+} & {\alg} \ar[d]^{L} \\
			{H_R} \ar[r]_{\unimor{L}} & {\alg}
		}}
		\quad \text{commutes.}
		\label{eq:H_R-universal}
	\end{equation}
	In case of a bialgebra $\alg$ and a cocycle $L \in \HZ[1](\alg)$, $\unimor{L}$ is a morphism of bialgebras and even of Hopf algebras when $\alg$ is Hopf.
\end{satz}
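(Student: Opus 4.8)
The plan is to construct $\unimor{L}$ recursively along the grading of $H_R$ and then verify the two coalgebra compatibilities by a second induction, the whole argument turning on the interplay of the cocycle property of $L$ in $\alg$ with the defining cocycle property \eqref{eq:B_+-cocycle} of $B_+$ in $H_R$.

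\emph{Existence and uniqueness of the algebra morphism.} Since $H_R = \K[\trees]$ is free commutative on the set of trees and every tree is uniquely $t = B_+(\f)$ for a forest $\f$ with $\abs{\f} = \abs{t}-1$, I would \emph{define} $\unimor{L}$ recursively by $\unimor{L}(\1) \defas 1_\alg$, then $\unimor{L}(t) \defas L\bigl(\unimor{L}(\f)\bigr)$ whenever $t = B_+(\f)$, and $\unimor{L}(\f) \defas \prod_{t\in\comps(\f)} \unimor{L}(t)$ for a general forest. The recursion is well founded because $\abs{\f} < \abs{t}$, and freeness extends $\unimor{L}$ uniquely to a unital algebra morphism on all of $H_R$. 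The intertwining identity $\unimor{L}\circ B_+ = L\circ\unimor{L}$ holds on forests by construction, hence on $H_R$ by linearity. Conversely, any unital algebra morphism $\rho$ with $\rho\circ B_+ = L\circ\rho$ is forced to satisfy $\rho(\1) = 1_\alg$ and $\rho(t) = L(\rho(\f))$ on trees, so $\rho = \unimor{L}$ follows by induction on $\abs{\cdot}$.

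\emph{The bialgebra case.} Suppose $\alg$ is a bialgebra and $L\in\HZ[1](\alg)$. Both $\Delta_\alg\circ\unimor{L}$ and $(\unimor{L}\tp\unimor{L})\circ\Delta$ are unital algebra morphisms $H_R\to\alg\tp\alg$, and $\counit_\alg\circ\unimor{L}$ and $\counit$ are algebra morphisms $H_R\to\K$; since $H_R$ is generated as an algebra by the trees it suffices to check that each pair agrees on trees, which I would do by induction on $\abs{t}$. Counitality on $t = B_+(\f)$ reads $\counit_\alg(\unimor{L}(t)) = \counit_\alg\bigl(L(\unimor{L}(\f))\bigr) = 0 = \counit(t)$, using $\im L \subseteq \ker\counit_\alg$ from Lemma \ref{satz:cocycle-props}. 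For the coproduct on $t = B_+(\f)$ I would chain the cocycle identity for $L$, the induction hypothesis on the strictly-lower-degree forest $\f$, the intertwining identity together with $\unimor{L}(\1) = 1_\alg$, and finally the cocycle identity \eqref{eq:B_+-cocycle} for $B_+$:
\[
\Delta_\alg\bigl(\unimor{L}(t)\bigr)
= (\id\tp L)\bigl((\unimor{L}\tp\unimor{L})(\Delta\f)\bigr) + \unimor{L}(t)\tp 1_\alg
= (\unimor{L}\tp\unimor{L})\bigl((\id\tp B_+)(\Delta\f) + B_+(\f)\tp\1\bigr)
= (\unimor{L}\tp\unimor{L})(\Delta t).
\]
Because $H_{R,n}$ is spanned by the degree-$n$ trees together with products of forests of smaller degree, and the displayed identity is multiplicative, the induction closes and $\unimor{L}$ is a morphism of bialgebras.

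\emph{The Hopf case and the main obstacle.} When $\alg$ carries an antipode $S_\alg$, nothing further needs to be constructed: $H_R$ is connected graded, hence Hopf with antipode $S$, and for any bialgebra morphism $f$ between Hopf algebras both $f\circ S$ and $S_\alg\circ f$ are two-sided $\convolution$-inverses of $f$ in $\Hom(H_R,\alg)$, so they coincide and $\unimor{L}\circ S = S_\alg\circ\unimor{L}$. The one place that calls for care is organizing the second induction so that the step for a degree-$n$ tree $t = B_+(\f)$ may legitimately invoke coproduct compatibility for the degree-$(n-1)$ \emph{forest} $\f$ (not merely for trees of lower degree); once the induction hypothesis is phrased on all of $H_{R,<n}$ this is immediate from multiplicativity, and then the two cocycle identities dovetail exactly as in the display, so there is no substantive obstacle beyond this bookkeeping.
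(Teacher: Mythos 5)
Your proof is correct. The paper does not prove this theorem itself but cites it (theorem 2 of \cite{CK:NC}), and your argument --- recursive definition of $\unimor{L}$ via the bijection $B_+\!:\forests\rightarrow\trees$ and freeness of $H_R=\K[\trees]$, followed by an induction on the grading in which the cocycle identity for $L$, the intertwining relation and the cocycle identity \eqref{eq:B_+-cocycle} for $B_+$ combine to give comultiplicativity, with the antipode compatibility then automatic for a bialgebra morphism between Hopf algebras --- is exactly the standard proof of that cited result, including the correct bookkeeping of stating the induction hypothesis on all of $H_{R,<n}$ so that it applies to the forest $\f$ and not just to trees.
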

	Note that $\unimor{L} = e$ trivializes if $L\in\HB[1](\alg)$ is a coboundary.
\begin{beispiel}
	The action of $\unimor{L}$ is plainly replacement of $B_+$ by $L$:
\begin{equation*}
	\unimor{L} \left( \tree{++-+--} - 3\tree{+-} \right)
	= \unimor{L} \left\{ B_+ \left( {\left[B_+(\1) \right]}^2 \right) - 3 B_+ (\1) \right\}
	= L \left(  {\left[ L(\1_{\alg}) \right]}^2 \right)  - 3 L(\1_{\alg}).
\end{equation*}
\end{beispiel}
\begin{beispiel}\label{ex:tree-factorial}
	The cocycle $\polyint\in\HZ[1](\K[\x])$ of appendix \ref{sec:polynomials} induces the character
	\begin{equation}\label{eq:int-rules}
		\intrules \defas \unimor{\polyint}\in\chars{H_R}{\K[\x]}
		\quad\text{mapping any forest $\f\in\forests$ to}\quad
		\intrules (\f) 
		= \frac{\x^{\abs{\f}}}{\f!},
		\quad\text{using}
	\end{equation}
\end{beispiel}
\hide{
\begin{proof}
	The inductive proof (start at $f=\1$ trivial) supposes \eqref{eq:int-rules} to be true for all $f\in \forests_{\leq n}$. Then \eqref{eq:int-rules} also holds for any forest $f\in\forests_{n+1}$ with $\abs{\pi_0 (f)} > 1$ as
	\begin{equation*}
		  \intrules (f) 
		= \prod_{t \in \pi_0 (f)} \intrules (t) 
		\urel{\eqref{eq:int-rules}}
			\prod_{t \in \pi_0 (f)} \frac{\x^{\abs{t}}}{t!} 
		= \frac{\x^{\sum_{t\in \pi_0(f)} \abs{t}}}{\prod_{t \in \pi_0(f)} t!} 
		\urel{\eqref{eq:tree-factorial}}
			\frac{\x^{\abs{f}}}{f!},
	\end{equation*}
	exploiting $\abs{t} \leq n$ for any $t\in\pi_0(f)$ to use the induction hypothesis. It remains to consider a tree $t = B_+ (f)$ for some $f\in \forests_n$ in
	\begin{equation*}
		\intrules (t)
		= \intrules \circ B_+ (f) 
		\urel{\eqref{eq:H_R-universal}}
			\int_0 \circ \:\intrules (f) 
		\urel{\eqref{eq:int-rules}}
			\int_0^\x \frac{y^{\abs{f}}}{f!} \ \dd y
		= \frac{\x^{\abs{f}+1}}{(\abs{f}+1) \cdot f!} 
		= \frac{\x^{\abs{B_+ (f)}}}{\left( B_+ f \right)!}
		\urel{\eqref{eq:tree-factorial}}
			\frac{\x^{\abs{t}}}{t!}. \qedhere
	\end{equation*}
\end{proof}
}%
\begin{definition}\label{def:tree-factorial}
	The \emph{tree factorial} $(\cdot)!\in\chars{H_R}{\K}$ is for any $\f\in\forests$ given by
	\begin{equation}\label{eq:tree-factorial}
		\left[ B_+(\f) \right]! 
		= \f! \cdot \abs{B_+(\f)}
		,\quad\text{equivalently}\quad
		\f! 
		\urel{\footnotemark}
		\prod_{v\in V(\f)} \abs{\f_v}.
	\end{equation}
	\footnotetext{By $\f_v$ we denote the subtree of $\f$ rooted at the node $v\in V(\f)$.}
\end{definition}

\subsection{Automorphisms of \texorpdfstring{$H_R$}{H\_R}}

Applying the universal property to $H_R$ itself, adding coboundaries to $B_+$ leads to
\begin{definition}\label{def:auto}
	For any $\alpha \in H_R'$, theorem \ref{satz:H_R-universal} defines the Hopf algebra morphism
	\begin{equation}
		\auto{\alpha} \defas \unimor{B_+ + \dH\alpha}\!:\ H_R \rightarrow H_R
		\quad\text{such that}\quad
		\auto{\alpha} \circ B_+ = \left[ B_+ + \dH \alpha \right] \circ \auto{\alpha}.
		\label{eq:auto}
	\end{equation}
\end{definition}
\begin{beispiel}\label{ex:auto}
	The action on the simplest trees yields
\begin{align*}
	\auto{\alpha} \left( \tree{+-} \right) 
		&= \auto{\alpha} \circ B_+ (\1)
		 = B_+ (\1) + (\dH\alpha) (\1) 
		 = B_+ (\1)
		 = \tree{+-}, \\
	\auto{\alpha} \left( \tree{++--} \right)
		&= \auto{\alpha} \circ B_+ \left( \tree{+-} \right)
		 = \left( B_+ + \dH\alpha \right) \auto{\alpha} \left( \tree{+-} \right)
		 = \tree{++--} + \dH\alpha \left( \tree{+-} \right)
		 = \tree{++--} + \alpha(\1) \tree{+-}, \\
	\auto{\alpha} \left( \tree{+++---} \right)
		&
		 = \tree{+++---} + 2 \alpha(\1) \tree{++--} + \left\{ {\left[ \alpha(\1) \right]}^2 + \alpha\left( \tree{+-} \right) \right\} \tree{+-}
		\quad\text{and}\quad
	\auto{\alpha} \left( \tree{++-+--} \right)
		 = \tree{++-+--} + 2 \alpha \left( \tree{+-} \right) \tree{+-} + \alpha (\1) \tree{+-}\tree{+-}.
\end{align*}
\end{beispiel}
These morphisms capture how $\unimor{L}$ reacts to variation of $L$ by a coboundary in
\begin{satz}\label{satz:change-coboundary-equals-auto}
	Let $H$ denote a bialgebra, $L \in HZ^1_{\counit}(H)$ a 1-cocycle and further $\alpha \in H'$ a functional. Then for $\unimor{L},\unimor{L+\dH \alpha}\!:\ H_R \rightarrow H$ given through theorem \ref{satz:H_R-universal} and $\auto{\alpha \circ \unimor{L}}\!:\ H_R \rightarrow H_R$ from definition \ref{def:auto}, we have
	\begin{equation}
		\unimor{L + \dH \alpha} = \unimor{L} \circ \auto{\left[\alpha\, \circ\, \unimor{L}\right]},
		\quad\text{equivalently}\quad
		\vcenter{\xymatrix{
			{H_R} \ar[r]^{\unimor{L + \dH\alpha}} \ar[d]_{\auto{\alpha \circ \unimor{L}}}  & {H} \\
			{H_R} \ar[ur]_{\unimor{L}} &
			}}
		\quad\text{commutes.}
		\label{eq:change-coboundary-equals-auto}
	\end{equation}
\end{satz}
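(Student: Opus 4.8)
The plan is to deduce the claim from the \emph{uniqueness} half of the universal property \ref{satz:H_R-universal}. Abbreviate $\beta \defas \alpha \circ \unimor{L} \in H_R'$ and $\chi \defas \auto{\beta}\colon H_R \to H_R$. Since $L + \dH\alpha \in \End(H)$, the map $\unimor{L+\dH\alpha}\colon H_R \to H$ is, by definition, \emph{the} unique morphism of unital algebras with $\unimor{L+\dH\alpha} \circ B_+ = (L+\dH\alpha)\circ\unimor{L+\dH\alpha}$. Now $\chi$ is a morphism of (Hopf) algebras by \ref{satz:H_R-universal} because $B_+ + \dH\beta$ is a $1$-cocycle, and $\unimor{L}$ is an algebra morphism as well; hence $\unimor{L}\circ\chi$ is a morphism of unital algebras, and it suffices to verify that it intertwines $B_+$ with $L+\dH\alpha$.

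For this I would first combine $\chi \circ B_+ = (B_+ + \dH\beta)\circ\chi$ from \eqref{eq:auto} with $\unimor{L}\circ B_+ = L \circ \unimor{L}$ from \eqref{eq:H_R-universal} to get
\begin{equation*}
	(\unimor{L}\circ\chi)\circ B_+
	= \unimor{L}\circ(B_+ + \dH\beta)\circ\chi
	= L\circ\unimor{L}\circ\chi \;+\; \unimor{L}\circ\dH\beta\circ\chi .
\end{equation*}
Thus the whole theorem reduces to the single identity $\unimor{L}\circ\dH\beta = \dH\alpha\circ\unimor{L}$: granting it, the right-hand side equals $(L+\dH\alpha)\circ(\unimor{L}\circ\chi)$, and uniqueness yields $\unimor{L}\circ\chi = \unimor{L+\dH\alpha}$, which is exactly \eqref{eq:change-coboundary-equals-auto}.

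It remains to establish that identity, which is just naturality of the Hochschild differential under the coalgebra morphism $\unimor{L}$. Since $L \in \HZ[1](H)$, the universal property makes $\unimor{L}$ a morphism of bialgebras, so $\Delta_H\circ\unimor{L} = (\unimor{L}\tp\unimor{L})\circ\Delta$. Writing $\dH\alpha = (\id\tp\alpha)\circ\Delta - u\circ\alpha$ as in \eqref{eq:dH} and unwinding $\beta = \alpha\circ\unimor{L}$, I would compute, for $x\in H_R$ in Sweedler notation,
\begin{align*}
	\unimor{L}\big(\dH\beta(x)\big)
	&= \sum_x \unimor{L}(x_1)\,\alpha\big(\unimor{L}(x_2)\big) - \alpha\big(\unimor{L}(x)\big)\,\1 \\
	&= (\id\tp\alpha)\big(\Delta_H(\unimor{L}(x))\big) - \alpha\big(\unimor{L}(x)\big)\,\1
	= \dH\alpha\big(\unimor{L}(x)\big),
\end{align*}
the middle equality being precisely coalgebra-compatibility of $\unimor{L}$.

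Everything here is formal, so there is no serious obstacle; the one point not to skip — in that sense the crux — is that $\unimor{L}$ respects the coproduct, which is exactly where the cocycle hypothesis $L \in \HZ[1](H)$ enters (for a general $L\in\End(H)$ the map $\unimor{L}$ would only be an algebra morphism and the identity $\unimor{L}\circ\dH\beta = \dH\alpha\circ\unimor{L}$ would have no reason to hold). Choosing the right functional, $\beta = \alpha\circ\unimor{L}$ rather than $\alpha$, is also what makes the bookkeeping close up. As a consistency check one can compare both sides on small trees against \ref{ex:auto}, e.g.\ $\unimor{L+\dH\alpha}(\tree{++--}) = L(L(\1)) + \alpha(\1)\,L(\1)$ against $\unimor{L}(\tree{++--} + \alpha(\1)\tree{+-})$, using $\alpha\circ\unimor{L}(\1)=\alpha(\1)$.
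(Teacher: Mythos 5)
Your proposal is correct and is essentially the paper's argument in different packaging: both hinge on the same two facts, namely the defining relations $\unimor{L}\circ B_+=L\circ\unimor{L}$, $\auto{\alpha\circ\unimor{L}}\circ B_+=\bigl[B_++\dH(\alpha\circ\unimor{L})\bigr]\circ\auto{\alpha\circ\unimor{L}}$ and the naturality identity $(\dH\alpha)\circ\unimor{L}=\unimor{L}\circ\dH(\alpha\circ\unimor{L})$, which the paper attributes to $\unimor{L}$ being a bialgebra morphism and which you verify explicitly in Sweedler notation. The only cosmetic difference is that you invoke the uniqueness clause of theorem \ref{satz:H_R-universal} where the paper spells out the corresponding induction over forests; the substance is identical.
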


\begin{proof}
	As both sides of \eqref{eq:change-coboundary-equals-auto} are algebra morphisms, it suffices to prove it inductively for trees: Let it be true for a forest $\f\in\forests$, then it holds as well for the tree $ B_+(\f)$ by
	\begin{align*}
		&\unimor{L} \circ \auto{\left[\alpha \circ \unimor{L}\right]} \circ B_+ (\f)
		\wurel{\eqref{eq:H_R-universal}} \unimor{L} \circ \left[ B_+ + \dH \left( \alpha \circ \unimor{L} \right) \right] \circ \auto{\left[\alpha \circ \unimor{L}\right]} (\f) \\
		&= \Big\{ 
												L \circ \unimor{L}
		 					+ 
																				 (\dH \alpha) \circ \unimor{L}
			 \Big\} \circ \auto{\left[\alpha \circ \unimor{L}\right]} (\f) 
		= \left\{ L + \dH \alpha \right\} \circ 
		 			\underbrace{
							\unimor{L} \circ \auto{\left[\alpha \circ \unimor{L}\right]} (\f)
					}_{
							\unimor{L + \dH\alpha} (\f)
					}
		\urel{\eqref{eq:H_R-universal}} \unimor{L + \dH\alpha} \circ B_+ (\f).
	\end{align*}
	We used $(\dH\alpha) \circ \unimor{L} = \unimor{L} \circ \dH \left( \alpha \circ \unimor{L} \right)$, following from $\unimor{L}$ being a morphism of bialgebras.
\end{proof}

\begin{satz}\label{satz:auto}
	The map $\auto{\cdot}\!:\ H_R' \rightarrow \End_{\text{Hopf}}(H_R)$, taking values in the space of Hopf algebra endomorphisms of $H_R$, fulfills the following properties:
	\begin{enumerate}
		\item For any $\f\in\forests$ and $\alpha\in H_R'$, $\auto{\alpha}(\f)$ differs from $\f$ only by lower order forests:
			\begin{equation}
				\auto{\alpha} (\f) \in \f + H_R^{\abs{\f}-1}
				= \f + \bigoplus_{n=0}^{\abs{\f}-1} H_{R,n}.
				\label{eq:auto-leading-term}
			\end{equation}
		\item $\auto{\cdot}$ maps $H_R'$ into the Hopf algebra automorphisms $\Aut_{\text{Hopf}} (H_R)$.
		Its image is closed under composition, as for any \mbox{$\alpha,\beta \in H_R'$} we have
			\begin{equation}
				\auto{\alpha} \circ \auto{\beta} = \auto{\gamma}
				\quad\text{upon setting}\quad
				\gamma = \alpha + \beta \circ {\auto{\alpha}}^{-1}.
				\label{eq:auto-composition}
			\end{equation}
		\item The maps $\dH\!: H_R' \rightarrow HZ^1_{\counit}(H_R)$ and $\auto{\cdot}\!: H_R' \rightarrow \Aut_{\text{Hopf}}(H_R)$ are injective,
		thus the subgroup $\im \auto{\cdot} = \setexp{\auto{\alpha}}{\alpha \in H_R'} \subset \Aut_{\text{Hopf}}(H_R)$ induces a group structure on $H_R'$ with neutral element $0$ and group law $\autoconc$ given by
			\begin{equation}
				\alpha \autoconc \beta 
				\defas \auto{\cdot}^{-1} \left( \auto{\alpha} \circ \auto{\beta} \right)
				\urel{\eqref{eq:auto-composition}}
				\alpha + \beta \circ \auto{\alpha}^{-1}
				\quad\text{and}\quad
				\alpha^{\autoconc -1} = -\alpha \circ \auto{\alpha}.
				\label{eq:auto-group}
			\end{equation}
	\end{enumerate}
\end{satz}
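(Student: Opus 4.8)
The plan is to handle the three parts in order, using throughout the uniqueness half of the universal property~\ref{satz:H_R-universal} together with the observation that $\auto{\alpha}=\unimor{B_+ + \dH\alpha}$ is automatically a morphism of Hopf algebras, since $B_+ + \dH\alpha$ is again a $1$-cocycle ($\HZ[1](H_R)$ being a linear subspace of $\End(H_R)$ containing both $B_+$ and the coboundary $\dH\alpha$). For part~(1) I would induct along the grading. As $\auto{\alpha}$ is an algebra morphism and the forests span $H_R$, it suffices to treat a tree $t=B_+(\f)$, where $\auto{\alpha}(t)=(B_+ + \dH\alpha)(\auto{\alpha}(\f))$ by~\eqref{eq:auto}. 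Writing $\auto{\alpha}(\f)=\f+r$ with $r\in H_R^{\abs{\f}-1}$ by the inductive hypothesis, the only point is that $B_+$ raises degree by exactly one while $\dH\alpha$ raises it not at all: for homogeneous $x$ one has $\dH\alpha(x)=\alpha(\1)\,x+(\id\tp\alpha)\,\cored(x)$, and every left tensor factor occurring in $\cored(x)$ has degree $<\abs{x}$, so $\dH\alpha(x)\in H_R^{\abs{x}}$. Hence $B_+(r)$, $\dH\alpha(\f)$ and $\dH\alpha(r)$ all lie in $H_R^{\abs{t}-1}$, which is~\eqref{eq:auto-leading-term}.

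For part~(2), bijectivity of $\auto{\alpha}$ follows from part~(1): it preserves the finite-dimensional filtration steps $H_R^n$ and acts unitriangularly on each of them in a forest basis ordered by degree, hence invertibly, and the inverse of a bijective Hopf algebra morphism is again one. For the composition law I would verify that $\auto{\alpha}\circ\auto{\beta}$ obeys the defining relation of $\auto{\gamma}$ with $\gamma=\alpha+\beta\circ\auto{\alpha}^{-1}$ and conclude by uniqueness. The single computation required is the identity $\auto{\alpha}\circ\dH\beta=\dH(\beta\circ\auto{\alpha}^{-1})\circ\auto{\alpha}$, obtained by expanding $\dH\beta=(\id\tp\beta)\circ\Delta-u\circ\beta$ and pulling $\auto{\alpha}$ through using $\Delta\circ\auto{\alpha}=(\auto{\alpha}\tp\auto{\alpha})\circ\Delta$ and $\auto{\alpha}\circ u=u$; feeding this and $\auto{\alpha}\circ B_+=(B_+ + \dH\alpha)\circ\auto{\alpha}$ into $\auto{\alpha}\circ\auto{\beta}\circ B_+$, and using linearity of $\dH$, yields $\auto{\alpha}\circ\auto{\beta}\circ B_+=(B_+ + \dH\gamma)\circ(\auto{\alpha}\circ\auto{\beta})$, so $\auto{\alpha}\circ\auto{\beta}=\auto{\gamma}$. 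Since $\auto{0}=\unimor{B_+}=\id$, the image is closed under composition and contains the identity; solving $\auto{\alpha}\circ\auto{\beta}=\auto{0}$ for $\beta$ in the composition formula gives $\beta=-\alpha\circ\auto{\alpha}$, so $\im\auto{\cdot}$ is closed under inverses too and is a subgroup of $\Aut_{\text{Hopf}}(H_R)$.

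For part~(3), injectivity of $\auto{\cdot}$ reduces to that of $\dH$: if $\auto{\alpha}=\auto{\beta}$ then comparing the cocycle relations gives $\dH(\alpha-\beta)\circ\auto{\alpha}=0$, and $\auto{\alpha}$ is surjective by part~(2), so $\dH(\alpha-\beta)=0$. The main obstacle is proving that $\dH\alpha=0$ forces $\alpha=0$. From $\cored(\tree{+-})=0$ one gets $\alpha(\1)=0$, and then $\cored(\tree{++--})=\tree{+-}\tp\tree{+-}$ gives $\alpha(\tree{+-})=0$; so if $\alpha\neq 0$ there is a least degree $m\geq 2$ on which $\alpha$ is nonzero, and I pick a forest $\f_0$ of degree $m$ with $\alpha(\f_0)\neq 0$ maximizing the number $\ell$ of its tree components equal to $\tree{+-}$. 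Applying $\dH\alpha=(\id\tp\alpha)\circ\cored$ to the degree-$(m+1)$ forest $\tree{+-}\cdot\f_0$ and reading off the coefficient of $\tree{+-}$, one finds it equals $(1+\ell)\,\alpha(\f_0)$: the term $\tree{+-}\tp\f_0$ of the coproduct contributes $\alpha(\f_0)$, isolating each of the $\ell$ components equal to $\tree{+-}$ leaves $\f_0$ with that component deleted, which the leading $\tree{+-}$ restores, contributing a further $\alpha(\f_0)$, and every other term vanishes because $\alpha$ is there applied either to a forest of degree $<m$ or to a degree-$m$ forest with more than $\ell$ components equal to $\tree{+-}$ (zero by minimality of $m$, respectively maximality of $\ell$). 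Thus $(1+\ell)\,\alpha(\f_0)=0$, impossible in characteristic zero. Hence $\dH$ and $\auto{\cdot}$ are injective, and transporting the group law of $\im\auto{\cdot}$ along the bijection $\auto{\cdot}$ endows $H_R'$ with a group structure having neutral element $0$ (as $\auto{0}=\id$) and with product $\autoconc$ and inversion exactly those displayed in~\eqref{eq:auto-group}, read off from the composition formula of part~(2).
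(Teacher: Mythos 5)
Your proposal is correct, and for part (1), part (2) and the reduction of injectivity of $\auto{\cdot}$ to injectivity of $\dH$ it runs essentially parallel to the paper: the paper proves \eqref{eq:auto-leading-term} by the same induction (using $\dH\alpha(H_R^n)\subseteq H_R^n$), and it remarks that this filtration argument already gives bijectivity, though it then prefers to exhibit the inverse explicitly as $\auto{\alpha}^{-1}=\auto{\tilde\alpha}$ with $\tilde\alpha=-\alpha\circ\auto{\alpha}$ and to obtain \eqref{eq:auto-composition} by invoking theorem \ref{satz:change-coboundary-equals-auto} with $L=B_+ +\dH\alpha$; you instead re-derive the needed identity $\auto{\alpha}\circ\dH\beta=\dH\left(\beta\circ\auto{\alpha}^{-1}\right)\circ\auto{\alpha}$ and conclude by uniqueness in \ref{satz:H_R-universal}, which is the same mechanism inlined rather than quoted. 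The genuine difference lies in the key lemma that $\dH\alpha=0$ forces $\alpha=0$: the paper first shows $\alpha\left({\tree{+-}}^{\,n}\right)=0$, then projects $\dH\alpha\left({\tree{+-}}^{\,n}\f\right)=0$ onto $\K\,\tree{+-}$ to obtain the recursion $\alpha\left(\f\,{\tree{+-}}^{\,n-1}\right)=-\tfrac1n\sum_{\f:\ \f'=\tree{+-}}\alpha\left({\tree{+-}}^{\,n}\f''\right)$, whose iteration exhibits $\alpha(\f)$ as a multiple of $\alpha\left({\tree{+-}}^{\,\abs{\f}}\right)=0$; you instead run an extremal argument, choosing $\f_0$ of minimal degree $m$ with $\alpha(\f_0)\neq 0$ and maximal number $\ell$ of single-node components and reading off the coefficient of $\tree{+-}$ in $\dH\alpha\left(\tree{+-}\,\f_0\right)=0$, namely $(1+\ell)\,\alpha(\f_0)$, the leaf-cut terms dying by maximality of $\ell$. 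Both are sound in characteristic zero; your version avoids the bookkeeping of the iterated recursion, while the paper's yields an explicit reduction formula. One cosmetic remark: in your coefficient count the clause about arguments of degree $<m$ is vacuous, since every right Sweedler factor relevant to the coefficient of $\tree{+-}$ has degree exactly $m$, but this does not affect the argument.
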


\begin{proof}
	Statement \eqref{eq:auto-leading-term} is an immediate consequence of $\dH\alpha (H_R^n) \subseteq H_R^n$: Starting from $\auto{\alpha}\left( \tree{+-} \right) = \tree{+-}$, suppose inductively \eqref{eq:auto-leading-term} to hold for forests $\f, \f'\in \forests$. Then it obviously also holds for $\f \cdot \f'$ as well and even so for $B_+(\f)$ through
	\begin{equation*}
		\auto{\alpha} \circ B_+(\f)
		=\left[ B_+ + \dH\alpha \right] \circ \auto{\alpha} (\f)
		\subseteq \left[ B_+ + \dH\alpha \right] \left( \f + H_R^{\abs{\f}-1} \right)
		\subseteq B_+(\f) + H_R^{\abs{\f}}.
	\end{equation*}
	This already implies bijectivity of $\auto{\alpha}$, but applying \eqref{eq:change-coboundary-equals-auto} to $L=B_+ + \dH\alpha$ and $\auto{\tilde{\alpha}}$ for $\tilde{\alpha} \defas - \alpha \circ \auto{\alpha}$ shows $\id = \auto{\alpha} \circ \auto{\tilde{\alpha}}$ directly. We deduce bijectivity of all $\auto{\alpha}$ and thus $\auto{\alpha}\in\Aut_{\text{Hopf}}(H_R)$ with the inverse $\auto{\alpha}^{-1}=\auto{\tilde{\alpha}}$. Now \eqref{eq:auto-composition} follows from
	\begin{equation*}
		\auto{\left[\alpha+\beta \,\circ\, \auto{\alpha}^{-1}\right]}
		= \unimor{\left[B_+ + \dH \alpha \right] + \dH \left( \beta \,\circ\, \auto{\alpha}^{-1} \right)}
		\urel{\eqref{eq:change-coboundary-equals-auto}} \unimor{\left[B_+ + \dH \alpha\right]} \circ \auto{\big[\beta \ \circ\ \auto{\alpha}^{-1} \ \circ\ \unimor{\left(B_+ + \dH \alpha\right)}\big]}
		= \auto{\alpha} \circ \auto{\beta}.
	\end{equation*}
	Finally consider $\alpha,\beta \in H_R'$ with $\auto{\alpha} = \auto{\beta}$, then
	$
		0
		= (\auto{\alpha} - \auto{\beta}) \circ B_+
		= \dH \circ (\alpha - \beta) \circ \auto{\alpha}
	$
	reduces the injectivity of $\auto{\cdot}$ to that of $\dH$. But if $\dH\alpha = 0$, for all $n\in\N_0$
	\begin{equation*}
		0
		= \dH\alpha \left( {\tree{+-}}^{n+1} \right)
		= \sum_{i=0}^n \binom{n+1}{i} \alpha \left( {\tree{+-}}^i \right) {\tree{+-}}^{n+1-i}
		\quad\text{implies}\quad
		\alpha\left( {\tree{+-}}^n \right) = 0.
	\end{equation*}
	Given an arbitrary forest $\f\in\forests$ and $n\in\N$, the expression
	\begin{align*}
		0 
		&= \dH\alpha \left( {\tree{+-}}^n \f \right)
		= \f \underbrace{\alpha\left( {\tree{+-}}^n \right)}_0 + \sum_{\f} \sum_{i=0}^n \binom{n}{i}{\tree{+-}}^i \f' \alpha\left( {\tree{+-}}^{n-i} \f'' \right) 
		  + \sum_{i=1}^n \binom{n}{i} \bigg[ {\tree{+-}}^i \f \underbrace{\alpha\left( {\tree{+-}}^{n-i} \right)}_0 + {\tree{+-}}^i \alpha\left( \f {\tree{+-}}^{n-i} \right) \bigg]
	\end{align*}
	simplifies upon projection onto $\K \tree{+-}$ to
	$
		\alpha\left( \f {\tree{+-}}^{n-1} \right)
		= - \frac{1}{n}\sum_{\f:\ \f'=\tree{+-}} \alpha\left( {\tree{+-}}^n \f'' \right)
	$.
	Iterating this formula exhibits $\alpha(\f)$ as a scalar multiple of $\alpha\big( {\tree{+-}}^{\abs{\f}} \big)=0$ and proves $\alpha=0$.
\end{proof}

\subsection{Decorated rooted trees}
\label{sec:decorations}

Our observations generalize straight forwardly to the Hopf algebra $H_R(\decor)$ of rooted trees with decorations drawn from a set $\decor$. In this case, the universal property assigns to each $\decor$-indexed family $L_{\cdot}\!:\ \decor \rightarrow \End(\alg)$ the unique algebra morphism 
	\begin{equation*}
			\unimor{L_{\cdot}}\!: H_R(\decor) \rightarrow \alg
			\quad\text{such that}\quad
			\unimor{L_{\cdot}} \circ B_+^d = L_d \circ \unimor{L_{\cdot}}
			\quad\text{for any $d \in \decor$}.
	\end{equation*}
For cocycles $\im L_{\cdot} \subseteq HZ^1_{\counit}(\alg)$ this is a morphism of bialgebras and even of Hopf algebras (should $\alg$ be Hopf).
For a family $\alpha_{\cdot}\!\!:\ \decor \rightarrow H_R'(\decor)$ of functionals, setting $L_d^{\alpha_{\cdot}} \defas B_+^d + \dH \alpha_d $ yields an automorphism $\auto{\alpha_{\cdot}} \defas \unimor{L_{\cdot}^{\alpha_{\cdot}}}$ of the Hopf algebra $H_R(\decor)$.
Theorems \ref{satz:change-coboundary-equals-auto} and \ref{satz:auto} generalize in the obvious way.

\section{The Hopf algebra of polynomials}
\label{sec:polynomials}

\begin{lemma}
	Requiring $\Delta (x) = x\tp \1 + \1 \tp x$ induces a unique Hopf algebra structure on the polynomials ${\K}[x]$. It is graded by degree, connected, commutative and cocommutative with $\Delta \left( x^n \right) = \sum_{i=0}^n \binom{n}{i} x^i \tp x^{n-i}$ and the primitive elements are precisely $\Prim \left( {\K}[x] \right) = \K \cdot x $.
\end{lemma}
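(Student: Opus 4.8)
The plan is to exploit that $\K[\x]$ is the \emph{free} commutative unital $\K$-algebra on the single generator $\x$: any unital algebra homomorphism out of $\K[\x]$ --- in particular the coproduct $\Delta$ and the counit $\counit$ --- is uniquely determined by its value on $\x$, and any identity between algebra maps $\K[\x]\to A$ need only be checked on $\x$. This gives the uniqueness clause at once: prescribing $\Delta(\x)=\x\tp\1+\1\tp\x$ leaves no freedom for $\Delta$ as an algebra map, and counitality forces $\counit(\x)=0$, which again fixes $\counit$.

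First I would verify that this $\Delta$ is a coassociative, counital coproduct. Both $(\Delta\tp\id)\circ\Delta$ and $(\id\tp\Delta)\circ\Delta$ are algebra maps $\K[\x]\to\K[\x]^{\tp 3}$ sending $\x\mapsto \x\tp\1\tp\1+\1\tp\x\tp\1+\1\tp\1\tp\x$, hence coincide; similarly $(\counit\tp\id)\circ\Delta$ and $(\id\tp\counit)\circ\Delta$ both send $\x\mapsto\x$, so both equal $\id$. Cocommutativity is the remark that the flip $\tau$ fixes $\x\tp\1+\1\tp\x$, so $\tau\circ\Delta$ and $\Delta$ are algebra maps agreeing on $\x$. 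Grading by degree ($H_n=\K\cdot\x^n$) is compatible because $\Delta(\x)$ is homogeneous of total degree $1$, whence by multiplicativity $\Delta(H_n)\subseteq\bigoplus_{i+j=n}H_i\tp H_j$; since $H_0=\K\cdot\1$, the bialgebra is connected, and by the general fact recalled in Section~\ref{sec:notation} a connected graded bialgebra automatically carries an antipode (here explicitly $S(\x)=-\x$), so it is a Hopf algebra. The formula $\Delta(\x^n)=\sum_{i=0}^n\binom ni \x^i\tp\x^{n-i}$ is the binomial theorem applied to the commuting elements $\x\tp\1$ and $\1\tp\x$ of the commutative ring $\K[\x]\tp\K[\x]$.

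It remains to compute $\Prim(\K[\x])=\ker\cored$. The inclusion $\K\cdot\x\subseteq\Prim(\K[\x])$ is immediate. Conversely, writing $p=\sum_{n\ge 0}a_n\x^n$ and expanding $\Delta(p)$ by the binomial formula, the condition $\Delta(p)=p\tp\1+\1\tp p$ forces the coefficient of $\x^i\tp\x^{n-i}$ to vanish for all $0<i<n$; reading off $\x^1\tp\x^{n-1}$ for $n\ge 2$ gives $n\,a_n=0$, so $a_n=0$, while comparing constant terms gives $a_0(\1\tp\1)=2a_0(\1\tp\1)$, so $a_0=0$. Hence $p=a_1\x\in\K\cdot\x$.

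The whole statement is essentially routine once the universal property of $\K[\x]$ as a free commutative algebra is in hand; the only step requiring a little care is the primitives computation, where one must extract the correct coefficients from $\Delta(p)$ --- but even there the homogeneous components of $\cored$ decouple and the argument is short.
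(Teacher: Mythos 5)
Your proof is correct and complete; the paper itself states this lemma without proof, treating it as standard, and your argument --- uniqueness and all coalgebra identities checked on the generator via the freeness of $\K[\x]$, the antipode from connectedness, the binomial formula, and the coefficient extraction for $\Prim(\K[\x])$ --- is exactly the routine verification that is being suppressed. The only implicit ingredient worth flagging is that concluding $a_n=0$ from $n\,a_n=0$ uses characteristic zero, which is harmless here since the paper takes $\K$ to be $\R$ or $\C$.
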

The \emph{integration operator} $\polyint\!: x^n \mapsto \frac{1}{n+1}x^{n+1}$ is a cocycle $\polyint \in \HZ[1]({\K}[x])$ as
	\begin{align*}
		\Delta \polyint \left( \frac{x^n}{n!} \right)
		&= \Delta \left( \frac{x^{n+1}}{(n+1)!} \right)
		= \sum_{k=0}^{n+1} \frac{x^k}{k!} \tp \frac{x^{n+1-k}}{(n+1-k)!} \\
		&= \frac{x^{n+1}}{(n+1)!} \tp \1 + \sum_{k=0}^n \frac{x^k}{k!} \tp \polyint \left(\frac{x^{n-k}}{(n-k)!} \right)
		= \left[ \polyint \tp \1 + \left(\id \tp \polyint \right) \circ \Delta \right] \left( \frac{x^n}{n!} \right),
	\end{align*}
and is not a coboundary since $\polyint 1 = x \neq 0$. In fact it generates the cohomology by
\begin{satz}
	$\HH[1] (\K[x]) = \K \cdot [ \polyint ]$ is one-dimensional as the 1-cocycles are
	\begin{equation}
		\HZ[1](\K[x]) 
		= \K \cdot \polyint 
			\ \oplus\ 
			\dH \left( \K[x]' \right)
		= \K \cdot \polyint 
			\ \oplus\ 
			\HB[1](\K[x]).
		\label{eq-polys-cycles}
	\end{equation}
\end{satz}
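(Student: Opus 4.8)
The plan is to establish the displayed decomposition \eqref{eq-polys-cycles}; the assertion $\HH[1](\K[x]) = \K\cdot[\polyint]$ then follows at once, since $\polyint\neq 0$. The inclusion $\supseteq$ is immediate ($\polyint$ is a cocycle by the computation just above, and coboundaries are cocycles), and the sum is direct: by Lemma \ref{satz:cocycle-props} every coboundary $\dH\alpha$ kills $\1$, whereas $\polyint(\1)=x\neq 0$. So the only real work is to show that every cocycle $L\in\HZ[1](\K[x])$ lies in $\K\polyint+\HB[1](\K[x])$.

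Since $\Delta$, $u$ and $\counit$ are homogeneous of degree $0$, both the cocycle equation $\Delta\circ L=(\id\tp L)\circ\Delta+L\tp\1$ and the coboundary operator $\dH$ respect the grading of $\End(\K[x])$ by degree. I would therefore decompose $L=\sum_{d\in\Z}L^{(d)}$ into its homogeneous components, $L^{(d)}(\K[x]_n)\subseteq\K[x]_{n+d}$ (a locally finite sum, as only finitely many $d$ act nontrivially on a fixed $x^n$); each $L^{(d)}$ is again a cocycle. It then suffices to show $L^{(1)}\in\K\polyint$, $L^{(d)}=0$ for $d\geq 2$, and $L^{(d)}\in\HB[1](\K[x])$ for $d\leq 0$ via a coboundary $\dH\alpha^{(d)}$ of degree $d$; reassembling the latter into $\alpha:=\sum_{d\leq 0}\alpha^{(d)}\in\K[x]'$ (which prescribes one value per degree and so is a well-defined functional) yields $L=\lambda\polyint+\dH\alpha$.

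For a fixed homogeneous cocycle write $L^{(d)}(x^n)=a_n x^{n+d}$, with the conventions $a_n=0$ whenever $n+d<0$ and $a_n=0$ when $n+d=0$ (using $\im L\subseteq\ker\counit$ from Lemma \ref{satz:cocycle-props}). Evaluating the cocycle equation on $x^n$ and using $\Delta(x^m)=\sum_i\binom{m}{i}x^i\tp x^{m-i}$, the coefficient of $x\tp x^{n+d-1}$ yields the recursion $(n+d)\,a_n=n\,a_{n-1}$ for all $n\geq 2-d$. The remaining datum is $L^{(d)}(\1)=a_0 x^d$, which by Lemma \ref{satz:cocycle-props} is primitive, hence (as $\Prim(\K[x])=\K x$) vanishes unless $d=1$. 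Now the cases fall out: for $d\geq 2$ the recursion propagates $a_0=0$ to $a_n=0$ for all $n$, so $L^{(d)}=0$; for $d=1$ the recursion solves to $a_n=a_0/(n+1)$, i.e. $L^{(1)}=a_0\polyint$; and for $d\leq 0$ one has $a_n=0$ for $n\leq -d$ while the recursion telescopes to $a_n=\binom{n}{-d}\,a_{1-d}/(1-d)$ for $n\geq 1-d$, which is precisely $\dH\alpha^{(d)}(x^n)$ for the functional $\alpha^{(d)}$ supported in degree $-d$ with $\alpha^{(d)}(x^{-d}):=a_{1-d}/(1-d)$.

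I expect the only delicate point to be the bookkeeping of degrees and index ranges: making sure the recursion is invoked only where valid ($n\geq 2-d$, so that the boundary term $a_n x^{n+d}\tp\1$ does not interfere), correctly pinning down the single free parameter in each degree from the primitivity constraint, and matching the telescoped solution against the binomial coefficients $\binom{n}{-d}$ produced by $\dH\alpha^{(d)}$. Everything beyond that is routine linear algebra with polynomial coefficients.
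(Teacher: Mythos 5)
Your proof is correct, but it follows a genuinely different route than the paper's. You decompose a cocycle $L$ into homogeneous components $L^{(d)}$ with respect to polynomial degree (legitimate: $\Delta$, $u$, $\counit$ are graded, so each component is again a cocycle and the decomposition is pointwise finite), and then solve the cocycle condition degree by degree: the recursion $(n+d)a_n = n a_{n-1}$ for $n+d\geq 2$, combined with $\im L\subseteq\ker\counit$ and the primitivity of $L(\1)$ from lemma \ref{satz:cocycle-props}, kills $L^{(d)}$ for $d\geq 2$, pins $L^{(1)}\in\K\polyint$, and identifies each $L^{(d)}$ with $d\leq 0$ as $\dH\alpha^{(d)}$ for a functional supported in degree $-d$; your index ranges and the binomial telescoping check out, and reassembling the $\alpha^{(d)}$ into a single $\alpha\in\K[x]'$ is unproblematic since the dual is unrestricted. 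The paper argues without any degree decomposition: it subtracts $a_{-1}\polyint$ so that the resulting cocycle $\tilde L$ satisfies $\tilde L(\1)=0$, observes that then $\tilde L\circ\polyint$ is again a cocycle, and iterates this to read off $L = a_{-1}\polyint + \dH\alpha$ with the closed formula $\alpha = \partial_0\circ L\circ\polyint$, i.e.\ $\alpha\bigl(\tfrac{x^n}{n!}\bigr)=a_n$. Your version is more elementary and explicit, and it yields the extra structural observation that a cocycle on $\K[x]$ can raise degree by at most one; the paper's iteration is shorter, avoids the case distinction, and produces the coboundary functional in closed form. One cosmetic point: one-dimensionality of $\HH[1](\K[x])$ requires $\polyint\notin\HB[1](\K[x])$, not merely $\polyint\neq 0$; this is precisely your directness argument ($\dH\alpha(\1)=0$ for all $\alpha$, while $\polyint\1 = x\neq 0$), so phrase it that way rather than as a consequence of $\polyint\neq 0$.
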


\begin{proof}
	For an arbitrary cocycle $L \in \HZ[1] (\K[x])$, lemma \ref{satz:cocycle-props} ensures $L(1) = x a_{-1}$ where $a_{-1} \defas \partial_0 L(1)$. Hence $\tilde{L} \defas L - a_{-1}\polyint \in \HZ[1]$ fulfills $\tilde{L}(1)=0$, so $L_0 \defas \tilde{L} \circ \int_0 \in \HZ[1]$ by
	\begin{align*}
		\Delta \circ L_0
		= (\id \tp \tilde{L}) \circ \Delta \circ \polyint + (\tilde{L} \tp 1) \circ \polyint
		= (\id \tp L_0) \circ \Delta + L_0 \tp 1 + \tilde{L}(1) \cdot \polyint.
	\end{align*}
	Repeating the argument inductively yields $a_n\defas\partial_0 L_n(1) = \partial_0 \circ L \circ \polyint^{n+1}(1)\in\K$ and $L_{n+1} \defas (L_n - a_n\polyint) \circ \polyint \in \HZ[1]$, so for any $n\in\N_0$ we may read off from
	\begin{align*}
		L\circ\polyint^n (1) 
		= a_{-1} \polyint^{n+1} (1) + \ldots + a_{n-2} \polyint^2 (1) + L_{n-1}(1)
		= a_{-1} \polyint \left( \polyint^n 1 \right) + \sum_{j=0}^{n-1} a_j \polyint^{n-j} (1)
	\end{align*}
	that indeed $L=a_{-1}\polyint + \dH\alpha$ for the functional $\alpha\defas\partial_0\circ L \circ \polyint$ with $\alpha (\frac{x^n}{n!}) = a_n$.
\end{proof}
\begin{lemma}\label{satz:poly-coboundaries}
	Up to subtraction
	$
		P
		= \dH\counit
		= \id - \ev_0\!:
		{\K}[\x] \twoheadrightarrow \ker\counit
		= \x{\K}[\x]
	$
	of the constant part, direct computation exhibits $\dH \alpha$ as the differential operator
	\begin{equation}\label{eq:poly-coboundaries}
		\dH \alpha
		= P \circ \sum_{n\in\N_0} \alpha \left( \tfrac{x^n}{n!} \right) \partial^n
		\in \End({\K}[\x])
		\quad\text{for any}\quad
		\alpha \in \K[\x]'.
	\end{equation}
\end{lemma}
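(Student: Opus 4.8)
The plan is to check the asserted operator identity on the monomial basis $\{\x^m\}_{m\in\N_0}$ of $\K[\x]$ and then conclude by linearity. I would first record that the right-hand side is a well-defined element of $\End(\K[\x])$: although the sum $\sum_{n\in\N_0}\alpha(\tfrac{\x^n}{n!})\partial^n$ is formally infinite, $\partial^n$ annihilates every polynomial once $n$ exceeds its degree, so on a fixed $\x^m$ only the terms with $0\le n\le m$ survive.

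Next comes the computation. Unwinding the definition \eqref{eq:dH}, $\dH\alpha=(\id\tp\alpha)\circ\Delta-u\circ\alpha$, and inserting the binomial coproduct $\Delta(\x^m)=\sum_{i=0}^m\binom{m}{i}\x^i\tp\x^{m-i}$ yields
\[
	\dH\alpha(\x^m)=\sum_{i=1}^m\binom{m}{i}\,\alpha(\x^{m-i})\,\x^i,
\]
the $i=0$ contribution $\alpha(\x^m)\cdot\1$ being exactly what $u\circ\alpha$ cancels (consistent with $\im\dH\alpha\subseteq\ker\counit$ from Lemma~\ref{satz:cocycle-props}). For the right-hand side I would use $\partial^n\x^m=\tfrac{m!}{(m-n)!}\x^{m-n}$ for $n\le m$; substituting $i=m-n$ turns $\sum_{n\ge0}\alpha(\tfrac{\x^n}{n!})\partial^n\x^m$ into $\sum_{i=0}^m\binom{m}{i}\,\alpha(\x^{m-i})\,\x^i$, whose unique constant term is the $i=0$ summand $\alpha(\x^m)$. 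Since $P=\id-\ev_0$ removes precisely that constant term, $P\circ\sum_{n}\alpha(\tfrac{\x^n}{n!})\partial^n$ and $\dH\alpha$ agree on every $\x^m$, hence as operators.

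I do not anticipate any genuine obstacle here: this is bookkeeping with binomial coefficients plus one re-indexing, together with the remark that $P$ strips exactly the constant that $u\circ\alpha$ subtracts. The only points worth stating explicitly are the degenerate case $m=0$, where both sides vanish in agreement with $\dH\alpha(\1)=0$, and the special case $\alpha=\counit$ with $\counit(\tfrac{\x^n}{n!})=\delta_{n,0}$, which recovers the identity $P=\dH\counit$ recorded in the statement.
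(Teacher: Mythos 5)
Your computation is correct and is exactly the ``direct computation'' the paper leaves implicit: evaluating $\dH\alpha=(\id\tp\alpha)\circ\Delta-u\circ\alpha$ on the monomial basis via the binomial coproduct and matching it, after the re-indexing $i=m-n$, with $P\circ\sum_{n}\alpha(\tfrac{\x^n}{n!})\partial^n$, noting that $P=\id-\ev_0$ strips precisely the constant $\alpha(\x^m)$ that $u\circ\alpha$ subtracts. Your remarks on pointwise finiteness of the sum, the case $m=0$, and the specialization $\alpha=\counit$ recovering $P=\dH\counit$ are all consistent with the statement; nothing is missing.
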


\begin{lemma}\label{satz:poly-characters}
	As any character $\phi \in \chars{{\K}[x]}{\K}$ of $\K[x]$ is fixed by $\lambda\defas \phi(x)$, they are the group $\chars{{\K}[x]}{\K} = \setexp{\ev_{\lambda}}{\lambda \in \K}$ of evaluations (the counit $\counit = \ev_0$ equals the neutral element)
	\begin{equation}\label{eq:poly-characters}
		\K[x] \ni p(x)
		\mapsto \ev_{\lambda} (p) \defas p(\lambda)
		\quad \text{with the product} \quad
		\ev_a \convolution \ev_b = \ev_{a+b}.
	\end{equation}
\end{lemma}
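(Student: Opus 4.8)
The plan is to exploit that $\K[x]$ is the free commutative unital algebra on the single generator $x$, so that every morphism of unital algebras out of $\K[x]$ is determined by the image of $x$. First I would fix a character $\phi \in \chars{\K[x]}{\K}$ and set $\lambda \defas \phi(x)$; unitality together with multiplicativity forces $\phi(x^n) = \lambda^n$ for all $n \in \N_0$, hence $\phi = \ev_\lambda$ by $\K$-linearity. Conversely each $\ev_\lambda$ is patently a morphism of unital algebras $\K[x]\to\K$, and the map $\lambda\mapsto\ev_\lambda$ is injective because $\ev_\lambda(x)=\lambda$ recovers $\lambda$; therefore $\chars{\K[x]}{\K} = \setexp{\ev_\lambda}{\lambda\in\K}$, in bijection with $\K$.

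Next I would identify the convolution product. Since the convolution of two characters is again a character (as recalled in the preliminaries of section \ref{sec:notation}), both $\ev_a\convolution\ev_b$ and $\ev_{a+b}$ are determined by their values on the generator $x$; from $\Delta(x) = x\tp\1 + \1\tp x$ one reads off $(\ev_a\convolution\ev_b)(x) = a+b = \ev_{a+b}(x)$, whence $\ev_a\convolution\ev_b = \ev_{a+b}$. Alternatively one can evaluate directly on the monomial basis using $\Delta(x^n) = \sum_{i=0}^n\binom{n}{i}x^i\tp x^{n-i}$ and the binomial theorem. Finally $\counit(x^n) = \delta_{n,0}$ shows $\counit = \ev_0$, which is indeed the neutral element of the group since $\ev_a\convolution\ev_0 = \ev_a$, and $\ev_a^{\convolution -1} = \ev_{-a}$; so the character group is isomorphic to $(\K,+)$.

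There is no genuine obstacle in this lemma. The only two points that deserve an explicit word are that a morphism of \emph{unital} algebras on $\K[x]$ is pinned down by its value on $x$ (this is precisely the universal property of the polynomial ring as a free algebra), and that it consequently suffices to check $\ev_a\convolution\ev_b = \ev_{a+b}$ on the single generator rather than on all monomials. Everything else is immediate from the shape of the coproduct.
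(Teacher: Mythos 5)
Your proposal is correct and follows essentially the same route as the paper: the paper's one-line proof also checks the convolution on the generator, computing $\left[ \ev_a \convolution \ev_b \right](x^n) = \left[ \ev_a(\1)\ev_b(x) + \ev_a(x)\ev_b(\1) \right]^n = (a+b)^n$, i.e.\ it uses the primitivity of $x$ together with multiplicativity of the convolution of characters, exactly as you do. The freeness argument pinning a character down by its value $\phi(x)$ is treated as immediate in the paper, so your spelling it out is only a matter of added detail, not a different method.
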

\begin{proof}
	Note 
	$
		\left[ \ev_a \convolution \ev_b \right] \left( x^n \right)
		= { \left[ \ev_a(1) \cdot \ev_b(x) + \ev_a(x) \cdot \ev_b(1) \right] }^n
		= { ( b + a ) }^n
	$.
\end{proof}
\begin{lemma}\label{satz:poly-log}
	The isomorphism $(\K,+)\ni a \mapsto \ev_a \in \chars{\K[x]}{\K}$ of groups is generated by the functional $\partial_0 = \ev_0\circ\partial\in\infchars{\K[x]}{\K}$, meaning $\log_{\convolution} \ev_a = a \partial_0$ and $\ev_a = \exp_{\convolution} (a\partial_0)$.
\end{lemma}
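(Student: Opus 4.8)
The statement to prove is that the group isomorphism $(\K,+) \ni a \mapsto \ev_a \in \chars{\K[x]}{\K}$ is the convolution exponential of the infinitesimal character $\partial_0 = \ev_0 \circ \partial$, i.e.\ $\log_\convolution \ev_a = a\,\partial_0$ and equivalently $\ev_a = \exp_\convolution(a\,\partial_0)$. The plan is to verify directly that $\partial_0 \in \infchars{\K[x]}{\K}$ and then evaluate $\exp_\convolution(a\,\partial_0)$ on the basis $\{x^n\}_{n\in\N_0}$, matching it with $\ev_a$ via \ref{satz:poly-characters}.

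First I would check that $\partial_0$ is an infinitesimal character: since $\partial(pq) = (\partial p)\, q + p\,(\partial q)$ and $\ev_0$ is a homomorphism, $\partial_0(pq) = (\partial_0 p)\, \counit(q) + \counit(p)\,(\partial_0 q)$, which is exactly the condition $\partial_0 \circ m = \partial_0 \tp e + e \tp \partial_0$ with $e = \counit$ here (as $\alg = \K$). Hence $a\,\partial_0 \in \infchars{\K[x]}{\K}$ for every $a\in\K$, and by point (2) of the preliminaries $\exp_\convolution(a\,\partial_0)$ is a well-defined character. The next step is to compute its value on $x^n$. Using the iterated coproduct $\Delta^{(k-1)}(x^n) = \sum_{i_1+\dots+i_k=n}\binom{n}{i_1,\dots,i_k} x^{i_1}\tp\cdots\tp x^{i_k}$ and the fact that $\partial_0(x^m)=\delta_{m,1}$, only the term with all $i_j=1$ survives in $\partial_0^{\convolution k}(x^n)$, giving $\partial_0^{\convolution k}(x^n) = \frac{n!}{(n-k)!}\,[k\le n]\cdot 0^{\,n-k}$; more cleanly, $\partial_0^{\convolution k}(x^n) = n!\,\delta_{k,n}$. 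Therefore
\begin{equation*}
	\exp_\convolution(a\,\partial_0)(x^n)
	= \sum_{k\in\N_0} \frac{a^k}{k!}\,\partial_0^{\convolution k}(x^n)
	= \frac{a^n}{n!}\cdot n!
	= a^n
	= \ev_a(x^n),
\end{equation*}
so $\exp_\convolution(a\,\partial_0) = \ev_a$ as characters, and applying $\log_\convolution$ (the inverse bijection from point (2)) yields $\log_\convolution \ev_a = a\,\partial_0$.

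There is no real obstacle here; the only point requiring a moment's care is the combinatorial identity $\partial_0^{\convolution k}(x^n) = n!\,\delta_{k,n}$, which one can either establish by the multinomial expansion above or, more slickly, by induction on $k$ using $\partial_0^{\convolution k} = \partial_0^{\convolution(k-1)} \convolution \partial_0$ together with $\Delta(x^n) = \sum_{i=0}^n \binom{n}{i} x^i \tp x^{n-i}$: only the $i=n-1$ summand contributes to the rightmost $\partial_0$, reducing the computation to $\partial_0^{\convolution(k-1)}(x^{n-1})$ with a factor $n$, and the base case $k=0$ gives $e(x^n) = \delta_{n,0}$. Alternatively, one may shortcut the whole argument by noting $\ev_a\convolution\ev_b = \ev_{a+b}$ from \ref{satz:poly-characters} makes $a\mapsto\ev_a$ a one-parameter group of characters, whose generator $\frac{\dd}{\dd a}\big|_{a=0}\ev_a = \partial_0$ is read off by differentiating $\ev_a(x^n)=a^n$ at $a=0$; since the convolution exponential of a generator recovers the one-parameter group (both being the unique solution of the corresponding ODE in the pro-unipotent group $\chars{\K[x]}{\K}$), the claim follows.
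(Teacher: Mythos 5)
Your proposal is correct and follows essentially the same route as the paper: expand $\exp_{\convolution}(a\partial_0)$ on monomials and use the iterated coproduct together with $\partial_0(x^m)=\delta_{m,1}$ to get $\partial_0^{\convolution k}(x^n)=n!\,\delta_{k,n}$ (your intermediate formula is just the paper's identity $\partial_0^{\convolution k}=\counit\circ\partial^k$ in disguise), whence $\exp_{\convolution}(a\partial_0)(x^n)=a^n=\ev_a(x^n)$. The extra checks (that $\partial_0$ is an infinitesimal character, and the one-parameter-group shortcut) are fine but not needed beyond what the paper does.
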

\begin{proof}
	Expanding the exponential series reveals $\exp_{\convolution}(a\partial_0) (x^n) = a^n$ as a direct consequence of $\partial_0^{\convolution k} = \counit \circ \partial^{\convolution k} = \counit \circ \partial^k$:
\begin{equation*}
	\partial_0^{\convolution k} \left(\frac{x^n}{n!}\right)
	= \sum_{\mathclap{i_1 + \ldots + i_k=n}} \left(\partial_0 \tfrac{x^{i_1}}{i_1!}\right) \cdots \left(\partial_0 \tfrac{x^{i_k}}{i_k!} \right)
	= \sum_{\mathclap{i_1 + \ldots + i_k=n}} \delta_{1,i_1} \cdots \delta_{1,i_k}
	= \delta_{k,n}
	= \restrict{\partial^k}{0} \left( \frac{x^{n}}{n!} \right)
	.\qedhere
\end{equation*}
\end{proof}

\section{The Dynkin operator \texorpdfstring{$D=S\convolution Y$}{D=S*Y}}

We briefly present the crucial properties of $D$ which are employed in this text and further recommend in particular section 4 of \cite{FardBondiaPatras:LieApproach} as well as \cite{MenousPatras:LogarithmicDerivatives}.
\label{sec:S*Y}
\begin{definition}
	For some fixed connected graduation $Y$ of $H$, define operators
	$	D_Y	\defas S\convolution Y$
	and
	$	\pi_Y \defas Y^{-1} \circ D_Y = D_Y \circ Y^{-1}$.
\end{definition}

As $Y$ is invertible on $\ker\counit$, $\pi_Y$ is well-defined. Note that each of $\set{S, D_Y, \pi_Y}$ commutes with $Y$ and $Y^{-1}$.
\begin{proposition}\label{satz:S*Y-infchar}
	$D_Y, \pi_Y \in \infchars{H}{H}$ are infinitesimal characters with
	$	\K \cdot \1 \oplus (\ker\counit)^2 %
		= \ker D_Y %
		= \ker \pi_Y %
	$ and $D_Y - Y, \pi_Y - P$ map into $(\ker\counit)^2$.
\end{proposition}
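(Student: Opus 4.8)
The plan is to derive everything from two structural facts about $D_Y = S\convolution Y$: since $H$ is commutative its antipode $S$ is an algebra morphism, $S\circ m = m\circ(S\tp S)$, and since $Y$ is the infinitesimal generator of the one-parameter group $\gradAut_t$ of Hopf algebra automorphisms it is simultaneously a derivation, $Y\circ m = m\circ(Y\tp\id + \id\tp Y)$, and a coderivation, $\Delta\circ Y = (Y\tp\id + \id\tp Y)\circ\Delta$. First I would establish $D_Y\in\infchars{H}{H}$ by a Sweedler computation: expanding $\Delta(xy) = \Delta(x)\Delta(y)$, using the antimorphism property of $S$ together with commutativity and the Leibniz rule for $Y$,
\begin{equation*}
	D_Y(xy) = \sum S(x_1)S(y_1)\bigl[Y(x_2)y_2 + x_2 Y(y_2)\bigr] = D_Y(x)\,\counit(y) + \counit(x)\,D_Y(y),
\end{equation*}
where the two halves of the Leibniz expansion collapse to $D_Y(x)\,e(y)$ and $e(x)\,D_Y(y)$ with $e = u\circ\counit$. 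Along the way one records $D_Y(\1) = S(\1)Y(\1) = 0$, $\counit\circ D_Y = 0$ (since $\counit\circ Y = 0$), and homogeneity $D_Y(H_n)\subseteq H_n$ (because $\Delta$, $S$, $Y$ and $m$ all respect the grading). Homogeneity makes $D_Y$ commute with $Y$ and with $Y^{-1}$, so $\pi_Y := Y^{-1}\circ D_Y = D_Y\circ Y^{-1}$ is well defined on $\ker\counit\supseteq\im D_Y$; extending it by $\pi_Y(\1) := 0$ (consistent, since $D_Y(\1) = 0$) yields a map on all of $H$ which vanishes on $\1$ and on $(\ker\counit)^2$, hence is an infinitesimal character as well.

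Next, for $x\in\ker\counit$ the splitting $\Delta(x) = \1\tp x + x\tp\1 + \cored(x)$ gives
\begin{equation*}
	D_Y(x) = Y(x) + \sum_x S(x')\,Y(x''),
\end{equation*}
and since both $S$ and $Y$ map $\ker\counit$ into $\ker\counit$, the sum lies in $(\ker\counit)^2$; on $\K\cdot\1$ the difference $D_Y - Y$ vanishes outright. Hence $(D_Y - Y)(H)\subseteq(\ker\counit)^2$. As $P$ restricts to the identity on $\ker\counit$ and both $P$ and $\pi_Y$ vanish on $\1$, one has $\pi_Y - P = Y^{-1}\circ(D_Y - Y)$ on all of $H$, which maps into $Y^{-1}\bigl((\ker\counit)^2\bigr) = (\ker\counit)^2$ because $(\ker\counit)^2$ is a graded, $Y$-stable subspace of $\ker\counit$ on which $Y$ is invertible.

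For the kernels: the inclusion $\K\cdot\1\oplus(\ker\counit)^2\subseteq\ker D_Y$ is immediate from $D_Y(\1) = 0$ and, for $u,v\in\ker\counit$, $D_Y(uv) = D_Y(u)\counit(v) + \counit(u)D_Y(v) = 0$ by the infinitesimal character property just proved. Conversely, if $D_Y(x) = 0$ write $x = \counit(x)\1 + \bar x$ with $\bar x = P(x)\in\ker\counit$; then $D_Y(\bar x) = 0$, so $Y(\bar x) = -(D_Y - Y)(\bar x)\in(\ker\counit)^2$, and invertibility of $Y$ on the graded $Y$-stable subspace $(\ker\counit)^2$ forces $\bar x\in(\ker\counit)^2$, i.e.\ $x\in\K\cdot\1\oplus(\ker\counit)^2$. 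Finally $\ker\pi_Y = \ker D_Y$ because $D_Y(x)\in\ker\counit$ and $Y^{-1}$ is injective there, so $\pi_Y(x) = Y^{-1}(D_Y(x)) = 0$ iff $D_Y(x) = 0$. I expect the only genuine piece of bookkeeping to be the Sweedler identity $D_Y\circ m = D_Y\tp e + e\tp D_Y$ — keeping the commutations straight so that the Leibniz expansion separates cleanly into the two summands; the remainder is graded linear algebra.
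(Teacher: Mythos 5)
Your proposal is correct and follows essentially the paper's own route: your Sweedler computation of $D_Y(xy)=\counit(y)D_Y(x)+\counit(x)D_Y(y)$ is exactly the paper's tensor-notation derivation of $D_Y\circ m = D_Y\tp\counit+\counit\tp D_Y$ (antipode as algebra morphism by commutativity, $Y$ as derivation), and the reverse kernel inclusion via $D_Y = Y \bmod (\ker\counit)^2$ on $\ker\counit$ together with gradedness and invertibility of $Y$ is the same argument the paper sketches. You merely spell out the parts the paper leaves implicit ($\pi_Y$, the converse inclusion, and $\pi_Y-P$), which is fine.
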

\begin{proof}
	Clearly,	
	$%
		\K \cdot \1 \oplus (\ker\counit)^2 %
		\subseteq \ker D_Y%
	$
	is an immediate consequence of
	\begin{align*}
		D_Y \circ m
		&= m \circ (S \tp Y) \circ \Delta \circ m
		= m \circ (S \tp Y) \circ (m \tp m) \circ \tau_{2,3} \circ (\Delta \tp \Delta) \\
		&=
		m^3 \circ (S \tp S \tp Y \tp \id + S \tp S \tp \id \tp Y) \circ \tau_{1,2} \circ \tau_{2,3} \circ (\Delta \tp \Delta) \\
		&= m \circ \left[ 
				(S \convolution Y) \tp (S \convolution \id)
				+ (S \convolution \id) \tp (S \convolution Y)
			\right]
		= D_Y \tp \counit + \counit \tp D_Y.
	\end{align*}
	The reverse inclusion follows from $D_Y (x) = Yx + \sum_x (S x') (Y x'') = Yx \mod (\ker\counit)^2$ for $x\in\ker\counit$.
\end{proof}

\begin{korollar}\label{satz:S*Y-generates-H}
	$V_Y \defas \im D_Y = \im \pi_Y$ generates $H$ as an algebra and contains the primitive elements $\Prim(H)\subseteq V_Y$.
\end{korollar}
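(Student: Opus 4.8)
The plan is to exploit Proposition~\ref{satz:S*Y-infchar}, which tells us that $\pi_Y$ coincides with the augmentation projection $P$ modulo $(\ker\counit)^2$, together with an induction along the connected grading of $H$. Before that I would quickly dispose of the identification $V_Y = \im D_Y = \im\pi_Y$: since $\pi_Y = Y^{-1}\circ D_Y = D_Y\circ Y^{-1}$, both maps send $\ker\counit$ into $\ker\counit$ (where $Y$ is bijective), and $D_Y$ commutes with $Y$, so the subspace $\im D_Y$ is $Y$-stable, whence $\im\pi_Y = Y^{-1}(\im D_Y) = \im D_Y$.

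For $\Prim(H)\subseteq V_Y$ I would argue as follows. Because $\Delta$ preserves the grading, $\Prim(H)$ is spanned by homogeneous primitives, so it suffices to treat a homogeneous $x\in\Prim(H)$ with $n\defas\abs{x}\geq 1$. For such $x$ one has $\Delta x = x\tp\1 + \1\tp x$; feeding this into $D_Y = m\circ(S\tp Y)\circ\Delta$ and using $Y(\1)=0$, $S(\1)=\1$ gives $D_Y(x) = S(x)\,Y(\1) + S(\1)\,Y(x) = n\,x$, so $x = n^{-1}D_Y(x)\in V_Y$.

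For the generation statement I would show by induction on $n$ that $H_n$ lies in the subalgebra $A$ generated by $V_Y$. The base $H_0=\K\cdot\1\subseteq A$ is clear. For the inductive step, assume $H_0,\dots,H_{n-1}\subseteq A$ and take a homogeneous $x\in H_n$, $n\geq 1$, so that $P(x)=x$; since $D_Y = S\convolution Y$ and $Y^{-1}$ are degree-preserving on $\ker\counit$, the element $\pi_Y(x)$ lies in $V_Y\cap H_n\subseteq A$. By Proposition~\ref{satz:S*Y-infchar} the difference $r\defas\pi_Y(x)-x = (\pi_Y-P)(x)$ lies in $(\ker\counit)^2$, hence by homogeneity in $(\ker\counit)^2\cap H_n = \sum_{\substack{i+j=n\\ i,j\geq 1}} H_i\,H_j$; each factor occurring there has degree $\leq n-1$ and thus lies in $A$, so $r\in A$ and therefore $x = \pi_Y(x)-r\in A$, which closes the induction.

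The only real subtlety — and the step I would double-check most carefully — is the degree bookkeeping in the inductive step: one needs that $\pi_Y$ (equivalently $D_Y = S\convolution Y$) is homogeneous of degree $0$ on $\ker\counit$, so that the correction term $(\pi_Y-P)(x)$ stays in degree $n$, and that $(\ker\counit)^2$ in degree $n$ genuinely decomposes into products of two strictly-lower-degree homogeneous pieces. Both are consequences of connectedness of the grading; everything else is a direct computation with the antipode and cocycle identities already packaged in Proposition~\ref{satz:S*Y-infchar}.
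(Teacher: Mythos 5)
Your proposal is correct and follows essentially the same route as the paper: the generation statement is the paper's induction $x\in\pi_Y(x)+m(H^n\tp H^n)$ for homogeneous $x$ of degree $n{+}1$, resting on Proposition \ref{satz:S*Y-infchar} ($\pi_Y-P$ maps into $(\ker\counit)^2$), and your computation $D_Y(x)=\abs{x}\,x$ on homogeneous primitives is the same antipode/cocycle calculation the paper uses ($D_Y(p)=Y(p)$). Your extra care about $\im D_Y=\im\pi_Y$, homogeneity of $\pi_Y$, and the graded decomposition of $(\ker\counit)^2$ just makes explicit what the paper leaves implicit.
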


\begin{proof}
	$H_{n+1} \subseteq V_Y^{\cdot} \defas \sum_{n\geq 0} V_Y^n$ follows inductively from $H^n = \bigoplus_{i\leq n} H_i \subseteq V_Y^{\cdot}$ using $x \in \pi_Y (x) + m(H^n \tp H^n)$ for any $x \in H_{n+1}$. A primitive $p$ yields $D_Y(p) = S(p) \cdot 0 + S(\1) \cdot Y(p) = Y(p)$.
\end{proof}

\begin{proposition}
	$\pi_Y^2 = \pi_Y$ is a projection, hence its image complements the square of the augmentation ideal:
	$
		H
		= \K \cdot \1 \oplus V_Y \oplus (\ker\counit)^2,
		\ker\counit
		= V_Y \oplus (\ker\counit)^2
	$.
\end{proposition}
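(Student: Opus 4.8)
The plan is to establish the idempotency $\pi_Y^2=\pi_Y$ by a direct computation built on the two facts already recorded in Proposition~\ref{satz:S*Y-infchar}: that $\K\cdot\1\oplus(\ker\counit)^2=\ker\pi_Y$ and that $\pi_Y-P$ takes values in $(\ker\counit)^2$. First I would write $\pi_Y=P+N$ with $N\defas\pi_Y-P$, so that $\im N\subseteq(\ker\counit)^2\subseteq\ker\pi_Y$ and, since $P$ also maps into $\ker\counit$, the whole of $\pi_Y$ maps into $\ker\counit$. Then for an arbitrary $x\in H$ I compute $\pi_Y^2(x)=\pi_Y(P(x))+\pi_Y(N(x))$; the second term vanishes because $N(x)\in\ker\pi_Y$, and for the first I use $P(x)=x-\counit(x)\1$ together with $\pi_Y(\1)=0$ (valid since $\1\in\K\cdot\1\subseteq\ker\pi_Y$) to conclude $\pi_Y(P(x))=\pi_Y(x)$. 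Hence $\pi_Y^2=\pi_Y$.

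Once $\pi_Y$ is known to be a projection, the asserted splitting of $H$ is the standard decomposition of a vector space into the kernel and image of an idempotent, $H=\ker\pi_Y\oplus\im\pi_Y$. Here $\ker\pi_Y=\K\cdot\1\oplus(\ker\counit)^2$ by Proposition~\ref{satz:S*Y-infchar} — this is an internal direct sum because $(\ker\counit)^2\subseteq\ker\counit$ meets $\K\cdot\1$ only in $0$ by connectedness — while $\im\pi_Y=V_Y=\im D_Y$ by Corollary~\ref{satz:S*Y-generates-H}. Together these give $H=\K\cdot\1\oplus V_Y\oplus(\ker\counit)^2$, and in particular $V_Y\cap(\ker\counit)^2\subseteq\im\pi_Y\cap\ker\pi_Y=0$.

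For the final identity I would restrict everything to $\ker\counit$. Since $\pi_Y$ maps into $\ker\counit$ we have $V_Y\subseteq\ker\counit$, and of course $(\ker\counit)^2\subseteq\ker\counit$ whereas $\K\cdot\1\cap\ker\counit=0$. Writing an arbitrary $x\in\ker\counit$ as $c\1+v+q$ along $H=\K\cdot\1\oplus V_Y\oplus(\ker\counit)^2$ and applying $\counit$ forces $c=0$, so $x=v+q$; combined with directness of the ambient sum this yields $\ker\counit=V_Y\oplus(\ker\counit)^2$.

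I do not anticipate a real obstacle: the content is essentially bookkeeping on top of Proposition~\ref{satz:S*Y-infchar} and Corollary~\ref{satz:S*Y-generates-H}. The only point deserving a moment of care is checking that the three summands really are in direct sum rather than merely spanning; as indicated this reduces to connectedness ($\K\cdot\1\cap\ker\counit=0$) and to the elementary kernel/image dichotomy for the idempotent $\pi_Y$.
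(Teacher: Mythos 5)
Your proof is correct, but it takes a genuinely different route from the paper. The paper proves idempotency by a direct convolution computation: expanding $D_Y^2=m\circ(S\tp Y)\circ(m\tp m)\circ\tau_{2,3}\circ(\Delta\tp\Delta)\circ(S\tp Y)\circ\Delta$ and using coassociativity together with $S\convolution\id=e$ to arrive at the operator identity $D_Y^2=D_Y\circ Y$, from which $\pi_Y^2=\pi_Y$ follows since $D_Y$ commutes with $Y^{-1}$; the splitting statements are then left implicit. You instead recycle the structural facts already recorded in proposition \ref{satz:S*Y-infchar} — that $\K\cdot\1\oplus(\ker\counit)^2\subseteq\ker\pi_Y$ and that $\pi_Y-P$ lands in $(\ker\counit)^2$ — to get idempotency ``softly'': $\pi_Y$ agrees with $P$ modulo terms that $\pi_Y$ kills, so $\pi_Y^2=\pi_Y\circ P=\pi_Y$. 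That argument is valid and arguably more transparent, and you also spell out the linear-algebra consequences ($H=\ker\pi_Y\oplus\im\pi_Y$ with $\im\pi_Y=V_Y$ via corollary \ref{satz:S*Y-generates-H}, then cutting down to $\ker\counit$ by applying $\counit$) more carefully than the paper does. What the paper's computation buys in exchange is the sharper identity $D_Y^2=D_Y\circ Y$ itself, which is independent of the kernel description and does not consume the full strength of \ref{satz:S*Y-infchar}. One small point worth a remark if you want your version self-contained: the paper's proof of \ref{satz:S*Y-infchar} explicitly verifies only that $D_Y-Y$ maps into $(\ker\counit)^2$; the statement about $\pi_Y-P$ that you lean on follows because $(\ker\counit)^2$ is a graded subspace and hence stable under $Y^{-1}$, so citing the proposition is legitimate but that one-line transfer is the only place where your argument silently uses gradedness.
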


\begin{proof} Expand $D_Y^2 = m \circ (S \tp Y) \circ (m \tp m) \circ \tau_{2,3} \circ (\Delta \tp \Delta) \circ (S \tp Y) \circ \Delta$ to
	\begin{align*}
		D_Y^2
		&= m^3 \circ \left[ 
				S \tp S \tp Y \tp \id + S \tp S \tp \id \tp Y
				\right] \circ \tau_{1,2} \circ \tau_{2,3} \circ (\Delta \tp \Delta) \circ (S \tp Y) \circ \Delta \\
		&= m \circ \left[ 
					(S \convolution Y) \tp e + e \tp (S \convolution Y)
				\right] \circ (S \tp Y) \circ \Delta%
		= D_Y \circ Y. \qedhere
	\end{align*}
\end{proof}

\begin{proposition}
	From $ \Delta \circ D_Y %
	= \1 \tp D_Y + \left[ D_Y \tp m\circ(S \tp \id) \right] \circ \tau_{1,2} \circ \Delta^2$ %
	we deduce that $\K \cdot \1 \oplus V_Y$ is a right-coideal.
	Further, $\pi_Y$ and $D_Y$ map co-commutative elements to primitives as then $\Delta \circ \pi_Y = \1 \tp \pi_Y + \pi_Y \tp \1$.
\end{proposition}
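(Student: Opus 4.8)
The plan is to first establish the displayed coproduct identity for $D_Y=S\convolution Y$ and then read off both assertions from it.

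\emph{Step 1: the coproduct formula.} Starting from $D_Y=m\circ(S\tp Y)\circ\Delta$ and using that $\Delta$ is an algebra morphism (so $\Delta\circ m=(m\tp m)\circ\tau_{2,3}\circ(\Delta\tp\Delta)$), together with the standard antipode identity $\Delta\circ S=(S\tp S)\circ\tau_{1,2}\circ\Delta$ and the coderivation property $\Delta\circ Y=(Y\tp\id+\id\tp Y)\circ\Delta$, I would expand $\Delta\circ D_Y$ into a four-fold tensor expression. Writing the four-fold iterated coproduct as $\sum x_1\tp x_2\tp x_3\tp x_4$, this collects into exactly two summands, $\sum\bigl(S(x_2)Y(x_3)\bigr)\tp\bigl(S(x_1)x_4\bigr)$ and $\sum\bigl(S(x_2)x_3\bigr)\tp\bigl(S(x_1)Y(x_4)\bigr)$. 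In the second summand the legs $x_2,x_3$ are adjacent, so the antipode axiom $m\circ(S\tp\id)\circ\Delta=u\circ\counit$ collapses them and the term reduces to $\1\tp D_Y$. In the first summand the legs $x_2,x_3$ are likewise adjacent, and $\sum S(x_2)Y(x_3)$ is just $D_Y$ applied to that contracted factor; recognizing this leaves precisely $[D_Y\tp m\circ(S\tp\id)]\circ\tau_{1,2}\circ\Delta^2$. The only point requiring care is keeping track of the permutations $\tau_{1,2},\tau_{2,3}$ and of which tensor legs get contracted by $m\circ(S\tp\id)$; everything else is formal, and I expect this bookkeeping to be the main (indeed only) obstacle.

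\emph{Step 2: the right-coideal property.} Since $V_Y=\im D_Y$, a general element of $\K\cdot\1\oplus V_Y$ is a linear combination of $\1$ and elements $D_Y(y)$. Now $\Delta(\1)=\1\tp\1$ and, by Step 1, $\Delta\bigl(D_Y(y)\bigr)=\1\tp D_Y(y)+\sum D_Y(y_2)\tp S(y_1)y_3$, so in every case the first tensor leg lies in $\K\cdot\1+\im D_Y=\K\cdot\1\oplus V_Y$ (the sum being direct by the preceding proposition, which gives $H=\K\cdot\1\oplus V_Y\oplus(\ker\counit)^2$). Hence $\Delta(\K\cdot\1\oplus V_Y)\subseteq(\K\cdot\1\oplus V_Y)\tp H$, i.e.\ $\K\cdot\1\oplus V_Y$ is a right-coideal.

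\emph{Step 3: cocommutative elements.} If $x$ is cocommutative, then $\Delta^2(x)=\sum x_1\tp x_2\tp x_3$ is invariant under every permutation of the three legs: the transposition of legs $1,2$ acts trivially by cocommutativity of $\Delta x$, and the transposition of legs $2,3$ acts trivially after rewriting $\Delta^2$ via coassociativity, and these two generate the full symmetric group. Substituting this symmetry into the formula of Step 1, I swap the first two legs of the $\Delta^2$-term to rewrite it as $\sum D_Y(x_1)\tp S(x_2)x_3$; now $x_2,x_3$ are adjacent, so $\sum S(x_2)x_3=\counit(x_2)\1$ and this term becomes $D_Y(x)\tp\1$. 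Thus $\Delta\circ D_Y$, restricted to cocommutative elements, equals $\1\tp D_Y+D_Y\tp\1$, so $D_Y$ maps such elements into $\Prim(H)$. Finally, since $\pi_Y=Y^{-1}\circ D_Y=D_Y\circ Y^{-1}$ and $Y^{-1}$ acts by scalars on each homogeneous component, it preserves both cocommutativity and primitivity; hence the same conclusion holds for $\pi_Y$, and the identity $\Delta\circ\pi_Y=\1\tp\pi_Y+\pi_Y\tp\1$ on cocommutative elements follows at once.
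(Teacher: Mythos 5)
Your Step 1 is correct and is essentially the paper's own proof: the displayed formula is obtained there by exactly this expansion (algebra-morphism property of $\Delta$, the anti-comorphism property of $S$, the coderivation property of $Y$), and your Step 2 is the immediate deduction that the paper leaves implicit.

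The gap is in Step 3. Cocommutativity of the single element $x$, i.e.\ $\tau\circ\Delta(x)=\Delta(x)$, does \emph{not} give invariance of $\Delta^2(x)$ under the transpositions of adjacent legs: swapping legs $1,2$ of $\Delta^2(x)=\sum \Delta(x_1)\tp x_2$ would require the Sweedler components $x_1$ themselves to be cocommutative, not $x$, and likewise for legs $2,3$. What does follow from $\Delta(x)=\tau\Delta(x)$ is only invariance under the cyclic group: from $(\Delta\tp\id)\circ\tau=\sigma\circ(\id\tp\Delta)$ with $\sigma(u\tp v\tp w)=v\tp w\tp u$ one gets $\Delta^2(x)=\sigma\,\Delta^2(x)$. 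Full $S_3$-invariance is genuinely false for a general cocommutative element: in the Hopf algebra of functions on a finite non-abelian group the characteristic function of the unit is cocommutative, yet its twofold coproduct is supported on the triples with $ghk=e$, a set invariant under cyclic but not under all permutations. So your appeal to ``the two transpositions generate $S_3$'' rests on an unproved and in general untrue claim, even though the group-theoretic remark itself is right.

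The conclusion is still reachable from what you legitimately have. Rotate cyclically instead of transposing: the second term becomes
\begin{equation*}
	\sum D_Y(x_2)\tp S(x_1)\,x_3
	= \sum D_Y(x_3)\tp S(x_2)\,x_1 ,
\end{equation*}
and now use the commutativity of $H$ (assumed throughout the paper) to write $S(x_2)\,x_1 = x_1\,S(x_2)$; contracting the first two legs with $m\circ(\id\tp S)\circ\Delta = \unit\circ\counit$ collapses the term to $D_Y(x)\tp\1$, which is exactly what you need. Your final reduction of the $\pi_Y$-statement to the $D_Y$-statement via $Y^{-1}$ (using that homogeneous components of a cocommutative, resp.\ primitive, element are again cocommutative, resp.\ primitive) is fine.
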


\begin{proof}
	Apply $S \convolution \id = e = \1 \cdot \counit$ and $(\id \tp \counit) \circ \Delta = \id$ to
	\begin{align*}
		&\Delta \circ D_Y
		= (m \tp m) \circ \tau_{2,3} \circ \tau_{1,2} \circ \left[ 
				S \tp S \tp Y \tp \id + S \tp S \tp \id \tp Y
		\right] \circ \Delta^3 \\
		&= \Big\{ 
				(S\convolution Y) \tp \left[ m\circ(S\tp\id) \right]
			\Big\} \circ \tau_{1,2} \circ \Delta^2
		+ \Big\{ 
				(S\convolution \id) \tp \left[ m\circ(S\tp Y) \right]
			\Big\} \circ \tau_{1,2} \circ \Delta^2. \qedhere
	\end{align*}
\end{proof}

\begin{korollar}
	For cocommutative $H$, $\exp_{\convolution} \left( \pi_Y \right) \in \chars{H}{H}$ is a character that coincides with $\id$ on the generating subspace $\im \left( \pi_Y \right) = \Prim(H)$, hence
	\begin{equation}
		\exp_{\convolution} \left( \pi_Y \right) = \id
		,\qquad\text{equivalently}\qquad
		\log_{\convolution}  \left( \id \right) = \pi_Y.
	\end{equation}
\end{korollar}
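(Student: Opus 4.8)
The plan is to check that $\exp_{\convolution}(\pi_Y)$ is a genuine character which agrees with $\id$ on a generating subspace of the algebra $H$, and to use cocommutativity only to identify that generating subspace with $\Prim(H)$. Since $\pi_Y \in \infchars{H}{H}$ by Proposition~\ref{satz:S*Y-infchar}, the $\exp_{\convolution}/\log_{\convolution}$ bijection between infinitesimal characters and characters recalled in section~\ref{sec:notation} immediately gives $\exp_{\convolution}(\pi_Y) \in \chars{H}{H}$, i.e.\ it is a morphism of algebras $H \to H$; and $\id \in \chars{H}{H}$ trivially. Two algebra morphisms with the same source and target coincide as soon as they agree on a subspace that generates $H$ as an algebra, and by Corollary~\ref{satz:S*Y-generates-H} the space $V_Y = \im(\pi_Y)$ is exactly such a subspace. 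So everything reduces to proving $\exp_{\convolution}(\pi_Y) = \id$ on $\im(\pi_Y)$.

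Next I would use cocommutativity to pin down $\im(\pi_Y)$. Corollary~\ref{satz:S*Y-generates-H} already records $\Prim(H) \subseteq V_Y$, while the proposition preceding that corollary shows that $\pi_Y$ sends cocommutative elements to primitives; as every element of a cocommutative $H$ is cocommutative, this forces $\im(\pi_Y) \subseteq \Prim(H)$, hence $\im(\pi_Y) = \Prim(H)$. It therefore suffices to evaluate $\exp_{\convolution}(\pi_Y)$ on a primitive $p$. There $\Delta(p) = p \tp \1 + \1 \tp p$, so $D_Y(p) = S(\1)\,Y(p) + S(p)\,Y(\1) = Y(p)$ using $Y(\1)=0$, whence $\pi_Y(p) = Y^{-1} D_Y(p) = p$; moreover $\pi_Y(\1) = 0$ since $\pi_Y$ is an infinitesimal character. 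Because $p$ is primitive, its $(n-1)$-fold iterated coproduct is a sum of tensors each carrying $p$ in a single tensor slot and $\1$ in all the others, so every summand of $\pi_Y^{\convolution n}(p)$ with $n \geq 2$ contains a factor $\pi_Y(\1) = 0$; the $n=1$ term is $\pi_Y(p) = p$ and the $n=0$ term is $e(p) = 0$. Hence $\exp_{\convolution}(\pi_Y)(p) = p = \id(p)$.

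Putting the two steps together yields $\exp_{\convolution}(\pi_Y) = \id$ on all of $H$. The asserted reformulation $\log_{\convolution}(\id) = \pi_Y$ then follows by applying $\log_{\convolution}$, once more invoking the bijectivity of $\exp_{\convolution}\colon \infchars{H}{H} \to \chars{H}{H}$ together with $\id$ being manifestly a character. The only computational point in the whole argument is the vanishing of the higher convolution powers of $\pi_Y$ on primitives; this is routine bookkeeping with the iterated coproduct, and all the real content sits in facts already proved in this appendix: that $\pi_Y$ is an infinitesimal character, that $\im(\pi_Y)$ generates $H$, and that $\im(\pi_Y) = \Prim(H)$ for cocommutative $H$. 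Accordingly I do not expect a serious obstacle; the mild subtlety is only to make sure the identification $\im(\pi_Y) = \Prim(H)$ is invoked before, not after, the reduction to a generating set.
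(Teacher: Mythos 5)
Your argument is correct and is essentially the route the paper itself takes: the corollary's statement already encodes the proof (character via $\exp_{\convolution}$ of the infinitesimal character $\pi_Y$, agreement with $\id$ on the generating subspace $\im(\pi_Y)=\Prim(H)$), and you fill in exactly the intended ingredients — Proposition \ref{satz:S*Y-infchar}, Corollary \ref{satz:S*Y-generates-H}, and the preceding proposition giving $\im(\pi_Y)\subseteq\Prim(H)$ in the cocommutative case — together with the routine check that $\pi_Y(p)=p$ and higher convolution powers vanish on primitives. No gaps; your closing remark about establishing $\im(\pi_Y)=\Prim(H)$ before reducing to a generating set is the right (and only) point of care.
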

In particular note that in this case $\pi_Y = \log_{\convolution} \left(\id \right)$ does not depend on the choice of grading $Y$.
Recalling that by the Milnor-Moore theorem for this case $H = S(\Prim(H))$ is just the symmetric algebra, $\pi_Y$ is nothing but the projection on $\Prim(H)$ corresponding to 
\begin{equation}\label{eq:cocommutative-symmetric}
	H 
	= \bigoplus_{n \geq 0} \Prim(H)^{\tp n}.
\end{equation}

But also in the non-cocommutative case we have
\begin{proposition}\label{prop:V_Ygenerator}
	$V_Y$ generates $H$ as a free algebra: $S(V_Y) = H$ (as algebras).
\end{proposition}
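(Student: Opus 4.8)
The plan is to reduce the statement to the classical structure theory of connected graded commutative Hopf algebras, the only work beyond that being to match up the abstract generators with $V_Y$. Concretely, let $S(V_Y)$ denote the symmetric (i.e. free commutative) algebra on the graded vector space $V_Y$ and let $\sigma\colon S(V_Y)\to H$ be the algebra morphism extending the inclusion $V_Y\hookrightarrow H$. Since $\pi_Y$ commutes with $Y$, its image $V_Y$ is a graded subspace, so $\sigma$ is a morphism of graded algebras; it is surjective by Corollary~\ref{satz:S*Y-generates-H}. Hence everything comes down to proving that $\sigma$ is injective, i.e. that a homogeneous basis of $V_Y$ satisfies no nontrivial polynomial relation in $H$.

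First I would record the single nonelementary input: because $\K\in\{\R,\C\}$ has characteristic zero and $H$ is a connected graded commutative Hopf algebra, Leray's structure theorem (equivalently the dual form of the Milnor--Moore theorem; for $H_R$ itself this is also classical, cf.\ \cite{Foissy:PhD1}) shows that $H$ is \emph{free} as a commutative algebra, hence isomorphic to a polynomial algebra on any homogeneous lift of a basis of the indecomposables $Q(H)\defas\ker\counit/(\ker\counit)^2$. Everything after this is bookkeeping with the grading.

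Next I would connect $V_Y$ to $Q(H)$. By Proposition~\ref{satz:S*Y-infchar} together with the idempotence $\pi_Y^2=\pi_Y$ established above we have the graded splitting $\ker\counit = V_Y\oplus(\ker\counit)^2$, so the composite $V_Y\hookrightarrow\ker\counit\twoheadrightarrow Q(H)$ is an isomorphism of graded vector spaces. Fix a homogeneous basis $\{v_i\}$ of $V_Y$; its image is a homogeneous basis of $Q(H)$, so by the structure theorem there is a homogeneous family $\{b_i\}\subset\ker\counit$ with $\abs{b_i}=\abs{v_i}$ and $b_i\equiv v_i\bmod(\ker\counit)^2$ that generates $H$ freely, so $H=\K[b_i]$. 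Then each $v_i$ equals $b_i$ plus a polynomial in the $b_j$ of the same internal degree in which every monomial has at least two factors, so the algebra endomorphism $\phi$ of $H$ determined by $\phi(b_i)=v_i$ has the property that $\phi-\id$ strictly increases the number of factors while preserving internal degree; as every element of internal degree $d$ is a polynomial of factor-degree $\le d$ in the $b_j$, $\phi-\id$ is locally nilpotent and $\phi$ is an automorphism. Therefore $\{v_i\}$ also generates $H$ freely, i.e.\ $\sigma$ is injective, which is the claim $S(V_Y)=H$ as algebras.

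I expect the main obstacle to be essentially bibliographic --- quoting Leray/Milnor--Moore in exactly the right form (connected, graded, honestly commutative, characteristic zero) --- rather than mathematical. If one prefers a fully internal argument, the same conclusion follows by dualising: the graded dual $H^\vee$ is connected graded cocommutative, hence $H^\vee\cong U(\mathfrak g)$ with $\mathfrak g=\Prim(H^\vee)$ by Milnor--Moore, the symmetrisation map gives a coalgebra isomorphism $U(\mathfrak g)\cong S(\mathfrak g)$ in characteristic zero, and dualising back yields $H\cong S(\mathfrak g^\vee)$ with $\mathfrak g^\vee\cong Q(H)\cong V_Y$; this mirrors the cocommutative case \eqref{eq:cocommutative-symmetric} with $V_Y$ in place of $\Prim(H)$.
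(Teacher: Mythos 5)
Your proof is correct, but it takes a genuinely different route from the paper. You reduce the statement to the Leray/Milnor--Moore structure theorem (connected graded commutative Hopf algebras over a field of characteristic zero are free commutative), and then only have to match $V_Y$ with the indecomposables via the splitting $\ker\counit = V_Y \oplus (\ker\counit)^2$ and a locally nilpotent perturbation argument showing that replacing a set of free generators $b_i$ by $v_i \equiv b_i \bmod (\ker\counit)^2$ of the same degree is an algebra automorphism; that argument is sound (the perturbation raises the augmentation-filtration degree while preserving the internal degree, and $(\ker\counit)^{d+1}\cap H_d = 0$). The paper instead proves everything internally and constructively: from $\pi_Y \circ m = m \circ (e \tp \pi_Y + \pi_Y \tp e)$ it derives the multinomial formula for $\pi_Y^{\convolution n}$ on products, concludes $\pi_Y^{\convolution n}(V_Y^m)=0$ for $n<m$ and $\pi_Y^{\convolution n} = n!\,\id$ on $V_Y^n$, hence $H = \bigoplus_n V_Y^n$, and then exhibits $\pi_Y^{\tp n}\circ\Delta^{n-1}$ (symmetrization) as an explicit inverse of $\nu$ on each summand. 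What each buys: your argument is shorter and leans on classical structure theory, which is legitimate but forfeits the paper's secondary point --- the proposition is precisely what lets the paper \emph{conclude} afterwards that ``all commutative connected graded Hopf algebras are free'', so the paper's proof cannot cite that freeness and instead delivers it, together with explicit projections onto the polynomial-degree pieces $V_Y^n$ that are useful in their own right. Two small caveats on your write-up: the choice of the lifts $b_i$ with $b_i \equiv v_i \bmod (\ker\counit)^2$ \emph{and} $\{b_i\}$ still a free generating set needs the (easy, but worth one line) remark that a homogeneous linear change of generators inducing a bijection on $Q(H)$ is an automorphism of the polynomial algebra; and your alternative dualization sketch implicitly requires finite-dimensional homogeneous components for the graded dual to be a Hopf algebra, an assumption the paper never imposes, whereas your main route and the paper's proof need no such hypothesis.
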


\begin{proof}
	The inclusion $V_Y \hookrightarrow H$ induces a unique morphism $\nu: S(V_Y) \twoheadrightarrow H$ of algebras which is surjective by \ref{satz:S*Y-generates-H}. For $n,m\in \N_0$ and $v_1,\ldots,v_m \in H$,
	\begin{equation}
		\pi_Y^{\convolution n}(v_1\cdots v_m)
		= \sum_{i_1+\ldots+i_m = n} \binom{n}{i_1 \cdots i_m} \pi_Y^{\convolution i_1}(v_1) \cdots \pi_Y^{\convolution i_m}(v_m)
		\tag{$\ast$}
	\end{equation}
	results from iteration of $\pi_Y \circ m = m \circ (e \tp \pi_Y + \pi_Y \tp e)$ and proves
	\begin{equation*}
		\pi_Y^{\convolution n} (V_Y^m)
		= 0
		\quad\text{for any}\quad 0\leq n < m
	\end{equation*}
	as in $(\ast)$ some $i_k$ must vanish and $\pi_Y^{\convolution 0} = e$ annihilates $V_Y \subset \ker\counit$. 
	In the case $n=m$ we find $\restrict{\pi_Y^{\convolution n}}{V_Y^n} = n! \cdot \restrict{\id}{V_Y^n}$ by $i_1=\ldots=i_n=1$.
	Therefore a finite sum $0=\sum_{n\geq 0} x_n \in H$ with $x_n \in V_Y^n$ implies $x_0 = \frac{1}{0!} \pi_Y^{\convolution 0} (x) = 0$, then $x_1 = \frac{1}{1!} \pi_Y^{\convolution 1} (x) = 0$ and hence iteratively $x_n = 0$ for any $n$. 

	Thus $H = \bigoplus_{n\geq 0} V_Y^n$ is a direct sum with $V_Y^n = \nu(S_n(V_Y))$ upon the decomposition $S(V_Y) = \bigoplus_{n \geq 0} S_n(V_Y)$ into the homogeneous polynomials $S_n(V_Y)$ of degree $n$. 
	Since $\pi_Y^{\tp n} \circ \Delta^{n-1} (v_1\cdots v_n)%
	= \sum_\sigma v_{\sigma(1)}\tp\ldots\tp v_{\sigma(n)}$ for $v_1,\ldots,v_n \in V_Y$ delivers an inverse to $\restrict{\nu}{S_n(V_Y)}$, $\nu$ is injective.
\end{proof}
	Hence all commutative connected graded Hopf algebras are free, and $V_Y\supseteq\Prim(H)$ delivers a generating coideal.
	Note that we defined the algebra of rooted trees as $S(\lin\trees)$, but $V_Y$ is a much more special generator as it contains all primitives and hence roughly speaking many elements with simple coproducts.
	Especially in the cocommutative case $V_Y$ delivers the simplest possible generator $V_Y=\Prim(H)$ and $S(V_Y)=H$ becomes an isomorphism of Hopf algebras.
\begin{korollar}
	The map $\tilde{R}: \chars{H}{\alg} \rightarrow \infchars{H}{\alg}, \varphi \mapsto \varphi^{\convolution -1} \convolution (\varphi \circ Y) = \varphi \circ D_Y$ defined in \cite{Manchon} is a bijection, since the character $\varphi$ is already determined by its restriction on $V_Y = \im D_Y$ through corollary \ref{satz:S*Y-generates-H}.
\end{korollar}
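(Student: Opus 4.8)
The plan is to rewrite $\tilde R(\varphi)$ as the single composite $\varphi\circ D_Y$ and then read off bijectivity from the two structural results about $V_Y=\im D_Y$ proved above: that $V_Y$ generates $H$ as an algebra (Corollary~\ref{satz:S*Y-generates-H}) and that it generates it \emph{freely}, $H\cong S(V_Y)$ (Proposition~\ref{prop:V_Ygenerator}). First I would record the closed form and well-definedness of $\tilde R$: since $\varphi^{\convolution-1}=\varphi\circ S$ for a character and $\varphi$ intertwines $m_H$ with $m_{\alg}$,
\begin{equation*}
	\tilde R(\varphi)=(\varphi\circ S)\convolution(\varphi\circ Y)=m_{\alg}\circ(\varphi\tp\varphi)\circ(S\tp Y)\circ\Delta=\varphi\circ m_H\circ(S\tp Y)\circ\Delta=\varphi\circ D_Y,
\end{equation*}
and as $D_Y\in\infchars{H}{H}$ by Proposition~\ref{satz:S*Y-infchar} while $\varphi$ is an algebra morphism, the composite $\varphi\circ D_Y$ inherits the infinitesimal-character identity, so $\tilde R$ does land in $\infchars{H}{\alg}$.

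For injectivity, observe that $\tilde R(\varphi_1)=\tilde R(\varphi_2)$ means exactly that $\varphi_1$ and $\varphi_2$ agree on $V_Y=\im D_Y$. Two morphisms of algebras out of $H$ that agree on a generating set coincide, and $V_Y$ generates $H$ by Corollary~\ref{satz:S*Y-generates-H}; hence $\varphi_1=\varphi_2$.

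For surjectivity, given $\psi\in\infchars{H}{\alg}$ I would first check that $\psi$ annihilates $\ker D_Y$: from $\psi\circ m=\psi\tp e+e\tp\psi$ one gets $\psi(\1)=2\psi(\1)$, so $\psi$ kills $\K\cdot\1$, while $\psi(xy)=\psi(x)e(y)+e(x)\psi(y)=0$ for $x,y\in\ker\counit$ shows $\psi$ kills $(\ker\counit)^2$. By Proposition~\ref{satz:S*Y-infchar}, $\ker D_Y=\K\cdot\1\oplus(\ker\counit)^2$, so $\psi$ factors through the surjection $D_Y\colon H\twoheadrightarrow V_Y$, yielding a unique linear map $\bar\psi\colon V_Y\to\alg$ with $\psi=\bar\psi\circ D_Y$. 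Since $H$ is the free commutative algebra on $V_Y$ by Proposition~\ref{prop:V_Ygenerator}, $\bar\psi$ extends to an algebra morphism $\varphi\colon H\to\alg$, which is then a character satisfying $\tilde R(\varphi)=\varphi\circ D_Y=\bar\psi\circ D_Y=\psi$.

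There is no serious obstacle here: the computational content --- the identification of $\ker D_Y$ and the freeness of $H$ over $V_Y$ --- is already supplied by Propositions~\ref{satz:S*Y-infchar} and~\ref{prop:V_Ygenerator}, and what remains is the bookkeeping above. The single point deserving care is that an infinitesimal character vanishes on $\ker D_Y$, which is precisely what makes the section $\bar\psi$ of $D_Y$ well defined and hence the preimage $\varphi$ exist.
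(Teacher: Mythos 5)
Your proof is correct, and it splits naturally into two halves that relate differently to the paper. The injectivity half is exactly the paper's own justification: $\tilde{R}(\varphi)=\varphi\circ D_Y$ determines $\varphi$ on $V_Y=\im D_Y$, and since $V_Y$ generates $H$ as an algebra (corollary \ref{satz:S*Y-generates-H}), two characters agreeing there coincide. For surjectivity, however, you take a genuinely different route: the paper leaves this to the subsequent scattering-formula theorem quoted from the literature and to the explicit iterative inverse \eqref{eq:Rtilde-inverse}, whereas you construct a preimage abstractly --- an infinitesimal character $\psi$ kills $\K\cdot\1\oplus(\ker\counit)^2=\ker D_Y$ (proposition \ref{satz:S*Y-infchar}), hence factors as $\psi=\bar\psi\circ D_Y$ through the surjection onto $V_Y$, and the freeness $H=S(V_Y)$ of proposition \ref{prop:V_Ygenerator} (together with commutativity of $\alg$) lets $\bar\psi$ extend to a character $\varphi$ with $\tilde{R}(\varphi)=\psi$. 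Your argument has the merit of being entirely self-contained within the appendix, and it makes explicit the one point the corollary's wording glosses over, namely why surjectivity holds at all; the paper's route, by contrast, buys an explicit closed or recursive formula for $\tilde{R}^{-1}$, which is what is actually used later (e.g.\ in example \ref{ex:MS-RGE-poles}) and which your existence argument does not provide. The only steps worth spelling out in your write-up are the ones you already flagged: that $\varphi\circ D_Y$ lands in $\infchars{H}{\alg}$ (immediate from $D_Y\in\infchars{H}{H}$ and $\varphi$ being an algebra morphism) and that the factorization through $D_Y$ needs precisely $\ker D_Y\subseteq\ker\psi$.
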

\begin{satz}[\emph{scattering formula} from \cite{CK:RH2} in the form of \cite{Manchon}]
	The inverse of $\tilde{R}$ is given by $\tilde{R}^{-1} (\beta) = \lim_{t \rightarrow \infty} e^{-t Z_0} e^{t(Z_0 + \beta)}$ for any $\beta \in \infchars{H}{\alg}$.
	This equation is to be understood in the Lie group associated to the semidirect sum $\infchars{H}{\alg} \rtimes \C \cdot Z_0$, adjoining the derivation $Z_0(\beta) \defas \beta \circ Y$.
\end{satz}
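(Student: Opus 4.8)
The plan is to turn the group-theoretic identity into a linear ordinary differential equation in the convolution algebra and to identify the limit by solving that equation explicitly.

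Fix $\varphi\in\chars{H}{\alg}$ and put $\beta\defas\tilde{R}(\varphi) = \varphi\circ D_Y = \varphi^{\convolution-1}\convolution(\varphi\circ Y)$, equivalently $\varphi\circ Y = \varphi\convolution\beta$. Since $\tilde{R}$ is a bijection by the corollary above, it suffices to show $\lim_{t\to\infty}e^{-tZ_0}e^{t(Z_0+\beta)} = \varphi$ for every $\beta$ of this form. Recall $Z_0$ is the derivation $\alpha\mapsto\alpha\circ Y$, so $e^{tZ_0}$ implements $\gamma\mapsto\gamma\circ\gradAut_t$ and $\mathrm{Ad}_{e^{-tZ_0}}\beta = e^{-t\,\mathrm{ad}_{Z_0}}\beta = \beta\circ\gradAut_{-t}$. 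Writing $C(t)\defas e^{-tZ_0}e^{t(Z_0+\beta)}$ and using that $Z_0$ commutes with $e^{-tZ_0}$, I compute $\dot{C}(t) = e^{-tZ_0}\beta\,e^{t(Z_0+\beta)} = (\mathrm{Ad}_{e^{-tZ_0}}\beta)\convolution C(t) = (\beta\circ\gradAut_{-t})\convolution C(t)$, with $C(0)=e$; as $\beta\circ\gradAut_{-t}$ is an infinitesimal character, $C(t)$ stays a character, so the $\C\cdot Z_0$-component really cancels out.

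I would then exhibit a closed-form solution. Set $D(t)\defas(\varphi\circ\gradAut_{-t})^{\convolution-1}\convolution\varphi$, so $D(0)=e$. Since $\gradAut_{-t}$ is a Hopf automorphism, composition with it distributes over $\convolution$, hence $\tfrac{d}{dt}(\varphi\circ\gradAut_{-t}) = -(\varphi\circ Y)\circ\gradAut_{-t} = -(\varphi\convolution\beta)\circ\gradAut_{-t} = -(\varphi\circ\gradAut_{-t})\convolution(\beta\circ\gradAut_{-t})$; differentiating the identity $(\varphi\circ\gradAut_{-t})\convolution(\varphi\circ\gradAut_{-t})^{\convolution-1}=e$ gives $\tfrac{d}{dt}(\varphi\circ\gradAut_{-t})^{\convolution-1} = (\beta\circ\gradAut_{-t})\convolution(\varphi\circ\gradAut_{-t})^{\convolution-1}$, and therefore $\dot{D}(t) = (\beta\circ\gradAut_{-t})\convolution D(t)$. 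Thus $C$ and $D$ solve the same initial value problem. Evaluated on any fixed element of $H$ this is a finite-dimensional linear ODE with analytic coefficients (the connected grading makes only finitely many homogeneous components intervene), so uniqueness forces $C(t)=D(t)$ for all $t$. Finally, for homogeneous $x\in H_n$ one has $(\varphi\circ\gradAut_{-t})(x)=e^{-nt}\varphi(x)$, so $\varphi\circ\gradAut_{-t}\to e$ componentwise, whence $(\varphi\circ\gradAut_{-t})^{\convolution-1}=(\varphi\circ\gradAut_{-t})\circ S\to e$ and $D(t)\to e\convolution\varphi = \varphi$; combining, $\lim_{t\to\infty}e^{-tZ_0}e^{t(Z_0+\beta)} = \varphi = \tilde{R}^{-1}(\beta)$.

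The routine parts are the two derivative computations; the step that needs care is making the formal calculus inside the Lie group of $\infchars{H}{\alg}\rtimes\C\cdot Z_0$ rigorous — in particular that $e^{t(Z_0+\beta)}$ is well defined, that the product rule and the $\mathrm{Ad}$-formula hold, and that ``uniqueness of solutions'' and the $t\to\infty$ limit are legitimate. All of this follows from the fact that everything is organised degree by degree along the connected grading of $H$, where the convolution algebra is pro-nilpotent so that all exponential series and flows converge and restrict, on each graded piece, to ordinary finite-dimensional ones.
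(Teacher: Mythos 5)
Your argument is sound, and it is worth noting that the paper itself gives no proof of this theorem: the scattering formula is quoted from \cite{CK:RH2} and \cite{Manchon}, and the text only records the alternative description \eqref{eq:Rtilde-inverse} afterwards. The proofs in the cited literature essentially expand $C(t)=e^{-tZ_0}e^{t(Z_0+\beta)}$ as the time-ordered (Picard) series of the very equation you derive, $\dot C(t)=(\beta\circ\gradAut_{-t})\convolution C(t)$, and pass to $t\to\infty$ term by term; the elementary integrals $\int_0^\infty e^{-ns}\,\dd s=\tfrac1n$ then produce the operators $Y^{-1}$, so the limit is identified with the series \eqref{eq:Rtilde-inverse}, i.e.\ with $\tilde{R}^{-1}(\beta)$. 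You bypass the explicit evaluation of the limit: taking $\varphi$ with $\tilde{R}(\varphi)=\beta$, you exhibit the closed-form solution $D(t)=(\varphi\circ\gradAut_{-t})^{\convolution-1}\convolution\varphi$ of the same initial value problem, conclude $C=D$ by uniqueness, and then $D(t)\to\varphi$ is a one-line grading estimate. This is a genuinely different and shorter identification of the limit, at the price of invoking bijectivity of $\tilde{R}$ (already established in the paper) and of not producing \eqref{eq:Rtilde-inverse} as a by-product. Two small refinements: the uniqueness step needs less than ``finite-dimensional linear ODE with analytic coefficients'' --- since $\beta(\1)=0$ the system is strictly triangular in the degree, so induction on $\abs{x}$ plus integration settles it even for infinite-dimensional $\alg$; and the cleanest way to legitimize the product rule and the $\mathrm{Ad}$-computation is to realize the group concretely, e.g.\ writing $e^{t(Z_0+\beta)}=e^{tZ_0}\,g(t)$ with $g(t)\in\chars{H}{\alg}$, so that $C(t)=g(t)\circ\gradAut_{-t}$ is manifestly a character and every manipulation takes place pointwise in $\left(\Hom(H,\alg),\convolution\right)$, exactly in the degree-by-degree spirit you invoke at the end.
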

Another description of $\tilde{R}^{-1}$ can be given as follows (see also \cite{FardBondiaPatras:LieApproach,MenousPatras:LogarithmicDerivatives}):
Denote by $\Psi \in \End\left( \End(H) \right)$ the map $\End(H) \ni \alpha \mapsto \Psi(\alpha) \defas (\alpha \convolution D_Y) \circ Y^{-1}$ and exploit $\id = e + Y \circ Y^{-1}$ as well as $Y = \id \convolution S \convolution Y = \id \convolution D_Y$ repeatedly, then
\begin{align}
\begin{split}
	\id
	&= e + (\id \convolution D_Y) \circ Y^{-1}
	= e + \left\{ 
			\left[ 
				e + (\id \tp D_Y) \circ Y^{-1}
			\right]
			\convolution D_Y
			\right\} \circ Y^{-1} \\
	&= \cdots
	= \sum_{n\geq 0} \Psi^n(e).
\end{split}
\end{align}
Observe that $\Psi^n(e)$ vanishes on any element $x\in H$ of coradical degree less than $n$, wherefore this series is pointwise finite.
Hence given $\beta \defas \tilde{R}(\varphi)$ we can reconstruct
\begin{align}
\begin{split}
	\varphi
	= \varphi \circ \id
	= e 
	&	+ \beta \circ Y^{-1}
		+ \left[ (\beta \circ Y^{-1}) \convolution \beta \right] \circ Y^{-1} \\
	&	+ \left( 
				\left\{ \left[ (\beta \circ Y^{-1}) \convolution \beta \right] \circ Y^{-1} \right\} \convolution \beta
				\right) \circ Y^{-1}
		+ \ldots.
\end{split}
\label{eq:Rtilde-inverse}
\end{align}

\bibliographystyle{amsplain}
\bibliography{qft}

\end{document}